\documentclass[a4paper,14pt,oneside,openany]{memoir}
\pdfoutput=1
%%%%%%%%%%%%%%%%%%%%%%%%%%%%%%%%%%%%%%%%%%%%%%%%%%%%%%%%%%%%%%%%%%%%%%%%%%%%%%%%
%%%% Файл упрощённых настроек шаблона, общих для диссертации и автореферата %%%%
%%%%%%%%%%%%%%%%%%%%%%%%%%%%%%%%%%%%%%%%%%%%%%%%%%%%%%%%%%%%%%%%%%%%%%%%%%%%%%%%

%%% Режим черновика %%%
\makeatletter
\@ifundefined{c@draft}{
  \newcounter{draft}
  \setcounter{draft}{0}  % 0 --- чистовик (максимальное соблюдение ГОСТ)
                         % 1 --- черновик (отклонения от ГОСТ, но быстрая
                         %       сборка итоговых PDF)
}{}
\makeatother

%%% Пометки в тексте %%%
\makeatletter
\@ifundefined{c@showmarkup}{
  \newcounter{showmarkup}
  \setcounter{showmarkup}{0}  % 0 --- скрыть пометки
                              % 1 --- показывать пометки
}{}
\makeatother

%%% Диссертация на английском %%%
\makeatletter
\@ifundefined{c@englishthesis}{
  \newcounter{englishthesis}
  \setcounter{englishthesis}{1}  % 0 --- диссертация на русском
                                 % 1 --- диссертация на английском; влияет 
                                 %       на подписи к рисункам и прочее
}{}
\makeatother

%%% Использование в pdflatex шрифтов не по-умолчанию %%%
\makeatletter
\@ifundefined{c@usealtfont}{
  \newcounter{usealtfont}
  \setcounter{usealtfont}{1}    % 0 --- шрифты на базе Computer Modern
                                % 1 --- использовать пакет pscyr, при его
                                %       наличии
                                % 2 --- использовать пакет XCharter, при наличии
                                %       подходящей версии
}{}
\makeatother

%%% Использование в xelatex и lualatex семейств шрифтов %%%
\makeatletter
\@ifundefined{c@fontfamily}{
  \newcounter{fontfamily}
  \setcounter{fontfamily}{1}  % 0 --- CMU семейство. Используется как fallback;
                              % 1 --- Шрифты от MS (Times New Roman и компания)
                              % 2 --- Семейство Liberation
}{}
\makeatother

%%% Библиография %%%
\makeatletter
\@ifundefined{c@bibliosel}{
  \newcounter{bibliosel}
  \setcounter{bibliosel}{1}   % 0 --- встроенная реализация с загрузкой файла
                              %       через движок bibtex8;
                              % 1 --- реализация пакетом biblatex через движок
                              %       biber
}{}
\makeatother

%%% Вывод типов ссылок в библиографии %%%
\makeatletter
\@ifundefined{c@mediadisplay}{
  \newcounter{mediadisplay}
  \setcounter{mediadisplay}{3}   % 0 --- не делать ничего; надписи [Текст] и
                                 %       [Эл. ресурс] будут выводиться только в ссылках с
                                 %       заполненным полем `media`;
                                 % 1 --- автоматически добавлять надпись [Текст] к ссылкам с
                                 %       незаполненным полем `media`; таким образом, у всех
                                 %       источников будет указан тип, что соответствует
                                 %       требованиям ГОСТ
                                 % 2 --- автоматически удалять надписи [Текст], [Эл. Ресурс] и др.;
                                 %       не соответствует ГОСТ
                                 % 3 --- автоматически удалять надпись [Текст];
                                 %       не соответствует ГОСТ
                                 % 4 --- автоматически удалять надпись [Эл. Ресурс];
                                 %       не соответствует ГОСТ
}{}
\makeatother

%%% Предкомпиляция tikz рисунков для ускорения работы %%%
\makeatletter
\@ifundefined{c@imgprecompile}{
  \newcounter{imgprecompile}
  \setcounter{imgprecompile}{0}   % 0 --- без предкомпиляции;
                                  % 1 --- пользоваться предварительно
                                  %       скомпилированными pdf вместо генерации
                                  %       заново из tikz
}{}
\makeatother
            % общие настройки шаблона
%%% Проверка используемого TeX-движка %%%
\newif\ifxetexorluatex   % определяем новый условный оператор (http://tex.stackexchange.com/a/47579)
\ifxetex
    \xetexorluatextrue
\else
    \ifluatex
        \xetexorluatextrue
    \else
        \xetexorluatexfalse
    \fi
\fi

\newif\ifsynopsis           % Условие, проверяющее, что документ --- автореферат

\usepackage{etoolbox}[2015/08/02]   % Для продвинутой проверки разных условий
\providebool{presentation}

\usepackage{comment}    % Позволяет убирать блоки текста (добавляет
                        % окружение comment и команду \excludecomment)

%%% Поля и разметка страницы %%%
\usepackage{pdflscape}  % Для включения альбомных страниц
\usepackage{geometry}   % Для последующего задания полей

%%% Математические пакеты %%%
\usepackage{amsthm,amsmath,amscd}   % Математические дополнения от AMS
\usepackage{amsfonts,amssymb}       % Математические дополнения от AMS
\usepackage{mathtools}              % Добавляет окружение multlined
\usepackage{xfrac}                  % Красивые дроби
\usepackage[
    locale = DE,
    list-separator       = {;\,},
    list-final-separator = {;\,},
    list-pair-separator  = {;\,},
    list-units           = single,
    range-units          = single,
    range-phrase={\text{\ensuremath{-}}},
    % quotient-mode        = fraction, % красивые дроби могут не соответствовать ГОСТ
    fraction-function    = \sfrac,
    separate-uncertainty,
    ]{siunitx}[=v2]                 % Размерности SI
\sisetup{inter-unit-product = \ensuremath{{}\cdot{}}}

% Кириллица в нумерации subequations
% Для правильной работы требуется выполнение сразу после загрузки пакетов
\ifnumequal{\value{englishthesis}}{0}{
    \patchcmd{\subequations}{}
    {}
    {\typeout{subequations patched}}{\typeout{subequations not patched}}
}{}
% \patchcmd{\subequations}{\def\theequation{\theparentequation\alph{equation}}}
% {\def\theequation{\theparentequation\asbuk{equation}}}
% {\typeout{subequations patched}}{\typeout{subequations not patched}}

%%%% Установки для размера шрифта 14 pt %%%%
%% Формирование переменных и констант для сравнения (один раз для всех подключаемых файлов)%%
%% должно располагаться до вызова пакета fontspec или polyglossia, потому что они сбивают его работу
\newlength{\curtextsize}
\newlength{\bigtextsize}
\setlength{\bigtextsize}{13.9pt}

\makeatletter
%\show\f@size    % неплохо для отслеживания, но вызывает стопорение процесса,
                 % если документ компилируется без команды  -interaction=nonstopmode
\setlength{\curtextsize}{\f@size pt}
\makeatother

%%% Кодировки и шрифты %%%
\ifxetexorluatex
    \ifpresentation
         % quick fix for polyglossia 1.50
    \fi
    \PassOptionsToPackage{no-math}{fontspec}    % https://tex.stackexchange.com/a/26295/104425
    \usepackage{polyglossia}[2014/05/21]        % Поддержка многоязычности
                                        % (fontspec подгружается автоматически)
\else
   %%% Решение проблемы копирования текста в буфер кракозябрами
    \ifnumequal{\value{usealtfont}}{0}{}{
        \input glyphtounicode.tex
        \input glyphtounicode-cmr.tex %from pdfx package
        \pdfgentounicode=1
    }
    \usepackage{cmap}   % Улучшенный поиск русских слов в полученном pdf-файле
    \ifnumequal{\value{usealtfont}}{2}{}{
        \defaulthyphenchar=127  % Если стоит до fontenc, то переносы
                                % не впишутся в выделяемый текст при
                                % копировании его в буфер обмена
    }
    \usepackage{textcomp}
    \usepackage[T1,T2A]{fontenc}                    % Поддержка русских букв
    \ifnumequal{\value{usealtfont}}{1}{% Используется pscyr, при наличии
        \IfFileExists{pscyr.sty}{\usepackage{pscyr}}{}  % Подключение pscyr
    }{}
    \usepackage[utf8]{inputenc}[2014/04/30]         % Кодировка utf8
    \ifnumequal{\value{englishthesis}}{0}{
        \usepackage[english, russian]{babel}[2014/03/24]% Языки: русский, английский
    }{
        \usepackage[english]{babel}[2014/03/24]% Языки: английский
    }

    \makeatletter\AtBeginDocument{\let\@elt\relax}\makeatother % babel 3.40 fix
    \ifnumequal{\value{usealtfont}}{2}{
        % http://dxdy.ru/post1238763.html#p1238763
        \usepackage[scaled=0.914]{XCharter}[2017/12/19] % Подключение русифицированных шрифтов XCharter
        \usepackage[charter, vvarbb, scaled=1.048]{newtxmath}[2017/12/14]
        \ifpresentation
        \else
            \setDisplayskipStretch{-0.078}
        \fi
    }{}
\fi

%%% Оформление абзацев %%%
\ifpresentation
\else
    \indentafterchapter     % Красная строка после заголовков типа chapter
    \usepackage{indentfirst}
\fi

%%% Цвета %%%
\ifpresentation
\else
    \usepackage[dvipsnames, table, hyperref]{xcolor} % Совместимо с tikz
\fi

%%% Таблицы %%%
\usepackage{longtable,ltcaption} % Длинные таблицы
\usepackage{multirow,makecell}   % Улучшенное форматирование таблиц
\usepackage{tabu, tabulary}      % таблицы с автоматически подбирающейся
                                 % шириной столбцов (tabu обязательно
                                 % до hyperref вызывать)
\makeatletter
%https://github.com/tabu-issues-for-future-maintainer/tabu/issues/26
\@ifpackagelater{longtable}{2020/02/07}{
\def\tabuendlongtrial{%
    \LT@echunk  \global\setbox\LT@gbox \hbox{\unhbox\LT@gbox}\kern\wd\LT@gbox
                \LT@get@widths
}%
}{}
\makeatother

\usepackage{threeparttable}      % автоматический подгон ширины подписи таблицы

%%% Общее форматирование
\usepackage{soulutf8}% Поддержка переносоустойчивых подчёркиваний и зачёркиваний
\usepackage{icomma}  % Запятая в десятичных дробях

%%% Оптимизация расстановки переносов и длины последней строки абзаца
\IfFileExists{impnattypo.sty}{% проверка установленности пакета impnattypo
    \ifluatex
        \ifnumequal{\value{draft}}{1}{% Черновик
            \usepackage[hyphenation, lastparline, nosingleletter, homeoarchy,
            rivers, draft]{impnattypo}
        }{% Чистовик
            \usepackage[hyphenation, lastparline, nosingleletter]{impnattypo}
        }
    \else
        \usepackage[hyphenation, lastparline]{impnattypo}
    \fi
}{}

%% Векторная графика

\usepackage{tikz}                   % Продвинутый пакет векторной графики
\usetikzlibrary{chains}             % Для примера tikz рисунка
\usetikzlibrary{shapes.geometric}   % Для примера tikz рисунка
\usetikzlibrary{shapes.symbols}     % Для примера tikz рисунка
\usetikzlibrary{arrows}             % Для примера tikz рисунка

%%% Гиперссылки %%%
\ifxetexorluatex
    
\fi
\usepackage{hyperref}[2012/11/06]

%%% Изображения %%%
\usepackage{graphicx}[2014/04/25]   % Подключаем пакет работы с графикой
\usepackage{caption}                % Подписи рисунков и таблиц
\usepackage{subcaption}             % Подписи подрисунков и подтаблиц
\usepackage{pdfpages}               % Добавление внешних pdf файлов

%%% Счётчики %%%
\usepackage{aliascnt}
\usepackage[figure,table]{totalcount}   % Счётчик рисунков и таблиц
\usepackage{totcount}   % Пакет создания счётчиков на основе последнего номера
                        % подсчитываемого элемента (может требовать дважды
                        % компилировать документ)
\usepackage{totpages}   % Счётчик страниц, совместимый с hyperref (ссылается
                        % на номер последней страницы). Желательно ставить
                        % последним пакетом в преамбуле

%%% Продвинутое управление групповыми ссылками (пока только формулами) %%%
\ifpresentation
\else
    \ifnumequal{\value{englishthesis}}{0}{
        \usepackage[russian]{cleveref} % cleveref имеет сложности со считыванием
    % языка из babel. Такое решение русификации вывода выбрано вместо
    % определения в documentclass из опасности что-то лишнее передать во все
    % остальные пакеты, включая библиографию.
    }{
        \usepackage{cleveref}
    }
    % Добавление возможности использования пробелов в \labelcref
    % https://tex.stackexchange.com/a/340502/104425
    \usepackage{kvsetkeys}
    \makeatletter
    \let\org@@cref\@cref
    \renewcommand*{\@cref}[2]{%
        \edef\process@me{%
            \noexpand\org@@cref{#1}{\zap@space#2 \@empty}%
        }\process@me
    }
    \makeatother
\fi

\usepackage{placeins} % для \FloatBarrier

\ifnumequal{\value{draft}}{1}{% Черновик
    \usepackage[firstpage]{draftwatermark}
    \SetWatermarkText{DRAFT}
    \SetWatermarkFontSize{14pt}
    \SetWatermarkScale{15}
    \SetWatermarkAngle{45}
}{}

%%% Цитата, не приводимая в автореферате:
% возможно, актуальна только для biblatex
%\newcommand{\citeinsynopsis}[1]{\ifsynopsis\else ~\cite{#1} \fi}

% если текущий процесс запущен библиотекой tikz-external, то прекомпиляция должна быть включена
\ifdefined\tikzexternalrealjob
    \setcounter{imgprecompile}{1}
\fi

\ifnumequal{\value{imgprecompile}}{1}{% Только если у нас включена предкомпиляция
    \usetikzlibrary{external}   % подключение возможности предкомпиляции
    \tikzexternalize[prefix=images/cache/,optimize command away=\includepdf] % activate! % здесь можно указать отдельную папку для скомпилированных файлов
    \ifxetex
        \tikzset{external/up to date check={diff}}
    \fi
}{}

\usepackage{physics}
\usepackage{bbm}
\usepackage{qcircuit}
\usepackage{tabularx}
\usepackage[export]{adjustbox}
\usepackage{oplotsymbl}
% Пакеты общие для диссертации и автореферата
\synopsisfalse                      % Этот документ --- не автореферат
%%% Прикладные пакеты %%%
%\usepackage{calc}               % Пакет для расчётов параметров, например длины

%%% Для добавления Стр. над номерами страниц в оглавлении
%%% http://tex.stackexchange.com/a/306950
\usepackage{afterpage}

%%% Списки %%%
\usepackage{enumitem}

%%% Оформление списка обозначений
\usepackage[intoc]{nomencl}
\makenomenclature
\setlength{\nomitemsep}{-.8\parsep}
    % Пакеты для диссертации
\usepackage{fr-longtable}    %ради \endlasthead

% Листинги с исходным кодом программ
\usepackage{fancyvrb}
\usepackage{listings}
\lccode`\~=0\relax %Без этого хака из-за особенностей пакета listings перестают работать конструкции с \MakeLowercase и т. п. в (xe|lua)latex

% Русская традиция начертания греческих букв
\usepackage{upgreek} % прямые греческие ради русской традиции

%%% Микротипографика
%\ifnumequal{\value{draft}}{0}{% Только если у нас режим чистовика
%    \usepackage[final, babel, shrink=45]{microtype}[2016/05/14] % улучшает представление букв и слов в строках, может помочь при наличии отдельно висящих слов
%}{}

% Отметка о версии черновика на каждой странице
% Чтобы работало надо в своей локальной копии по инструкции
% https://www.ctan.org/pkg/gitinfo2 создать небходимые файлы в папке
% ./git/hooks
% If you’re familiar with tweaking git, you can probably work it out for
% yourself. If not, I suggest you follow these steps:
% 1. First, you need a git repository and working tree. For this example,
% let’s suppose that the root of the working tree is in ~/compsci
% 2. Copy the file post-xxx-sample.txt (which is in the same folder of
% your TEX distribution as this pdf) into the git hooks directory in your
% working copy. In our example case, you should end up with a file called
% ~/compsci/.git/hooks/post-checkout
% 3. If you’re using a unix-like system, don’t forget to make the file executable.
% Just how you do this is outside the scope of this manual, but one
% possible way is with commands such as this:
% chmod g+x post-checkout.
% 4. Test your setup with “git checkout master” (or another suitable branch
% name). This should generate copies of gitHeadInfo.gin in the directories
% you intended.
% 5. Now make two more copies of this file in the same directory (hooks),
% calling them post-commit and post-merge, and you’re done. As before,
% users of unix-like systems should ensure these files are marked as
% executable.
\ifnumequal{\value{draft}}{1}{% Черновик
   \IfFileExists{.git/gitHeadInfo.gin}{
      \usepackage[mark,pcount]{gitinfo2}

   }{}
}{}   % Пакеты для специфических пользовательских задач

%%%%%%%%%%%%%%%%%%%%%%%%%%%%%%%%%%%%%%%%%%%%%%%%%%%%%%
%%%% Файл упрощённых настроек шаблона диссертации %%%%
%%%%%%%%%%%%%%%%%%%%%%%%%%%%%%%%%%%%%%%%%%%%%%%%%%%%%%

%%% Инициализирование переменных, не трогать!  %%%
\newcounter{intvl}
\newcounter{otstup}
\newcounter{contnumeq}
\newcounter{contnumfig}
\newcounter{contnumtab}
\newcounter{pgnum}
\newcounter{chapstyle}
\newcounter{headingdelim}
\newcounter{headingalign}
\newcounter{headingsize}
%%%%%%%%%%%%%%%%%%%%%%%%%%%%%%%%%%%%%%%%%%%%%%%%%%%%%%

%%% Область упрощённого управления оформлением %%%

%% Интервал между заголовками и между заголовком и текстом %%
% Заголовки отделяют от текста сверху и снизу
% тремя интервалами (ГОСТ Р 7.0.11-2011, 5.3.5)
\setcounter{intvl}{3}               % Коэффициент кратности к размеру шрифта

%% Отступы у заголовков в тексте %%
\setcounter{otstup}{0}              % 0 --- без отступа; 1 --- абзацный отступ

%% Нумерация формул, таблиц и рисунков %%
% Нумерация формул
\setcounter{contnumeq}{0}   % 0 --- пораздельно (во введении подряд,
                            %       без номера раздела);
                            % 1 --- сквозная нумерация по всей диссертации
% Нумерация рисунков
\setcounter{contnumfig}{0}  % 0 --- пораздельно (во введении подряд,
                            %       без номера раздела);
                            % 1 --- сквозная нумерация по всей диссертации
% Нумерация таблиц
\setcounter{contnumtab}{1}  % 0 --- пораздельно (во введении подряд,
                            %       без номера раздела);
                            % 1 --- сквозная нумерация по всей диссертации

%% Оглавление %%
\setcounter{pgnum}{1}       % 0 --- номера страниц никак не обозначены;
                            % 1 --- Стр. над номерами страниц (дважды
                            %       компилировать после изменения настройки)
\settocdepth{subsection}    % до какого уровня подразделов выносить в оглавление
\setsecnumdepth{subsection} % до какого уровня нумеровать подразделы

%% Текст и форматирование заголовков %%
\setcounter{chapstyle}{1}     % 0 --- разделы только под номером;
                              % 1 --- разделы с названием "Глава" перед номером
\setcounter{headingdelim}{1}  % 0 --- номер отделен пропуском в 1em или \quad;
                              % 1 --- номера разделов и приложений отделены
                              %       точкой с пробелом, подразделы пропуском
                              %       без точки;
                              % 2 --- номера разделов, подразделов и приложений
                              %       отделены точкой с пробелом.

%% Выравнивание заголовков в тексте %%
\setcounter{headingalign}{0}  % 0 --- по центру;
                              % 1 --- по левому краю

%% Размеры заголовков в тексте %%
\setcounter{headingsize}{0}   % 0 --- по ГОСТ, все всегда 14 пт;
                              % 1 --- пропорционально изменяющийся размер
                              %       в зависимости от базового шрифта

%% Подпись таблиц %%

% Смещение строк подписи после первой строки
\newcommand{\tabindent}{0cm}

% Тип форматирования заголовка таблицы:
% plain --- название и текст в одной строке
% split --- название и текст в разных строках
\newcommand{\tabformat}{plain}

%%% Настройки форматирования таблицы `plain`

% Выравнивание по центру подписи, состоящей из одной строки:
% true  --- выравнивать
% false --- не выравнивать
\newcommand{\tabsinglecenter}{false}

% Выравнивание подписи таблиц:
% justified   --- выравнивать как обычный текст («по ширине»)
% centering   --- выравнивать по центру
% centerlast  --- выравнивать по центру только последнюю строку
% centerfirst --- выравнивать по центру только первую строку (не рекомендуется)
% raggedleft  --- выравнивать по правому краю
% raggedright --- выравнивать по левому краю
\newcommand{\tabjust}{justified}

% Разделитель записи «Таблица #» и названия таблицы
\newcommand{\tablabelsep}{~\cyrdash\ }

%%% Настройки форматирования таблицы `split`

% Положение названия таблицы:
% \centering   --- выравнивать по центру
% \raggedleft  --- выравнивать по правому краю
% \raggedright --- выравнивать по левому краю
\newcommand{\splitformatlabel}{\raggedleft}

% Положение текста подписи:
% \centering   --- выравнивать по центру
% \raggedleft  --- выравнивать по правому краю
% \raggedright --- выравнивать по левому краю
\newcommand{\splitformattext}{\raggedright}

%% Подпись рисунков %%
%Разделитель записи «Рисунок #» и названия рисунка
\newcommand{\figlabelsep}{~\cyrdash\ }  % (ГОСТ 2.105, 4.3.1)
                                        % "--- здесь не работает

%%% Цвета гиперссылок %%%
% Latex color definitions: http://latexcolor.com/
\definecolor{linkcolor}{rgb}{0.9,0,0}
\definecolor{citecolor}{rgb}{0,0.6,0}
\definecolor{urlcolor}{rgb}{0,0,1}
%\definecolor{linkcolor}{rgb}{0,0,0} %black
%\definecolor{citecolor}{rgb}{0,0,0} %black
%\definecolor{urlcolor}{rgb}{0,0,0} %black
      % Упрощённые настройки шаблона

% Новые переменные, которые могут использоваться во всём проекте
% ГОСТ 7.0.11-2011
% 9.2 Оформление текста автореферата диссертации
% 9.2.1 Общая характеристика работы включает в себя следующие основные структурные
% элементы:
% актуальность темы исследования;

% степень ее разработанности;

% цели и задачи;

% научную новизну;

% теоретическую и практическую значимость работы;
%\newcommand{\influenceTXT}{Теоретическая и практическая значимость}
% или чаще используют просто

% методологию и методы исследования;

% положения, выносимые на защиту;

% степень достоверности и апробацию результатов.

%%% Заголовки библиографии:

% для автореферата:

% для стиля библиографии `\insertbiblioauthorgrouped`

% для стиля библиографии `\insertbiblioauthorimportant`:

% для списка литературы в диссертации и списка чужих работ в автореферате:
\newcommand{\bibtitlefull}{Список литературы} % (ГОСТ Р 7.0.11-2011, 4)

% Aliases for popular symbols
\newcommand{\mc}[1]{\mathcal{#1}}
\newcommand{\id}{\mathbbm{1}}
\newcommand{\rmi}{\mathrm{i}}
\newcommand{\placeholder}{\ast}

\DeclareMathOperator{\Var}{Var}

\tikzset{blank/.style={rectangle,inner sep=0pt,draw=none,fill=none,minimum
    size=0pt} }         % Новые переменные, для всего проекта

%%% Основные сведения %%%
\newcommand{\thesisAuthorLastName}{Уваров}
\newcommand{\thesisAuthorOtherNames}{Алексей Викторович}
\newcommand{\thesisAuthorInitials}{А.\,В.}
\newcommand{\thesisAuthorLastNameEn}{Uvarov}
\newcommand{\thesisAuthorOtherNamesEn}{Alexey Viktorovich}
\newcommand{\thesisAuthorInitialsEn}{A.\,V.}

\newcommand{\thesisAuthor}             % Диссертация, ФИО автора
{%
    \texorpdfstring{% \texorpdfstring takes two arguments and uses the first for (La)TeX and the second for pdf
        \thesisAuthorLastName~\thesisAuthorOtherNames% так будет отображаться на титульном листе или в тексте, где будет использоваться переменная
    }{%
        \thesisAuthorLastName, \thesisAuthorOtherNames% эта запись для свойств pdf-файла. В таком виде, если pdf будет обработан программами для сбора библиографических сведений, будет правильно представлена фамилия.
    }
}
\newcommand{\thesisAuthorEn}             % Диссертация, ФИО автора
{%
    \texorpdfstring{% \texorpdfstring takes two arguments and uses the first for (La)TeX and the second for pdf
        \thesisAuthorLastNameEn~\thesisAuthorOtherNamesEn% так будет отображаться на титульном листе или в тексте, где будет использоваться переменная
    }{%
        \thesisAuthorLastNameEn, \thesisAuthorOtherNamesEn% эта запись для свойств pdf-файла. В таком виде, если pdf будет обработан программами для сбора библиографических сведений, будет правильно представлена фамилия.
    }
}

\newcommand{\thesisAuthorShort}        % Диссертация, ФИО автора инициалами
{\thesisAuthorInitials~\thesisAuthorLastName}
\newcommand{\thesisAuthorShortEn}        % Диссертация, ФИО автора инициалами
{\thesisAuthorInitialsEn~\thesisAuthorLastNameEn}

\newcommand{\thesisTitle}              % Диссертация, название
{Вариационные квантовые алгоритмы для решения задач минимизации локальных гамильтонианов}
\newcommand{\thesisTitleEn}              % Диссертация, название
{Variational quantum algorithms for local Hamiltonian problems}
\newcommand{\thesisSpecialtyNumber}    % Диссертация, специальность, номер
{01.04.02}
\newcommand{\thesisSpecialtyTitle}     % Диссертация, специальность, название (название взято с сайта ВАК для примера)
{Теоретическая физика}
\newcommand{\thesisSpecialtyTitleEn}     % Диссертация, специальность, название (название взято с сайта ВАК для примера)
{Theoretical physics}
%% \newcommand{\thesisSpecialtyTwoNumber} % Диссертация, вторая специальность, номер
%% {\fixme{XX.XX.XX}}
%% \newcommand{\thesisSpecialtyTwoTitle}  % Диссертация, вторая специальность, название
%% {\fixme{Теория и~методика физического воспитания, спортивной тренировки,
%% оздоровительной и~адаптивной физической культуры}}
\newcommand{\thesisDegree}             % Диссертация, ученая степень
{кандидата физико-математических наук}
\newcommand{\thesisDegreeEn}             % Диссертация, ученая степень
{Candidate of Physical and Mathematical Sciences}
\newcommand{\thesisDegreeShort}        % Диссертация, ученая степень, краткая запись
{канд.~физ.-мат.~наук}
\newcommand{\thesisDegreeShortEn}        % Диссертация, ученая степень, краткая запись
{cand.~of phys.-math.~sciences}
\newcommand{\thesisCity}               % Диссертация, город написания диссертации
{Москва}
\newcommand{\thesisCityEn}               % Диссертация, город написания диссертации
{Moscow}
\newcommand{\thesisYear}               % Диссертация, год написания диссертации
{2022}
\newcommand{\thesisOrganization}       % Диссертация, организация
{Автономная некоммерческая образовательная организация высшего профессионального образования <<Сколковский институт науки и технологий>>}
\newcommand{\thesisOrganizationEn}       % Диссертация, организация
{Autonomous non-profit organization for higher education \\
Skolkovo Institute of Science and Technology}
\newcommand{\thesisOrganizationShort}  % Диссертация, краткое название организации для доклада
{Сколтех}
\newcommand{\thesisOrganizationShortEn}  % Диссертация, краткое название организации для доклада
{Skoltech}

\newcommand{\thesisInOrganization}     % Диссертация, организация в предложном падеже: Работа выполнена в ...
{Автономной некоммерческой образовательной организации высшего профессионального образования <<Сколковский институт науки и технологий>>}

\newcommand{\supervisorFio}              % Научный руководитель, ФИО
{Биамонте Джейкоб Дэниел}
\newcommand{\supervisorRegalia}          % Научный руководитель, регалии
{доктор физико-математических наук, профессор}
\newcommand{\supervisorFioShort}         % Научный руководитель, ФИО
{Дж.\,Д.~Биамонте}
\newcommand{\supervisorRegaliaShort}     % Научный руководитель, регалии
{д.~ф.-м.~н, проф.}

\newcommand{\supervisorFioEn}              % Научный руководитель, ФИО
{Biamonte Jacob Daniel}
\newcommand{\supervisorRegaliaEn}          % Научный руководитель, регалии
{Doctor of Physical and Mathematical Sciences, Professor}
\newcommand{\supervisorFioShortEn}         % Научный руководитель, ФИО
{\fixme{J.\,D.~Biamonte}}
\newcommand{\supervisorRegaliaShortEn}     % Научный руководитель, регалии
{\fixme{D.Sc., Prof}}

\newcommand{\opponentOneFio}           % Оппонент 1, ФИО
{\fixme{Фамилия Имя Отчество}}
\newcommand{\opponentOneRegalia}       % Оппонент 1, регалии
{\fixme{доктор физико-математических наук, профессор}}
\newcommand{\opponentOneJobPlace}      % Оппонент 1, место работы
{\fixme{Не очень длинное название для места работы}}
\newcommand{\opponentOneJobPost}       % Оппонент 1, должность
{\fixme{старший научный сотрудник}}

\newcommand{\opponentTwoFio}           % Оппонент 2, ФИО
{\fixme{Фамилия Имя Отчество}}
\newcommand{\opponentTwoRegalia}       % Оппонент 2, регалии
{\fixme{кандидат физико-математических наук}}
\newcommand{\opponentTwoJobPlace}      % Оппонент 2, место работы
{\fixme{Основное место работы c длинным длинным длинным длинным названием}}
\newcommand{\opponentTwoJobPost}       % Оппонент 2, должность
{\fixme{старший научный сотрудник}}

\newcommand{\opponentOneFioEn}           % Оппонент 1, ФИО
{\fixme{Name name name}}
\newcommand{\opponentOneRegaliaEn}       % Оппонент 1, регалии
{\fixme{Doctor of physcial and mathematical sciences, Professor}}
\newcommand{\opponentOneJobPlaceEn}      % Оппонент 1, место работы
{\fixme{Somewhat long job place name}}
\newcommand{\opponentOneJobPostEn}       % Оппонент 1, должность
{\fixme{Senior research scientist}}

\newcommand{\opponentTwoFioEn}           % Оппонент 2, ФИО
{\fixme{Name name name}}
\newcommand{\opponentTwoRegaliaEn}       % Оппонент 2, регалии
{\fixme{Doctor of physcial and mathematical sciences}}
\newcommand{\opponentTwoJobPlaceEn}      % Оппонент 2, место работы
{\fixme{Main job place with a long long long long long long, reeeeally long title}}
\newcommand{\opponentTwoJobPostEn}       % Оппонент 2, должность
{\fixme{Senior research scientist}}

\newcommand{\leadingOrganizationTitle} % Ведущая организация, дополнительные строки. Удалить, чтобы не отображать в автореферате
{\fixme{Федеральное государственное бюджетное образовательное учреждение высшего
профессионального образования с~длинным длинным длинным длинным названием}}
\newcommand{\leadingOrganizationTitleEn} % Ведущая организация, дополнительные строки. Удалить, чтобы не отображать в автореферате
{\fixme{Federal state organization tratatatatata}}

\newcommand{\defenseDate}              % Защита, дата
{\fixme{DD mmmmmmmm YYYY~г.~в~XX часов}}
\newcommand{\defenseDateEn}              % Защита, дата
{\fixme{DD mmmmmmmm YYYY~at~XX:YY}}
\newcommand{\defenseCouncilNumber}     % Защита, номер диссертационного совета
{\fixme{Д\,123.456.78}}
\newcommand{\defenseCouncilNumberEn}     % Защита, номер диссертационного совета
{\fixme{D\,123.456.78}}
\newcommand{\defenseCouncilTitle}      % Защита, учреждение диссертационного совета
{\fixme{Название учреждения}}
\newcommand{\defenseCouncilTitleEn}      % Защита, учреждение диссертационного совета
{\fixme{Defence council title}}
\newcommand{\defenseCouncilAddress}    % Защита, адрес учреждение диссертационного совета
{\fixme{Адрес}}
\newcommand{\defenseCouncilAddressEn}    % Защита, адрес учреждение диссертационного совета
{\fixme{Address}}
\newcommand{\defenseCouncilPhone}      % Телефон для справок
{\fixme{+7~(0000)~00-00-00}}

\newcommand{\defenseSecretaryFio}      % Секретарь диссертационного совета, ФИО
{\fixme{Фамилия Имя Отчество}}
\newcommand{\defenseSecretaryRegalia}  % Секретарь диссертационного совета, регалии
{\fixme{д-р~физ.-мат. наук}}            % Для сокращений есть ГОСТы, например: ГОСТ Р 7.0.12-2011 + http://base.garant.ru/179724/#block_30000

\newcommand{\defenseSecretaryFioEn}      % Секретарь диссертационного совета, ФИО
{\fixme{Lastname First Middle}}
\newcommand{\defenseSecretaryRegaliaEn}  % Секретарь диссертационного совета, регалии
{\fixme{DSc.}}    

\newcommand{\synopsisLibrary}          % Автореферат, название библиотеки
{\fixme{Название библиотеки}}
\newcommand{\synopsisDate}             % Автореферат, дата рассылки
{\fixme{DD mmmmmmmm}\the\year~года}

\newcommand{\synopsisLibraryEn}          % Автореферат, название библиотеки
{\fixme{Library title}}
\newcommand{\synopsisDateEn}             % Автореферат, дата рассылки
{\fixme{DD mmmmmmmm}\the\year}

% To avoid conflict with beamer class use \providecommand
\providecommand{\keywords}%            % Ключевые слова для метаданных PDF диссертации и автореферата
{}
             % Основные сведения
%%% Кодировки и шрифты %%%
\ifxetexorluatex
    \ifnumequal{\value{englishthesis}}{0}{
        % Язык по-умолчанию русский с поддержкой приятных команд пакета babel
        \setmainlanguage[babelshorthands=true]{russian}
        % Дополнительный язык = английский (в американской вариации по-умолчанию)
        \setotherlanguage{english}
    }{
        % I'm not sure if this actually does anything
        \setmainlanguage[babelshorthands=true]{english}
        \setotherlanguage{russian}
    }

    % Проверка существования шрифтов. Недоступна в pdflatex
    \ifnumequal{\value{fontfamily}}{1}{
        \IfFontExistsTF{Times New Roman}{}{\setcounter{fontfamily}{0}}
    }{}
    \ifnumequal{\value{fontfamily}}{2}{
        \IfFontExistsTF{LiberationSerif}{}{\setcounter{fontfamily}{0}}
    }{}

    \ifnumequal{\value{fontfamily}}{0}{                    % Семейство шрифтов CMU. Используется как fallback
        \setmonofont{CMU Typewriter Text}                  % моноширинный шрифт
        \newfontfamily\cyrillicfonttt{CMU Typewriter Text} % моноширинный шрифт для кириллицы
        \defaultfontfeatures{Ligatures=TeX}                % стандартные лигатуры TeX, замены нескольких дефисов на тире и т. п. Настройки моноширинного шрифта должны идти до этой строки, чтобы при врезках кода программ в коде не применялись лигатуры и замены дефисов
        \setmainfont{CMU Serif}                            % Шрифт с засечками
        \newfontfamily\cyrillicfont{CMU Serif}             % Шрифт с засечками для кириллицы
        \setsansfont{CMU Sans Serif}                       % Шрифт без засечек
        \newfontfamily\cyrillicfontsf{CMU Sans Serif}      % Шрифт без засечек для кириллицы
    }

    \ifnumequal{\value{fontfamily}}{1}{                    % Семейство MS шрифтов
        \setmonofont{Courier New}                          % моноширинный шрифт
        \newfontfamily\cyrillicfonttt{Courier New}         % моноширинный шрифт для кириллицы
        \defaultfontfeatures{Ligatures=TeX}                % стандартные лигатуры TeX, замены нескольких дефисов на тире и т. п. Настройки моноширинного шрифта должны идти до этой строки, чтобы при врезках кода программ в коде не применялись лигатуры и замены дефисов
        \setmainfont{Times New Roman}                      % Шрифт с засечками
        \newfontfamily\cyrillicfont{Times New Roman}       % Шрифт с засечками для кириллицы
        \setsansfont{Arial}                                % Шрифт без засечек
        \newfontfamily\cyrillicfontsf{Arial}               % Шрифт без засечек для кириллицы
    }

    \ifnumequal{\value{fontfamily}}{2}{                    % Семейство шрифтов Liberation (https://pagure.io/liberation-fonts)
        \setmonofont{LiberationMono}[Scale=0.87] % моноширинный шрифт
        \newfontfamily\cyrillicfonttt{LiberationMono}[     % моноширинный шрифт для кириллицы
            Scale=0.87]
        \defaultfontfeatures{Ligatures=TeX}                % стандартные лигатуры TeX, замены нескольких дефисов на тире и т. п. Настройки моноширинного шрифта должны идти до этой строки, чтобы при врезках кода программ в коде не применялись лигатуры и замены дефисов
        \setmainfont{LiberationSerif}                      % Шрифт с засечками
        \newfontfamily\cyrillicfont{LiberationSerif}       % Шрифт с засечками для кириллицы
        \setsansfont{LiberationSans}                       % Шрифт без засечек
        \newfontfamily\cyrillicfontsf{LiberationSans}      % Шрифт без засечек для кириллицы
    }

\else
    \ifnumequal{\value{usealtfont}}{1}{% Используется pscyr, при наличии
        \IfFileExists{pscyr.sty}{}{}
    }{}
\fi
            % Определение шрифтов (частичное)
\ifnumequal{\value{englishthesis}}{1}{
    %%% Шаблон %%%
\DeclareRobustCommand{\fixme}{\textcolor{red}}  % решаем проблему превращения
                                % названия цвета в результате \MakeUppercase,
                                % http://tex.stackexchange.com/a/187930,
                                % \DeclareRobustCommand protects \fixme
                                % from expanding inside \MakeUppercase
\AtBeginDocument{%
    \setlength{\parindent}{2.5em}                   % Абзацный отступ. Должен быть одинаковым по всему тексту и равен пяти знакам (ГОСТ Р 7.0.11-2011, 5.3.7).
}

%%% Таблицы %%%
\DeclareCaptionLabelSeparator{tabsep}{\tablabelsep} % нумерация таблиц
\DeclareCaptionFormat{split}{\splitformatlabel#1\par\splitformattext#3}

\captionsetup[table]{
        format=\tabformat,                % формат подписи (plain|hang)
        font=normal,                      % нормальные размер, цвет, стиль шрифта
        skip=.0pt,                        % отбивка под подписью
        parskip=.0pt,                     % отбивка между параграфами подписи
        position=above,                   % положение подписи
        justification=\tabjust,           % центровка
        indent=\tabindent,                % смещение строк после первой
        labelsep=tabsep,                  % разделитель
        singlelinecheck=\tabsinglecenter, % не выравнивать по центру, если умещается в одну строку
}

%%% Рисунки %%%
\DeclareCaptionLabelSeparator{figsep}{\figlabelsep} % нумерация рисунков

\captionsetup[figure]{
        format=plain,                     % формат подписи (plain|hang)
        font=normal,                      % нормальные размер, цвет, стиль шрифта
        skip=.0pt,                        % отбивка под подписью
        parskip=.0pt,                     % отбивка между параграфами подписи
        position=below,                   % положение подписи
        singlelinecheck=true,             % выравнивание по центру, если умещается в одну строку
        justification=centerlast,         % центровка
        labelsep=figsep,                  % разделитель
}

%%% Подписи подрисунков %%%
\DeclareCaptionSubType{figure}
\ifnumequal{\value{englishthesis}}{0}{
     % нумерация подрисунков
}{}
\ifsynopsis
\DeclareCaptionFont{norm}{\fontsize{10pt}{11pt}\selectfont}
\newcommand{\subfigureskip}{2.pt}
\else
\DeclareCaptionFont{norm}{\fontsize{14pt}{16pt}\selectfont}
\newcommand{\subfigureskip}{0.pt}
\fi

\captionsetup[subfloat]{
        labelfont=norm,                 % нормальный размер подписей подрисунков
        textfont=norm,                  % нормальный размер подписей подрисунков
        labelsep=space,                 % разделитель
        labelformat=brace,              % одна скобка справа от номера
        justification=centering,        % центровка
        singlelinecheck=true,           % выравнивание по центру, если умещается в одну строку
        skip=\subfigureskip,            % отбивка над подписью
        parskip=.0pt,                   % отбивка между параграфами подписи
        position=below,                 % положение подписи
}

%%% Настройки ссылок на рисунки, таблицы и др. %%%
% команды \cref...format отвечают за форматирование при помощи команды \cref
% команды \labelcref...format отвечают за форматирование при помощи команды \labelcref

\ifpresentation
\else
    \crefdefaultlabelformat{#2#1#3}

    % Уравнение
    \crefformat{equation}{(#2#1#3)} % одиночная ссылка с приставкой
    \labelcrefformat{equation}{(#2#1#3)} % одиночная ссылка без приставки
    \crefrangeformat{equation}{(#3#1#4) \cyrdash~(#5#2#6)} % диапазон ссылок с приставкой
    \labelcrefrangeformat{equation}{(#3#1#4) \cyrdash~(#5#2#6)} % диапазон ссылок без приставки
    \crefmultiformat{equation}{(#2#1#3)}{ и~(#2#1#3)}{, (#2#1#3)}{ и~(#2#1#3)} % перечисление ссылок с приставкой
    \labelcrefmultiformat{equation}{(#2#1#3)}{ и~(#2#1#3)}{, (#2#1#3)}{ и~(#2#1#3)} % перечисление без приставки

    % Подуравнение
    \crefformat{subequation}{(#2#1#3)} % одиночная ссылка с приставкой
    \labelcrefformat{subequation}{(#2#1#3)} % одиночная ссылка без приставки
    \crefrangeformat{subequation}{(#3#1#4) \cyrdash~(#5#2#6)} % диапазон ссылок с приставкой
    \labelcrefrangeformat{subequation}{(#3#1#4) \cyrdash~(#5#2#6)} % диапазон ссылок без приставки
    \crefmultiformat{subequation}{(#2#1#3)}{ и~(#2#1#3)}{, (#2#1#3)}{ и~(#2#1#3)} % перечисление ссылок с приставкой
    \labelcrefmultiformat{subequation}{(#2#1#3)}{ и~(#2#1#3)}{, (#2#1#3)}{ и~(#2#1#3)} % перечисление без приставки

    % Глава
    \crefformat{chapter}{#2#1#3} % одиночная ссылка с приставкой
    \labelcrefformat{chapter}{#2#1#3} % одиночная ссылка без приставки
    \crefrangeformat{chapter}{#3#1#4 \cyrdash~#5#2#6} % диапазон ссылок с приставкой
    \labelcrefrangeformat{chapter}{#3#1#4 \cyrdash~#5#2#6} % диапазон ссылок без приставки
    \crefmultiformat{chapter}{#2#1#3}{ и~#2#1#3}{, #2#1#3}{ и~#2#1#3} % перечисление ссылок с приставкой
    \labelcrefmultiformat{chapter}{#2#1#3}{ и~#2#1#3}{, #2#1#3}{ и~#2#1#3} % перечисление без приставки

    % Параграф
    \crefformat{section}{#2#1#3} % одиночная ссылка с приставкой
    \labelcrefformat{section}{#2#1#3} % одиночная ссылка без приставки
    \crefrangeformat{section}{#3#1#4 \cyrdash~#5#2#6} % диапазон ссылок с приставкой
    \labelcrefrangeformat{section}{#3#1#4 \cyrdash~#5#2#6} % диапазон ссылок без приставки
    \crefmultiformat{section}{#2#1#3}{ и~#2#1#3}{, #2#1#3}{ и~#2#1#3} % перечисление ссылок с приставкой
    \labelcrefmultiformat{section}{#2#1#3}{ и~#2#1#3}{, #2#1#3}{ и~#2#1#3} % перечисление без приставки

    % Приложение
    \crefformat{appendix}{#2#1#3} % одиночная ссылка с приставкой
    \labelcrefformat{appendix}{#2#1#3} % одиночная ссылка без приставки
    \crefrangeformat{appendix}{#3#1#4 \cyrdash~#5#2#6} % диапазон ссылок с приставкой
    \labelcrefrangeformat{appendix}{#3#1#4 \cyrdash~#5#2#6} % диапазон ссылок без приставки
    \crefmultiformat{appendix}{#2#1#3}{ и~#2#1#3}{, #2#1#3}{ и~#2#1#3} % перечисление ссылок с приставкой
    \labelcrefmultiformat{appendix}{#2#1#3}{ и~#2#1#3}{, #2#1#3}{ и~#2#1#3} % перечисление без приставки

    % Рисунок
    \crefformat{figure}{#2#1#3} % одиночная ссылка с приставкой
    \labelcrefformat{figure}{#2#1#3} % одиночная ссылка без приставки
    \crefrangeformat{figure}{#3#1#4 \cyrdash~#5#2#6} % диапазон ссылок с приставкой
    \labelcrefrangeformat{figure}{#3#1#4 \cyrdash~#5#2#6} % диапазон ссылок без приставки
    \crefmultiformat{figure}{#2#1#3}{ и~#2#1#3}{, #2#1#3}{ и~#2#1#3} % перечисление ссылок с приставкой
    \labelcrefmultiformat{figure}{#2#1#3}{ и~#2#1#3}{, #2#1#3}{ и~#2#1#3} % перечисление без приставки

    % Таблица
    \crefformat{table}{#2#1#3} % одиночная ссылка с приставкой
    \labelcrefformat{table}{#2#1#3} % одиночная ссылка без приставки
    \crefrangeformat{table}{#3#1#4 \cyrdash~#5#2#6} % диапазон ссылок с приставкой
    \labelcrefrangeformat{table}{#3#1#4 \cyrdash~#5#2#6} % диапазон ссылок без приставки
    \crefmultiformat{table}{#2#1#3}{ и~#2#1#3}{, #2#1#3}{ и~#2#1#3} % перечисление ссылок с приставкой
    \labelcrefmultiformat{table}{#2#1#3}{ и~#2#1#3}{, #2#1#3}{ и~#2#1#3} % перечисление без приставки

    % Листинг
    \crefformat{lstlisting}{#2#1#3} % одиночная ссылка с приставкой
    \labelcrefformat{lstlisting}{#2#1#3} % одиночная ссылка без приставки
    \crefrangeformat{lstlisting}{#3#1#4 \cyrdash~#5#2#6} % диапазон ссылок с приставкой
    \labelcrefrangeformat{lstlisting}{#3#1#4 \cyrdash~#5#2#6} % диапазон ссылок без приставки
    \crefmultiformat{lstlisting}{#2#1#3}{ и~#2#1#3}{, #2#1#3}{ и~#2#1#3} % перечисление ссылок с приставкой
    \labelcrefmultiformat{lstlisting}{#2#1#3}{ и~#2#1#3}{, #2#1#3}{ и~#2#1#3} % перечисление без приставки

    % Листинг
    \crefformat{ListingEnv}{#2#1#3} % одиночная ссылка с приставкой
    \labelcrefformat{ListingEnv}{#2#1#3} % одиночная ссылка без приставки
    \crefrangeformat{ListingEnv}{#3#1#4 \cyrdash~#5#2#6} % диапазон ссылок с приставкой
    \labelcrefrangeformat{ListingEnv}{#3#1#4 \cyrdash~#5#2#6} % диапазон ссылок без приставки
    \crefmultiformat{ListingEnv}{#2#1#3}{ и~#2#1#3}{, #2#1#3}{ и~#2#1#3} % перечисление ссылок с приставкой
    \labelcrefmultiformat{ListingEnv}{#2#1#3}{ и~#2#1#3}{, #2#1#3}{ и~#2#1#3} % перечисление без приставки
\fi

%%% Настройки гиперссылок %%%
\ifluatex
    \hypersetup{
        unicode,                % Unicode encoded PDF strings
    }
\fi

\hypersetup{
    linktocpage=true,           % ссылки с номера страницы в оглавлении, списке таблиц и списке рисунков
%    linktoc=all,                % both the section and page part are links
%    pdfpagelabels=false,        % set PDF page labels (true|false)
    plainpages=false,           % Forces page anchors to be named by the Arabic form  of the page number, rather than the formatted form
    colorlinks,                 % ссылки отображаются раскрашенным текстом, а не раскрашенным прямоугольником, вокруг текста
    linkcolor={linkcolor},      % цвет ссылок типа ref, eqref и подобных
    citecolor={citecolor},      % цвет ссылок-цитат
    urlcolor={urlcolor},        % цвет гиперссылок
%    hidelinks,                  % Hide links (removing color and border)
    pdftitle={\thesisTitle},    % Заголовок
    pdfauthor={\thesisAuthor},  % Автор
    pdfsubject={\thesisSpecialtyNumber\ \thesisSpecialtyTitle},      % Тема
%    pdfcreator={Создатель},     % Создатель, Приложение
%    pdfproducer={Производитель},% Производитель, Производитель PDF
    pdfkeywords={\keywords},    % Ключевые слова
    pdflang={ru},
}
\ifnumequal{\value{draft}}{1}{% Черновик
    \hypersetup{
        draft,
    }
}{}

%%% Списки %%%
% Используем короткое тире (endash) для ненумерованных списков (ГОСТ 2.105-95, пункт 4.1.7, требует дефиса, но так лучше смотрится)

% Перечисление строчными буквами латинского алфавита (ГОСТ 2.105-95, 4.1.7)
%\renewcommand{\theenumi}{\alph{enumi}}
%\renewcommand{\labelenumi}{\theenumi)}

% Перечисление строчными буквами русского алфавита (ГОСТ 2.105-95, 4.1.7)
% \makeatletter
% \AddEnumerateCounter{\asbuk}{\russian@alph}{щ}      % Управляем списками/перечислениями через пакет enumitem, а он 'не знает' про asbuk, потому 'учим' его
% \makeatother
%\renewcommand{\theenumi}{\asbuk{enumi}} %первый уровень нумерации
%\renewcommand{\labelenumi}{\theenumi)} %первый уровень нумерации
% \renewcommand{\theenumii}{\asbuk{enumii}} %второй уровень нумерации
% \renewcommand{\labelenumii}{\theenumii)} %второй уровень нумерации
 %третий уровень нумерации
 %третий уровень нумерации

\setlist{nosep,%                                    % Единый стиль для всех списков (пакет enumitem), без дополнительных интервалов.
    labelindent=\parindent,leftmargin=*%            % Каждый пункт, подпункт и перечисление записывают с абзацного отступа (ГОСТ 2.105-95, 4.1.8)
}

\makeatletter
\def\formtotal#1#2#3#4#5{%
    \newcount\@c
    \@c\totvalue{#1}\relax
    \newcount\@last
    \newcount\@pnul
    \@last\@c\relax
    \divide\@last 10
    \@pnul\@last\relax
    \divide\@pnul 10
    \multiply\@pnul-10
    \advance\@pnul\@last
    \multiply\@last-10
    \advance\@last\@c
    #2%
    \ifnum\@pnul=1#5\else%
    \ifcase\@last#5\or#3\or#4\or#4\or#4\else#5\fi
    \fi
}
\makeatother

%%% Команды рецензирования %%%
\ifboolexpr{ (test {\ifnumequal{\value{draft}}{1}}) or (test {\ifnumequal{\value{showmarkup}}{1}})}{
        \newrobustcmd{\todo}[1]{\textcolor{red}{#1}}
        \newrobustcmd{\note}[2][]{\ifstrempty{#1}{#2}{\textcolor{#1}{#2}}}
        \newenvironment{commentbox}[1][]%
        {\ifstrempty{#1}{}{\color{#1}}}%
        {}
}{
        \newrobustcmd{\todo}[1]{}
        \newrobustcmd{\note}[2][]{}
        \excludecomment{commentbox}
}
           % Стили общие для диссертации и автореферата
}{
    %%% Шаблон %%%
\DeclareRobustCommand{\fixme}{\textcolor{red}}  % решаем проблему превращения
                                % названия цвета в результате \MakeUppercase,
                                % http://tex.stackexchange.com/a/187930,
                                % \DeclareRobustCommand protects \fixme
                                % from expanding inside \MakeUppercase
\AtBeginDocument{%
    \setlength{\parindent}{2.5em}                   % Абзацный отступ. Должен быть одинаковым по всему тексту и равен пяти знакам (ГОСТ Р 7.0.11-2011, 5.3.7).
}

%%% Таблицы %%%
\DeclareCaptionLabelSeparator{tabsep}{\tablabelsep} % нумерация таблиц
\DeclareCaptionFormat{split}{\splitformatlabel#1\par\splitformattext#3}

\captionsetup[table]{
        format=\tabformat,                % формат подписи (plain|hang)
        font=normal,                      % нормальные размер, цвет, стиль шрифта
        skip=.0pt,                        % отбивка под подписью
        parskip=.0pt,                     % отбивка между параграфами подписи
        position=above,                   % положение подписи
        justification=\tabjust,           % центровка
        indent=\tabindent,                % смещение строк после первой
        labelsep=tabsep,                  % разделитель
        singlelinecheck=\tabsinglecenter, % не выравнивать по центру, если умещается в одну строку
}

%%% Рисунки %%%
\DeclareCaptionLabelSeparator{figsep}{\figlabelsep} % нумерация рисунков

\captionsetup[figure]{
        format=plain,                     % формат подписи (plain|hang)
        font=normal,                      % нормальные размер, цвет, стиль шрифта
        skip=.0pt,                        % отбивка под подписью
        parskip=.0pt,                     % отбивка между параграфами подписи
        position=below,                   % положение подписи
        singlelinecheck=true,             % выравнивание по центру, если умещается в одну строку
        justification=centerlast,         % центровка
        labelsep=figsep,                  % разделитель
}

%%% Подписи подрисунков %%%
\DeclareCaptionSubType{figure}
\ifnumequal{\value{englishthesis}}{0}{
     % нумерация подрисунков
}{}
\ifsynopsis
\DeclareCaptionFont{norm}{\fontsize{10pt}{11pt}\selectfont}
\newcommand{\subfigureskip}{2.pt}
\else
\DeclareCaptionFont{norm}{\fontsize{14pt}{16pt}\selectfont}
\newcommand{\subfigureskip}{0.pt}
\fi

\captionsetup[subfloat]{
        labelfont=norm,                 % нормальный размер подписей подрисунков
        textfont=norm,                  % нормальный размер подписей подрисунков
        labelsep=space,                 % разделитель
        labelformat=brace,              % одна скобка справа от номера
        justification=centering,        % центровка
        singlelinecheck=true,           % выравнивание по центру, если умещается в одну строку
        skip=\subfigureskip,            % отбивка над подписью
        parskip=.0pt,                   % отбивка между параграфами подписи
        position=below,                 % положение подписи
}

%%% Настройки ссылок на рисунки, таблицы и др. %%%
% команды \cref...format отвечают за форматирование при помощи команды \cref
% команды \labelcref...format отвечают за форматирование при помощи команды \labelcref

\ifpresentation
\else
    \crefdefaultlabelformat{#2#1#3}

    % Уравнение
    \crefformat{equation}{(#2#1#3)} % одиночная ссылка с приставкой
    \labelcrefformat{equation}{(#2#1#3)} % одиночная ссылка без приставки
    \crefrangeformat{equation}{(#3#1#4) \cyrdash~(#5#2#6)} % диапазон ссылок с приставкой
    \labelcrefrangeformat{equation}{(#3#1#4) \cyrdash~(#5#2#6)} % диапазон ссылок без приставки
    \crefmultiformat{equation}{(#2#1#3)}{ и~(#2#1#3)}{, (#2#1#3)}{ и~(#2#1#3)} % перечисление ссылок с приставкой
    \labelcrefmultiformat{equation}{(#2#1#3)}{ и~(#2#1#3)}{, (#2#1#3)}{ и~(#2#1#3)} % перечисление без приставки

    % Подуравнение
    \crefformat{subequation}{(#2#1#3)} % одиночная ссылка с приставкой
    \labelcrefformat{subequation}{(#2#1#3)} % одиночная ссылка без приставки
    \crefrangeformat{subequation}{(#3#1#4) \cyrdash~(#5#2#6)} % диапазон ссылок с приставкой
    \labelcrefrangeformat{subequation}{(#3#1#4) \cyrdash~(#5#2#6)} % диапазон ссылок без приставки
    \crefmultiformat{subequation}{(#2#1#3)}{ и~(#2#1#3)}{, (#2#1#3)}{ и~(#2#1#3)} % перечисление ссылок с приставкой
    \labelcrefmultiformat{subequation}{(#2#1#3)}{ и~(#2#1#3)}{, (#2#1#3)}{ и~(#2#1#3)} % перечисление без приставки

    % Глава
    \crefformat{chapter}{#2#1#3} % одиночная ссылка с приставкой
    \labelcrefformat{chapter}{#2#1#3} % одиночная ссылка без приставки
    \crefrangeformat{chapter}{#3#1#4 \cyrdash~#5#2#6} % диапазон ссылок с приставкой
    \labelcrefrangeformat{chapter}{#3#1#4 \cyrdash~#5#2#6} % диапазон ссылок без приставки
    \crefmultiformat{chapter}{#2#1#3}{ и~#2#1#3}{, #2#1#3}{ и~#2#1#3} % перечисление ссылок с приставкой
    \labelcrefmultiformat{chapter}{#2#1#3}{ и~#2#1#3}{, #2#1#3}{ и~#2#1#3} % перечисление без приставки

    % Параграф
    \crefformat{section}{#2#1#3} % одиночная ссылка с приставкой
    \labelcrefformat{section}{#2#1#3} % одиночная ссылка без приставки
    \crefrangeformat{section}{#3#1#4 \cyrdash~#5#2#6} % диапазон ссылок с приставкой
    \labelcrefrangeformat{section}{#3#1#4 \cyrdash~#5#2#6} % диапазон ссылок без приставки
    \crefmultiformat{section}{#2#1#3}{ и~#2#1#3}{, #2#1#3}{ и~#2#1#3} % перечисление ссылок с приставкой
    \labelcrefmultiformat{section}{#2#1#3}{ и~#2#1#3}{, #2#1#3}{ и~#2#1#3} % перечисление без приставки

    % Приложение
    \crefformat{appendix}{#2#1#3} % одиночная ссылка с приставкой
    \labelcrefformat{appendix}{#2#1#3} % одиночная ссылка без приставки
    \crefrangeformat{appendix}{#3#1#4 \cyrdash~#5#2#6} % диапазон ссылок с приставкой
    \labelcrefrangeformat{appendix}{#3#1#4 \cyrdash~#5#2#6} % диапазон ссылок без приставки
    \crefmultiformat{appendix}{#2#1#3}{ и~#2#1#3}{, #2#1#3}{ и~#2#1#3} % перечисление ссылок с приставкой
    \labelcrefmultiformat{appendix}{#2#1#3}{ и~#2#1#3}{, #2#1#3}{ и~#2#1#3} % перечисление без приставки

    % Рисунок
    \crefformat{figure}{#2#1#3} % одиночная ссылка с приставкой
    \labelcrefformat{figure}{#2#1#3} % одиночная ссылка без приставки
    \crefrangeformat{figure}{#3#1#4 \cyrdash~#5#2#6} % диапазон ссылок с приставкой
    \labelcrefrangeformat{figure}{#3#1#4 \cyrdash~#5#2#6} % диапазон ссылок без приставки
    \crefmultiformat{figure}{#2#1#3}{ и~#2#1#3}{, #2#1#3}{ и~#2#1#3} % перечисление ссылок с приставкой
    \labelcrefmultiformat{figure}{#2#1#3}{ и~#2#1#3}{, #2#1#3}{ и~#2#1#3} % перечисление без приставки

    % Таблица
    \crefformat{table}{#2#1#3} % одиночная ссылка с приставкой
    \labelcrefformat{table}{#2#1#3} % одиночная ссылка без приставки
    \crefrangeformat{table}{#3#1#4 \cyrdash~#5#2#6} % диапазон ссылок с приставкой
    \labelcrefrangeformat{table}{#3#1#4 \cyrdash~#5#2#6} % диапазон ссылок без приставки
    \crefmultiformat{table}{#2#1#3}{ и~#2#1#3}{, #2#1#3}{ и~#2#1#3} % перечисление ссылок с приставкой
    \labelcrefmultiformat{table}{#2#1#3}{ и~#2#1#3}{, #2#1#3}{ и~#2#1#3} % перечисление без приставки

    % Листинг
    \crefformat{lstlisting}{#2#1#3} % одиночная ссылка с приставкой
    \labelcrefformat{lstlisting}{#2#1#3} % одиночная ссылка без приставки
    \crefrangeformat{lstlisting}{#3#1#4 \cyrdash~#5#2#6} % диапазон ссылок с приставкой
    \labelcrefrangeformat{lstlisting}{#3#1#4 \cyrdash~#5#2#6} % диапазон ссылок без приставки
    \crefmultiformat{lstlisting}{#2#1#3}{ и~#2#1#3}{, #2#1#3}{ и~#2#1#3} % перечисление ссылок с приставкой
    \labelcrefmultiformat{lstlisting}{#2#1#3}{ и~#2#1#3}{, #2#1#3}{ и~#2#1#3} % перечисление без приставки

    % Листинг
    \crefformat{ListingEnv}{#2#1#3} % одиночная ссылка с приставкой
    \labelcrefformat{ListingEnv}{#2#1#3} % одиночная ссылка без приставки
    \crefrangeformat{ListingEnv}{#3#1#4 \cyrdash~#5#2#6} % диапазон ссылок с приставкой
    \labelcrefrangeformat{ListingEnv}{#3#1#4 \cyrdash~#5#2#6} % диапазон ссылок без приставки
    \crefmultiformat{ListingEnv}{#2#1#3}{ и~#2#1#3}{, #2#1#3}{ и~#2#1#3} % перечисление ссылок с приставкой
    \labelcrefmultiformat{ListingEnv}{#2#1#3}{ и~#2#1#3}{, #2#1#3}{ и~#2#1#3} % перечисление без приставки
\fi

%%% Настройки гиперссылок %%%
\ifluatex
    \hypersetup{
        unicode,                % Unicode encoded PDF strings
    }
\fi

\hypersetup{
    linktocpage=true,           % ссылки с номера страницы в оглавлении, списке таблиц и списке рисунков
%    linktoc=all,                % both the section and page part are links
%    pdfpagelabels=false,        % set PDF page labels (true|false)
    plainpages=false,           % Forces page anchors to be named by the Arabic form  of the page number, rather than the formatted form
    colorlinks,                 % ссылки отображаются раскрашенным текстом, а не раскрашенным прямоугольником, вокруг текста
    linkcolor={linkcolor},      % цвет ссылок типа ref, eqref и подобных
    citecolor={citecolor},      % цвет ссылок-цитат
    urlcolor={urlcolor},        % цвет гиперссылок
%    hidelinks,                  % Hide links (removing color and border)
    pdftitle={\thesisTitle},    % Заголовок
    pdfauthor={\thesisAuthor},  % Автор
    pdfsubject={\thesisSpecialtyNumber\ \thesisSpecialtyTitle},      % Тема
%    pdfcreator={Создатель},     % Создатель, Приложение
%    pdfproducer={Производитель},% Производитель, Производитель PDF
    pdfkeywords={\keywords},    % Ключевые слова
    pdflang={ru},
}
\ifnumequal{\value{draft}}{1}{% Черновик
    \hypersetup{
        draft,
    }
}{}

%%% Списки %%%
% Используем короткое тире (endash) для ненумерованных списков (ГОСТ 2.105-95, пункт 4.1.7, требует дефиса, но так лучше смотрится)

% Перечисление строчными буквами латинского алфавита (ГОСТ 2.105-95, 4.1.7)
%\renewcommand{\theenumi}{\alph{enumi}}
%\renewcommand{\labelenumi}{\theenumi)}

% Перечисление строчными буквами русского алфавита (ГОСТ 2.105-95, 4.1.7)
\makeatletter
\AddEnumerateCounter{\asbuk}{\russian@alph}{щ}      % Управляем списками/перечислениями через пакет enumitem, а он 'не знает' про asbuk, потому 'учим' его
\makeatother
%\renewcommand{\theenumi}{\asbuk{enumi}} %первый уровень нумерации
%\renewcommand{\labelenumi}{\theenumi)} %первый уровень нумерации
 %второй уровень нумерации
 %второй уровень нумерации
 %третий уровень нумерации
 %третий уровень нумерации

\setlist{nosep,%                                    % Единый стиль для всех списков (пакет enumitem), без дополнительных интервалов.
    labelindent=\parindent,leftmargin=*%            % Каждый пункт, подпункт и перечисление записывают с абзацного отступа (ГОСТ 2.105-95, 4.1.8)
}

%%% Правильная нумерация приложений, рисунков и формул %%%
%% По ГОСТ 2.105, п. 4.3.8 Приложения обозначают заглавными буквами русского алфавита,
%% начиная с А, за исключением букв Ё, З, Й, О, Ч, Ь, Ы, Ъ.
%% Здесь также переделаны все нумерации русскими буквами.
\ifxetexorluatex
    \makeatletter
    \def\russian@Alph#1{\ifcase#1\or
       А\or Б\or В\or Г\or Д\or Е\or Ж\or
       И\or К\or Л\or М\or Н\or
       П\or Р\or С\or Т\or У\or Ф\or Х\or
       Ц\or Ш\or Щ\or Э\or Ю\or Я\else\xpg@ill@value{#1}{russian@Alph}\fi}
    \def\russian@alph#1{\ifcase#1\or
       а\or б\or в\or г\or д\or е\or ж\or
       и\or к\or л\or м\or н\or
       п\or р\or с\or т\or у\or ф\or х\or
       ц\or ш\or щ\or э\or ю\or я\else\xpg@ill@value{#1}{russian@alph}\fi}
    \def\cyr@Alph#1{\ifcase#1\or
        А\or Б\or В\or Г\or Д\or Е\or Ж\or
        И\or К\or Л\or М\or Н\or
        П\or Р\or С\or Т\or У\or Ф\or Х\or
        Ц\or Ш\or Щ\or Э\or Ю\or Я\else\xpg@ill@value{#1}{cyr@Alph}\fi}
    \def\cyr@alph#1{\ifcase#1\or
        а\or б\or в\or г\or д\or е\or ж\or
        и\or к\or л\or м\or н\or
        п\or р\or с\or т\or у\or ф\or х\or
        ц\or ш\or щ\or э\or ю\or я\else\xpg@ill@value{#1}{cyr@alph}\fi}
    \makeatother
\else
    \makeatletter
    \if@uni@ode
      \def\russian@Alph#1{\ifcase#1\or
        А\or Б\or В\or Г\or Д\or Е\or Ж\or
        И\or К\or Л\or М\or Н\or
        П\or Р\or С\or Т\or У\or Ф\or Х\or
        Ц\or Ш\or Щ\or Э\or Ю\or Я\else\@ctrerr\fi}
    \else
      \def\russian@Alph#1{\ifcase#1\or
        \CYRA\or\CYRB\or\CYRV\or\CYRG\or\CYRD\or\CYRE\or\CYRZH\or
        \CYRI\or\CYRK\or\CYRL\or\CYRM\or\CYRN\or
        \CYRP\or\CYRR\or\CYRS\or\CYRT\or\CYRU\or\CYRF\or\CYRH\or
        \CYRC\or\CYRSH\or\CYRSHCH\or\CYREREV\or\CYRYU\or
        \CYRYA\else\@ctrerr\fi}
    \fi
    \if@uni@ode
      \def\russian@alph#1{\ifcase#1\or
        а\or б\or в\or г\or д\or е\or ж\or
        и\or к\or л\or м\or н\or
        п\or р\or с\or т\or у\or ф\or х\or
        ц\or ш\or щ\or э\or ю\or я\else\@ctrerr\fi}
    \else
      \def\russian@alph#1{\ifcase#1\or
        \cyra\or\cyrb\or\cyrv\or\cyrg\or\cyrd\or\cyre\or\cyrzh\or
        \cyri\or\cyrk\or\cyrl\or\cyrm\or\cyrn\or
        \cyrp\or\cyrr\or\cyrs\or\cyrt\or\cyru\or\cyrf\or\cyrh\or
        \cyrc\or\cyrsh\or\cyrshch\or\cyrerev\or\cyryu\or
        \cyrya\else\@ctrerr\fi}
    \fi
    \makeatother
\fi

%%http://www.linux.org.ru/forum/general/6993203#comment-6994589 (используется totcount)
\makeatletter
\def\formtotal#1#2#3#4#5{%
    \newcount\@c
    \@c\totvalue{#1}\relax
    \newcount\@last
    \newcount\@pnul
    \@last\@c\relax
    \divide\@last 10
    \@pnul\@last\relax
    \divide\@pnul 10
    \multiply\@pnul-10
    \advance\@pnul\@last
    \multiply\@last-10
    \advance\@last\@c
    #2%
    \ifnum\@pnul=1#5\else%
    \ifcase\@last#5\or#3\or#4\or#4\or#4\else#5\fi
    \fi
}
\makeatother

%%% Команды рецензирования %%%
\ifboolexpr{ (test {\ifnumequal{\value{draft}}{1}}) or (test {\ifnumequal{\value{showmarkup}}{1}})}{
        \newrobustcmd{\todo}[1]{\textcolor{red}{#1}}
        \newrobustcmd{\note}[2][]{\ifstrempty{#1}{#2}{\textcolor{#1}{#2}}}
        {\ifstrempty{#1}{}{\color{#1}}}%
        {}
}{
        \newrobustcmd{\todo}[1]{}
        \newrobustcmd{\note}[2][]{}
        \excludecomment{commentbox}
}
           % Стили общие для диссертации и автореферата
}

%%% Переопределение именований, если иначе не сработает %%%
%\gappto\captionsrussian{
%    \renewcommand{\chaptername}{Глава}
%    \renewcommand{\appendixname}{Приложение} % (ГОСТ Р 7.0.11-2011, 5.7)
%}

%%% Изображения %%%
\graphicspath{{images/}{Dissertation/images/}}         % Пути к изображениям

%%% Интервалы %%%
%% По ГОСТ Р 7.0.11-2011, пункту 5.3.6 требуется полуторный интервал
%% Реализация средствами класса (на основе setspace) ближе к типографской классике.
%% И правит сразу и в таблицах (если со звёздочкой)
%\DoubleSpacing*     % Двойной интервал
\OnehalfSpacing*    % Полуторный интервал
%\setSpacing{1.42}   % Полуторный интервал, подобный Ворду (возможно, стоит включать вместе с предыдущей строкой)

%%% Макет страницы %%%
% Выставляем значения полей (ГОСТ 7.0.11-2011, 5.3.7)
\geometry{a4paper, top=2cm, bottom=2cm, left=2.5cm, right=1cm, nofoot, nomarginpar} %, heightrounded, showframe
\setlength{\topskip}{0pt}   %размер дополнительного верхнего поля
\setlength{\footskip}{12.3pt} % снимет warning, согласно https://tex.stackexchange.com/a/334346

%%% Выравнивание и переносы %%%
%% http://tex.stackexchange.com/questions/241343/what-is-the-meaning-of-fussy-sloppy-emergencystretch-tolerance-hbadness
%% http://www.latex-community.org/forum/viewtopic.php?p=70342#p70342
\tolerance 1414
\hbadness 1414
\emergencystretch 1.5em % В случае проблем регулировать в первую очередь
\hfuzz 0.3pt
\vfuzz \hfuzz
%\raggedbottom
%\sloppy                 % Избавляемся от переполнений
\clubpenalty=10000      % Запрещаем разрыв страницы после первой строки абзаца
\widowpenalty=10000     % Запрещаем разрыв страницы после последней строки абзаца
\brokenpenalty=4991     % Ограничение на разрыв страницы, если строка заканчивается переносом

%%% Блок управления параметрами для выравнивания заголовков в тексте %%%
\newlength{\otstuplen}
\setlength{\otstuplen}{\theotstup\parindent}
\ifnumequal{\value{headingalign}}{0}{% выравнивание заголовков в тексте
    \newcommand{\hdngalign}{\centering}                % по центру
    \newcommand{\hdngaligni}{}% по центру
    \setlength{\otstuplen}{0pt}
}{%
    \newcommand{\hdngalign}{}                 % по левому краю
    \newcommand{\hdngaligni}{\hspace{\otstuplen}}      % по левому краю
} % В обоих случаях вроде бы без переноса, как и надо (ГОСТ Р 7.0.11-2011, 5.3.5)

%%% Оглавление %%%
                % отбивка точками до номера страницы начала главы/раздела

%% Переносить слова в заголовке не допускается (ГОСТ Р 7.0.11-2011, 5.3.5). Заголовки в оглавлении должны точно повторять заголовки в тексте (ГОСТ Р 7.0.11-2011, 5.2.3). Прямого указания на запрет переносов в оглавлении нет, но по той же логике невнесения искажений в смысл, лучше в оглавлении не переносить:
\setrmarg{2.55em plus1fil}                             %To have the (sectional) titles in the ToC, etc., typeset ragged right with no hyphenation
        % нежирные номера страниц у глав в оглавлении
% нежирные точки до номеров страниц у глав в оглавлении
%\renewcommand{\cftchapterfont}{}                       % нежирные названия глав в оглавлении

\ifnumgreater{\value{headingdelim}}{0}{%
           % добавляет точку с пробелом после номера раздела в оглавлении
}{}
\ifnumgreater{\value{headingdelim}}{1}{%
           % добавляет точку с пробелом после номера подраздела в оглавлении
        % добавляет точку с пробелом после номера подподраздела в оглавлении
     % добавляет точку с пробелом после номера подподподраздела в оглавлении
    \AfterEndPreamble{% без этого polyglossia сама всё переопределяет
        \setsecnumformat{\csname the#1\endcsname.\space}
    }
}{%
    \AfterEndPreamble{% без этого polyglossia сама всё переопределяет
        \setsecnumformat{\csname the#1\endcsname\quad}
    }
}

 % Слово Приложение в оглавлении

%%% Колонтитулы %%%
% Порядковый номер страницы печатают на середине верхнего поля страницы (ГОСТ Р 7.0.11-2011, 5.3.8)
\makeevenhead{plain}{}{\rmfamily\thepage}{}
\makeoddhead{plain}{}{\rmfamily\thepage}{}
\makeevenfoot{plain}{}{}{}
\makeoddfoot{plain}{}{}{}
\pagestyle{plain}

%%% добавить Стр. над номерами страниц в оглавлении
%%% http://tex.stackexchange.com/a/306950
\newif\ifendTOC

\ifnumequal{\value{englishthesis}}{0}{
    \newcommand*{\tocheader}{
        \ifnumequal{\value{pgnum}}{1}{%
            \ifendTOC\else\hbox to \linewidth%
              {\noindent{}~\hfill{Стр.}}\par%
              \ifnumless{\value{page}}{3}{}{%
                \vspace{0.5\onelineskip}
              }
              \afterpage{\tocheader}
            \fi%
        }{}%
        }%
}{
    \newcommand*{\tocheader}{
        \ifnumequal{\value{pgnum}}{1}{%
            \ifendTOC\else\hbox to \linewidth%
              {\noindent{}~\hfill{Page}}\par%
              \ifnumless{\value{page}}{3}{}{%
                \vspace{0.5\onelineskip}
              }
              \afterpage{\tocheader}
            \fi%
        }{}%
        }%
}

%%% Оформление заголовков глав, разделов, подразделов %%%
%% Работа должна быть выполнена ... размером шрифта 12-14 пунктов (ГОСТ Р 7.0.11-2011, 5.3.8). То есть не должно быть надписей шрифтом более 14. Так и поставим.
%% Эти установки будут давать одинаковый результат независимо от выбора базовым шрифтом 12 пт или 14 пт
\newcommand{\basegostsectionfont}{\fontsize{14pt}{16pt}\selectfont\bfseries}

\makechapterstyle{thesisgost}{%
    \chapterstyle{default}
    \setlength{\beforechapskip}{0pt}
    \setlength{\midchapskip}{0pt}
    \setlength{\afterchapskip}{\theintvl\curtextsize}
    \renewcommand*{\chapnamefont}{\basegostsectionfont}

    \ifnumgreater{\value{headingdelim}}{0}{%
        \renewcommand*{\afterchapternum}{.\space}   % добавляет точку с пробелом после номера раздела
    }{%
        \renewcommand*{\afterchapternum}{\quad}     % добавляет \quad после номера раздела
    }
    
    \renewcommand*{\printchaptername}{}
    
}

\makeatletter
\makechapterstyle{thesisgostchapname}{%
    \chapterstyle{thesisgost}
    
    \renewcommand*{\printchaptername}{\hdngaligni\hdngalign\chapnamefont \@chapapp} %
}
\makeatother

\chapterstyle{thesisgost}

\setsecheadstyle{\basegostsectionfont\hdngalign}
\setsecindent{\otstuplen}

\setsubsecheadstyle{\basegostsectionfont\hdngalign}
\setsubsecindent{\otstuplen}

\setsubsubsecheadstyle{\basegostsectionfont\hdngalign}
\setsubsubsecindent{\otstuplen}

\sethangfrom{\noindent #1} %все заголовки подразделов центрируются с учетом номера, как block

\ifnumequal{\value{chapstyle}}{1}{%
    \chapterstyle{thesisgostchapname}
     % будет вписано слово Глава перед каждым номером раздела в оглавлении
}{}%

%%% Интервалы между заголовками
\setbeforesecskip{\theintvl\curtextsize}% Заголовки отделяют от текста сверху и снизу тремя интервалами (ГОСТ Р 7.0.11-2011, 5.3.5).
\setaftersecskip{\theintvl\curtextsize}
\setbeforesubsecskip{\theintvl\curtextsize}
\setaftersubsecskip{\theintvl\curtextsize}
\setbeforesubsubsecskip{\theintvl\curtextsize}
\setaftersubsubsecskip{\theintvl\curtextsize}

%%% Вертикальные интервалы глав (\chapter) в оглавлении как и у заголовков
% раскомментировать следующие 2
% \setlength{\cftbeforechapterskip}{0pt plus 0pt}   % ИЛИ эти 2 строки из учебника
% \renewcommand*{\insertchapterspace}{}
% или эту
% \renewcommand*{\cftbeforechapterskip}{0em}

%%% Блок дополнительного управления размерами заголовков
\ifnumequal{\value{headingsize}}{1}{% Пропорциональные заголовки и базовый шрифт 14 пт
    \renewcommand{\basegostsectionfont}{\large\bfseries}
    \renewcommand*{\chapnamefont}{\Large\bfseries}

}{}

%%% Счётчики %%%

%% Упрощённые настройки шаблона диссертации: нумерация формул, таблиц, рисунков
\ifnumequal{\value{contnumeq}}{1}{%
    \counterwithout{equation}{chapter} % Убираем связанность номера формулы с номером главы/раздела
}{}
\ifnumequal{\value{contnumfig}}{1}{%
    \counterwithout{figure}{chapter}   % Убираем связанность номера рисунка с номером главы/раздела
}{}
\ifnumequal{\value{contnumtab}}{1}{%
    \counterwithout{table}{chapter}    % Убираем связанность номера таблицы с номером главы/раздела
}{}

\AfterEndPreamble{
%% регистрируем счётчики в системе totcounter
    \regtotcounter{totalcount@figure}
    \regtotcounter{totalcount@table}       % Если иным способом поставить в преамбуле то ошибка в числе таблиц
    \regtotcounter{TotPages}               % Если иным способом поставить в преамбуле то ошибка в числе страниц
    \newtotcounter{totalappendix}
    \newtotcounter{totalchapter}
}
  % Стили для диссертации
% для вертикального центрирования ячеек в tabulary
\def\zz{\ifx\[$\else\aftergroup\zzz\fi}
%$ \] % <-- чиним подсветку синтаксиса в некоторых редакторах
\def\zzz{\setbox0\lastbox
\dimen0\dimexpr\extrarowheight + \ht0-\dp0\relax
\setbox0\hbox{\raise-.5\dimen0\box0}%
\ht0=\dimexpr\ht0+\extrarowheight\relax
\dp0=\dimexpr\dp0+\extrarowheight\relax
\box0
}

\theoremstyle{definition}
\newtheorem{definition}{Definition}[section]

\theoremstyle{definition}
\newtheorem{example}{Example}[section]

\theoremstyle{plain}
\newtheorem{theorem}{Theorem}[section]

\newtheorem{proposition}{Proposition}[section]

\theoremstyle{remark}
\newtheorem*{remark}{Remark}

\lstdefinelanguage{Renhanced}%
{keywords={abbreviate,abline,abs,acos,acosh,action,add1,add,%
        aggregate,alias,Alias,alist,all,anova,any,aov,aperm,append,apply,%
        approx,approxfun,apropos,Arg,args,array,arrows,as,asin,asinh,%
        atan,atan2,atanh,attach,attr,attributes,autoload,autoloader,ave,%
        axis,backsolve,barplot,basename,besselI,besselJ,besselK,besselY,%
        beta,binomial,body,box,boxplot,break,browser,bug,builtins,bxp,by,%
        c,C,call,Call,case,cat,category,cbind,ceiling,character,char,%
        charmatch,check,chol,chol2inv,choose,chull,class,close,cm,codes,%
        coef,coefficients,co,col,colnames,colors,colours,commandArgs,%
        comment,complete,complex,conflicts,Conj,contents,contour,%
        contrasts,contr,control,helmert,contrib,convolve,cooks,coords,%
        distance,coplot,cor,cos,cosh,count,fields,cov,covratio,wt,CRAN,%
        create,crossprod,cummax,cummin,cumprod,cumsum,curve,cut,cycle,D,%
        data,dataentry,date,dbeta,dbinom,dcauchy,dchisq,de,debug,%
        debugger,Defunct,default,delay,delete,deltat,demo,de,density,%
        deparse,dependencies,Deprecated,deriv,description,detach,%
        dev2bitmap,dev,cur,deviance,off,prev,,dexp,df,dfbetas,dffits,%
        dgamma,dgeom,dget,dhyper,diag,diff,digamma,dim,dimnames,dir,%
        dirname,dlnorm,dlogis,dnbinom,dnchisq,dnorm,do,dotplot,double,%
        download,dpois,dput,drop,drop1,dsignrank,dt,dummy,dump,dunif,%
        duplicated,dweibull,dwilcox,dyn,edit,eff,effects,eigen,else,%
        emacs,end,environment,env,erase,eval,equal,evalq,example,exists,%
        exit,exp,expand,expression,External,extract,extractAIC,factor,%
        fail,family,fft,file,filled,find,fitted,fivenum,fix,floor,for,%
        For,formals,format,formatC,formula,Fortran,forwardsolve,frame,%
        frequency,ftable,ftable2table,function,gamma,Gamma,gammaCody,%
        gaussian,gc,gcinfo,gctorture,get,getenv,geterrmessage,getOption,%
        getwd,gl,glm,globalenv,gnome,GNOME,graphics,gray,grep,grey,grid,%
        gsub,hasTsp,hat,heat,help,hist,home,hsv,httpclient,I,identify,if,%
        ifelse,Im,image,\%in\%,index,influence,measures,inherits,install,%
        installed,integer,interaction,interactive,Internal,intersect,%
        inverse,invisible,IQR,is,jitter,kappa,kronecker,labels,lapply,%
        layout,lbeta,lchoose,lcm,legend,length,levels,lgamma,library,%
        licence,license,lines,list,lm,load,local,locator,log,log10,log1p,%
        log2,logical,loglin,lower,lowess,ls,lsfit,lsf,ls,machine,Machine,%
        mad,mahalanobis,make,link,margin,match,Math,matlines,mat,matplot,%
        matpoints,matrix,max,mean,median,memory,menu,merge,methods,min,%
        missing,Mod,mode,model,response,mosaicplot,mtext,mvfft,na,nan,%
        names,omit,nargs,nchar,ncol,NCOL,new,next,NextMethod,nextn,%
        nlevels,nlm,noquote,NotYetImplemented,NotYetUsed,nrow,NROW,null,%
        numeric,\%o\%,objects,offset,old,on,Ops,optim,optimise,optimize,%
        options,or,order,ordered,outer,package,packages,page,pairlist,%
        pairs,palette,panel,par,parent,parse,paste,path,pbeta,pbinom,%
        pcauchy,pchisq,pentagamma,persp,pexp,pf,pgamma,pgeom,phyper,pico,%
        pictex,piechart,Platform,plnorm,plogis,plot,pmatch,pmax,pmin,%
        pnbinom,pnchisq,pnorm,points,poisson,poly,polygon,polyroot,pos,%
        postscript,power,ppoints,ppois,predict,preplot,pretty,Primitive,%
        print,prmatrix,proc,prod,profile,proj,prompt,prop,provide,%
        psignrank,ps,pt,ptukey,punif,pweibull,pwilcox,q,qbeta,qbinom,%
        qcauchy,qchisq,qexp,qf,qgamma,qgeom,qhyper,qlnorm,qlogis,qnbinom,%
        qnchisq,qnorm,qpois,qqline,qqnorm,qqplot,qr,Q,qty,qy,qsignrank,%
        qt,qtukey,quantile,quasi,quit,qunif,quote,qweibull,qwilcox,%
        rainbow,range,rank,rbeta,rbind,rbinom,rcauchy,rchisq,Re,read,csv,%
        csv2,fwf,readline,socket,real,Recall,rect,reformulate,regexpr,%
        relevel,remove,rep,repeat,replace,replications,report,require,%
        resid,residuals,restart,return,rev,rexp,rf,rgamma,rgb,rgeom,R,%
        rhyper,rle,rlnorm,rlogis,rm,rnbinom,RNGkind,rnorm,round,row,%
        rownames,rowsum,rpois,rsignrank,rstandard,rstudent,rt,rug,runif,%
        rweibull,rwilcox,sample,sapply,save,scale,scan,scan,screen,sd,se,%
        search,searchpaths,segments,seq,sequence,setdiff,setequal,set,%
        setwd,show,sign,signif,sin,single,sinh,sink,solve,sort,source,%
        spline,splinefun,split,sqrt,stars,start,stat,stem,step,stop,%
        storage,strstrheight,stripplot,strsplit,structure,strwidth,sub,%
        subset,substitute,substr,substring,sum,summary,sunflowerplot,svd,%
        sweep,switch,symbol,symbols,symnum,sys,status,system,t,table,%
        tabulate,tan,tanh,tapply,tempfile,terms,terrain,tetragamma,text,%
        time,title,topo,trace,traceback,transform,tri,trigamma,trunc,try,%
        ts,tsp,typeof,unclass,undebug,undoc,union,unique,uniroot,unix,%
        unlink,unlist,unname,untrace,update,upper,url,UseMethod,var,%
        variable,vector,Version,vi,warning,warnings,weighted,weights,%
        which,while,window,write,\%x\%,x11,X11,xedit,xemacs,xinch,xor,%
        xpdrows,xy,xyinch,yinch,zapsmall,zip},%
    otherkeywords={!,!=,~,$,*,\%,\&,\%/\%,\%*\%,\%\%,<-,<<-},%$
    alsoother={._$},%$
    sensitive,%
    morecomment=[l]\#,%
    morestring=[d]",%
    morestring=[d]'% 2001 Robert Denham
}%

%решаем проблему с кириллицей в комментариях (в pdflatex) https://tex.stackexchange.com/a/103712
\lstset{extendedchars=true,keepspaces=true,literate={Ö}{{\"O}}1
    {Ä}{{\"A}}1
    {Ü}{{\"U}}1
    {ß}{{\ss}}1
    {ü}{{\"u}}1
    {ä}{{\"a}}1
    {ö}{{\"o}}1
    {~}{{\textasciitilde}}1
    {а}{{\selectfont\char224}}1
    {б}{{\selectfont\char225}}1
    {в}{{\selectfont\char226}}1
    {г}{{\selectfont\char227}}1
    {д}{{\selectfont\char228}}1
    {е}{{\selectfont\char229}}1
    {ё}{{\"e}}1
    {ж}{{\selectfont\char230}}1
    {з}{{\selectfont\char231}}1
    {и}{{\selectfont\char232}}1
    {й}{{\selectfont\char233}}1
    {к}{{\selectfont\char234}}1
    {л}{{\selectfont\char235}}1
    {м}{{\selectfont\char236}}1
    {н}{{\selectfont\char237}}1
    {о}{{\selectfont\char238}}1
    {п}{{\selectfont\char239}}1
    {р}{{\selectfont\char240}}1
    {с}{{\selectfont\char241}}1
    {т}{{\selectfont\char242}}1
    {у}{{\selectfont\char243}}1
    {ф}{{\selectfont\char244}}1
    {х}{{\selectfont\char245}}1
    {ц}{{\selectfont\char246}}1
    {ч}{{\selectfont\char247}}1
    {ш}{{\selectfont\char248}}1
    {щ}{{\selectfont\char249}}1
    {ъ}{{\selectfont\char250}}1
    {ы}{{\selectfont\char251}}1
    {ь}{{\selectfont\char252}}1
    {э}{{\selectfont\char253}}1
    {ю}{{\selectfont\char254}}1
    {я}{{\selectfont\char255}}1
    {А}{{\selectfont\char192}}1
    {Б}{{\selectfont\char193}}1
    {В}{{\selectfont\char194}}1
    {Г}{{\selectfont\char195}}1
    {Д}{{\selectfont\char196}}1
    {Е}{{\selectfont\char197}}1
    {Ё}{{\"E}}1
    {Ж}{{\selectfont\char198}}1
    {З}{{\selectfont\char199}}1
    {И}{{\selectfont\char200}}1
    {Й}{{\selectfont\char201}}1
    {К}{{\selectfont\char202}}1
    {Л}{{\selectfont\char203}}1
    {М}{{\selectfont\char204}}1
    {Н}{{\selectfont\char205}}1
    {О}{{\selectfont\char206}}1
    {П}{{\selectfont\char207}}1
    {Р}{{\selectfont\char208}}1
    {С}{{\selectfont\char209}}1
    {Т}{{\selectfont\char210}}1
    {У}{{\selectfont\char211}}1
    {Ф}{{\selectfont\char212}}1
    {Х}{{\selectfont\char213}}1
    {Ц}{{\selectfont\char214}}1
    {Ч}{{\selectfont\char215}}1
    {Ш}{{\selectfont\char216}}1
    {Щ}{{\selectfont\char217}}1
    {Ъ}{{\selectfont\char218}}1
    {Ы}{{\selectfont\char219}}1
    {Ь}{{\selectfont\char220}}1
    {Э}{{\selectfont\char221}}1
    {Ю}{{\selectfont\char222}}1
    {Я}{{\selectfont\char223}}1
    {і}{{\selectfont\char105}}1
    {ї}{{\selectfont\char168}}1
    {є}{{\selectfont\char185}}1
    {ґ}{{\selectfont\char160}}1
    {І}{{\selectfont\char73}}1
    {Ї}{{\selectfont\char136}}1
    {Є}{{\selectfont\char153}}1
    {Ґ}{{\selectfont\char128}}1
}

% Ширина текста минус ширина надписи 999
\newlength{\twless}
\newlength{\lmarg}
\setlength{\lmarg}{\widthof{999}}   % ширина надписи 999
\setlength{\twless}{\textwidth-\lmarg}

\lstset{ %
%    language=R,                     %  Язык указать здесь, если во всех листингах преимущественно один язык, в результате часть настроек может пойти только для этого языка
    numbers=left,                   % where to put the line-numbers
    numberstyle=\fontsize{12pt}{14pt}\selectfont\color{Gray},  % the style that is used for the line-numbers
    firstnumber=1,                  % в этой и следующей строках задаётся поведение нумерации 5, 10, 15...
    stepnumber=5,                   % the step between two line-numbers. If it's 1, each line will be numbered
    numbersep=5pt,                  % how far the line-numbers are from the code
    backgroundcolor=\color{white},  % choose the background color. You must add \usepackage{color}
    showspaces=false,               % show spaces adding particular underscores
    showstringspaces=false,         % underline spaces within strings
    showtabs=false,                 % show tabs within strings adding particular underscores
    frame=leftline,                 % adds a frame of different types around the code
    rulecolor=\color{black},        % if not set, the frame-color may be changed on line-breaks within not-black text (e.g. commens (green here))
    tabsize=2,                      % sets default tabsize to 2 spaces
    captionpos=t,                   % sets the caption-position to top
    breaklines=true,                % sets automatic line breaking
    breakatwhitespace=false,        % sets if automatic breaks should only happen at whitespace
%    title=\lstname,                 % show the filename of files included with \lstinputlisting;
    % also try caption instead of title
    basicstyle=\fontsize{12pt}{14pt}\selectfont\ttfamily,% the size of the fonts that are used for the code
%    keywordstyle=\color{blue},      % keyword style
    commentstyle=\color{ForestGreen}\emph,% comment style
    stringstyle=\color{Mahogany},   % string literal style
    escapeinside={\%*}{*)},         % if you want to add a comment within your code
    morekeywords={*,...},           % if you want to add more keywords to the set
    inputencoding=utf8,             % кодировка кода
    xleftmargin={\lmarg},           % Чтобы весь код и полоска с номерами строк была смещена влево, так чтобы цифры не вылезали за пределы текста слева
}

%http://tex.stackexchange.com/questions/26872/smaller-frame-with-listings
% Окружение, чтобы листинг был компактнее обведен рамкой, если она задается, а не на всю ширину текста
\makeatletter

\makeatother

\DefineVerbatimEnvironment% с шрифтом 12 пт
{Verb}{Verbatim}
{fontsize=\fontsize{12pt}{14pt}\selectfont}

\newfloat[chapter]{ListingEnv}{lol}{Листинг}

%Общие счётчики окружений листингов
%http://tex.stackexchange.com/questions/145546/how-to-make-figure-and-listing-share-their-counter
% Если смешивать плавающие и не плавающие окружения, то могут быть проблемы с нумерацией
\makeatletter
\AfterEndPreamble{% https://tex.stackexchange.com/a/252682
    \let\c@ListingEnv\relax % drop existing counter "ListingEnv"
    \newaliascnt{ListingEnv}{lstlisting} % команда требует пакет aliascnt
    \let\ftype@lstlisting\ftype@ListingEnv % give the floats the same precedence
}
\makeatother

% значок С++ — используйте команду \cpp

%%%  Чересстрочное форматирование таблиц
%% http://tex.stackexchange.com/questions/278362/apply-italic-formatting-to-every-other-row
\newcounter{rowcnt}

% \AtBeginEnvironment{tabular}{\setcounter{rowcnt}{1}}
% \AtEndEnvironment{tabular}{\setcounter{rowcnt}{0}}

%%% Ради примера во второй главе

\ifnumequal{\value{englishthesis}}{0}{
    %%% Русская традиция начертания математических знаков
    
    \renewcommand{\leq}{\ensuremath{\leqslant}}
    
    \renewcommand{\geq}{\ensuremath{\geqslant}}

    %%% Русская традиция начертания математических функций (на случай копирования из зарубежных источников)

    %%% Русская традиция начертания греческих букв (греческие буквы вертикальные, через пакет upgreek)
    \renewcommand{\epsilon}{\ensuremath{\upvarepsilon}}   %  русская традиция записи
    \renewcommand{\phi}{\ensuremath{\upvarphi}}
    \renewcommand{\alpha}{\upalpha}
    \renewcommand{\beta}{\upbeta}
    \renewcommand{\gamma}{\upgamma}
    \renewcommand{\delta}{\updelta}
    \renewcommand{\varepsilon}{\upvarepsilon}
    \renewcommand{\zeta}{\upzeta}
    \renewcommand{\eta}{\upeta}
    \renewcommand{\theta}{\uptheta}
    \renewcommand{\vartheta}{\upvartheta}
    \renewcommand{\iota}{\upiota}
    \renewcommand{\kappa}{\upkappa}
    \renewcommand{\lambda}{\uplambda}
    \renewcommand{\mu}{\upmu}
    \renewcommand{\nu}{\upnu}
    \renewcommand{\xi}{\upxi}
    \renewcommand{\pi}{\uppi}
    \renewcommand{\varpi}{\upvarpi}
    \renewcommand{\rho}{\uprho}
    \renewcommand{\sigma}{\upsigma}
    \renewcommand{\tau}{\uptau}
    \renewcommand{\upsilon}{\upupsilon}
    \renewcommand{\varphi}{\upvarphi}
    \renewcommand{\chi}{\upchi}
    \renewcommand{\psi}{\uppsi}
    \renewcommand{\omega}{\upomega}
}{} % Стили для специфических пользовательских задач

%%% Библиография. Выбор движка для реализации %%%
% Здесь только проверка установленного ключа. Сама настройка выбора движка
% размещена в common/setup.tex
\ifnumequal{\value{bibliosel}}{0}{%
    %%% Реализация библиографии встроенными средствами посредством движка bibtex8 %%%

%%% Пакеты %%%
\usepackage{cite}                                   % Красивые ссылки на литературу

%%% Стили %%%
\bibliographystyle{BibTeX-Styles/utf8gost71u}    % Оформляем библиографию по ГОСТ 7.1 (ГОСТ Р 7.0.11-2011, 5.6.7)

\makeatletter
\renewcommand{\@biblabel}[1]{#1.}   % Заменяем библиографию с квадратных скобок на точку
\makeatother
%% Управление отступами между записями
%% требует etoolbox
%% http://tex.stackexchange.com/a/105642
%\patchcmd\thebibliography
% {\labelsep}
% {\labelsep\itemsep=5pt\parsep=0pt\relax}
% {}
% {\typeout{Couldn't patch the command}}

%%% Список литературы с красной строки (без висячего отступа) %%%
%\patchcmd{\thebibliography} %может потребовать включения пакета etoolbox
%  {\advance\leftmargin\labelsep}
%  {\leftmargin=0pt%
%   \setlength{\labelsep}{\widthof{\ }}% Управляет длиной отступа после точки
%   \itemindent=\parindent%
%   \addtolength{\itemindent}{\labelwidth}% Сдвигаем правее на величину номера с точкой
%   \advance\itemindent\labelsep%
%  }
%  {}{}

%%% Цитирование %%%
                % Разделение ; при перечислении ссылок (ГОСТ Р 7.0.5-2008)

\newcommand*{\autocite}[1]{}  % Чтобы примеры цитирования, рассчитанные на biblatex, не вызывали ошибок при компиляции в bibtex

%%% Создание команд для вывода списка литературы %%%
\newcommand*{\insertbibliofull}{
\bibliography{biblio/external,biblio/author}         % Подключаем BibTeX-базы % После запятых не должно быть лишних пробелов — он "думает", что это тоже имя пути
}

%% Счётчик использованных ссылок на литературу, обрабатывающий с учётом неоднократных ссылок
%% Требуется дважды компилировать, поскольку ему нужно считать актуальный внешний файл со списком литературы
\newtotcounter{citenum}
\def\oldcite{}
\let\oldcite=\bibcite
\def\bibcite{\stepcounter{citenum}\oldcite}
   % Встроенная реализация с загрузкой файла через движок bibtex8
}{
    %%% Реализация библиографии пакетами biblatex и biblatex-gost с использованием движка biber %%%

\usepackage{csquotes} % biblatex рекомендует его подключать. Пакет для оформления сложных блоков цитирования.
%%% Загрузка пакета с основными настройками %%%
\makeatletter
\ifnumequal{\value{draft}}{0}{% Чистовик
\usepackage[%
backend=biber,% движок
bibencoding=utf8,% кодировка bib файла
sorting=none,% настройка сортировки списка литературы
style=gost-numeric,% стиль цитирования и библиографии (по ГОСТ)
language=autobib,% получение языка из babel/polyglossia, default: autobib % если ставить autocite или auto, то цитаты в тексте с указанием страницы, получат указание страницы на языке оригинала
autolang=other,% многоязычная библиография
clearlang=true,% внутренний сброс поля language, если он совпадает с языком из babel/polyglossia
defernumbers=true,% нумерация проставляется после двух компиляций, зато позволяет выцеплять библиографию по ключевым словам и нумеровать не из большего списка
sortcites=true,% сортировать номера затекстовых ссылок при цитировании (если в квадратных скобках несколько ссылок, то отображаться будут отсортированно, а не абы как)
doi=false,% Показывать или нет ссылки на DOI
isbn=false,% Показывать или нет ISBN, ISSN, ISRN
]{biblatex}[2016/09/17]
\ltx@iffilelater{biblatex-gost.def}{2017/05/03}%
{\toggletrue{bbx:gostbibliography}%
}{}
}{%Черновик
\usepackage[%
backend=biber,% движок
bibencoding=utf8,% кодировка bib файла
sorting=none,% настройка сортировки списка литературы
% defernumbers=true, % откомментируйте, если требуется правильная нумерация ссылок на литературу в режиме черновика. Замедляет сборку
language=autobib,% получение языка из babel/polyglossia, default: autobib % если ставить autocite или auto, то цитаты в тексте с указанием страницы, получат указание страницы на языке оригинала
autolang=other,% многоязычная библиография
clearlang=true,% внутренний сброс поля language, если он совпадает с языком из babel/polyglossia
]{biblatex}[2016/09/17]%
}
\makeatother

\providebool{blxmc} % biblatex version needs and has MakeCapital workaround
\boolfalse{blxmc} % setting our new boolean flag to default false
\ifxetexorluatex
\else
% Исправление случая неподдержки знака номера в pdflatex
    \DefineBibliographyStrings{russian}{number={\textnumero}}

% Исправление случая отсутствия прописных букв в некоторых случаях
% https://github.com/plk/biblatex/issues/960#issuecomment-596658282
    \ifdefmacro{\ExplSyntaxOn}{}{\usepackage{expl3}}
    \makeatletter
    \ltx@ifpackagelater{biblatex}{2020/02/23}{
    % Assuming this version of biblatex defines MakeCapital correctly
    }{
        \ltx@ifpackagelater{biblatex}{2019/12/01}{
            % Assuming this version of biblatex defines MakeCapital incorrectly
            \usepackage{expl3}[2020/02/25]
            \@ifpackagelater{expl3}{2020/02/25}{
                \booltrue{blxmc} % setting our new boolean flag to true
            }{}
        }{}
    }
    \makeatother
    \ifblxmc
        \typeout{Assuming this version of biblatex defines MakeCapital
        incorrectly}
        \usepackage{xparse}
        \makeatletter
        \ExplSyntaxOn
        \NewDocumentCommand \blx@maketext@lowercase {m}
          {
            \text_lowercase:n {#1}
          }

        \NewDocumentCommand \blx@maketext@uppercase {m}
          {
            \text_uppercase:n {#1}
          }

        \RenewDocumentCommand \MakeCapital {m}
          {
            \text_titlecase_first:n {#1}
          }
        \ExplSyntaxOff

        \protected\def\blx@biblcstring#1#2#3{%
          \blx@begunit
          \blx@hyphenreset
          \blx@bibstringsimple
          \lowercase{\edef\blx@tempa{#3}}%
          \ifcsundef{#2@\blx@tempa}
            {\blx@warn@nostring\blx@tempa
             \blx@endnounit}
            {#1{\blx@maketext@lowercase{\csuse{#2@\blx@tempa}}}%
             \blx@endunit}}

        \protected\def\blx@bibucstring#1#2#3{%
          \blx@begunit
          \blx@hyphenreset
          \blx@bibstringsimple
          \lowercase{\edef\blx@tempa{#3}}%
          \ifcsundef{#2@\blx@tempa}
            {\blx@warn@nostring\blx@tempa
             \blx@endnounit}
            {#1{\blx@maketext@uppercase{\csuse{#2@\blx@tempa}}}%
             \blx@endunit}}
        \makeatother
    \fi
\fi

\ifsynopsis
\ifnumgreater{\value{usefootcite}}{0}{
    \ExecuteBibliographyOptions{autocite=footnote}
    \newbibmacro*{cite:full}{%
        \printtext[bibhypertarget]{%
            \usedriver{%
                \DeclareNameAlias{sortname}{default}%
            }{%
                \thefield{entrytype}%
            }%
        }%
        \usebibmacro{shorthandintro}%
    }
    \DeclareCiteCommand{\smartcite}[\mkbibfootnote]{%
        \usebibmacro{prenote}%
    }{%
        \usebibmacro{citeindex}%
        \usebibmacro{cite:full}%
    }{%
        \multicitedelim%
    }{%
        \usebibmacro{postnote}%
    }
}{}
\fi

%%% Подключение файлов bib %%%
\addbibresource[label=bl-external]{biblio/external_rebiber.bib}
\addbibresource[label=bl-author]{biblio/author.bib}
\addbibresource[label=bl-registered]{biblio/registered.bib}

%http://tex.stackexchange.com/a/141831/79756
%There is a way to automatically map the language field to the langid field. The following lines in the preamble should be enough to do that.
%This command will copy the language field into the langid field and will then delete the contents of the language field. The language field will only be deleted if it was successfully copied into the langid field.
\DeclareSourcemap{ %модификация bib файла перед тем, как им займётся biblatex
    \maps{
        \map{% перекидываем значения полей language в поля langid, которыми пользуется biblatex
            \step[fieldsource=language, fieldset=langid, origfieldval, final]
            \step[fieldset=language, null]
        }
        \map{% перекидываем значения полей numpages в поля pagetotal, которыми пользуется biblatex
            \step[fieldsource=numpages, fieldset=pagetotal, origfieldval, final]
            \step[fieldset=numpages, null]
        }
        \map{% перекидываем значения полей pagestotal в поля pagetotal, которыми пользуется biblatex
            \step[fieldsource=pagestotal, fieldset=pagetotal, origfieldval, final]
            \step[fieldset=pagestotal, null]
        }
        \map[overwrite]{% перекидываем значения полей shortjournal, если они есть, в поля journal, которыми пользуется biblatex
            \step[fieldsource=shortjournal, final]
            \step[fieldset=journal, origfieldval]
            \step[fieldset=shortjournal, null]
        }
        \map[overwrite]{% перекидываем значения полей shortbooktitle, если они есть, в поля booktitle, которыми пользуется biblatex
            \step[fieldsource=shortbooktitle, final]
            \step[fieldset=booktitle, origfieldval]
            \step[fieldset=shortbooktitle, null]
        }
        \map{% если в поле medium написано "Электронный ресурс", то устанавливаем поле media, которым пользуется biblatex, в значение eresource.
            \step[fieldsource=medium,
            match=\regexp{Электронный\s+ресурс},
            final]
            \step[fieldset=media, fieldvalue=eresource]
            \step[fieldset=medium, null]
        }
        \map[overwrite]{% стираем значения всех полей issn
            \step[fieldset=issn, null]
        }
        \map[overwrite]{% стираем значения всех полей abstract, поскольку ими не пользуемся, а там бывают "неприятные" латеху символы
            \step[fieldsource=abstract]
            \step[fieldset=abstract,null]
        }
        \map[overwrite]{ % переделка формата записи даты
            \step[fieldsource=urldate,
            match=\regexp{([0-9]{2})\.([0-9]{2})\.([0-9]{4})},
            replace={$3-$2-$1$4}, % $4 вставлен исключительно ради нормальной работы программ подсветки синтаксиса, которые некорректно обрабатывают $ в таких конструкциях
            final]
        }
        \map[overwrite]{ % стираем ключевые слова
            \step[fieldsource=keywords]
            \step[fieldset=keywords,null]
        }
        % реализация foreach различается для biblatex v3.12 и v3.13.
        % Для версии v3.13 эта конструкция заменяет последующие 7 структур map
        % \map[overwrite,foreach={authorvak,authorscopus,authorwos,authorconf,authorother,authorparent,authorprogram}]{ % записываем информацию о типе публикации в ключевые слова
        %     \step[fieldsource=$MAPLOOP,final=true]
        %     \step[fieldset=keywords,fieldvalue={,biblio$MAPLOOP},append=true]
        % }
        \map[overwrite]{ % записываем информацию о типе публикации в ключевые слова
            \step[fieldsource=authorvak,final=true]
            \step[fieldset=keywords,fieldvalue={,biblioauthorvak},append=true]
        }
        \map[overwrite]{ % записываем информацию о типе публикации в ключевые слова
            \step[fieldsource=authorscopus,final=true]
            \step[fieldset=keywords,fieldvalue={,biblioauthorscopus},append=true]
        }
        \map[overwrite]{ % записываем информацию о типе публикации в ключевые слова
            \step[fieldsource=authorwos,final=true]
            \step[fieldset=keywords,fieldvalue={,biblioauthorwos},append=true]
        }
        \map[overwrite]{ % записываем информацию о типе публикации в ключевые слова
            \step[fieldsource=authorconf,final=true]
            \step[fieldset=keywords,fieldvalue={,biblioauthorconf},append=true]
        }
        \map[overwrite]{ % записываем информацию о типе публикации в ключевые слова
            \step[fieldsource=authorother,final=true]
            \step[fieldset=keywords,fieldvalue={,biblioauthorother},append=true]
        }
        \map[overwrite]{ % записываем информацию о типе публикации в ключевые слова
            \step[fieldsource=authorpatent,final=true]
            \step[fieldset=keywords,fieldvalue={,biblioauthorpatent},append=true]
        }
        \map[overwrite]{ % записываем информацию о типе публикации в ключевые слова
            \step[fieldsource=authorprogram,final=true]
            \step[fieldset=keywords,fieldvalue={,biblioauthorprogram},append=true]
        }
        \map[overwrite]{ % добавляем ключевые слова, чтобы различать источники
            \perdatasource{biblio/external.bib}
            \step[fieldset=keywords, fieldvalue={,biblioexternal},append=true]
        }
        \map[overwrite]{ % добавляем ключевые слова, чтобы различать источники
            \perdatasource{biblio/author.bib}
            \step[fieldset=keywords, fieldvalue={,biblioauthor},append=true]
        }
        \map[overwrite]{ % добавляем ключевые слова, чтобы различать источники
            \perdatasource{biblio/registered.bib}
            \step[fieldset=keywords, fieldvalue={,biblioregistered},append=true]
        }
        \map[overwrite]{ % добавляем ключевые слова, чтобы различать источники
            \step[fieldset=keywords, fieldvalue={,bibliofull},append=true]
        }
%        \map[overwrite]{% стираем значения всех полей series
%            \step[fieldset=series, null]
%        }
        \map[overwrite]{% перекидываем значения полей howpublished в поля organization для типа online
            \step[typesource=online, typetarget=online, final]
            \step[fieldsource=howpublished, fieldset=organization, origfieldval]
            \step[fieldset=howpublished, null]
        }
    }
}

\ifnumequal{\value{mediadisplay}}{1}{
    \DeclareSourcemap{
        \maps{%
            \map{% использование media=text по умолчанию
                \step[fieldset=media, fieldvalue=text]
            }
        }
    }
}{}
\ifnumequal{\value{mediadisplay}}{2}{
    \DeclareSourcemap{
        \maps{%
            \map[overwrite]{% удаление всех записей media
                \step[fieldset=media, null]
            }
        }
    }
}{}
\ifnumequal{\value{mediadisplay}}{3}{
    \DeclareSourcemap{
        \maps{
            \map[overwrite]{% стираем значения всех полей media=text
                \step[fieldsource=media,match={text},final]
                \step[fieldset=media, null]
            }
        }
    }
}{}
\ifnumequal{\value{mediadisplay}}{4}{
    \DeclareSourcemap{
        \maps{
            \map[overwrite]{% стираем значения всех полей media=eresource
                \step[fieldsource=media,match={eresource},final]
                \step[fieldset=media, null]
            }
        }
    }
}{}

\ifsynopsis
\else
\DeclareSourcemap{ %модификация bib файла перед тем, как им займётся biblatex
    \maps{
        \map[overwrite]{% стираем значения всех полей addendum
            \perdatasource{biblio/author.bib}
            \step[fieldset=addendum, null] %чтобы избавиться от информации об объёме авторских статей, в отличие от автореферата
        }
    }
}
\fi

\ifpresentation
% удаляем лишние поля в списке литературы презентации
% их названия можно узнать в файле presentation.bbl
\DeclareSourcemap{
    \maps{
    \map[overwrite,foreach={%
        % {{{ Список лишних полей в презентации
        address,%
        chapter,%
        edition,%
        editor,%
        eid,%
        howpublished,%
        institution,%
        key,%
        month,%
        note,%
        number,%
        organization,%
        pages,%
        publisher,%
        school,%
        series,%
        type,%
        media,%
        url,%
        doi,%
        location,%
        volume,%
        % Список лишних полей в презентации }}}
    }]{
        \perdatasource{biblio/author.bib}
        \step[fieldset=$MAPLOOP,null]
    }
    }
}
\fi

\defbibfilter{vakscopuswos}{%
    keyword=biblioauthorvak or keyword=biblioauthorscopus or keyword=biblioauthorwos
}

\defbibfilter{scopuswos}{%
    keyword=biblioauthorscopus or keyword=biblioauthorwos
}

\defbibfilter{papersregistered}{%
    keyword=biblioauthor or keyword=biblioregistered
}

\DefineBibliographyStrings{english}{%
    pages = {p\adddot} %заглавность буквы затем по месту определяется работой самого biblatex
}

%%% В ссылке на источник в основном тексте с указанием конкретной страницы обозначение одной большой буквой %%%
%\DefineBibliographyStrings{russian}{%
%    page = {C\adddot}
%}

%%% Исправление длины тире в диапазонах %%%
% \cyrdash --- тире «русской» длины, \textendash --- en-dash
\DefineBibliographyExtras{russian}{%
  % almost unbreakable dash
  %тире для дат
}
\DefineBibliographyExtras{english}{%
  % almost unbreakable dash
  %тире для дат
}

%Set higher penalty for breaking in number, dates and pages ranges
\setcounter{abbrvpenalty}{10000} % default is \hyphenpenalty which is 12

%Set higher penalty for breaking in names
\setcounter{highnamepenalty}{10000} % If you prefer the traditional BibTeX behavior (no linebreaks at highnamepenalty breakpoints), set it to ‘infinite’ (10 000 or higher).
\setcounter{lownamepenalty}{10000}

\makeatletter
    \newtotcounter{citenum}
    \defbibenvironment{counter}
        {\setcounter{citenum}{0}\renewcommand{\blx@driver}[1]{}} % begin code: убирает весь выводимый текст
        {} % end code
        {\stepcounter{citenum}} % item code: cчитает "печатаемые в библиографию" источники

    \newtotcounter{citeauthorvak}
    \defbibenvironment{countauthorvak}
        {\setcounter{citeauthorvak}{0}\renewcommand{\blx@driver}[1]{}}
        {}
        {\stepcounter{citeauthorvak}}

    \newtotcounter{citeauthorscopus}
    \defbibenvironment{countauthorscopus}
        {\setcounter{citeauthorscopus}{0}\renewcommand{\blx@driver}[1]{}}
        {}
        {\stepcounter{citeauthorscopus}}

    \newtotcounter{citeauthorwos}
    \defbibenvironment{countauthorwos}
        {\setcounter{citeauthorwos}{0}\renewcommand{\blx@driver}[1]{}}
        {}
        {\stepcounter{citeauthorwos}}

    \newtotcounter{citeauthorother}
    \defbibenvironment{countauthorother}
        {\setcounter{citeauthorother}{0}\renewcommand{\blx@driver}[1]{}}
        {}
        {\stepcounter{citeauthorother}}

    \newtotcounter{citeauthorconf}
    \defbibenvironment{countauthorconf}
        {\setcounter{citeauthorconf}{0}\renewcommand{\blx@driver}[1]{}}
        {}
        {\stepcounter{citeauthorconf}}

    \newtotcounter{citeauthor}
    \defbibenvironment{countauthor}
        {\setcounter{citeauthor}{0}\renewcommand{\blx@driver}[1]{}}
        {}
        {\stepcounter{citeauthor}}

    \newtotcounter{citeauthorvakscopuswos}
    \defbibenvironment{countauthorvakscopuswos}
        {\setcounter{citeauthorvakscopuswos}{0}\renewcommand{\blx@driver}[1]{}}
        {}
        {\stepcounter{citeauthorvakscopuswos}}

    \newtotcounter{citeauthorscopuswos}
    \defbibenvironment{countauthorscopuswos}
        {\setcounter{citeauthorscopuswos}{0}\renewcommand{\blx@driver}[1]{}}
        {}
        {\stepcounter{citeauthorscopuswos}}

    \newtotcounter{citeregistered}
    \defbibenvironment{countregistered}
        {\setcounter{citeregistered}{0}\renewcommand{\blx@driver}[1]{}}
        {}
        {\stepcounter{citeregistered}}

    \newtotcounter{citeauthorpatent}
    \defbibenvironment{countauthorpatent}
        {\setcounter{citeauthorpatent}{0}\renewcommand{\blx@driver}[1]{}}
        {}
        {\stepcounter{citeauthorpatent}}

    \newtotcounter{citeauthorprogram}
    \defbibenvironment{countauthorprogram}
        {\setcounter{citeauthorprogram}{0}\renewcommand{\blx@driver}[1]{}}
        {}
        {\stepcounter{citeauthorprogram}}

    \newtotcounter{citeexternal}
    \defbibenvironment{countexternal}
        {\setcounter{citeexternal}{0}\renewcommand{\blx@driver}[1]{}}
        {}
        {\stepcounter{citeexternal}}
\makeatother

\defbibheading{nobibheading}{} % пустой заголовок, для подсчёта публикаций с помощью невидимой библиографии
\defbibheading{pubgroup}{\section*{#1}} % обычный стиль, заголовок-секция
\defbibheading{pubsubgroup}{\noindent\textbf{#1}} % для подразделов "по типу источника"

%%%Сортировка списка литературы Русский-Английский (предварительно удалить dissertation.bbl) (начало)
%%%Источник: https://github.com/odomanov/biblatex-gost/wiki/%D0%9A%D0%B0%D0%BA-%D1%81%D0%B4%D0%B5%D0%BB%D0%B0%D1%82%D1%8C,-%D1%87%D1%82%D0%BE%D0%B1%D1%8B-%D1%80%D1%83%D1%81%D1%81%D0%BA%D0%BE%D1%8F%D0%B7%D1%8B%D1%87%D0%BD%D1%8B%D0%B5-%D0%B8%D1%81%D1%82%D0%BE%D1%87%D0%BD%D0%B8%D0%BA%D0%B8-%D0%BF%D1%80%D0%B5%D0%B4%D1%88%D0%B5%D1%81%D1%82%D0%B2%D0%BE%D0%B2%D0%B0%D0%BB%D0%B8-%D0%BE%D1%81%D1%82%D0%B0%D0%BB%D1%8C%D0%BD%D1%8B%D0%BC
%\DeclareSourcemap{
%    \maps[datatype=bibtex]{
%        \map{
%            \step[fieldset=langid, fieldvalue={tempruorder}]
%        }
%        \map[overwrite]{
%            \step[fieldsource=langid, match=russian, final]
%            \step[fieldsource=presort,
%            match=\regexp{(.+)},
%            replace=\regexp{aa$1}]
%        }
%        \map{
%            \step[fieldsource=langid, match=russian, final]
%            \step[fieldset=presort, fieldvalue={az}]
%        }
%        \map[overwrite]{
%            \step[fieldsource=langid, notmatch=russian, final]
%            \step[fieldsource=presort,
%            match=\regexp{(.+)},
%            replace=\regexp{za$1}]
%        }
%        \map{
%            \step[fieldsource=langid, notmatch=russian, final]
%            \step[fieldset=presort, fieldvalue={zz}]
%        }
%        \map{
%            \step[fieldsource=langid, match={tempruorder}, final]
%            \step[fieldset=langid, null]
%        }
%    }
%}
%Сортировка списка литературы (конец)

%%% Создание команд для вывода списка литературы %%%
\newcommand*{\insertbibliofull}{
    \ifnumequal{\value{englishthesis}}{0}{
        \printbibliography[keyword=bibliofull,section=0, title=\bibtitlefull]
    }{
        \printbibliography[keyword=bibliofull,section=0]    
    }
    \ifnumequal{\value{draft}}{0}{
      \printbibliography[heading=nobibheading,env=counter,keyword=bibliofull,section=0]
    }{}
}

% Вариант вывода печатных работ автора, с группировкой по типу источника.
% Порядок команд `\printbibliography` должен соответствовать порядку в файле common/characteristic.tex

     % Реализация пакетом biblatex через движок biber
}

\makeatletter
\let\blx@rerun@biber\relax
\makeatother

% Вывести информацию о выбранных опциях в лог сборки
\typeout{Selected options:}
\typeout{Draft mode: \arabic{draft}}
\typeout{Font: \arabic{fontfamily}}
\typeout{AltFont: \arabic{usealtfont}}
\typeout{Bibliography backend: \arabic{bibliosel}}
\typeout{Precompile images: \arabic{imgprecompile}}
% Вывести информацию о версиях используемых библиотек в лог сборки
\listfiles

%%% Управление компиляцией отдельных частей диссертации %%%
% Необходимо сначала иметь полностью скомпилированный документ, чтобы все
% промежуточные файлы были в наличии
% Затем, для вывода отдельных частей можно воспользоваться командой \includeonly
% Ниже примеры использования команды:
%
%\includeonly{Dissertation/part2}
%\includeonly{Dissertation/contents,Dissertation/appendix,Dissertation/conclusion}
%
% Если все команды закомментированы, то документ будет выведен в PDF файл полностью

\begin{document}
%%% Переопределение именований типовых разделов
\ifnumequal{\value{englishthesis}}{0}{
    % https://tex.stackexchange.com/a/156050
    \gappto\captionsrussian{%%% Переопределение именований %%%
\renewcommand{\contentsname}{Оглавление}% (ГОСТ Р 7.0.11-2011, 4)
\renewcommand{\figurename}{Рисунок}% (ГОСТ Р 7.0.11-2011, 5.3.9)
\renewcommand{\tablename}{Таблица}% (ГОСТ Р 7.0.11-2011, 5.3.10)
\renewcommand{\listfigurename}{Список рисунков}%
\renewcommand{\listtablename}{Список таблиц}%
\renewcommand{\bibname}{\bibtitlefull}%
% Переопределения названий для nomencl. Так как опция russian не для utf8
\renewcommand{\nomname}{Список сокращений и условных обозначений}%
\renewcommand{\eqdeclaration}[1]{, см.~(#1)}%
\renewcommand{\pagedeclaration}[1]{, стр.~#1}%
\renewcommand{\nomAname}{Латинские буквы}%
\renewcommand{\nomGname}{Греческие буквы}%
\renewcommand{\nomXname}{Верхние индексы}%
\renewcommand{\nomZname}{Индексы}%\unskip} % for polyglossia and babel
    
}{}

%%% Структура диссертации (ГОСТ Р 7.0.11-2011, 4)
\ifnumequal{\value{englishthesis}}{1}{% Титульный лист (ГОСТ Р 7.0.11-2001, 5.1)
\thispagestyle{empty}
\begin{center}
\thesisOrganizationEn
\end{center}
\vspace{0pt plus4fill} %число перед fill = кратность относительно некоторого расстояния fill, кусками которого заполнены пустые места
\IfFileExists{figures/Skoltech_Logo-eps-converted-to.pdf}{
  \begin{minipage}[b]{0.5\linewidth}
    \begin{flushleft}
      \includegraphics[height=0.75cm]{figures/Skoltech_Logo-eps-converted-to}
    \end{flushleft}
  \end{minipage}%
  \begin{minipage}[b]{0.5\linewidth}
    \begin{flushright}
      As a manuscript\\
%      \textsl {УДК \thesisUdk}
    \end{flushright}
  \end{minipage}
}{
\begin{flushright}
As a manuscript

%\textsl {УДК \thesisUdk}
\end{flushright}
}
\vspace{0pt plus6fill} %число перед fill = кратность относительно некоторого расстояния fill, кусками которого заполнены пустые места
\begin{center}
{\large \thesisAuthorEn}
\end{center}
\vspace{0pt plus1fill} %число перед fill = кратность относительно некоторого расстояния fill, кусками которого заполнены пустые места
\begin{center}
\textbf {\large %\MakeUppercase
\thesisTitleEn}

\vspace{0pt plus2fill} %число перед fill = кратность относительно некоторого расстояния fill, кусками которого заполнены пустые места
{%\small
Speciality \thesisSpecialtyNumber\ ---

<<\thesisSpecialtyTitleEn>>
}

\ifdefined\thesisSpecialtyTwoNumber
{%\small
Специальность \thesisSpecialtyTwoNumber\ ---

<<\thesisSpecialtyTwoTitle>>
}
\fi

\vspace{0pt plus2fill} %число перед fill = кратность относительно некоторого расстояния fill, кусками которого заполнены пустые места
Dissertation for a degree of

\thesisDegreeEn
\end{center}
\vspace{0pt plus4fill} %число перед fill = кратность относительно некоторого расстояния fill, кусками которого заполнены пустые места
\begin{flushright}
\ifdefined\supervisorTwoFio
Scientific advisors:

\supervisorRegaliaEn

\ifdefined\supervisorDead
\framebox{\supervisorFioEn}
\else
\supervisorFioEn
\fi

\supervisorTwoRegalia

\ifdefined\supervisorTwoDead
\framebox{\supervisorTwoFio}
\else
\supervisorTwoFio
\fi
\else
Scientific advisor:

\supervisorRegaliaEn

\ifdefined\supervisorDead
\framebox{\supervisorFioEn}
\else
\supervisorFioEn
\fi
\fi

\end{flushright}
\vspace{0pt plus4fill} %число перед fill = кратность относительно некоторого расстояния fill, кусками которого заполнены пустые места
{\centering\thesisCityEn\ --- \thesisYear\par}
}{}   % En titlepage
% Титульный лист (ГОСТ Р 7.0.11-2001, 5.1)
\thispagestyle{empty}
\begin{center}
\thesisOrganization
\end{center}
\vspace{0pt plus4fill} %число перед fill = кратность относительно некоторого расстояния fill, кусками которого заполнены пустые места
\IfFileExists{figures/Skoltech_Logo-eps-converted-to.pdf}{
  \begin{minipage}[b]{0.5\linewidth}
    \begin{flushleft}
      \includegraphics[height=0.75cm]{figures/Skoltech_Logo-eps-converted-to}
    \end{flushleft}
  \end{minipage}%
  \begin{minipage}[b]{0.5\linewidth}
    \begin{flushright}
      На правах рукописи\\
%      \textsl {УДК \thesisUdk}
    \end{flushright}
  \end{minipage}
}{
\begin{flushright}
На правах рукописи

%\textsl {УДК \thesisUdk}
\end{flushright}
}
\vspace{0pt plus6fill} %число перед fill = кратность относительно некоторого расстояния fill, кусками которого заполнены пустые места
\begin{center}
{\large \thesisAuthor}
\end{center}
\vspace{0pt plus1fill} %число перед fill = кратность относительно некоторого расстояния fill, кусками которого заполнены пустые места
\begin{center}
\textbf {\large %\MakeUppercase
\thesisTitle}

\vspace{0pt plus2fill} %число перед fill = кратность относительно некоторого расстояния fill, кусками которого заполнены пустые места
{%\small
Специальность \thesisSpecialtyNumber\ ---

<<\thesisSpecialtyTitle>>
}

\ifdefined\thesisSpecialtyTwoNumber
{%\small
Специальность \thesisSpecialtyTwoNumber\ ---

<<\thesisSpecialtyTwoTitle>>
}
\fi

\vspace{0pt plus2fill} %число перед fill = кратность относительно некоторого расстояния fill, кусками которого заполнены пустые места
Диссертация на соискание учёной степени

\thesisDegree
\end{center}
\vspace{0pt plus4fill} %число перед fill = кратность относительно некоторого расстояния fill, кусками которого заполнены пустые места
\begin{flushright}
\ifdefined\supervisorTwoFio
Научные руководители:

\supervisorRegalia

\ifdefined\supervisorDead
\framebox{\supervisorFio}
\else
\supervisorFio
\fi

\supervisorTwoRegalia

\ifdefined\supervisorTwoDead
\framebox{\supervisorTwoFio}
\else
\supervisorTwoFio
\fi
\else
Научный руководитель:

\supervisorRegalia

\ifdefined\supervisorDead
\framebox{\supervisorFio}
\else
\supervisorFio
\fi
\fi

\end{flushright}
\vspace{0pt plus4fill} %число перед fill = кратность относительно некоторого расстояния fill, кусками которого заполнены пустые места
{\centering\thesisCity\ --- \thesisYear\par}
           % Титульный лист
% Оглавление (ГОСТ Р 7.0.11-2011, 5.2)
\ifdefmacro{\microtypesetup}{\microtypesetup{protrusion=false}}{} % не рекомендуется применять пакет микротипографики к автоматически генерируемому оглавлению
\tableofcontents*
\addtocontents{toc}{\protect\tocheader}
\endTOCtrue
\ifdefmacro{\microtypesetup}{\microtypesetup{protrusion=true}}{}        % Оглавление
\ifnumequal{\value{contnumfig}}{1}{}{\counterwithout{figure}{chapter}}
\ifnumequal{\value{contnumtab}}{1}{}{\counterwithout{table}{chapter}}
% \chapter*{Abstract}

% Apparently the GOST for dissertations does not include an abstract

\addcontentsline{toc}{chapter}{Introduction}
\chapter*{Introduction}

% Template and formatting:
% All Skoltech theses have an abstract
% One thesis had chapter summary at the end of each chapter. Looks like a good idea
% Introduction is apparently just a regular chapter

\subsection*{Relevance of the work} 

This work considers the quantum model of computation. In this model, classical bits -- modelled by Boolean variables -- are replaced by qubits, modelled by two-dimensional complex vector spaces. The key effects enabling the quantum computations are quantum superposition and quantum entanglement. While the state space of $n$ bits is a Cartesian product of individual state spaces, the state space of $n$ qubits is the \textit{tensor} product of individual spaces. Because of this, a state of the qubits can in principle be not factorizable into single-qubit states. This, on the one hand, enables quantum computers to process many different classical bit states in parallel, and on the other hand, makes them extremely difficult to simulate by classical means: naively simulating a quantum device with $n$ qubits requires working with $2^n$-dimensional complex vectors.

The complexity class of problems efficiently solved by a quantum computer is called $\mathbf{BQP}$. While it is not known how $\mathbf{BQP}$ relates to $\mathbf{P}$, the class of problems efficiently solved by a classical computer, the former contains problems that do not yet have an efficient classical solutions. The most famous example of that was found with the discovery of Shor's algorithm, which enables quick factorization of integer numbers in their prime factors and, consequently, the defeat of hitherto unbreakable algorithms of encryption \cite{nielsen_quantum_2010}.

The appeal of quantum computers is offset by the notorious difficulty of making them in practice. While the current devices are far more advanced than the first proof-of-concept devices, they are still a long way from the characteristics that would enable them to, e.g.,~run the Shor's algorithm for any practical means. 
The limitations of physical quantum computers come from extreme fragility of quantum states. The quantum logic gates are not performed with perfect accuracy, the qubits themselves lose coherence over time, and there is even a probability of incorrect measurement, meaning that even the correct quantum state can be read with errors. For this reason, the current generation of quantum computers is referred to as noisy intermediate-scale quantum (NISQ) devices \cite{bharti_noisy_2021}.

While large-scale, fault-tolerant quantum computers are a long way ahead, there is research effort towards developing algorithms that can work within the limitations of NISQ devices. One such family of quantum algorithms is called variational quantum algorithms \cite{cerezo_variational_2020}. Such algorithms are designed to solve optimization problems. The broad idea of these algorithms is to encode the solution to the problem in a parametrized quantum state (called an ansatz), so that a series of measurements on that state can give an estimate of the cost function to be minimized. Then the parameters of the state are tuned so as to deliver a minimum to the cost function. The range of problems that can be addressed by this approach is rather extensive: from classical optimization problems, like the travelling salesepson problem, to problems that arise in quantum chemistry and quantum physics, e.g.~investigation of the potential landscape of chemical reactions~\cite{reiher_elucidating_2017} and ground state properties of electronic lattices~\cite{cade_strategies_2019}. One of the main algorithms studied in the dissertation is the variational quantum eigensolver (VQE) algorithm~\cite{peruzzo_variational_2014} which uses the variational approach to solve ground state problems from quantum physics and quantum chemistry. 
Another important branch of study is devoted to quantum machine learning with variational algotihms~\cite{skolik_layerwise_2020,havlicek_supervised_2019,schuld_circuit-centric_2020}. In that approach, a quantum device acts as a classifier that is trained to partition data points encoded as quantum states. 

Despite the simplicity of the approach, the variational quantum algorithms are far from being understood well. Since they involve optimization, there are lots of questions concerning the convergence of this optimization, the properties of the cost function landscape, the choice of the circuits, etc. This dissertation contributes to the literature both in numerical experiments and in theoretical results.

\subsection*{Dissertation goals} 

The goal of this dissertation is to investigate the performance of variational quantum algorithms and to further the theoretical understanding of their behavior. To achieve this goal, we set up and performed the following tasks:

\begin{enumerate}
    \item Develop a numerical implementation the variational quantum eigensolver algorithm and study its convergence for physical problems. Study its dependence on the depth of the circuit, number of qubits, and other relevant parameters.
    \item Develop a numerical implementation of a quantum classifier and investigate its properties. In particular, study its ability to learn on quantum data, namely on quantum states obtained using VQE.
    \item Study the physically relevant properties of the VQE-optimized solutions for the next-nearest-neighbor Hubbard model. Analyze the behavior of gradients in the associated optimization problem.
    \item Calculate a bound on the variance of derivatives of Hamiltonian cost functions with respect to random selection of circuit parameters. Estimate the onset of the barren plateau regime with the increasing depth and connectivity of a parametrized quantum circuit.
    \item Develop an algorithm for bounding the fidelity of the Greenberger-Horne-Zeilinger state and other Clifford states based on their parent Hamiltonian.
\end{enumerate}

\subsection*{Statements defended}

\begin{enumerate}

    \item We applied Adiabatically-assisted VQE (AAVQE) to the ground state problem for the Heisenberg XXZ model, i.e.~the family of spin Hamiltonians of the type 
    \begin{equation*}
        H(h) = \sum_{i=1}^{n} X_i X_{i+1} + Y_i Y_{i+1}+ h Z_i Z_{i+1},
    \end{equation*}
    where $X_i, Y_i$, and $Z_i$ are respective Pauli matrices acting on $i$'th spin. AAVQE consists in using the solution for $H(h)$ as a starting point of the optimization process for $H(h + \delta h)$, eventually finding an approximate ground state for a range of parameters $[h_{\mathrm{min}}, h_{\mathrm{max}}]$. Numerical experiments show that Adiabatically-assisted VQE finds different solutions depending on the direction of whether the optimization starts from $h_{\mathrm{min}}$ and increases $h$, or starts from $h_{\mathrm{max}}$ and gradually decreases $h$. Specifically, the largest difference occures near the critical point of the model ($h = 1$). In contrast, the transverse-field Ising model shows no such behavior near the critical point: both methods of running AAVQE yield the same ground state energies up to the tolerance of the optimizer.
    \item The approximation found by VQE shows the largest energy error in the vicinity of the critical point of either spin model. For the best ans\"atze considered, the maximum energy error is $0.12$ units for the Ising model and $1.1$ units for the Heisenberg model. We argue that this is consistent with the fact that critical states demonstrate a logarithmic scaling of entanglement entropy, 
    unattainable by a fixed-depth ansatz circuit.
    \item We proposed a quantum classifier which demonstrates that quantum machine learning models can partition quantum data. The classifier was tested for ten qubits on the VQE-approximated ground states of transverse-field Ising model and was able to predict the phase of the system with 97\% accuracy. A similar experiment conducted for the Heisenberg XXZ model yielded 93\% accuracy.

    \item Variational quantum eigensolver was numerically applied to a Hubbard-like model, with ansatz depth ranging from one to ten layers. For 4 qubits and fewer, VQE found the exact solution up to machine precision with 1-2 layers. For 5-11 qubits, the error scaling with depth was fit to exponential decay (correlation coefficient between depth and $\log_{10} \Delta E$ was $-0.92$ or less in all cases).
    
    \item We estimated the variance of partial derivatives of the energy cost function for the next-nearest-neighbor Hubbard model in 4-10 qubits by random sampling of the ansatz parameters. The model was transformed to a qubit Hamiltonian by Jordan--Wigner and Bravyi--Kitaev transformations. 
    We observed that, under the Jordan--Wigner transformation, the dependence of the variance on circuit depth becomes constant after at most 15 layers. The behavior of the variance as a function of the number of qubits is consistent with the prediction of exponential decay given in \cite{mcclean_barren_2018}.
    For the Bravyi--Kitaev transform we observed a longer transient behavior with the circuit depth.
    \item We derived a lower bound on the variance of derivatives for parametrized quantum circuits composed of blocks that constitute local 2-designs. Let the block that depends on a real parameter $\theta$ be situated in layer $l_c$ out of $l$ layers total. Then the variance $\Var \partial_\theta E$ behaves as
    \begin{equation*}
        \Var \partial_\theta E \in \sum_h \Omega\left(3^{-|C(h)|} \left(\frac{3}{4}\right)^{l-l_c}\right).
    \end{equation*}
    Here the sum is taken over every Pauli string $h$ in the Pauli decomposition of the cost function, and $|C(h)|$ is the size of the causal cone of this Pauli string.
    \item We developed an algorithm to estimate the fidelity of Clifford states using at most $n$ series of Pauli measurements, where $n$ is the number of qubits. 
\end{enumerate}

\subsection*{Scientific novelty}
% It seems that statements to defend can very much overlap with the sci novelty, at least in some examples they do.
\begin{enumerate}
    \item In the original proposal \cite{garcia-saez_addressing_2018}, the adiabatically-assisted VQE algorithm was used for a family of Hamiltonians $H(\tau)$ such that $H(0)$ is an easily-solved problem, while $H(1)$ encodes a difficult problem. For the transverse-field Ising and Heisenberg models, the ground state problem dependence on the parameter can be characterized as <<easy, hard, easy>>: $H(0)$ is easy to solve, $H(1)$ is difficult, then in the limit of $\tau \rightarrow \infty$ the problem $H(\tau)$ again becomes easy.
    \item We demonstrated that a quantum classifier can be trained on intrinsically quantum data. Prior art appears to focus on classification problems based on classical input data.    
    \item For fermionic Hamiltonians, the onset of barren plateaus phenomenon was shown to occur differently depending on the choice of fermion-to-qubit mapping, which was previously not considered in the literature.
    \item We derived a lower bound on the variance of the derivatives of cost function with respect to ansatz parameters. Compared to the existing literature \cite{mcclean_barren_2018,cerezo_cost-function-dependent_2020}, our estimate depends only on the size of the causal cone of the circuit and is applicable to quantum circuits of arbitrary topology (as long as they consist of two-qubit blocks that constitute approximate 2-designs).
    % we applied the theory of $t$-designs in the Heisenberg picture, considering the random unitaries as operators acting on Pauli strings of $2n$ qubits (which we dubbed super Pauli strings).
    \item We proposed a method of estimating fidelity for the GHZ state that only involves two series of measurements. The method relies on the relation between the energy of a state with respect to a Hamiltonian and the fidelity between that state and the ground state (the stability lemma).
\end{enumerate}

\subsection*{Theoretical and practical significance} Classification of quantum phases by quantum means is potentially useful for condensed matter physics. The results on the convergence of VQE are important for the design of quantum optimization algorithms. The proposed method for validating the GHZ state is potentially useful for evaluating the properties of quantum devices in a simple manner.
Practical significance of the work is supported by the usage of the results in delivering the RFBR grant No.\ 19-31-90159 ``Aspiranty''.

The \textbf{validity of the work} is supported by numerical experiments and rigorous mathematical proofs, where applicable.

\subsection*{Presentations and validation of the results}
The main results of the work have been reported in the following scientific conferences and workshops:

\begin{enumerate}
    \item International Conference on Quantum Technologies (July 15-19, 2019, Moscow, poster);
    \item 62nd MIPT conference (Nov 18-24, 2019, Moscow, talk);
    \item Inaugural Symposium for Computational Materials Program of Excellence (September 4-6, 2019, Moscow, poster);
    \item International School on Quantum Technologies (March 1-7, 2020, Sochi, poster);
    \item International Conference on Quantum Technologies (July 12-16, 2021, Moscow, online, poster);
    \item QuOne workshop on Quantum Machine Learning (Feb 22-24, 2022, Tehran, online, talk).
\end{enumerate}

\subsection*{Structure of the dissertation} The dissertation consists of introduction, six chapters, conclusions, bibliography, list of symbols and abbreviations, list of tables, list of figures, and supplemental material.

\subsection*{Publications}
The work in this thesis is based on the following publications:

\begin{enumerate}
    \item \fullcite{uvarov_machine_2020}
    \item \fullcite{uvarov_variational_2020}
    \item \fullcite{uvarov_barren_2021}
\end{enumerate}

\subsection*{Acknowledgments} I thank my advisor, Jacob Biamonte, for providing guidance and assistance at all stages of my doctoral study. I also thank my coauthors, Dmitry Yudin and Andrey Kardashin for a productive collaboration. I am grateful to Akshay Vishwanathan, Richik Sengupta, Soumik Adhikary, Daniil Rabinovich, Hariphan Philathong, Oksana Borzenkova, Ernesto Campos, Aly Nasrallah, and Mauro Morales for numerous insightful discussions. Last but not least, I thank my family and my friends for immense support during my scientific journey.

\chapter{Main concepts in quantum computation}
\label{chap:quantum_basics}

In this chapter, we will briefly review the basic mathematical concepts necessary to reason about quantum computers. For a more extensive introduction, we refer the reader to the classic textbooks: \cite{nielsen_quantum_2010,kitaev_classical_2002}.

\section{Definitions and notation}

A \textit{qubit} is a quantum system with two controllable states. We will model such a system with a two-dimensional state space $\mathbb{C}^2$ with a distinguished orthonormal basis, called the \textit{computational basis} and denoted as $\{ \ket{0}, \ket{1} \}$. The Hilbert space\footnote{Throughout the thesis, we will only work in finite-dimensional Hilbert spaces.} $\mathcal{H}$ of $n$ qubits is the complex vector space $(\mathbb{C}^2)^{\otimes n}$. \textit{Pure states} of the qubits are the vectors $|\psi \rangle \in \mathcal{H}$ of norm one (vectors denoted like that are called \textit{ket vectors}). The elements of the dual space $\mc{H}^*$ are depicted as so-called \textit{bra vectors} $\bra{\psi}$. The inner product between two states $\ket{\phi}$ and $\ket{\psi}$ is denoted simply as a concatenation of the corresponding bra and ket vectors $\braket{\phi}{\psi}$.
% Generic quantum states are described as positive semidefinite (or nonnegative) operators on $\mathcal{H}$ called \textit{density operators}, \textit{density matrices}, or \textit{mixed states} (more on that later). The former can be included in the latter: a projector $\ket{\psi} \bra{\psi}$ is a bona fide density operator.

The standard basis for the registry of $n$ qubits, also called the computational basis, is the tensor product of individual computational bases. 
When writing such basis states, we will omit the tensor product: for example, $\ket{0} \otimes \ket{0} \otimes \ket{1} \equiv \ket{001}$.

A key feature of quantum states is entanglement:

\begin{definition}
    A state $\ket{\psi} \in \mathcal{H}_1 \otimes \mathcal{H}_2$ is \textit{entangled} if there are no states $\ket{\phi_1} \in \mathcal{H}_1, \ \ket{\phi_2} \in \mathcal{H}_2$ such that $\ket{\psi} = \ket{\phi_1} \otimes \ket{\phi_2}$. Otherwise, the state is called \textit{separable} with respect to the bipartition $\mathcal{H} = \mathcal{H}_1 \otimes \mathcal{H}_2$. If an $n$-qubit state can be presented as a tensor product of $n$ single-qubit states, it is called a \textit{separable} or a \textit{product state}.
\end{definition}

\begin{example}
    The \textit{Bell state} $\ket{\Phi} = \frac{1}{\sqrt{2}}(\ket{00} + \ket{11})$ is an example of an entangled state of two qubits. This is easy to check: an arbitrary separable state of two qubits has the form $(\alpha \ket{0} + \beta \ket{1}) \otimes (\gamma \ket{0} + \delta \ket{1})$. By comparing the coefficients, we arrive to the following system of equations that has no solutions:
    \begin{equation}
        \left\{
            \begin{array}{rl}
                \alpha \gamma = & \frac{1}{\sqrt{2}} \\
                \beta \delta = & \frac{1}{\sqrt{2}} \\
                \alpha \delta = & 0 \\
                \beta \gamma = & 0 \\
            \end{array}
        \right.     
    \end{equation}
    The $n$-qubit analog of the Bell state $\ket{\Psi} = \frac{1}{\sqrt{2}}(\ket{0...0} + \ket{1...1})$ is known as the \textit{Greenberger--Horne--Zeilinger state} or as \textit{Schr\"odinger's cat state}.
\end{example}

The space of linear operators on $\mc{H}$ can be equipped with a Hilbert-Schmidt inner product:
\begin{equation}
(A, B) = \frac{1}{\mathrm{dim} \mc{H}} \Tr A^\dagger B.
\end{equation}
Relative to this inner product, the space of linear operators on $\mc{H}$ admits an orthogonal basis of \textit{Pauli strings}, which are tensor products of \textit{Pauli matrices}:

\begin{definition}[Pauli matrices]
    \textit{Pauli matrices} are the $2 \times 2$ identity matrix $\id$ and the following three $2 \times 2$ matrices:
    \begin{equation}
        X = \begin{pmatrix}
            0 & 1 \\ 1 & 0 
        \end{pmatrix}, \ 
        Y = \begin{pmatrix}
            0 & -\mathrm{i} \\ \mathrm{i} & 0 
        \end{pmatrix}, \ 
        Z = \begin{pmatrix}
            1 & 0 \\ 0 & -1
        \end{pmatrix}.
    \end{equation}
\end{definition}

\begin{definition}[Pauli string]
A \emph{Pauli string} is a tensor product of $n$ Pauli matrices $\{\id, X, Y, Z \}$. The $n$-qubit identity operator $\id \otimes \id \otimes ... \otimes \id$ is the \emph{trivial} or \textit{unit} string. The \textit{algebraic locality} or just \textit{locality} of a Pauli string is the number of non-identity Pauli matrices contained in the string.
% For qudits, one should instead use Gell-Mann matrices, but throughout the text we will nonetheless refer to that object as a Pauli string.
\end{definition}

Since the product of two Pauli strings is, up to a multiple of $\rmi$, a Pauli string, it is sometimes important to consider the \emph{Pauli group} consisting of all Pauli strings, possibly multiplied by $\rmi$, $-1$, or $-\rmi$.

The expansion of the operator in the basis of Pauli strings will be called its \textit{Pauli decomposition}. For a Hermitian operator $H$ on $\mc{H}$ in a given orthonormal basis, we will call its \textit{locality} $\operatorname{loc} H$ the maximum locality of its terms in the decomposition and its \textit{cardinality} $\operatorname{card} H$ the number of Pauli strings in the decomposition. Note that for a Hermitian operator, the coefficients of the Pauli decomposition will be real.

All Pauli strings have eigenvalues $\pm 1$. The eigenbases of individual Pauli matrices are also important. The computational basis is the eigenbasis for the $Z$ matrix. The eigenstates of the $X$ matrix are known as the plus state and the minus state: $\ket{+} = \frac{1}{\sqrt{2}}(\ket{0} + \ket{1}), \ \ket{-} = \frac{1}{\sqrt{2}}(\ket{0} - \ket{1})$. The eigenstates of $Y$ are $\ket{y_\pm} = \frac{1}{\sqrt{2}}(\ket{0} \pm \mathrm{i} \ket{1})$.

In the basis of Pauli strings, the Gram matrix of the Hilbert-Schmidt inner product is the identity matrix. In fact, $\operatorname{Herm}(2^n)$ behaves like $\mathbb{R}^{4^n}$ equipped with the standard inner product. Because of that, for an operator $A = \sum c_\alpha \sigma_\alpha$, written in its Pauli decomposition, we will occasionally be interested in $p$-norms of the vector of its coefficients $||\{c_\alpha\}||_p$.

\begin{proposition}
    \label{prop:ad_is_pauli_orthogonal}
    Let $U$ be a unitary operator on $\mc{H}$. Denote $\operatorname{Ad}_U$ the operator of conjugation by $U$: 
    \begin{align}
        \operatorname{Ad}_U: \operatorname{End}(\mc{H}) &\rightarrow  \operatorname{End}(\mc{H}) \\
        \operatorname{Ad}_U(X) &= U^\dagger X U
    \end{align}
    This operator is unitary on $\operatorname{End}(\mc{H})$ and orthogonal on $\operatorname{Herm}(\mc{H})$  (when the latter is considered as a real vector space spanned by Pauli strings).
\end{proposition}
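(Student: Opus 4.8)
The plan is to verify the defining property of an isometry directly from the formula for the Hilbert--Schmidt inner product, using only cyclicity of the trace and unitarity of $U$. First I would record the trivial facts: $\operatorname{Ad}_U$ is linear in its argument, and it is a bijection of $\operatorname{End}(\mc{H})$ with inverse $\operatorname{Ad}_{U^\dagger}$, since $\operatorname{Ad}_{U^\dagger} \operatorname{Ad}_U (X) = U U^\dagger X U U^\dagger = X$ (here finite-dimensionality is implicitly used so that $U^\dagger$ is a genuine two-sided inverse). So the only substantive point is preservation of the inner product.

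Second, for arbitrary $X, Y \in \operatorname{End}(\mc{H})$ I would compute
\begin{equation*}
(\operatorname{Ad}_U X, \operatorname{Ad}_U Y) = \frac{1}{\dim \mc{H}} \Tr\!\left( (U^\dagger X U)^\dagger (U^\dagger Y U) \right) = \frac{1}{\dim \mc{H}} \Tr\!\left( U^\dagger X^\dagger Y U \right) = (X, Y),
\end{equation*}
using $(U^\dagger X U)^\dagger = U^\dagger X^\dagger U$, then $U U^\dagger = \id$, then cyclicity of the trace. This shows $\operatorname{Ad}_U$ is unitary on $\operatorname{End}(\mc{H})$.

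Third, for the orthogonality statement on $\operatorname{Herm}(\mc{H})$ I would first check that $\operatorname{Ad}_U$ restricts to a real-linear automorphism of $\operatorname{Herm}(\mc{H})$: if $X = X^\dagger$ then $(U^\dagger X U)^\dagger = U^\dagger X^\dagger U = U^\dagger X U$. Next, on this real subspace the Hilbert--Schmidt form is real-valued, since $\overline{\Tr(XY)} = \Tr((XY)^\dagger) = \Tr(YX) = \Tr(XY)$ for Hermitian $X, Y$, and it is positive-definite, so it is a genuine inner product making $\operatorname{Herm}(\mc{H})$ a Euclidean space; the computation above shows $\operatorname{Ad}_U$ preserves it. Since the Pauli strings form an orthonormal basis of $\operatorname{Herm}(\mc{H})$ (established earlier), this is precisely the assertion that the matrix of $\operatorname{Ad}_U$ in the Pauli basis is orthogonal.

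There is no genuine obstacle in this argument; it is essentially one line of trace manipulation. The only things that need care are the bookkeeping points in the third step: confirming that $\operatorname{Ad}_U$ really does map $\operatorname{Herm}(\mc{H})$ into itself, and noting that the Hilbert--Schmidt form is real and positive-definite there so that the word ``orthogonal'' is meaningful. Everything else follows from the unitarity of $U$ and cyclic invariance of the trace.
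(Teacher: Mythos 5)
Your proof is correct and follows essentially the same route as the paper's: invariance of $\operatorname{Herm}(\mc{H})$ under conjugation plus preservation of the Hilbert--Schmidt product via $UU^\dagger=\id$ and cyclicity of the trace. You supply a bit more bookkeeping (invertibility, realness and positive-definiteness of the form on the Hermitian subspace) than the paper does, but the substance is identical.
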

\begin{proof}
    If $H$ is Hermitian, $U^\dagger HU$ is also Hermitian, so $\operatorname{Herm}(\mc{H})$ is an invariant subspace of $\operatorname{Ad}_U$. The Hilbert--Schmidt product $(A, B) = \Tr(A^\dagger B)$ is preserved under the conjugation of both factors: $\operatorname{Tr} (U^\dagger A^\dagger U U^\dagger B U) = \operatorname{Tr} (A^\dagger B)$.
\end{proof}

In particular, it follows that the vector 2-norm of a Hermitian operator (i.e.~the sum of the squares of its Pauli coefficients) is preserved under $\operatorname{Ad}_U$.

\section{Quantum model of computation}

% TODO needs examples of unitaries: rotations, Hadamard, CNOT, maybe MS

What operations can be performed on idealized qubits? The basic building blocks of any quantum algorithm are (i) quantum gates and (ii) measurements. 

\subsubsection{Quantum gates}

Quantum gates are represented by unitary matrices:
\begin{equation}
    \mc{U}(d) : = \{ U \in \mathbb{C}^{d \times d} | UU^\dagger = \id \}.
\end{equation}
We say that a quantum gate $U$ is an $m$-qubit gate if the gate is supported only on $m$ qubits. That is, there is a subset of $m$ qubits such that, under proper ordering of the tensor factors, $U$ can be represented as $V \otimes \id$ for some $V \in \mc{U}(2^m)$.

A quantum computer is assumed to be able to produce few-qubit quantum gates from some fixed set. A sequence of quantum gates, together with measurements (see next paragraph) and possibly gates conditioned on classical measurements, is called a \textit{quantum circuit}. 

We highlight two key features of unitary maps:

\begin{enumerate}
    \item Unitary means reversible. We cannot erase information. Thus, we cannot naively implement quantum analogs of AND and OR gates. However, with some extra effort, classical logic gates can still be embedded in quantum gates.
    \item Multi-qubit gates can produce entanglement.
\end{enumerate}

Sometimes a quantum gate will depend on a real parameter. In this case, the circuit containing such gates will be called a \textit{parametrized quantum circuit}. We will assume that this dependence is infinitely differentiable. In the context of variational quantum algorithms, a parametrized quantum circuit is sometimes called an \textit{ansatz}.

\begin{example}
    A single-qubit rotation $R_Y(\theta) = e^{-\rmi Y \theta / 2}$ is a parametrized quantum gate. The fact that $Y^2 = \id$ enables the following decomposition: $R_Y(\theta) = \cos (\theta/2) \id - \rmi \sin (\theta / 2) Y$.
\end{example}

\subsubsection{Measurements}

In general, when we measure a quantum system, we measure some kind of \textit{observable}, i.e.~a Hermitian operator. Let $A$ be some observable with eigenvalues $\lambda_i$ and corresponding eigenstates $\ket{\lambda_i}$. Let the system be in the state $\ket{\psi} = \sum c_i \ket{\lambda_i}$. Then, upon measurement, we will observe the value $\lambda_i$ with probability $|c_i|^2$. The expected value of the measurement outcome is $\sum_i |c_i|^2 \lambda_i$. Importantly, the expected value of $A$ with respect to $\ket{\psi}$ is also expressed as
\begin{equation}
    \langle A \rangle = \bra{\psi} A \ket{\psi}.
\end{equation}
If the measurement does not destroy the system (e.g.~in photonic quantum computing, measuring a qubit effectively destroys it), the state after measurement with outcome $\lambda$ can be described by the following \textit{density operator} (see Section \ref{subsec:mixed}):
\begin{equation}
    \label{eq:postmeasurement}
    \rho_{\text{post}} = \frac{\Pi_\lambda \ket{\psi}\bra{\psi} \Pi_\lambda}{\Tr (\ket{\psi}\bra{\psi} \Pi_\lambda)},
\end{equation}
where $\Pi_\lambda$ is the projector on the eigenspace of $\lambda$. If the eigenspace is one-dimensional, the language of density operators can be avoided: in such case, the post-measurement state is a pure state $\ket{\psi}$.

Different observables can be measured simultaneously if and only if the corresponding operators have a shared eigenbasis. Sometimes we only care about the basis itself, so the observables we measure are projectors on the basis states: $P_1 = \ket{e_1} \bra{e_1}, ..., P_k = \ket{e_k} \bra{e_k}$. We then say that we measure in the basis $(e_1, ..., e_k)$.

Without any extra effort, a quantum computer makes a $Z$ measurement on each qubit. To measure qubits in different bases, one should apply single-qubit or multi-qubit unitaries before measurement. For example, if we want to measure the qubits in the $\{ \ket{+},\ket{-} \}$ basis, we should apply a Hadamard gate to the qubit. If we would like to measure the qubits in some entangled basis, we would have to apply some multi-qubit gates before the measurement. Since single-qubit operations are easier to perform than multi-qubit operations, we mostly consider measurements of observables that don't need entangled bases, such as Pauli strings.

\section{Noise and decoherence}

\subsection{Mixed states and density matrices}
\label{subsec:mixed}
By the very nature of quantum computation as external manipulation of the qubits, it is not enough to consider the qubits as isolated physical entities. In order to focus on the behavior of qubits, though, one can get rid of the irrelevant degrees of freedom, while preserving their action on the qubits. To do that, we represent quantum states not as vectors in $\mc{H}$, but as a certain kind of linear operators on $\mc{H}$. 

\begin{definition}
    A \textit{density operator} (or \textit{density matrix}) on $\mc{H}$ is an operator $\rho \in \operatorname{End}(\mc{H})$ with the following properties:
    \begin{enumerate}
        \item $\rho$ is Hermitian;
        \item All eigenvalues of $\rho$ are non-negative;
        \item $\Tr \rho = 1$.
    \end{enumerate}
\end{definition}
Any density matrix can be diagonalized, i.e.~there is a basis $e_i$ such that
\begin{equation}
    \rho = \sum_i p_i \ket{e_i}\bra{e_i}.
\end{equation}
When such a diagonalization has only one term, we call $\rho$ a pure state, otherwise it is called a \textit{mixed state}. Note that the state is pure if and only if $\rho^2 = \rho$, that is, when the matrix is a projector $\ket{\psi} \bra{\psi}$. 
% Quantum states described using density operators are also called \textit{mixed states}. The reason is the following: since any Hermitian matrix can be diagonalized, any mixed state can be presented as a 

% The set of mixed states includes pure states: a pure state $\ket{\psi} \in \mc{H}$ is mapped to an operator $\rho := \ket{\psi}\bra{\psi} \in \mathrm{End}(\mc{H})$.

One way to arrive to density matrix formalism is to consider the situation when there is an entangled pure state in a Hilbert space $\mc{H} = \mc{H}_1 \otimes \mc{H}_2$, and the observer only has access to one subsystem. Denote $\{\ket{e_i}\}$ the basis of $\mc{H}_1$ and $\{\ket{f_j}\}$ the basis of $\mc{H}_2$. If we measure some observable $A$ only on the first subsystem, that is equivalent to measuring $A \otimes \id$ on the entire system. If the state is equal to $\sum_{ij} c_{ij} \ket{e_i} \ket{f_j}$, then the expected value of $A$ is equal to 
\begin{equation}
    \label{eq:expect_partial}
    \langle A \rangle = \sum_{ijkl} c^*_{ij} c_{kl} (\bra{e_i} \bra{f_j}) A (\ket{e_k} \ket{e_l}) = \sum_{ijkl} c^*_{ij} c_{kl} \bra{e_i} A \ket{e_k} \delta_{jl}.
\end{equation}
Define the \textit{partial trace} of the state as follows:
\begin{equation}
    [\rho_{\mc{H}_1}]_{ik} := \sum_j c^*_{ij} c_{kj} \ket{e_k} \bra{e_i}.
\end{equation}
The expected value (\ref{eq:expect_partial}) is now equal to 
\begin{equation}
    \langle A \rangle = \Tr (\rho_{\mc{H}_1} A).
\end{equation}
This partial trace is now a density matrix called \textit{reduced density matrix}. Of course, reduced density matrices are defined not only for pure states, but also for mixed states.

% \begin{enumerate}
%     \item The operator $\rho_{\mc{H}_1}$ can be represented as $\ket{\phi}\bra{\phi}$ for some $\ket{\phi} \in \mc{H}_1$ if and only if the full state $\ket{\psi}$ is biseparable w.r.t.~the partition $\mc{H}_1 \otimes \mc{H}_2$;
%     \item the $i$'th diagonal entry of $\rho_{\mc{H}_1}$ contains the probabilities of observing the system in the state $\ket{e_i}$;
%     \item the trace of $\rho_{\mc{H}_1}$ is equal to one as probabilities of all possible events sum up to one.
% \end{enumerate}

There are several metrics to estimate the mixedness of a state. The simplest one is \textit{purity}, which we define as $\Tr{\rho^2}$. A state $\rho$ is pure if and only if $\rho$ is a projector, i.e. $\rho^2 = \rho$. In this case, the purity is equal to 1, which is the maximum possible purity. The minimum possible purity can be obtained from the relation between different vector norms. Let $\lambda_i, i = 1,..., \operatorname{dim} \mc{H}$ be the eigenvalues of $\rho$. Their sum is the one-norm of the vector $\boldsymbol{\lambda} = (\lambda_1, ..., \lambda_{\operatorname{dim} \mc{H}})$. The purity is the sum of squares of $\lambda_i$, i.e. the square of the two-norm. From the relations between $p$-norms of vectors we know that $||\lambda||_2 \geq ||\lambda||_2 / \sqrt{\operatorname{dim} \mc{H}}$. This implies that minimum possible purity is equal to $1/\operatorname{dim} \mc{H}$. This minimum is reached by the \textit{maximally mxied state} $\rho = \id / \operatorname{dim} \mc{H}$.

A more complicated measure of mixedness is the \textit{von Neumann entropy}:

\begin{equation}
    S(\rho) = - \Tr \rho \log \rho.
\end{equation}

This formula makes sense even when $\rho$ has zero eigenvalues because $\lim_{x \rightarrow 0^+} x \log x = 0$. The units of $S(\rho)$ are called \textit{nats} if the formula uses the natural logarithm, and \textit{bits} or \textit{ebits} if the logarithm is base two. 

The Bell state $\ket{\Phi} = \frac{1}{\sqrt{2}}(\ket{00} + \ket{11})$ has exactly one ebit of entropy: the reduced density matrix of any qubit is simply $\frac12 \id$, hence the entropy is equal to $-\log \frac{1}{2}$ nats or $-\log_2 \frac{1}{2} = 1$ ebit. For that reason, the states that are obtaied from the Bell state by application of local unitary operators are called \textit{maximally entangled states} in two qubits.

The notion of entanglement can also be extended to mixed states. A density matrix is called entangled if it cannot be represented as a convex combination of product states, i.e.~in the form $\rho = \sum p_i \rho_1^{(i)} \otimes \rho_2^{(i)}$, where $p_i$ are positive~\cite{werner_quantum_1989}. However, for mixed states, deciding whether a given state is entangled or not is already a complicated problem, which is only solved for small systems~\cite{horodecki_five_2022}.

\subsection{Noise models}

When we introduced the density matrix formalism, we assumed that there are two subsystems, and found that the reduced density matrix can provide information about the entanglement between the subsystems. However, this also works if one of the subsystem is the entire quantum registry, and the other subsystem is the rest of the Universe. This way, the state of the quantum device is described by a density matrix.

A perfect quantum gate $U$ will act on a mixed quantum state $\rho$ by conjugation: $\rho \mapsto U \rho U^\dagger$. However, when a unitary maps acts by conjugation on a composite system, its action on reduced density matrices does not necessarily reduce to conjugation by unitary maps.

\begin{example}
    Consider a quantum circuit consisting of two gates: a Hadamard gate acting on the first qubit and a CNOT gate acting on qubits $(1, 2)$. Starting from the state $\ket{00}$, this circuit prepares the Bell state $\frac{1}{\sqrt{2}}(\ket{00} + \ket{11})$. Before the application of the gates, the reduced density matrix of the first qubit was equal to $\ket{0} \bra{0}$, while after the gates, it is equal to $\frac12 \id$. These matrices have different rank, therefore they cannot be conjugate to each other.
\end{example}

In general, physically meaningful operations that can be performed on density matrices are described by a special kind of maps called \textit{quantum channels}.

% For pure states, the only operations we could perform were the unitary maps.  To model decoherence and noise, one should use linear maps that don't reduce to conjugation by a unitary matrix. These maps can be very diverse, with the only restriction being that they should always map a valid density matrix to a valid density matrix. A density matrix is a positive operator with trace one, so the map should preserve these two properties. Such maps are the completely positive, trace-preserving maps (CPTP), also called 

\begin{definition}
    A quantum channel is a linear map $\Phi: \operatorname{End}(\mc{H}_1) \rightarrow \operatorname{End}(\mc{H}_2)$, where $\mc{H}_1$ and $\mc{H}_2$ are Hilbert spaces, such that:
    \begin{enumerate}
        \item $\Phi$ is completely positive: if $\rho$ is positive semidefinite, then so is $\Phi(\rho)$;
        \item $\Phi$ is trace preserving: $\Tr(\rho) = \Tr(\Phi(\rho))$.
    \end{enumerate}
\end{definition}

We will not go deep into the theory of quantum channels, instead only highlighting a few interesting examples. For a more detailed account we refer the reader to Ref.~\cite{watrous_theory_2018}.
\begin{enumerate}
    \setcounter{enumi}{-1}
    \item The identity channel $\rho \mapsto \rho$ is of course a quantum channel;
    \item Conjugation by a unitary $\mathrm{Ad}_U(\rho) = U \rho U^\dagger$;
    \item A \textit{depolarizing channel} with noise strength $p$ is used to simulate uniform, featureless noise:
    \begin{equation}
        \Phi(\rho) = (1-p) \rho + \frac{p}{\operatorname{dim} \mc H} \id.
    \end{equation}
    \item For single-qubit denisty matrices, one can consider \textit{bit flip}    and \textit{phase flip} channels:
    \begin{align}
        \Psi_{\mathrm{bit}}(\rho) &= (1-p) \rho + p \cdot X \rho X \\
        \Psi_{\mathrm{phase}}(\rho) &= (1-p) \rho + p \cdot Z \rho Z 
    \end{align}

\end{enumerate}

Let us give a few more observations about the depolarizing channel. First, we note that for one qubit, this channel has the following form:
\begin{equation}
    \Phi(\rho) = (1-\frac{3}{4}p) \rho + \frac{p}{4}(X \rho X + Y \rho Y + Z \rho Z).
\end{equation}
Second, we observe that, as long as their supports are the same, the depolarizing channel and the conjugation channel commute. This can be verified directly:
\begin{gather}
    \mathrm{Ad}_U \circ \Phi_p (\rho) = \mathrm{Ad}_U ((1 - p) \rho + \frac{p}{\dim{\mc{H}}}\id) = (1 - p) U \rho U^\dagger + \frac{p}{\dim{\mc{H}}}\id; \\
    \Phi_p \circ \mathrm{Ad}_U (\rho) = \Phi_p (U \rho U^\dagger) =  (1 - p) U \rho U^\dagger + \frac{p}{\dim{\mc{H}}}\id.
\end{gather}
This is no longer the case when the support of $U$ is larger than the support of $\Phi$. For example, a two-qubit swap channel
\begin{equation}
    \mathrm{SWAP}: \rho \otimes \eta \mapsto \eta \otimes \rho
\end{equation}
does not commute with the depolarizing channel acting on the first qubit.

\section{Tensor networks}

The theory of quantum computation extensively uses the notion of tensor product, and thus greatly benefits from the diagrammatic language of \textit{tensor networks}. In fact, the diagrams of quantum circuits themselves can be thought of as tensor network diagrams. Detailed exposition of tensor networks for quantum physics and quantum information can be found in Refs.~\cite{biamonte_lectures_2020,bridgeman_hand-waving_2017,orus_practical_2014}.

Let $V$ be the vector space of interest (most frequently we will deal with the qubit state space $\mathbb{C}^2$) and $V^*$ its dual space. A $(p, q)$\textit{-tensor} is a linear function from $V^q$ to $V^p$. The space of such tensors $T^p_q (V)$ is isomorphic to $V^* \otimes ... \otimes V^* \otimes V \otimes ... \otimes V$ with $q$ copies of $V^*$ and $p$ copies of $V$. In tensor network diagrams, such a tensor is depicted as a box with $p$ wires going right and $q$ wires going left.

\begin{equation}
    M^{i_1 ... i_p}_{j_1 ... j_q} = \vcenter{\hbox{\includegraphics{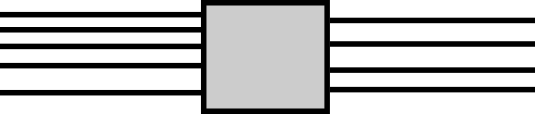}}}
\end{equation}
For example, a linear map $A: V \rightarrow V$ is depicted as a box with one wire on each side. The box itself can be annotated as well:
\begin{equation}
    A^i_j = \vcenter{\hbox{\includegraphics{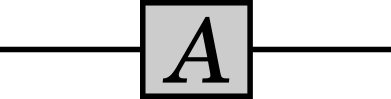}}}
\end{equation}
We will draw a vector $v \in V$ as a triangle-shaped box with one wire sprouting to the left, while a covector $\delta \in V^*$ is depicted as a triangle-shaped box with one wire sprouting to the right. A projector $\ket{\psi} \bra{\psi}$ can then be depicted as follows:
\begin{equation}
    \ket{\psi} \bra{\psi} = \vcenter{\hbox{\includegraphics{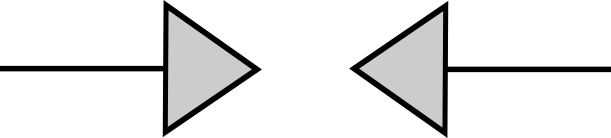}}}
\end{equation}

The tensor product of (arbitrary) tensors is depicted by drawing the two tensors one above the other:
\begin{equation}
    A \otimes B = \vcenter{\hbox{\includegraphics{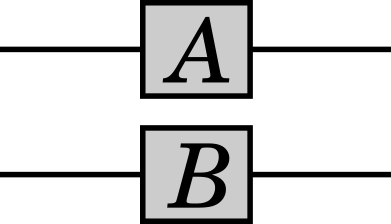}}}
\end{equation}
The contraction of tensors is depicted by connecting the corresponding wires:
\begin{equation}
    A \ket{v} = \vcenter{\hbox{\includegraphics{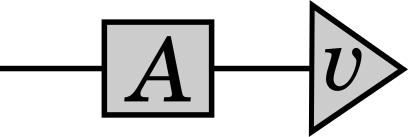}}}
\end{equation}
As long as the wires connect the correct inputs and outputs, the curves depicting the contractions of indices can be completely arbitrary. For example, taking the trace of a matrix requires connecting its input and output indices, which implies that we need to curve the wire into a loop (note that $\Tr A$ is a scalar, so the tensor network diagram has no loose wires):
\begin{equation}
    \Tr A = \vcenter{\hbox{\includegraphics{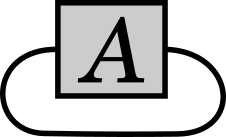}}}
\end{equation}
An identity operator $\id: V \rightarrow V$ is depicted simply as a line segment:
\begin{equation}
    \id = \vcenter{\hbox{\includegraphics{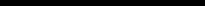}}}
\end{equation}
If we bend this line segment one way or another, we can get a $(2, 0)$ or a $(0, 2)$ tensor. This corresponds to a tensor proportional to the maximally entangled state in $V \otimes V$ or its dual:
\begin{align}
    \sum_i \ket{e_i} \ket{e_i} &= \vcenter{\hbox{\includegraphics{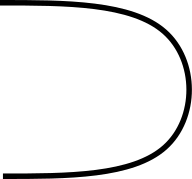}}} \\
    \sum_i \bra{e_i} \bra{e_i} &= \vcenter{\hbox{\includegraphics[angle=180]{figures/inkscape/bell.png}}}
\end{align}
Importantly, the when these diagrams occur on their own, they become basis dependent. Similarly, the transpose of a matrix is performed by turning both its wires in the opposite direction:
\begin{equation}
    A^T = \vcenter{\hbox{\includegraphics{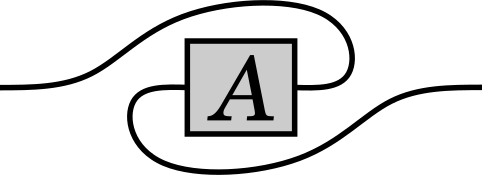}}}
\end{equation}

% With a fixed choice of basis in $V$, there is an isomorphism of $V$ and $V^*$. 

% This allows us to change the signatures of tensors. Diagrammatically this amounts to bending the wires. We start with the identity operator that belongs to $V \otimes V^*$. In the basis $\{ \ket{e_i} \}$ it can be decomposed as $\id = \sum \ket{e_i} \bra{e_i}$. When we map $V^*$ to $V$, we replace the basis bra vectors with corresponding basis ket vectors, ending up with a tensor $\sum \ket{e_i} \ket{e_i} \in V \otimes V$, that is proportional to a maximally entangled state. Graphically we show this as follows:

% $\ni$

% In a similar manner, the transpose of a matrix is depicted by bending the incoming and outgoing wires by 180 degrees:

% PICTURE A TRANSPOSE

% EEEEEEEEEEEEEEEEEEEEEEEEEEEEEEEEEEEEEE

% topologically, imagine boxes as vertices of a graph, and wires as edges; we construct a graph by taking a collection of unit segments $\left[0, 1\right]$ and identifying some endpoints. No that's not enough: every box has some order of outgoing wires.

% Since we work in $\mathbb{C}^d$ without any additional metric tensors, there are no extra tensors appearing when we raise or lower the indices of tensors. Pictorially, this means that we just bend the wires:

% \begin{center}
%     \includegraphics{}
% \end{center}

One must take care of the Hermitian conjugation operation (which is the combination of complex conjugation and transpose): we only denote it by explicitly marking tensors with the dagger sign ($\dagger$). 

It is also possible to consider linear functions over different vector spaces. For instance, when talking about mixed states, we can treat separately the internal degrees of freedom and the environment. In this case, one should just keep track of which wires belong to which space, either by subscripting them or by drawing them in different thickness, color, etc.

The convention on the direction of wires is completely arbitrary. The direction we choose here aligns with the mathematical notation of matrix multiplication, where the order they act on a vector is right to left. Quantum circuits are typically written in the opposite direction, so that the quantum gates are to be read left to right (the only exception we are aware of is \cite{kitaev_classical_2002}).

One of the most important things about tensor networks is that the diagrams can be arbitrarily deformed, and as long as the structure is preserved, the diagram will depict the same tensor. Formally tensor network diagrams are treated in Ref.~\cite{joyal_geometry_1991}. To prove the invariance of diagrams under deformations, Joyal and Street treat tensor network diagrams as generalizations of graphs with certain additional structure (some nodes are marked as outer nodes, incoming and outgoing edges for each node are ordered, etc.).

More generally, tensor diagrams can be used in different settings other than quantum theory. The key requirement is that the objects of interest and relations between them form a category in which the idea of tensor products makes sense. For example, Ref.~\cite{coecke_physics_2010} shows how tensor diagrams can be applied in topology, logic, and theory of computation.

An interesting case of tensor networks are tensor network states, i.e.~quantum states that are best described as a contraction of certain tensors. For example, quantum circuits can be treated as such: wires are wires, and boxes are quantum gates. Suppose each qubit in a tensor network state is depicted by a distinct wire. Knowing the structure of the network, we can provide an upper bound to the amount of entanglement that will be found across any bipartition of the qubits. To do that, we split the network in two halves, so that each half only contains the outgoing wires of the respective partition of the qubit registry. There is no unique way to do this, and the bound will depend on the partition of the tensors. In any case, the halves will be connected by a number of wires. For the special case of quantum circuits we can assume that these are $w$ wires, each having the dimension equal to two. Schematically this situation in shown in Fig.~\ref{fig:cut}. If we now group the qubit spaces in bipartitions into two spaces, the tensor will essentially be a matrix with rank not exceeding $2^w$. The reduced density matrix of one of the halves is obtained from two copies of the tensor by contraction over a half of the open wires. The resulting tensor --- the density matrix --- also has the rank of $\leq 2^w$. It is then easy to see that the von Neumann entropy of this matrix is bound by $w$ ebits. Thus, the number of ebits in a bipartition is bounded by the number of cuts you need to make in order to split the tensor network diagram according to the bipartition.

\begin{figure}
    \centering
    \includegraphics[width=0.5\linewidth]{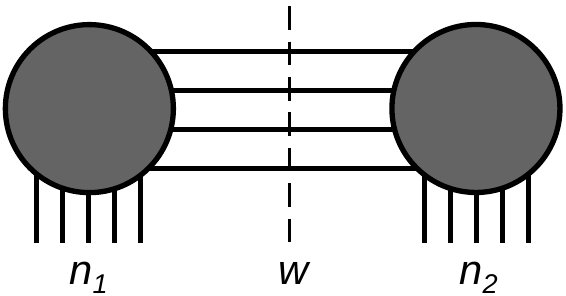}
    \caption{A quantum circuit can be treated as a tensor network state with all bonds having dimension 2. If a bipartition cuts $w$ wires, the total bond dimension is at most $2^w$, while the cut separates at most $w$ ebits of entanglement. Reprinted from \cite{uvarov_machine_2020}.}
    \label{fig:cut}
\end{figure}

\section{Complexity-theoretical aspects of quantum computation}

% Goals of this section:
% Survey complexity-theoretical stuff about QCs
% Introduce 

How do quantum computers compare to classical computers? What is so different about them? To answer these questions, we must make a detour to the field of computational complexity. 

% A small detour into the field of quantum complexity is in order. To understand what kind of problems we want to solve, we should formalize the idea of a problem. 

% \begin{definition}
% An \textit{alphabet} is an arbitrary finite set $\Sigma$. The set of all possible finite-length words from this alphabet is denoted as $\Sigma^* := \bigcup_{n=0}^\infty \Sigma^n$.
% \end{definition}

% \begin{definition}
% A \textit{language} is a subset of $\Sigma^*$.
% \end{definition}

% ...
% It is important to distinguish a \textit{problem} and an \textit{instance} of a problem. 

\subsection{Classical complexity classes}

When we talk about problems that computers can or cannot solve, we mean the following. A \textit{problem} is a generic question of the kind ``Given $X$, find $Y$'', and an \textit{instance} of the problem is a specific value of $X$. When $Y$ is either yes or no, we call an instance a ``yes'' instance or a ``no'' instance, repsectively. Such yes or no problems are called \textit{decision problems}. 

Decision problems are grouped into complexity classes. Strictly speaking, the definitions of complexity classes rely on the notion of a Turing machine. However, for our purposes, we can equivalently define the most important complexity classes without having to introduce Turing machines. Complexity classes are usually defined by specifying how much time or space it takes to solve the problem as a function of the size of the input.

The complexity class $\mathbf{P}$ is defined as a class of problems that can be solved on a classical computer in polynomial time. This class includes many everyday problems solved by classical computers, such as sorting arrays of numbers, finding an element in an array, multiplying two matrices, etc. 

The class $\mathbf{NP}$ is defined as a class of decision problems where a solution to a ``yes'' instance of the problem can be efficiently verified with a classical computer. That is, if the answer to the question is ``yes'' and we are presented with some evidence of that (called \textit{witness}), we can quickly convince ourselves that the answer is indeed yes.

\begin{example}[\textsc{Travelling salesperson}, decision version]
    Consider an edge-weighted, connected, undirected graph $G=(V, E)$ and a positive real number $L$. Decide if there exists a path that visits every vertex of the graph, such that the sum of the edges taken in the path is less than or equal to $L$. This is a problem in $\mathbf{NP}$ because the ``yes'' answer can be verified by providing such a path.
\end{example}

\begin{example}[\textsc{Circuit-SAT}]
    Given a Boolean circuit, decide if (``yes'') there is such an input that the first bit of the output evaluates to `1', or (``no'') for any input, the first bit evaluates to `0'. This problem is also in $\mathbf{NP}$ because the ``yes'' answer can be quickly verified by providing the correct input.
\end{example}

These two classes are the subject of the most famous open question of complexity theory, namely the $\mathbf{P} \overset{?}{=} \mathbf{NP}$ question. The distinction of these classes is important for the following reason. The class $\mathbf{P}$ is often considered to be the class of decision problems that can be reasonably solved by a classical computer. However, many important problems actually lie in class $\mathbf{NP}$.

The $\mathbf{NP}$ class is large and diverse, but there is a subclass of problems that stand out as the most difficult in this class. By relative difficulty we mean their \textit{Carp reducibility} to one another. Let $a$ is an instance of problem $A$, and let $f$ be a procedure that maps instances of problem $A$ to instances of problem $B$, taking ``yes'' instances to ``yes'' instances and ``no'' instances to ``no'' instances. In addition, let $f$ take polynomial time in the size of $a$. In this situation, we say that $A$ \textit{can be reduced to} $B$. 

If we now find an efficient procedure for solving $B$, this means that we can also efficiently solve $A$ by first mapping instances of $A$ to instances of $B$ and then solving the latter. That is why if $A$ is reduced to $B$, then $B$ is considered a more difficult problem.

A problem in $\mathbf{NP}$ such that all problems in $\mathbf{NP}$ can be reduced to it, is called an $\mathbf{NP}$-complete problem. The primary example of an $\mathbf{NP}$-complete problem is \textsc{Circuit-SAT}: the verifier of any problem in $\mathbf{NP}$ is a polynomial-time algorithm for a deterministic Turing machine, which can also be mapped to a polynomial-sized Boolean circuit. The completeness of \textsc{Circuit-SAT} is known as the Cook---Levin theorem \cite{cook_complexity_1971,levin_universal_1973}.

Finally, we should mention the complexity class $\mathbf{BPP}$. This class is defined as the class of problems that can be solved in polynomial time by a classical computer that also has access to a polynomial amount of random bits. Since such programs rely on randomness, we also demand that the probability of solving the problem wrong is bounded by $1/3$ regardless of the size of the input (this probability is actually arbitrary as long as it is fixed and strictly less than $1/2$). In this case, running the same algorithm $N$ times reduces the chance of error to $3^{-N}$.

The only known relation between these classes is that $\mathbf{P} \subseteq \mathbf{NP}$. It is generally believed that $\mathbf{P} \neq \mathbf{NP}$ and that $\mathbf{P} = \mathbf{BPP}$.

\subsection{Quantum complexity classes}

The first inquiries in quantum computation used the language of quantum Turing machines. In this model of computation, the elements of the tape are essentially qudits, the internal state of the machine is a quantum state, and the function that defines the next step is a unitary transformation \cite{bernstein_quantum_1997}.

% When the machine halts, we measure the state of the machine, which can in general be entangled. Analogously to the discussion above, we define the class $\mathbf{BQP}$ as the class of decision problems that can be solved by a quantum Turing machine in polynomial time, with error probability bounded by $1/3$. 

The quantum Turing machine model is equivalent to the model of quantum circuits, and thus we can define $\mathbf{BQP}$ as the class of problems that can be solved by polynomial-sized quantum circuits. The most famous example of a $\mathbf{BQP}$ problem is the problem of factoring integers into prime factors. The Shor's algorithm \cite{nielsen_quantum_2006} puts it into $\mathbf{BQP}$, while it is not known if it belongs to $\mathbf{P}$. It is not known how $\mathbf{BQP}$ compares to $\mathbf{P}$ or $\mathbf{NP}$. It is not difficult to show that $\mathbf{P} \in \mathbf{BQP}$, but the other direction is unknown. It is suspected that $\mathbf{BQP}$ is not included in $\mathbf{NP}$.
% , or even in the entire polynomial hierarchy, which is a sequence of nested complexity classes thought to be larger than $\mathbf{NP}$.

Just like $\mathbf{BQP}$ can be considered a quantum counterpart of $\mathbf{P}$ (or more precisely, $\mathbf{BPP}$), the class $\mathbf{NP}$ has a quantum counterpart, too. This class is called $\mathbf{QMA}$ and contains problems that can be efficiently verified with a quantum computer. The name of the class $\mathbf{QMA}$ is an acronym for ``Quantum Merlin--Arthur''. This name comes from a hypothetical scenario featuring Arthur, king of the Britons, and Merlin, a mighty wizard and advisor to Arthur. Merlin can solve any problem, but Arthur doesn't trust him too much and wants to be able to verify Merlin's suggestions. To convince Arthur that the ``yes'' instance is indeed a ``yes'' instance, Merlin can provide a polynomial-sized quantum state. More precisely, for ``yes'' instances, Arthur should be able to correctly verify it with probability above $2/3$, while for the ``no'' instance, Merlin should not be able to fool Arthur into believing that it's a ``yes'' instance with probability above $1/3$. Again, here the probabilities are with respect to Arthur's source of randomness, so by running his algorithm more than once, he can achive any desired level of confidence.

In fact, all complexity classes above (and some more) can be specified in terms of what Arthur can do and what Merlin can provide him as proof of his words. The resources of Arthur and Merlin are summarized in Table \ref{tab:complexity_classes}.

\begin{table}
    \centering
    \begin{tabularx}{\textwidth}{|lXl|}
        \hline
        \textbf{Class} & \textbf{Arthur has a...} & \textbf{Merlin can provide...} \\
        \hline
        $\mathbf{P}$ & classical computer & nothing \\
        $\mathbf{BPP}$ & classical computer and a source of randomness & nothing\\
        $\mathbf{NP}$ & classical computer & a polynomially-long string \\
        $\mathbf{MA}$ & classical computer and a source of randomness &  a polynomially-long string \\
        $\mathbf{BQP}$ & quantum computer & nothing \\
        $\mathbf{QMA}$ & quantum computer & a quantum state in $poly(n)$ qubits\\
        $\mathbf{QCMA}$ & quantum computer & a polynomially-long string\\
        \hline
   \end{tabularx}
   \caption{\label{tab:complexity_classes}Complexity classes described as an interaction between Arthur and Merlin.}
\end{table}

A canonical $\mathbf{QMA}$-complete problem is \textsc{Quantum circuit-SAT}: given a quantum circuit, decide if there is any input quantum state such that the first qubit of the output is measured in the state $\ket{1}$. This problem is $\mathbf{QMA}$-complete almost by definition: it is in $\mathbf{QMA}$ because the input state is a witness, and it is complete for $\mathbf{QMA}$ because for any problem in $\mathbf{QMA}$, you can construct a verifier circuit, and now if you ask if this verifier circuit ever returns a $\ket{1}$, you obtain an instance of \textsc{Quantum circuit-SAT}.

An ingenious construction due to Kitaev and coworkers \cite{kitaev_classical_2002} proves that the problem of finding a low-energy state of a local Hamiltonian is also $\mathbf{QMA}$-complete.

\begin{definition}[$k$-\textsc{Local Hamiltonian}]
    Given: $(H, a, b)$, where:
    \begin{enumerate}
        \item $H = \sum_i H_i$ is a Hamiltonian acting on a product of $n$ subsystems. The terms $H_i$ have operator norm bounded by a constant and support bounded by $k \in \mathbb{N}$. For each subsystem, the number of terms acting nontrivially on that subsystem is also bounded by a constant in $n$.
        \item $a, b \in \mathbb{R}$, such that $a < b$ and $b - a \geq O(1/ poly(n))$.
    \end{enumerate}
    Decide if there is a state $\ket{\psi}$ such that $\bra{\psi}H\ket{\psi} \leq a$ or if all states have energy above $\bra{\psi}H\ket{\psi} \geq b$, promised that either is the case.
\end{definition}

% Notice that we introduced promise that the ground state energy of the Hamiltonian does not belong to the gap $(a, b)$. The role of this gap is as follows\dots

This problem was first proven to be $\mathbf{QMA}$-complete for $k \geq 5$ \cite{kitaev_classical_2002}. Then, it was gradually improved to lower values of $k$ \cite{kempe_3-local_2003}, until it was ultimately proven complete even for $k = 2$ \cite{kempe_complexity_2006}. Note that for $k = 1$ this problem is in $\mathbf{P}$: each term can just be minimized independently.

In the following chapters, we will mostly deal with instances of \textsc{Local Hamiltonian}. 
A plausible complexity-theoretic assumption is that $\mathbf{BQP} \neq \mathbf{QMA}$, which means we may never be able to solve this problem efficiently, even with scalable, fault-tolerant quantum computers. 
However, not all is lost. In classical computing, the fact that a problem is $\mathbf{NP}$-complete only implies that the worst instances are hard to solve, while many real instances turn out to have good approximate solutions. Arguably, the entire effort of variational quantum algorithms is devoted to finding good heuristic algorithms for $\mathbf{QMA}$-complete problems.

    % Введение
\ifnumequal{\value{contnumfig}}{1}{\counterwithout{figure}{chapter}
}{\counterwithin{figure}{chapter}}
\ifnumequal{\value{contnumtab}}{1}{\counterwithout{table}{chapter}
}{\counterwithin{table}{chapter}}
\chapter{Variational quantum algorithms} 
\label{chap:vqas}

% What do we want to say?
% We want to explain VQE and PEA.

As we mentioned earlier, present-day quantum computers are a long way from cracking encryption. Nonetheless, there are efforts to find use for quantum computers in the NISQ era. An important line of research is the research of variational quantum algorithms (VQAs). These algorithms rely on a feedback loop between the quantum device and the classical computer. Most often, the quantum device evaluates some kind of cost function (or its gradient), while the classical device performs optimization using the data from the quantum device. In this chapter we focus on this approach to quantum computation. This chapter is mostly devoted to variational quantum eigensolver, however, many ideas are applicable for any variational quantum algorithm.

% \todo{Optimization problems are all around us.
% There are continuous problems (e.g.~linear programming, some airfoil stuff, some other nice examples), and there are discrete problems.
% Both are difficult.
% Here we describe some attempts at embedding optimization tasks}

\section{Variational principle}

Consider a Hermitian operator $H$. We do not know its spectrum, but we have some kind of guess about its ground state $\ket{\psi_{\text{guess}}}$. We know that the guess is not the true state, but we hope that some variation of this guess is close to the truth. So, we can introduce some family of the states $\ket{\psi(\theta)}$. Now, since any state is a linear combination of the ground state and excited states, the energy of that state is never below the ground state energy:

\begin{equation}
    \label{eq:variational_principle}
    E(\theta) = \frac{\bra{\psi(\theta)} H \ket{\psi(\theta)}}
         {\braket{\psi(\theta)}{\psi(\theta)}} \geq E_{gs}.
\end{equation}

This means that if we pick $\theta$ so that $E(\theta)$ is minimal, we can find a state that is the best approximation of the ground state in the sense of energy.

% \begin{example}
%     Consider a Hamiltonian of a harmonic oscillator...
% \end{example}

\subsection{Toy example: spin model}
% \textbf{For parallelism, it may make sense to also explain the PEA here.}
% What do we want from a toy example?
% It should be accessible to a layman.
% We know that it's kind of difficult.
% I want to describe the TFIM model for two qubits.
% Should we solve this model exactly?
% Should there be a numerical section here too?
% This section can be an example for the variational principle: just make an ansatz as a linear combination of |00> and |++>
% Don't forget to include that for this toy example, there is an exact solution.
% The difficulty for a layman is to understand that the target Hamiltonian is abstracted away into the set of instructions on how to measure the states.
% Perhaps there should be a whole section on how to measure variables.

Consider the following Hamiltonian describing two interacting spins in an external magnetic field:

\begin{equation}
    \label{eq:tfim_simple}
    H = J Z \otimes Z + h(X \otimes \id + \id \otimes X) \equiv J Z_1 Z_2 + h(X_1 + X_2).
\end{equation}

For $h = 0, J<0$, the ground state space of this system is obviously spanned by two product states: $\ket{00}$ and $\ket{11}$. 
For $J = 0, h>0$, the unique ground state is a product state $\ket{++} \equiv \frac{1}{2} (\ket{0} + \ket{1}) \otimes (\ket{0} + \ket{1})$. 
For generic values of $J$ and $h$, the ground state is not necessarily a product state. 
Intuitively, the reason is that the two terms in the Hamiltonian compete with each other, and neither of the product states is a good solution.
What if there are $n$ spins in the magnetic field? The Hamiltonian of this model (the \textit{transverse field Ising model}, TFI) is then as follows:

\begin{equation}
    \label{eq:tfim_toy}
    H = J \sum_{i=1}^{n-1} Z_i Z_{i+1} + h \sum_{i=1}^n X_i.
\end{equation}

The same reasoning still applies. In the limiting cases, there are simple product ground states, while generic case is more complicated. In fact, the case $J=h$ is not even amenable for perturbation theory, since both terms have comparable impacts.

% Still, we can try to approximate the ground state using the variational principle. Our guess will be the linear combination of the states in the limiting cases: $\psi(\theta) = Norm^{-1} ( \cos (\theta)\ket{0...0} + \sin (\theta) \ket{+...+})$.

% \todo{We actually have something for a $\psi \otimes \psi ... $. Better analyze that.}

Still, we can try to approximate the ground state using the variational principle. For now, we will consider the case $J < 0$. A simple guess is a translation-invariant state: $\ket{\psi(\theta, \varphi)} = (\cos(\theta) \ket{0} + e^{\rmi \varphi} \sin(\theta) \ket{1})^{\otimes n} \equiv \ket{\varphi}^{\otimes n}$. In this case, the calculations reduce to evaluation of $\langle \varphi | Z | \varphi \rangle$ and $\langle \varphi | X | \varphi \rangle$:

\begin{equation}
    \langle \varphi | Z | \varphi \rangle = \cos 2 \theta; \ 
    \langle \varphi | X | \varphi \rangle = \sin 2 \theta \cos \phi. 
\end{equation}
The total energy is then equal to 
\begin{equation}
    \label{eq:toy_tfim_energy}
    E = Jn \cos^2 2 \theta + hn \sin 2 \theta \cos \phi.
\end{equation}

By differentiating (\ref{eq:toy_tfim_energy}) and finding all stationary points, we find that the lowest energy obtainable is $-n \min{|J|, |h|}$. Thus, by optimizing over $\theta$ and $\phi$, we found the unentangled state which is closest to the ground state in terms of energy error.

The exact solution can be found by means of Jordan--Wigner transform \cite{lieb_two_1961,pfeuty_one-dimensional_1970}. The difference between exact solution and our variational solution is shown in Fig. \ref{fig:tfim_rank_one}.

\begin{figure}
    \centering
    \includegraphics[width=0.7\textwidth]{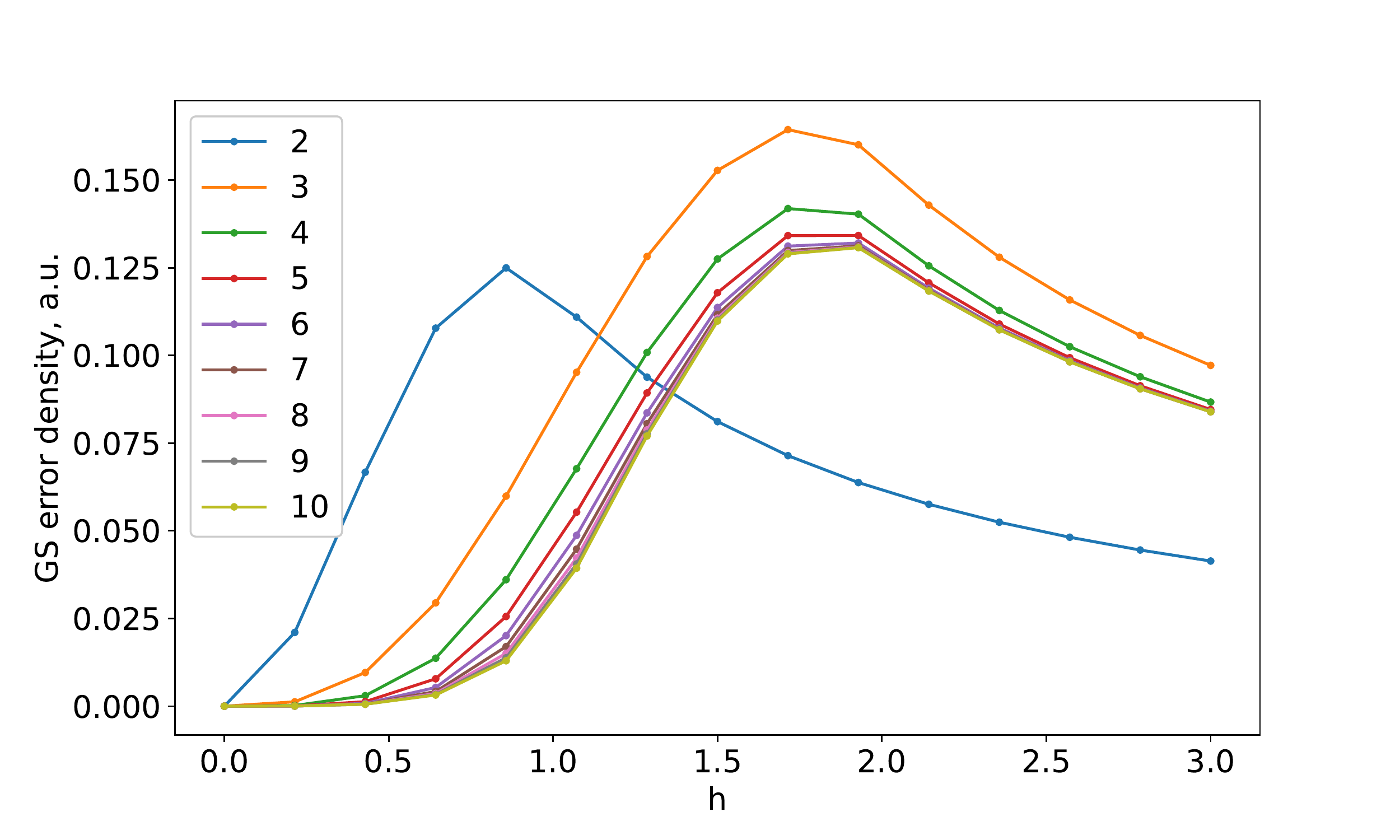}
    \caption{Error in ground state energy per qubit for ferromagnetic ($J = -1$) TFI model as a function of $h$. The approximate solutions are calculated using the unentangled ansatz state.}
    \label{fig:tfim_rank_one}
\end{figure}

\section{Variational quantum eigensolver}

\subsection{General description}

The variational principle inspired an algorithm for optimization of quantum Hamiltonians called variational quantum eigensolver (VQE). In short, we want to be able to estimate values like $\bra{\psi(\theta)} H \ket{\psi(\theta)}$ using a quantum computer. To do that, we need $\ket{\psi({\theta})}$ to be a state that can be prepared in the quantum processor (that is, a parametrized quantum circuit applied to some fixed reference state $\ket{\varphi}$), and $H$ needs to be an operator that acts in the space of qubits. 

Continuing the example of the spin model, we have a Hamiltonian consisting of $Z$ and $X$ operators. To do something about this model using a quantum computer, we identify each spin with a qubit (i.e.~for a spin chain with $n$ spins we need $n$ qubits). Since operators $Z$ and $X$ do not commute, they cannot be simultaneously measured. Hence, one has to measure them separately, in independent experiments. So, the subroutine for estimating $\bra{\psi(\theta)} H \ket{\psi(\theta)}$ is the following:

\begin{enumerate}
    \item Repeatedly prepare $\ket{\psi(\theta)}$ and measure each qubit in the standard basis. After enough measurements, estimate $\bra{\psi(\theta)} Z_i Z_{i+1} \ket{\psi(\theta)}$.
    \item Repeatedly prepare $\ket{\psi(\theta)}$ and measure each qubit in the $\{ \ket{+}, \ket{-} \}$ basis. After enough measurements, estimate $\bra{\psi(\theta)} X_i \ket{\psi(\theta)}$.
    \item Set $\bra{\psi(\theta)} H \ket{\psi(\theta)} := J \sum \bra{\psi(\theta)} Z_i Z_{i+1} \ket{\psi(\theta)} + h \sum \bra{\psi(\theta)} X_i \ket{\psi(\theta)}$.
\end{enumerate}

The trick here is that the qubits in the physical device do not have to have interactions prescribed by the model; all we need is to be able to prepare states and measure them in the standard basis (recall that measuring in the $\{ \ket{+}, \ket{-} \}$ basis is not much more difficult: we only need to apply a Hadamard gate to each qubit and then measure in the standard basis).

Now, assume that $\ket{\psi(\theta)}$ is a differentiable function of $\theta \in \mathbb{R}^k$ for some $k$, and $H \in \operatorname{Herm} (2^n)$. Then, the broad description of the VQE algorithm is the following loop:

\begin{enumerate}
    \item Estimate $E(\theta) = \bra{\psi(\theta)} H \ket{\psi(\theta)}$. 
    \item Use a classical optimization routine to find the next value of $\theta$. If the new value of $\theta$ is different from the old value by a threshold smaller than $\epsilon$, stop.
\end{enumerate}

A few remarks are in order:

\begin{itemize}
    \item In step 1, the estimate of $E(\theta)$ is sometimes replaced by the estimate of $\nabla E$, depending on the optimization routine in step 2. In the next subsection, we will give more detail the different choices of the optimization routine.
    \item The value $\bra{\psi(\theta)} H \ket{\psi(\theta)}$ is not easy to evaluate for any Hamiltonian. To be able to do that, we need to put additional restrictions on $H$. More precisely, we need $H$ to consist of a polynomial number of operators $h_i$ such that every expectation $\bra{\psi(\theta)} h_i \ket{\psi(\theta)}$ can be measured in polynomial time. An example of a Hamiltonian the conforms to this restriction is a Hamiltonian that consists of a polynomial number of Pauli strings. Alternatively, the Hamiltonian can consist of a polynomial number of projectors on product states.
\end{itemize}

\subsection{Optimization subroutine}

\paragraph{Stochastic graident descent.}
\textit{Gradient descent} is an algorithm of finding a local minimum of a differentiable function $f: \mathbb{R}^n \rightarrow \mathbb{R}$. The simplest variant of gradient descent goes as follows. Fix a \textit{learning rate} $\gamma > 0$ and pick an initial guess $x_0$. Then find the solution by repeating the following iteration:
\begin{equation}
    x_{i+1} = x_i - \gamma \nabla f(x_i).
\end{equation}
Stochastic gradient descent (SGD) appears when $\nabla f(x_i)$ is replaced by a random vector that estimates $\nabla f(x_i)$. In the context of supervised machine learning, stochastic gradient descent appears naturally: to calculate $\nabla f(x_i)$ exactly, one needs to iterate over all training samples, which is resource-intensive. Instead, the gradient is calculated over randomly chosen subsets of the training set called minibatches. The size of the minibatch seems to affect the quality of the solution. In paticular, there is evidence that with large batches SGD tends to find sharper minima that exhibit poorer generalization \cite{keskar_large-batch_2017}.

The learning rate $\gamma$ is also an important hyperparameter\footnote{A hyperparameter is a parameter of the model that is not updated during the training iterations, unlike e.g.~the weights of the neural network.} of the model. On the one hand, large learning rate means that the gradient descent may have trouble converging because large steps can overshoot the minimum. On the other hand, a small learning rate means more iterations. A simple change of the SGD that takes the best of both worlds consists in making $\gamma$ dependent on the iteration number, making it large in early steps and small in later steps. Of course, there are much more advanced variants of SGD with more complicated update rules, but reviewing them is outside the scope of this thesis. An interested reader can find the reviews of different SGD variants in Refs.~\cite{ruder_overview_2017,mehta_high-bias_2019}.

\paragraph{Gradient-free and gradient-based optimizers in VQAs.}

The optimization routines used for VQAs can be either gradient-based or gradient-free. Many experiments, both numerical and physical, used either class of methods for the optimization. Since there is no need to calculate gradient values in such mehtods, there is a reason to believe that they are more resilient to noise. Indeed, a naive implementation of gradient-based optimization would use finite differences to evaluate the gradient $\nabla E$:

\begin{equation}
    \label{eq:finite_difference}
    (\nabla E)_i = \frac{\partial E}{\partial \theta_i} \approx \frac{E(\theta_1, ..., \theta_i + \delta, ..., \theta_k) - E(\theta_1, ..., \theta_k)}{\delta}.
\end{equation}

However, this method is extremely unstable to the measurement errors. Indeed, if the variance of energy measurement is $\Delta$, then the variance of this difference is $\Delta^2 / \delta$. For this reason, many experimental implementations rely on gradient-free methods such as Nelder-Mead, Powell's method, or other techniques \cite{peruzzo_variational_2014,kokail_self-verifying_2019}. Another method used to mitigate the noisy measurements is to use the simultaneous perturbation stochastic algorithm (SPSA) \cite{spall_multivariate_1992} which evaluates the gradient by taking a finite-difference directional derivative in a random direction \cite{kandala_hardware-efficient_2017}. Despite even larger variance in the gradient, the appeal of this algorithm lies in the low cost of each individual iteration: for each step, the cost function needs to be evaluated just in two points. 

% \todo{A paragraph explaining some gradient-free optimization method, such as COBYLA or Nelder-Mead.}

As good as the gradient-free methods are, gradient-based optimization methods have better upper bounds on convergence. Ref.~\cite{harrow_low-depth_2019} adapts these bounds to the quantum variational aglorithms and introduces a gradient acquisition tehcnique based on the Hadamard test. Let the $H = \sum h_l \sigma_l$ be the Pauli decomposition of the cost function, and the ansatz be a product of unitaries whose generators also admit Pauli decompositions:
\begin{equation}
    \ket{\psi} = U_p ... U_1 \ket{\psi_0} = e^{-\rmi \frac{A_p \theta_p}{2}} ... e^{-\rmi \frac{A_1 \theta_1}{2}} \ket{\psi_0}; \quad A_j = \sum_{k=1}^{m_j} \beta_k Q_k.
\end{equation}
Then the partial derivative can be expressed as follows:
\begin{equation}
    \frac{\partial E}{\partial \theta_j} 
    = \sum_k \sum_l \beta_k h_l \operatorname{Im}
    \bra{\psi_0} 
    U_{1}^\dagger ... U_j^\dagger
    Q_k
    U_{j+1}^\dagger ... U_p^\dagger \sigma_l U_p ... U_1 \ket{\psi_0}.
\end{equation}
The imaginary value in the right-hand side can be evaluated from the output of the following circuit:
\begin{equation*}
    \Qcircuit @C=1em @R=1em {
    \lstick{\ket{+}} 
    & \qw 
    & \ctrl{1}
    & \qw
    & \ctrl{1}
    & \measureD{Y}
    \\
    \lstick{\ket{0...0}} 
    & \gate{U_j ... U_1} 
    & \gate{Q_k}
    & \gate{U_p ... U_{j+1}} 
    & \gate{\sigma_l}
    & \qw
    \\
    }
\end{equation*}

The variance of this measurement no longer has the dependence on a small step $\delta$ and is much more robust to experimental errors. However, the downside of that method is the requirement to implement $2n$ more control gates than in the ansatz circuit and that every partial derivative requires $m_j \operatorname{Card} H$ measurements.

Gradient-based methods became more prominent after the introduction of another estimation technique called parameter-shift rule \cite{mitarai_quantum_2018,schuld_evaluating_2019}. This rule is not harder to implement than the original ansatz circuit itself, at the cost of having to make more measurements. If the dependence on $\theta_i$ is realized as a quantum gate of the form $\exp(-\rmi G \theta_i/2)$, where $G$ has spectrum of $(-1, 1)$, then the exact value of the derivative can be obtained as follows:
\begin{equation}
    \label{eq:parameter_shift}
    \frac{\partial E}{\partial \theta_i} = \frac{1}{2} (E(\theta_1, ..., \theta_i + \pi/2, ..., \theta_k) - E(\theta_1, ..., \theta_i - \pi/2, ..., \theta_k)).
\end{equation}
To see why this is the case, let us consider the dependence of the ansatz on one variable $\theta$ while hiding all other unitaries in suitably redefined $\ket{\phi}$ or in $H$:
\begin{equation}
    E = \bra{\phi} e^{\rmi G \theta / 2} H e^{-\rmi G \theta / 2} \ket{\phi}.
\end{equation}
The partial derivative is then equal to 
\begin{equation}
    \frac{\partial E}{\partial \theta} = \bra{\phi} e^{\rmi G \theta / 2} \rmi F H e^{-\rmi G \theta / 2} \ket{\phi} - \bra{\phi} e^{\rmi G \theta / 2} \rmi HF e^{-\rmi G \theta / 2} \ket{\phi}.
\end{equation}
If we write down the values $E(\theta \pm \pi / 2)$ at shifted parameters, using the fact that $e^{\rmi G \pi / 4} = \frac{1}{\sqrt 2} (1 + \rmi G)$, we will get the desired formula by simple algebra.
The parameter-shift rule has been successfully used in VQAs \cite{sweke_stochastic_2019,barison_efficient_2021}, and recently it was generalized to the gates with arbitrary spectrum \cite{kyriienko_generalized_2021}.

The value of the gradient obtained in the experiment is a random variable that estimates the true value of $\frac{\partial E}{\partial \theta_i}$. Therefore, the optimization process is a random process itself, which means that formally the optimization process is an instance of stochastic gradient descent.

A good property of the abovementioned estimator is that it is unbiased, i.e.~the expected value of the estimate $\mathbb{E}\overline{\frac{\partial E}{\partial \theta_i}}$ is equal to the true value. However, its variance depends on the number of measurements made for the estimator. There is a question of efficiency: on the one hand, more measurements per step mean better estimation of the gradient, on the other hand, fewer measurements per step mean that the same budget of calls to the quantum device enables more steps of the gradient descent. What is then an optimal number of measurements per optimization step? This question was studied in Ref.~\cite{sweke_stochastic_2019}. The experiments of Sweke et al.~have shown that this number does not have to be large: in some of the tests, the optimal number of measurements per expected value per step is \textit{one}.

\subsection{Choice of the ansatz}

The ansatz $\ket{\psi(\theta)}$ can be constructed in a number of ways. Ignoring the variants of VQE which construct the ansatz iteratively (more on them later), the popular approaches are the problem-inspired ans\"atze and hardware-tailored ans\"atze.

\subsubsection{Problem-inspired ans\"atze}

A simple ansatz of that sort is called the Hamiltonian variational ansatz \cite{wecker_progress_2015}, which uses the operators comprising the target Hamiltonian as the generators for the quantum gates in the ansatz. For the transverse-field Ising model, such an ansatz could look like this:

\begin{equation}
    \ket{\psi(\theta)} = e^{i\theta_{2n} Z_n Z_1} ... e^{i\theta_{n+1} Z_1 Z_2} e^{i\theta_n X_n} ... e^{i\theta_1 X_1}\ket{\psi_0}.
\end{equation}

To increase the ``power'' of the ansatz (i.e.~the amount of states that it can prepare), this pattern of gates can be repeated. If we denote $U_{ZZ} (\xi_1, ..., \xi_n) = e^{i\xi_n Z_n Z_1} ... e^{i\xi_1 Z_1 Z_2}$, and $U_X(\xi_1, ..., \xi_n) = e^{i\xi_n X_n} ... e^{i\xi_1 X_1}$, then an $L$-layered Hamiltonian variational ansatz is as follows:
\begin{multline}
    \ket{\psi(\theta)} = 
    U_{ZZ} (\theta_{(2L-1)n+1}, ..., \theta_{2Ln})
    U_X (\theta_{(2L-2)n+1}, ..., \theta_{(2L-1)n})... \\
    ...
    U_{ZZ} (\theta_{n+1}, ..., \theta_{2n})
    U_X (\theta_1, ..., \theta_n) \ket{\psi_0}.
\end{multline}

% This ansatz is partially inspired by the quantum approximate optimization algorithm \cite{farhi_quantum_2014}, which in turn was inspired by quantum annealing. Hence, one can use the annealing schedule as an initial guess for the ansatz coefficients \cite{bosse_probing_2021}. \todo{They're probably not the first ones doing that. Find other references.}

Another popular problem-inspired ansatz, called the unitary coupled cluster \cite{taube_new_2006}, is used for quantum chemistry problems \cite{wecker_progress_2015,barkoutsos_quantum_2018,omalley_scalable_2016,shen_quantum_2017,xia_coupled_2020,xu_test_2020}. The ansatz is formulated in terms of the electronic structure problem and later translated to the language of qubits and circuits via different fermion-to-qubit-transforms. The problem is that fermionic operators obey the anticommutation relations, while no such relations are in place for qubits (i.e.~spins of distinguishable particles). In Section \ref{sec:fermion-transforms} we give an overview of such transforms.

The ansatz consists in approximately implementing a unitary operator comprising elementary electronic excitations:
\begin{equation}
\label{eq:ucc}
    \ket{\psi(\boldsymbol{\theta})} = 
    e^{T(\boldsymbol{\theta}) - T^\dagger(\boldsymbol{\theta})}
    \ket{\psi_0},
\end{equation}
where the operator $T$ is a sum of operators $T_1, ..., T_k$ corresponding to 1 to $k$ electronic transitions:
\begin{align}
    T_1 &= \sum_{i,j} \theta_{ij} a^\dagger_i a_j \\
    T_2 &= \sum_{i, j, k, l} \theta_{ijkl} 
    a^\dagger_i a^\dagger_j a_k a_l \\
    & ... \\
    T_k &= \sum_{i_1, ...,i_{2k}} \theta_{i_1 ... i_{2k}} 
    a^\dagger_{i_1} a^\dagger_{i_k} a_{i_{k+1}} ... a_{2k}.
\end{align}
The variational parameters are the real coefficients $\theta_{1_i ... i_m}$. Most often, the series of operators is truncated at $k = 2$, in which case the ansatz is called UCCSD (unitary coupled cluster, single and double).

There are certain difficulties with implementing this method on NISQ devices. The main trouble is that naive implementation of this ansatz leads to circuits whose depth is outside the reach of current NISQ hardware. Nonetheless, the UCC is a promising and actively studied subject (see \cite{anand_quantum_2022} for a review).

\subsubsection{Hardware-inspired ans\"atze}

One of the first things that one observes about the UCCSD ansatz is that its generic form is quite long. NISQ devices, however, prefer much shorter circuits. The variational principle works regardless of which ansatz you use, so one can try circuits that are coming from the restrictions of the hardware. 

One of the ans\"atze of that kind is now known as the hardware-efficient ansatz (HEA)\cite{kandala_hardware-efficient_2017}. The idea is simple: apply single-qubit gates to every qubit (this is easy), and then apply a sequence of entangling two-qubit gates.  Repeat this for a few rounds. What we obtain this way is a parametrized quantum circuit whose entangling power is adjusted by adding more layers. The two-qubit gates don't even have to be continuously parametrized, although they of course can be \cite{campos_abrupt_2020}. Typically, these are CNOT or CZ gates applied in a circular fashion to qubits $(i, i+1) \mod n$ (for example, see \cite{mcclean_barren_2018}). Sometimes the entangling gates are applied in an all-to-all fashion \cite{skolik_layerwise_2020}.

Another ansatz that is used for VQE is the so-called alternating layered ansatz, also known as the checkerboard ansatz \cite{uvarov_machine_2020,bravo-prieto_scaling_2020,cerezo_cost-function-dependent_2020}. This ansatz assumes line or ring connectivity of the qubits. Then it fixes some kind of two-qubit ansatz block and applies it in a checkerboard fashion: first, apply gates to qubit pairs $(2i, 2i+1)$, then, to pairs $(2i -1 , 2i)$. The block can be anything as long as it generates entanglement. For such circuits, it is easy to estimate the amount of entanglement generated across any bipartition using the theory of tensor networks and SVD \cite{biamonte_lectures_2020}. Another nice property is that, under some assumptions about the individual blocks, one can attempt to make some assertions about the whole circuit (moslty concerning its abitily to produce a random unitary operator \cite{brandao_local_2016}). The ansatz of that kind will be analyzed in depth in Chapter \ref{chap:plateaus}.

\section{Example: studying the electronic structure of a molecule}

In this section, we outline the whole procedure of applying VQE to a molecular Hamiltonian. We start with the nonrelativistic\footnote{In the second quantized picture though, the relativistic corrections end up having the same qualitative structure \cite{veis_relativistic_2012}.} Schr\"odinger equation for the electrons and nuclei of the molecule:

\begin{multline}
    H(\mathbf{r}_1, ..., \mathbf{r}_{N_e}, \mathbf{R}_1, ..., \mathbf{R}_{N_n}) = 
    \sum_{i=1}^{N_n} \frac{\hbar^2 \nabla^2}{2 M_i}
    + \sum_{i=1}^{N_e} \frac{\hbar^2 \nabla^2}{2 m_e} +\\
    + \sum_{i \neq j} \frac{Q_i Q_j}{|\mathbf{R}_i - \mathbf{R}_j|}
    + \sum_{i \neq j} \frac{1}{|\mathbf{r}_i - \mathbf{r}_j|}
    - \sum_{i=1}^{N_n} \sum_{j=1}^{N_e} \frac{Q_i}{|\mathbf{R}_i - \mathbf{r}_j|}.
\end{multline}

Since the mass of an electron is about $1/2000$'th of the mass of a proton, it is reasonable to introduce the Born--Oppenheimer approximation: for caluclation of the electronic structure, we consider the positions of the nuclei fixed, and the nuclei are treated as classical particles with potential interaction governed by the eigenstates of the electronic structure\footnote{This approximation is not always valid (for instance, if there is substantial coupling between electronic and vibrational degrees of freedom). For this situation, more advanced techniques are required \cite{yalouz_state-averaged_2021}.}.

Now, let's fix the nuclear positions $\mathbf{R}_i$ and only consider the electronic degrees of freedom. What we need to introduce now is the set of basis states $\psi_j(\mathbf{r})$. One could use the eigenstates of the hydrogen atom, but modern quantum chemistry provides us with a whole range of advanced basis sets. The one found most frequently in the VQE literature is the minimal STO-3G basis set: every state is a sum of three functions proportional to $x^i y^j z^k \exp(-\alpha (x^2 + y^2 + z^2))$. The word ``minimal'' means that there are as many basis states as there are atomic orbitals. More complex basis sets include more states than atomic orbitals (two per orbital, three per orbital, etc.) Such basis sets are called double-zeta, triple-zeta, and so on. In any case, such basis sets contain an infinite number of states, so we must limit ourselves to a finite subset. This truncation mainly depends on the number of qubits at our disposal.

Electrons cannot be together in the same state, so valid states of electrons are spanned by Slater determinants constructed out of single-electron states $\phi_i$ (also taking care of the spin degree of freedom $s_i$):

\begin{equation}
    \label{eq:slater}
    \Psi(\mathbf{r}_1, ..., \mathbf{r}_{N_e}, s_1,..., s_{N_e}) = \operatorname{det}
    \begin{pmatrix}
        \phi_1(\mathbf{r}_1, s_1) & ... & \phi_1(\mathbf{r}_{N_e}, s_{N_e}) \\
        ... & ... & ... \\
        \phi_{N_e}(\mathbf{r}_1, s_1) & ... & \phi_{N_e}(\mathbf{r}_{N_e}, s_{N_e}) \\
    \end{pmatrix}.
\end{equation}

Via Slater determinants, the basis set of single-electron functions induces a basis in the space of multi-electron wave functions called the Fock basis. The states in this basis can be denoted as $\ket{n_{0, \uparrow}, n_{0, \downarrow}, ...,  n_{k, \uparrow}, n_{k, \downarrow}, ...}$, where $n_{k, \uparrow}, n_{k, \uparrow} \in \{0, 1\}$ mark if the basis state is occupied. In the following, we will drop the spin degree of freedom from the notation and will treat the orbitals with different spin state as two distinct states. 

Now we need to rewrite the Hamiltonian in this basis. This is done using creation and annihilation operators. An \textit{annihilation operator} $a_i$ is defined as follows:
\begin{gather}
    \label{eq:fermi_annihilation}
    a_i \ket{n_1, ..., n_{i-1}, 0_i, ...} = 0; \\ 
    a_i \ket{n_1, ..., n_{i-1}, 1_i, ...} = (-1)^{n_1 + ... + n_{i-1}} 
    \ket{n_1, ..., n_{i-1}, 0_i, ...}.
\end{gather}
Its Hermitian conjugate $a^\dagger_i$ is called a \textit{creation operator}. Since the interactions we consider cannot create or destroy particles, these operators always come in pairs $a^\dagger_i a_j$, which means taking an electron from the site $j$ to the site $i$. Due to the fact that electrons are fermions, these creation-annihilation operators obey the following anticommutation relations: $\{a_i, a_j \} = 0, \{a_i, a^\dagger_j \} = \delta_{ij}$. 
% This idea of expressing multi-particle states in terms of population bases and creation-annihilation operators is known as second quantization and it can be found in many standard textbooks on quantum mechanics, e.g.~

The matrix elements of $H$ can be rewritten in terms of such operators as follows. For any pair of basis states, we can uniquely write down the combination of creation and annihilation operators that maps one state to the other (up to the ordering of the operators and excluding pairs like $a^\dagger_i a_i$), while mapping the other basis states to zero. This means that we can express any projector $\ket{\alpha} \bra{\beta}$ in such a form. The matrix elements $\bra{\Psi_i} H \ket{\Psi_j}$ are then calculated by integrating the respective wavefunctions over the three-dimensional space.

After calculating the coefficients and expressing everything in the secons quantized form, we arrive at the following Hamiltonian:
\begin{equation}
    H = \sum_{pq} h_{pq} a^\dagger_p a_q + \sum_{ijkl} V_{ijkl} a^\dagger_i a^\dagger_j a_k a_l.
\end{equation}
Observe that there are no terms of degree higher than four. The reason for that is the two-body nature of the Coulomb interaction. The matrix elements between Slater determinants consists of integrals of the form
\begin{equation}
    \int \phi^*_{m_1}(x_1) ... \phi^*_{m_k} (x_k) \frac{1}{x_i - x_j}
    \phi_{n_1}(x_1) ... \phi_{n_k} (x_k) dx_1 ... dx_k. 
\end{equation}
In such an integral, all terms not involving $x_i$ and $x_j$ can be integrated separately, while the orthonormality of the basis states ensures that the resulting factor is either zero or one, with the latter being the case when $m_l = n_l$. So any possible difference can only happen in positions $i$ and $j$.

Only a few steps remain for this Hamiltonian to become a problem accessible to VQE. First, we assume that some of the low-lying states are always occupied, and that the states that are higher than a certain energy threshold are always empty. The space of states that we keep considering is called the \textit{active space}. The larger that space is, the better is the quality of the approximation. Finally, we need to identify electronic orbitals with qubits by using either the Jordan--Wigner or Bravyi--Kitaev transformation (Section \ref{sec:fermion-transforms}). Now the problem is completely ready for solution via VQE. We choose the ansatz, the optimization method, and run the algorithm.

\section{Fermion-to-qubit transformations}
\label{sec:fermion-transforms}

Fermionic operators are described in terms of creation and annihilation operators. Due to the permutation antisymmetry of fermionic wavefunctions, these operators obey peculiar anticommutation relations: $\{a_i, a^\dagger_j\} = \delta_{ij}, \{a_i, a_j\} = 0$. Qubits, on the other hand, are distinguishable particles. A mapping from fermions to qubits must respect these relations. Below are a few different methods to achieve that. In this section, we follow the original work \cite{bravyi_fermionic_2002} and the detailed exposition of \cite{seeley_bravyi-kitaev_2012}.

\subsection{Jordan--Wigner transform}

The simplest way to map fermions to qubits is to identify each qubit with a fermionic mode. To preserve the relations, an annihilation operator $a_i$ is mapped to an operator acting on qubit $i$, times a trail of operators counting the parity of the population of the preceding modes:
\begin{equation}
    a_i \mapsto f_i := Z_1 ... Z_{i-1} \otimes 
    \begin{pmatrix}
        0 & 1 \\
        0 & 0 \\
    \end{pmatrix} = Z_1 ... Z_{i-1} (X_i + \rmi Y_i) / 2.
\end{equation}
Comparing this with Eq. (\ref{eq:fermi_annihilation}), we can see that $f_i$ acts on the qubit registry exactly like $a_i$ acts on fermions.

\subsection{Bravyi--Kitaev transform}

The Bravyi--Kitaev transform \cite{bravyi_fermionic_2002} maps $m$ fermions to $m$ qubits by storing the parity information in a more elaborate way. The idea is to encode both the population information and parity information in a way that would require a logarithmic number of qubits to calculate.

Denote an $m$-site fermion state $\ket{n_0, ..., n_{m-1}}$ and the corresponding $m$-qubit state $\ket{x_0} \ket{x_2} ... \ket{x_{m-1}}$ 
% (note that we start counting the registers from one, unlike the notation in \cite{bravyi_fermionic_2002,seeley_bravyi-kitaev_2012}; the reason for that change will be clear in a moment). 
To encode one state into the other means to provide a function that maps $(n_0, ..., n_{m-1})$ to $(x_0, ..., x_{m-1})$. Bravyi and Kitaev provide the following function:
\begin{equation}
    \label{eq:bk_formula}
    x_j = n_j + \sum_{s \prec j} n_s \ (\operatorname{mod} 2),
\end{equation}
where the partial order $\prec$ is introduced on as follows. Two numbers are written in their binary expansions $\alpha_{t-1} ... \alpha_0$, $\beta_{t-1} ... \beta_{0}$. We say that $\alpha_{t-1} ... \alpha_0 \preceq \beta_{t-1} ... \beta_{0}$ if there is a position $l_0$ such that (1) $\beta_l = 1 \ \forall l < l_0$ and (2) $\alpha_l = \beta_l \ \forall l \geq l_0$. That is, if the binary expansions of $a$ and $b$ match up to a certain position, and all lower positions of $b$ are equal to one, then $a \preceq b$.

For example, if we expand (\ref{eq:bk_formula}) as a matrix-vector equation over $\mathbb{F}_2$ for the example of $m = 8$, we will get the following (zeros are written as blank spaces for visual clarity):

\begin{equation}
    \label{eq:bk_matrix}
    \begin{pmatrix}
        x_0 \\ x_1 \\ x_2 \\ x_3 \\ x_4 \\
        x_5 \\ x_6 \\ x_7 \\
    \end{pmatrix} = 
    \begin{pmatrix}
    1 &   &   &   &   &   &   &   \\   
    1 & 1 &   &   &   &   &   &   \\
      &   & 1 &   &   &   &   &   \\
    1 & 1 & 1 & 1 &   &   &   &   \\
      &   &   &   & 1 &   &   &   \\
      &   &   &   & 1 & 1 &   &   \\
      &   &   &   &   &   & 1 &   \\
    1 & 1 & 1 & 1 & 1 & 1 & 1 & 1 \\
    \end{pmatrix} 
    \begin{pmatrix}
        n_0 \\ n_1 \\ n_2 \\ n_3 \\ n_4 \\
        n_5 \\ n_6 \\ n_7 \\
    \end{pmatrix} 
    .
\end{equation}
This is somewhat opaque, but observe the following. If we need to know the parity of $a_j$, i.e.~$P_j := n_0 + ... + n_{j-1}$, we need to sum up the $x_k$ such that $k+1$ is obtained by taking the expansion of $j$ in the powers of two and throwing away several lowest terms. For example, $P_7 = x_6 + x_5 + x_3$, and $P_6 = x_5 + x_3$. In any case, the parity information is stored in no more than $\log_2 m$ qubits\footnote{Also observe that the matrix in (\ref{eq:bk_matrix}) looks somewhat like the Sierpi\'nski triangle.}. For each $j$, we will denote as $P(j)$ the set of qubits that is required to find the parity.

Next, we need to find qubits which contain $n_j$. Again, by construction of the sets $S(j)$ there is a logarithmic number of such qubits. We will denote this set $U(j)$.

The application of $j$ depends on whether it is odd or even. For $j$ even, $a_j$ maps to the following operator:
\begin{equation}
    \label{eq:aj_bk_even}
    a_j \mapsto \left(\bigotimes_{i \in P(j)} Z_i\right) \otimes \left(\bigotimes_{k \in U(j)} X_k \right) \otimes 
    \begin{pmatrix}
    0 & 1 \\
    0 & 0 \\    
    \end{pmatrix}_j.
\end{equation}
For $j$ odd, the calculation is somewhat more involved. The reason for this is that, for $j$ even, the mapping is very simple: $x_j = n_j$. For $j$ odd, it is somewhat more laborious to account for the population $n_j$ correctly. Denote $F(j)$ the set of qubits that have the same parity as the orbital $j$. Then define an operator $\Pi_j$ as
\begin{equation}
    \Pi^-_j = \frac12 \left(X_j \otimes \bigotimes_{i \in F(j)} Z_i + \rmi Y_j\right).
\end{equation}
Finally, for $j$ odd, the annihilation operator maps to:
\begin{equation}
    \label{eq:aj_bk_odd}
    a_j \mapsto \left(\bigotimes_{i \in P(j) \backslash F(j)} Z_i\right) \otimes \left(\bigotimes_{k \in U(j)} X_k \right) \otimes 
    \Pi^-_j.
\end{equation}
The creation operator $a^\dagger_j$ can be obtained from Eqs.\,(\ref{eq:aj_bk_even}, \ref{eq:aj_bk_odd}) by taking the usual Hermitian conjugate. The sizes of $P(j)$ and $U(j)$ are logarithmic in $m$, so the operators will have logarithmic locality.

\section{Potential applications of VQE}

\subsection{Quantum chemistry}
%todo group by something, maybe not around test models
The most prominent application of VQE so far is quantum chemistry. Indeed, many proof-of-concept experiments \cite{peruzzo_variational_2014,omalley_scalable_2016,kandala_hardware-efficient_2017,hempel_quantum_2018,shen_quantum_2017}, as well as numerical simulations \cite{parrish_quantum_2019,romero_strategies_2017}, consider small molecules as target Hamiltonians. In the Born--Oppenheimer approximation, VQE returns the energy of the electronic ground state for the specific locations of the nuclei. Solving the same problem for different positions provides an energy landscape for the interaction of the nuclei. This way, one could potentially learn the pathways taken by molecules during chemical reactions. To do that, one needs to resolve the energy level up to \textit{chemical accuracy}, which is typically considered to be 1 kcal/mole. Since the chemical reactions often involve going over potential barriers, the rates of reactions are exponentially sensitive to the height of said barriers and hence to the error of the computation. To do well, we need to include many energy levels, whcih may lead to unfavorable scaling of simulation time just because of the sheer number of parameters to be optimized in a generic ansatz \cite{elfving_how_2020}. Nonetheless, advanced techniques in designing the VQE experiment --- discussed in the next section --- may prove useful in dealing with this problem.

\subsection{Condensed matter physics and lattice QFT}

One can apply VQE to lattice electronic problems (e.g.~Hubbard model) in a rather straightforward way \cite{cade_strategies_2019,uvarov_variational_2020}. The approach is the same as for the chemical Hamiltonians, except that instead of orbitals, the basis single-electron states are the lattice sites. Then one can proceed with Bravyi--Kitaev or Jordan--Wigner transformation.

In an experiment in \cite{kokail_self-verifying_2019}, the authors apply VQE to a model derived from lattice quantum field theory. Each site of a lattice can contain either an electron, or a positron, or both. For that reason, every lattice site is in fact simulated by two lattice sites: even-numbered sites are populated with electrons, while odd-numbered sites are populated with positrons. Other than that, the mapping to qubits is mostly the usual Jordan--Wigner transform.

There is also an alternative approach to variational investigation of quantum systems which involves variational calculation of the Green's function of the system~\cite{endo_calculation_2020}.

\subsection{Classical optimization problems}

Classical optimization problems can be embedded into quantum Hamiltonians to be then solved via VQE. For example, it is easy to embed the problem called \textsc{MaxCut}. Since this problem is \textbf{NP}-complete, this already enables the solution of many other classical problems.

The problem is formulated as follows. Let $G = (V, E)$ be an undirected graph. A \textit{cut} is a partition of the vertex set $V$ into two subsets. The \textit{size of the cut} is defined to be the number of edges connecting vertices from different subsets. In other words, if we group the vertices in two piles, the size of the cut is the number of edges connecting the two piles. The task is to find the maximum cut. 

This task can be expressed as a Hamiltonian minimization problem as follows. identify a qubit with each vertex, with $\ket{0}$ and $\ket{1}$ being the labels of this vertex belonging to piles one and two, respectively. For every edge $(i, j) \in E$, we would like to introduce a penalty if the edge is not cut. This can be done by a term $Z_i Z_j$. Summing up over the edges, we end up with an antiferromagnetic Ising model defined on the graph $G$:

\begin{equation}
    \label{eq:maxcut_ising}
    H = \sum_{(i, j) \in E} Z_i Z_j.
\end{equation}

This embedding of classical problems into VQE is related to the idea of continuous relaxation of problems [McClean low-depth 2021]. In particular, the analogy is quite vivid if the ansatz is a very simplistic one that doesn't introduce entanglement \cite{bittel_training_2021}. 

Continuous relaxation means the following: instead of $\{-1, 1\}$, the vertices are assigned points on a high-dimensional sphere. The solution to the continuous problem (in case of MaxCut) can be found by means of semidefinite programming. This solution is then mapped back to the discrete solution by simply cutting the sphere into random halves. For MaxCut, this approach is known as the Goemans--Williamson algorithm. Under certain plausible assumptions, it is $\mathbf{NP}$-hard to solve MaxCut with a better approximation ratio than with this algorithm.

At current state it seems that the outlook of VQE for classical problems is not very promising: entanglement-free ans\"atze can be simulated without a quantum computer, and simple entangled ans\"atze, at least in a numerical experiment in Ref.~\cite{nannicini_performance_2019}, do not seem to yield any substantial improvement in optimization results. The quantum approximate optimization algorithm (QAOA, see Section~\ref{sec:vqa_related}), which is a variant of VQE with a peculiar ansatz, is also actively investigated as a tool to solve classical problems.

\section{Variants of VQE}
\label{sec:vqe_variants}

The basic variant of VQE is quite generic. It of course invites many different modifications. Here we will survey some of the prominent proposals.

\subsection{Variants that modify the cost function}

\paragraph{Adiabatically-assisted VQE.} Since the optimization version of the \textsc{local Hamiltonian} problem is $\mathbf{QMA}$-hard, one could expect that the optimization landscape is very nonconvex, and convergence to suboptimal minima is a very possible thing. To counter this, Ref.~\cite{garcia-saez_addressing_2018} borrows an idea from adiabatic quantum computing. Let $H$ be the problem Hamiltonian and $H_{\text{init}}$ some Hamiltonian whose ground state is easy to prepare. Denote $H_s = (1-s)H_{\text{init}} + sH$. Find the ground state of $H_0$, and then run VQE for each $H_{s + \Delta s}$, using the solution of $H_{s}$ as a starting point. This may look very inefficient, since now you have to solve many VQE problems instead of one, but these intermediate problems may converge much faster because the starting point is already close to an exact solution. In Chapter \ref{chap:vqe_numerics}, we adopt this variant to solve the ground state problem for the transverse-field Ising model at different field strengths.

\paragraph{Meta-VQE.} Recall that when we discussed quantum chemistry, we introduced the Born-Oppenheimer approximation that essentially decouples the electronic motion and the nuclear motion. In order to predict chemical reactions and coupling energies, one needs to find e.g.~optimal distances between atoms. This implies that one has to solve many similar VQE problems for different values nuclei positions $\{\mathbf{R}_i\}$. To do this, Meta-VQE lets some parameters of the ansatz be functions of the nuclei positions. This way, when the ansatz is optimized, it gives some approximation to all problems simultaneously. This approximation is far from chemical precision, but it provides a good inital point for further VQE solution of individual problems, which quite important in the view of the barren plateaus phenomenon discussed in Chapter~\ref{chap:plateaus}.

\paragraph{Divide-and-conquer.} Ref.~\cite{fujii_deep_2020} proposes a variant of VQE --- dubbed Deep VQE --- that is reminiscent of the famous DMRG method. In Deep VQE, the quantum system is split into several subsystems. The Hamiltonian is then naturally separated into local terms and subsystem-subsystem interaction terms. The algorithm runs VQE for each subsystem separately, finding the ground states and a number of low-energy exctitations. Then the interaction terms are projected to the low-energy subspaces of the subsystems. Finally, this new problem is mapped to qubits and solved again via VQE. As a result, the solution is found using a smaller number of qubits, possibly at a cost of increased energy error.

\subsection{Variants that dynamically update the ansatz structure}

\paragraph{Adapt-VQE.} Optimizing large circuits requires many estimations of derivatives, that is why it is a good idea to start with a small ansatz and increase it along the way. This is the idea of Adapt-VQE \cite{grimsley_adaptive_2019}: first, one fixes some pool of Hermitian operators $F_i$ that one wishes to use in the ansatz. For chemical problems, such operators are one- and two-particle excitations. Then, for each operator in the pool, one can estimate the derivative that one would get if one appended a parametrized gate $\exp{\mathrm{i} \theta F_i}$ to the end of the circuit. The gate with the largest derivative (by magnitude) is then chosen and appended to the ansatz. Optimize over the ansatz parameters and repeat. For small molecules, this seems to provide better results than the UCCSD ansatz while using fewer gates. 

Some modifications of this algorithm were proposed in the literature. One is qubit-adapt-VQE \cite{tang_qubit-adapt-vqe_2021}, which replaces fermionic excitations by two-qubit Pauli operators, and batched adapt-VQE \cite{sapova_variational_2021}, which adds more than one generator per iteration.

\paragraph{Ansatz pruning.} This idea was independently explored in \cite{bilkis_semi-agnostic_2021} and \cite{sim_adaptive_2021}. After constructing the ansatz, one can find that some gates are close to the identity. Such gates can often be eliminated without affecting the quality of the solution too much. This procedure can be combined with iterative ansatz growing techniques like ADAPT-VQE.

\paragraph{Genetic evolution of ans\"atze.} Ref.~\cite{chivilikhin_mog-vqe_2020} proposes updating the ansatz with a genetic algorithm. Genetic algorithms are inspired by evolution of species in nature: a pool of candidate solutions is sorted according to the fitness score (energy obtained by minimization plus a penalty depending on the depth of the circuit). The next generation of candidate solutions is produced by combining the successful solutions (this is done by a crossover of two circuits: two lists of gates are split at a random point, then the head of one is concatenated with the tail of the other and vice versa) and introducing mutations in the form of randomly inserted/removed gates. The proposed algorithm optimizes both for the energy of the solution and for the number of gates in the ansatz circuit.

\subsection{Other}

\paragraph{Quantum subspace expansion.} Suppose after optimizing over the problem Hamiltonian $H$, the algorithm came up with a state $\ket{\psi}$. Then quantum the subspace expansion (QSE) algorithm \cite{mcclean_hybrid_2017,colless_computation_2018} gives a prescription on how to evaluate not only the energy $\bra{\psi} H \ket{\psi}$, but also values of the kind $\bra{\psi_i} H \ket{\psi_j}$ for some closely-related states $\ket{\psi_i}$. The idea is that, if $\sigma_i$ and $\sigma_j$ are some Pauli strings, then the Pauli decomposition of $\sigma_i H \sigma_j$ has the same cardinality as the Pauli decomposition of $H$. This means that the value can $\bra{\psi} \sigma_i H \sigma_j \ket{\psi}$ can be estimated by a standard procedure. Physically the operators $\sigma_i$ are meant to mimic some few-particle excitation operators, which means that the states $\ket{\psi_i}$ should not be too far in energy from the state $\ket{\psi}$. In any case, after this procedure we can build the matrix elements of $H$ projected onto the subspace spanned by $\ket{\psi_i}$ and diagonalize it classically. This may improve the energy estimate of the ground state and also provide estimations of the energies of low-lying excited states.

A related proposal is described in Ref.~\cite{bharti_iterative_2020}. This algorithm also approximates the ground state of a target Hamiltonian, but does so by taking inner products of candidate states. Essentially, it runs the quantum subspace expansion for a fixed input state, then takes second-order perturbations by taking products of two Pauli strings, and so on until the desired precision is reached.

\paragraph{Qubit coupled cluster.} In VQAs, there is heavy use for alternating between the Schr\"odinger picture of quantum mechanics and the Heisenberg picture. In the latter, the quantum state is fixed, and quantum evolution acts by conjugation on the observables. The approach suggested in Refs. \cite{ryabinkin_iterative_2020,ryabinkin_qubit_2018} exploits the Heisenberg picture by applying some generators of the ansatz directly to the target Hamiltonian instead of the quantum state. At the cost of increased Hamiltonian cardinality, this method may potentially simplify the ansatz state, making it better suited for running on NISQ devices.

\section{Related algorithms}
\label{sec:vqa_related}

\subsection{Quantum neural network training}

An interesting direction of research lies at the intersection of variational quantum algorithms and machine learning. A parametrized quatnum circuit can be treated as a device that accepts a quantum state and returns a probability distribution of measurements. One can use this approach to solve a classification problem for quantum states. This is covered in more detail in Chapter \ref{chap:qml}.

\subsection{QAOA}

Quantum-assisted optimization algorithm (QAOA) \cite{farhi_quantum_2014} is an algorithm that can be considered a variant of VQE with a fixed ansatz structure. Let $H_0$ be the Hamiltonian of an easy problem, and let $H$ be the problem of interest. Then the QAOA ansatz is the alternating sequence of evolution operators\footnote{Later on, this circuit structure became known as ``Quantum alternating operator ansatz'', the acronym being the same as before.} generated by $H_0$ and $H$, acting on the ground state of $H_0$:

\begin{equation}
    \ket{\psi(\gamma_1, ..., \gamma_p, \beta_1, ..., \beta_p)}
    = e^{-\rmi \beta_p H_0} e^{-\rmi \gamma_p H} ... e^{-\rmi \beta_1 H_0} e^{-\rmi \gamma_1 H} \ket{\psi_0}.
\end{equation}

In practice, the Hamiltonian $H_0$ is taken to be the sum of single-qubit $X$ operators: $H_0 = \sum X_i$. Its ground state is the product of plus states $\ket{+}^{\otimes n}$. The initial proposals and investigations considered QAOA for classical optimization problems like MaxCut \cite{farhi_quantum_2017}, but later on it was also used for optimal preparation of quantum gates \cite{kiani_learning_2020} and Hamiltonian minimization \cite{ho_efficient_2019}.

\subsection{Optimized preparation of quantum gates}

While VQE finds an optimal state, one can also adopt the variational paradigm to the task of finding a circuit preparing a desired gate. This algorithm is called variational quantum gate optimization (VQGO) \cite{heya_variational_2018}. The variational circuit $U(\boldsymbol{\theta})$ is optimized to maximize the fidelity between the experimental output $U(\boldsymbol{\theta}) \rho U^\dagger (\boldsymbol{\theta})$ and the (simulated) output $U_\text{ideal} \rho U^\dagger_\text{ideal}$, averaged over random input states $\rho$.

An alternative algorithm for the similar task is known as quantum-assisted quantum compiling \cite{khatri_quantum-assisted_2019}. This algorithm uses two quantum registers. First, all qubits of these registers are coupled in Bell pairs ($i$'th qubit of register A to $i$'th qubit of register B). Then, a fixed gate $U$ is prepared in the first register, and a variational circuit $V^*$ is prepared in the second register. By measuring in the Bell basis, one can evaluate $|\Tr UV^\dagger|^2$ and optimize it with a classical computer. As a result, the conjugate $V$ will be as close to $U$ as possble. Naturally, there is a question: if we can already prepare a gate $U$, why bother with trying to make its copy with a variational ansatz? There are several propositions. First, one can try to compress quantum circuits this way. Second, if $U$ is implemented by an unknown process, one can recover a circuit describing this process. Finally, if the qubits in different registers have different quality, this procedure can be used for comparison or for noise correction.

\subsection{Variational state preparation}

Variational approach can be used to optimize the preparation of a state described by a known quantum circuit. A way to do that is described in \cite{biamonte_universal_2021}. The idea is to construct a cost function that is minimized by the target state and run VQE against that cost function. One such cost function is a the so-called telescope construction. It is simple, but its downside is that its cardinality grows exponentially with the number of non-Clifford gates. Another construction is already known as the Kitaev clock construction \cite{kitaev_classical_2002}. 

% \subsection{Iterative quantum-assisted eigensolver}

% \section{Numerical results} \label{sec:vqe_numerics}
% \subsection{TFI and XXZ Heisenberg models}

% \subsection{Training a quantum classifier of VQE data}

% \subsection{Hubbard model with next-nearest-neighbor repulsion}

% We also had some plots with pop vs chemical potential, they can go here.

\section{VQE vs. phase estimation algorithm}

Variational quantum eigensolver is not the first quantum algorithm proposed for solution of hard optimization problems. An earlier proposal is known as the phase estimation algorithm (PEA). This algorithm is based on the quantum Fourier transform and the ability to make a controlled execution of $U = \exp(\mathrm{i} \tau H)$.

PEA works as follows. Prepare an approximate ground state $\ket{\psi}$ in one register, and the plus state $\ket{+}^{\otimes m}$ in an $m$-qubit ancilla register. 

Then apply $U$ controlled on the first qubit, $U^2$ controlled onthe second qubit, $U^4$ controlled on the third qubit, and so on, doubling the power for each qubit. The register with $\psi$ will remain unentangled with the control register. If $\ket{\psi}$ is an eigenstate of $U$ with $U\ket{\psi}=e^{2\pi \rmi \phi}$, the control register will now have the Fourier transform of $\phi$. Inverse FT yields the answer.

If $\ket{\psi}$ is not an eigenstate of $H$, then we'll have a superposition of states. Let $U = e^{i \tau H}$, and $|\psi \rangle = \sum c_i | \lambda_i \rangle$. Then after PEA we will obtain a state $| \phi \rangle = \sum c_i (e^{i \tau \lambda_i} |\lambda_i \rangle \otimes |\text{QFT}_i \rangle)$. Then with probability $|c_i|^2$, we will obtain the necessary eigenvalue.

Solution of the ground state problem using PEA has been experimentally demonstrated a number of times \cite{whitfield_ground_2012,lanyon_towards_2010,bauer_hybrid_2016,omalley_scalable_2016}. Compared to VQE, this algorithm has an advantage that the number of runs required to find the ground state energy to precision $\epsilon$ is estimated as $O(\epsilon^{-1})$, whereas for VQE the evaluation of energy \textit{in each iteration} is about $O(\epsilon^{-2})$ (however, one doesn't necessarily have to obtain such precision in each iteration: the paper \cite{sweke_stochastic_2019} argues evaluation of derivatives in small experimental samples is also a viable strategy; nonetheless, the final answer has to be evaluated to good precision). 

The downsides are the complexity of the required quantum circuit and the requirement that the initial state has significant overlap with the true ground state. The latter is sometimes referred to as ``orthogonality catastrophe'' \cite{kohn_nobel_1999}. While complexity-theoretical intuition suggests that this catastrophe cannot be avoided completely --- otherwise we would be able to solve $\mathbf{QMA}$-complete problems in quantum polynomial time --- some remedies for practical problems are investigated in the literature \cite{tubman_postponing_2018}.

Even with the better asymptotic scaling, PEA may still be quite computationally expensive for practical problems. Ref.~\cite{reiher_elucidating_2017} estimates the resource requirement for calculating the chemical reactions involving the so-called iron-molybdenum cofactor (FeMoco). This molecule is a critical element of the enzyme that takes part in biological nitrogen fixation, i.e.~the process of transforming atmospheric nitrogen into ammonia. Reiher et al.~conclude that the resources required for simulating FeMoco are comparable to those required for factoring 4096-bit integers using the Shor's algorithm. Will VQE fare even worse? Who knows.

The comparison of these algorithms makes some authors suggest that VQE is an algorithm for noisy devices with low available depth, and in the era of fault-tolerant, large-scale QCs it may be gradually replaced by PEA \cite{elfving_how_2020}. In the meantime, there is a proposal to combine the two approaches by using both kinds of measurements in a varying proportion \cite{wang_accelerated_2019}.
\chapter{VQE for physical models}
\label{chap:vqe_numerics}

In this chapter we start investigating the algorithm called variational quantum eigensolver (VQE), which is arguably the flagship algorithm of the VQA family. We will give an overview of the known physical an numerical experiments, as well as present our own results in investigating the properties of VQE solutions.

\section{Ansatz dependence and depth scaling}

In this section, we will summarize what is known about the performance of VQE on a number of model problems. Typically, these problems are lattice problems because it is easy to study scaling with the number of qubits $n$ by just increasing the lattice. For molecular Hamiltonians, the only easy way to scale the problem up is to increase the size of the active space. Even changing the basis of orbitals leads to a different family of problems, strictly speaking.

A number of papers studied how the error of VQE behaves as a function of ansatz depth for different models. Quite often, the error was found to decrease exponentially with the depth of the ansatz. This was observed for the Hubbard model by Cade et al.~\cite{cade_strategies_2019} and also observed by us for the Hubbard model with the next-nearest neighbor Coulomb interactions \cite{uvarov_variational_2020}. Later, the same behavior was also observed for the Heisenberg model on the kagome lattice \cite{kattemolle_variational_2021,bosse_probing_2021}. 

On the other hand, some problems exhibit a threshold in the optimization error with increasing depth. For example, Ref.~\cite{bravo-prieto_scaling_2020} studies transverse field Ising and Heisenberg models for a 1D chain with open boundary conditions and finds two regimes in the behavior. Before the threshold depth, the error decays polynomially in depth, whereas after the threshold the convergence is exponential. The location of the threshold itself also moves with the number of qubits, seemingly linear with $n$. This threshold appears not only in the hardware efficient ansatz, like in Ref.~\cite{bravo-prieto_scaling_2020}, but also in the Hamiltonian variational ansatz as well \cite{wiersema_exploring_2020}. A similar threshold behavior was also observed in optimizing the ansatz to get to a target unitary \cite{kiani_learning_2020,campos_abrupt_2020}.

Recent findings connect this kind of threshold behavior to the dimensionality of the Lie algebra associated with the ansatz circuit \cite{larocca_theory_2021}. When the effective dimension\footnote{The ansatz can be treated as a differentiable function from $\mathbb
{R}^k \rightarrow \mathbb{C}^{2^n}$. One way to define effective dimension is to take the maximum rank of its Jacobian. Alternatively, the dimension is defined in terms of the Fisher information matrix, however that is essentially the same.} of the ansatz exceeds said dimension, the ansatz is said to be overparametrized. Apparently, in the overparametrized regime, the optimization landscape loses all the suboptimial minima, although the good behavior of the landscape may very well happen before the overparametrization. 
% We further investigate the ideas related to the effective dimension of the ansatz in Chapter \ref{chap:dla}.

\section{VQE for spin models}

\subsection{Transverse field Ising model}

The first simple example of an interesting system for VQE testing is the transverse field Ising (TFI) model with periodical boundary conditions:

\begin{equation}\label{eq:tfim}
    H_\mathrm{TFIM}=J\sum\limits_{i=1}^n Z_i Z_{i+1} + h\sum\limits_{i=1}^n X_i, \; J>0, \; h>0.
\end{equation}

This model has trivial product state solutions for $h = 0$ and for $J = 0$, but for other values of the parameters the solution is an entangled state. The model is still integrable: the Jordan--Wigner transformation maps it to a system of free fermions \cite{lieb_two_1961}. This fact makes it possible to use the TFI model to test VQE and its variants even outside numerical simulations. 

\begin{figure}
    \centering
    \includegraphics[width=0.7\textwidth]{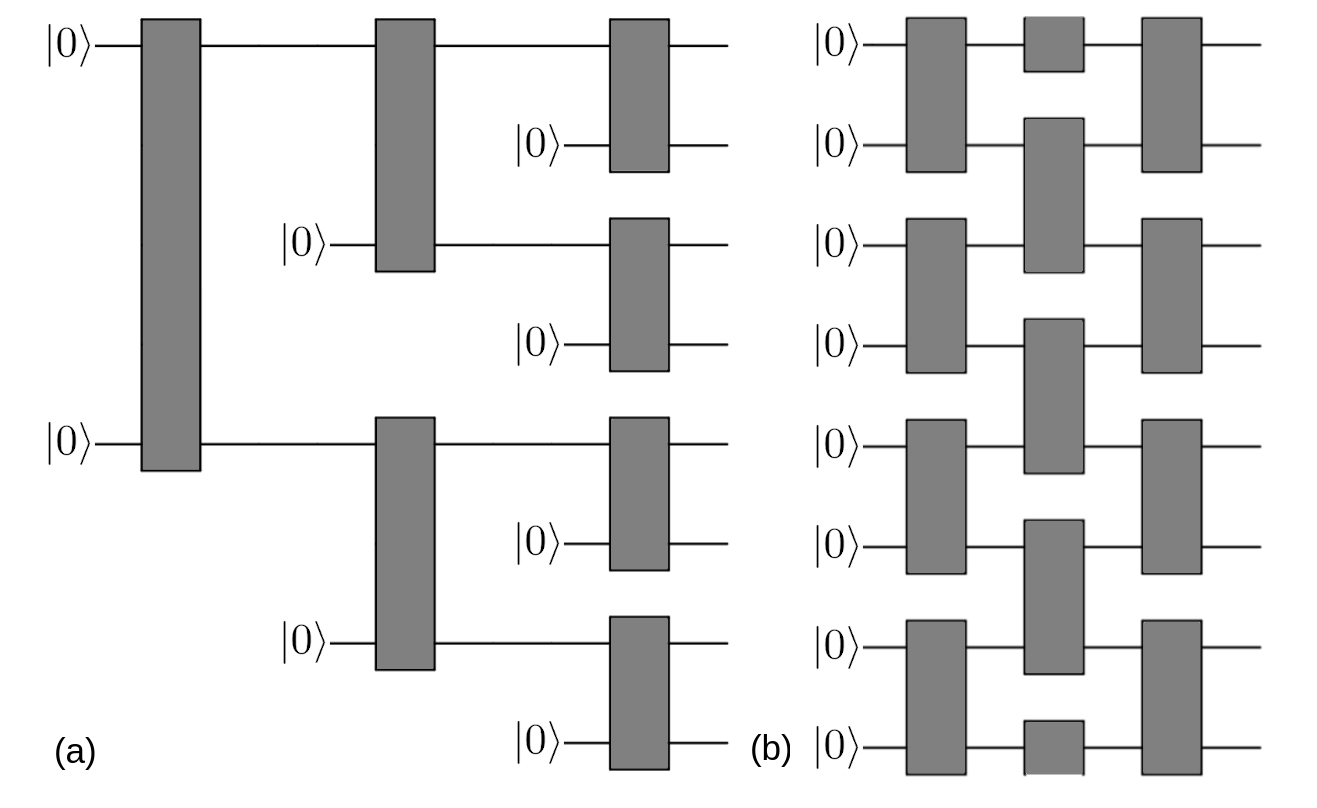}
    \caption{(a) Tree tensor network state. (b) Checkerboard tensor
    network state. In both cases, the quantum register is instantiated in
    the $\ket{0}^{\otimes n}$ state and subject to entangling gates. Black boxes indicate
    two-qubit gates specified by Fig. \ref{fig:entangler}. Each block is
    parametrized independently. Reprinted from \cite{uvarov_machine_2020}}.
    \label{fig:tree_and_checkerboard}
\end{figure}

We studied this model numerically for 10 qubits. We tested VQE solutions for a variety of different ansatz circuits:

\begin{enumerate}
    \item The \textit{rank-1 ansatz} simply consists of $R_Y$ and $R_Z$ gates applied to each qubit. The name stems from the fact that the state does not generate any entanglement: whatever qubit partition one selects, the Schmidt rank of the bipartition will always be equal to one.
    \item The \textit{tree tensor network ansatz} is depicted in Figure \ref{fig:tree_and_checkerboard}a. This ansatz provides entanglement between even the farthest of qubits, but the entanglement across any bipartition does not exceed $O(1)$ ebits: any contiguous region can be isolated by $O(1)$ cuts. It is possible to contract $\bra{\psi_{tree}} A \ket{\psi_{tree}}$ classically with $A$ being a local observable and $\ket{\psi_{tree}}$ being a tree tensor network state. 
    \item The \textit{checkerboard ansatz} consists of two-qubit blocks arranged in a checkerboard pattern Figure \ref{fig:tree_and_checkerboard}b. This ansatz can be expanded to obtain arbitrary precision, at the cost of having more parameters to optimize.
\end{enumerate}

Figure \ref{fig:tree_and_checkerboard} depicts the last two ans\"atze are with placeholder blocks. That is because they merely outline the structure of the circuit, while the content of these blocks may be more or less arbitrary. There is some tradeoff in this regard: if the blocks have more controllable parameters, they typically enable a richer set of states, but they as well become harder to optimize. Conversely, simpler blocks lead to poorer ansatz circuits which might be easier to optimize. In this experiment, we used a moderate-sized two-body ansatz that mostly uses the operators from the problem at hand, see Figure \ref{fig:entangler}. The unitary prepared by this block is described by the following equation:
\begin{equation}
    \label{eq:two_qubit_block1}
    U(\boldsymbol{\tilde{\theta}}) = (R_z(\tilde{\theta}_5) \otimes R_z(\tilde{\theta}_4)) \
    \circ R_{zz}(\tilde{\theta}_3) \ \circ  \ (R_x(\tilde{\theta}_1) \otimes R_x(\tilde{\theta}_2)),
\end{equation}
where $R_z (\theta) = e^{i\theta Z/2}$, $\ R_x (\theta) = e^{i\theta X/2}$, and $R_{zz} (\theta) = e^{i\theta Z\otimes Z/2}$. Thus, a complete ansatz would have five free parameters per two-qubit block.

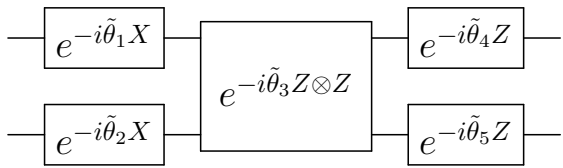
\begin{figure}
    \centering
    \mbox{
    \Qcircuit @C=1.0em @R=1.0em {
           & \gate{e^{-i \tilde{\theta}_1 X}} & \multigate{1}{e^{-i \tilde{\theta}_3 {Z} \otimes {Z}}} & \gate{e^{-i \tilde{\theta}_4 Z}} & \qw \\
           & \gate{e^{-i \tilde{\theta}_2 X}} & \ghost{e^{-i \tilde{\theta}_3 Z \otimes Z}} & \gate{e^{-i \tilde{\theta}_5 Z}} & \qw \\
       }
    }
    \caption{Two-qubit entangler gate used in preparation of the states. The circuit is to be read left to right.}
    \label{fig:entangler}
\end{figure}

In our numerical implementation, we use Qiskit \cite{aleksandrowicz_qiskit:_2019} to simulate quantum circuits and the limited Broyden–Fletcher–Goldfarb–Shanno method (L-BFGS-B) to update the parameters during the classical step of VQE. We scan values of $h$ from $0$ to $2J$. For $h=0$, the optimization process started from a random point, then each additional point begins from the previous solution. This procedure is also known as AAVQE \cite{garcia-saez_addressing_2018}, except that, unlike AAVQE, we are also interested in all the intermediate points of the trajectory. To eliminate any obviously sub-optimal solutions, we also ran the scanning in the opposite direction, and for each value of the field we kept the better result.

The results are plotted in Figure \ref{fig:dE_ising}. All ans\"atze demonstrate an increase in the error in the vicinity of the phase transition point. Interestingly enough, tree tensor network ansatz does not yield any improvement against the rank-1 ansatz, nor does one layer of the checkerboard ansatz. As should be expected, increasing the depth of the checkerboard ansatz substantially improves the energy error of the solution. At $h = 0$, the ground state is degenerate. At nonzero values of the field, this degeneracy is lifted, but the spectral gap is very small. Nonetheless, the overlap of VQE solution in this regime is never smaller than $1/2$, and gradually moves to $1$ across the phase transition. At $h = 1.2$, the point where the energy error is the worst, the overlap of the VQE solution with the ground state is equal to 0.92. Just as observed in more detail for the Hubbard model in Section \ref{sec:hubbard}, the convergence in energy looks exponential in the depth of the circuit: the peaks of the curves are nearly halved with each new layer. 

What is unexpected is that the location of the peak is different for each depth. The offset of the peaks alone could be attributed to the finite-size effect, but this dependendce on the depth suggests that the entanglement structure is different across the range of $h$.

\begin{figure}
    \centering
    \includegraphics[width=0.7\textwidth]{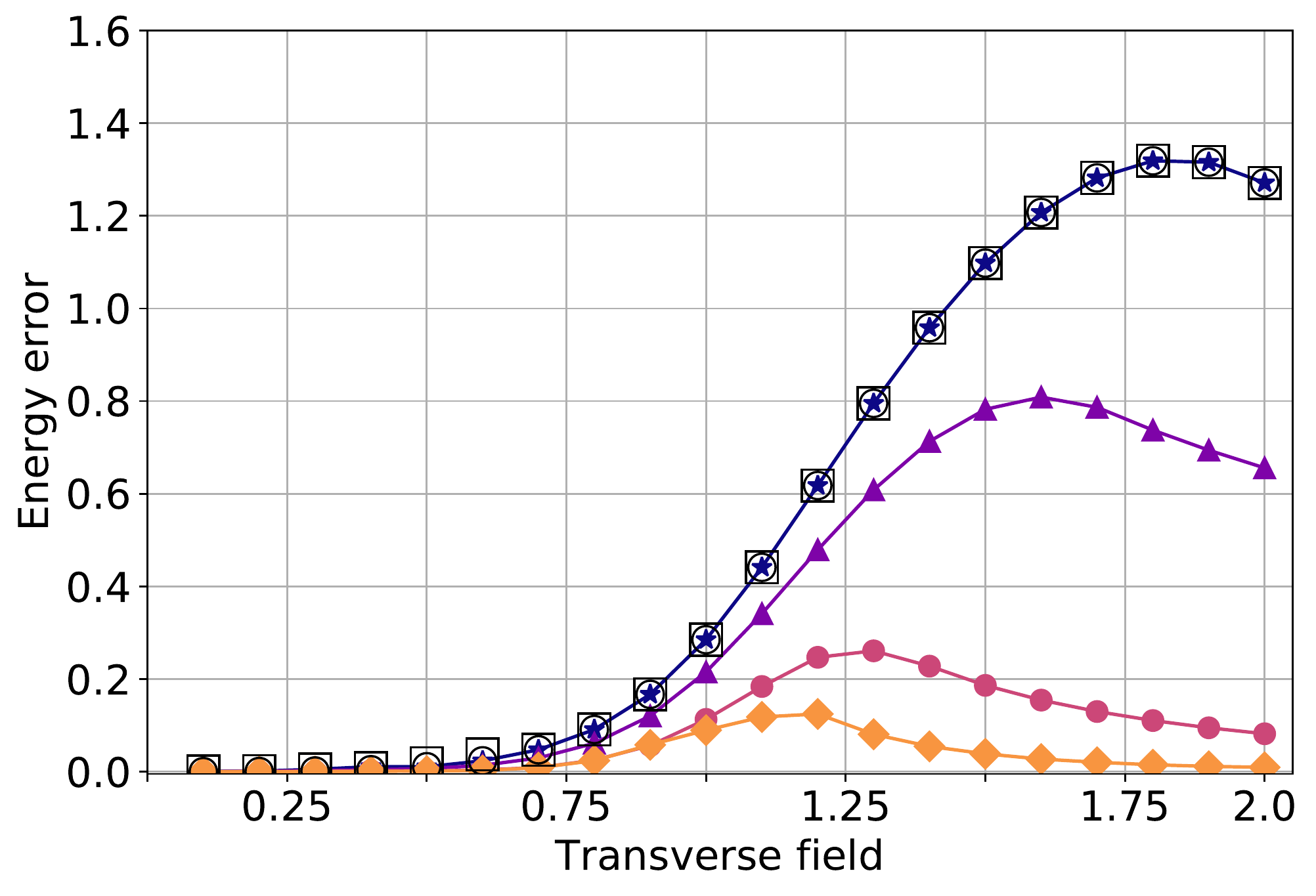}
    \caption{Absolute value of the difference in energy between the exact solution and VQE solutions for the transverse field Ising model. Hollow squares: rank-1 ansatz, hollow circles: tree tensor network, filled markers: checkerboard states ($\bigstar$: 1 layer, $\blacktriangle$: 2 layers, $\bullet$: 3 layers, $\blacklozenge$: 4 layers). Reprinted from \cite{uvarov_machine_2020}.}
    \label{fig:dE_ising}
\end{figure}

\subsection{Transverse-field Heisenberg model}

\begin{figure}
    \centering
    \includegraphics[width=0.7\textwidth]{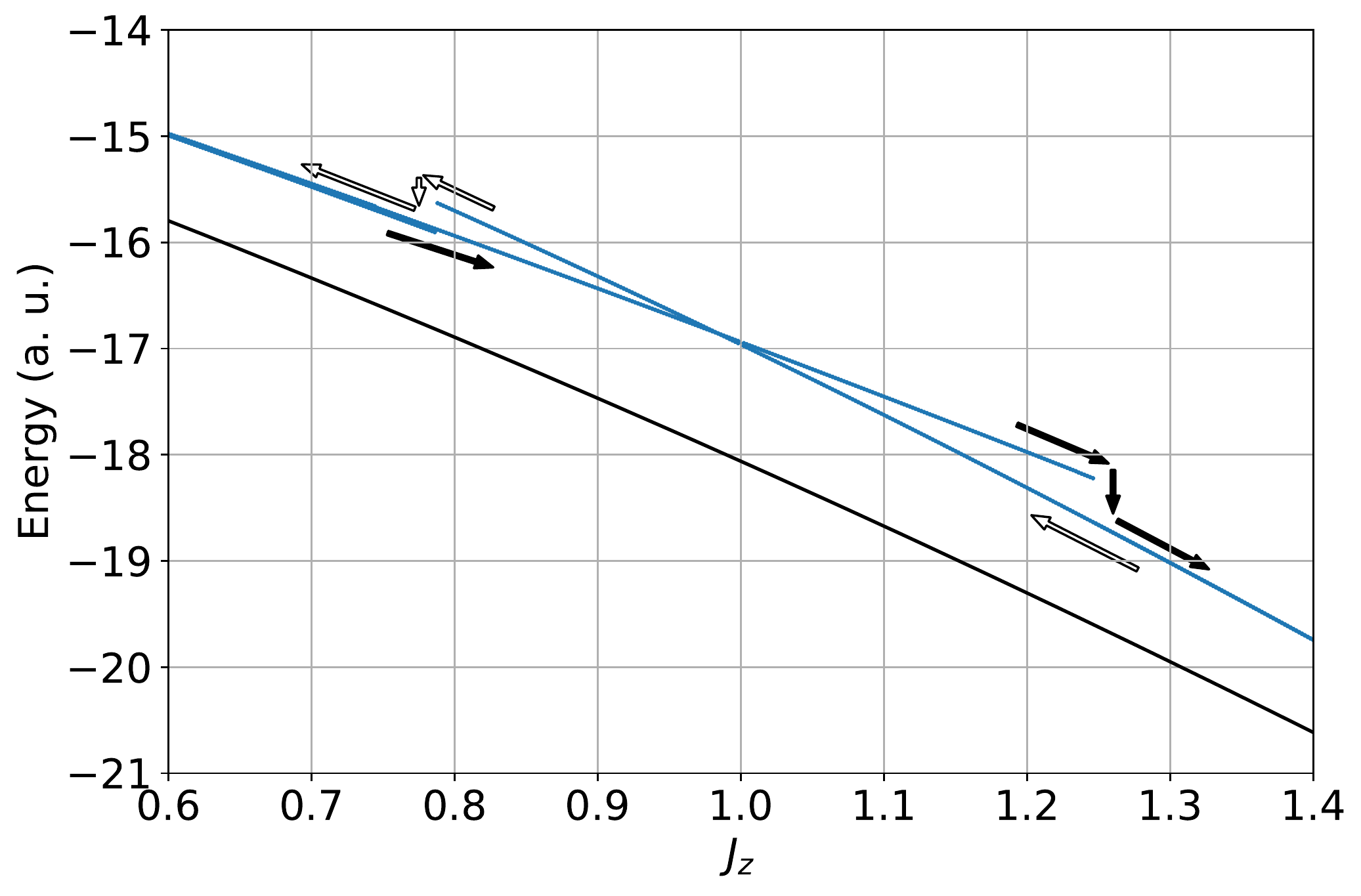}
    \caption{Ground-state energy estimate for the XXZ model found
    in VQE sweeps. Filled (empty) arrows guide the eye along the
    “up” (“down”) sweep. The best solution out of two sweeps was
    subsequently used to train the classifier. The black line denotes the energy of the exact solution. Reprinted from \cite{uvarov_machine_2020}.}
    \label{fig:vqe_hysteresis}
\end{figure}

\begin{figure}
    \centering
    \includegraphics[width=0.7\textwidth]{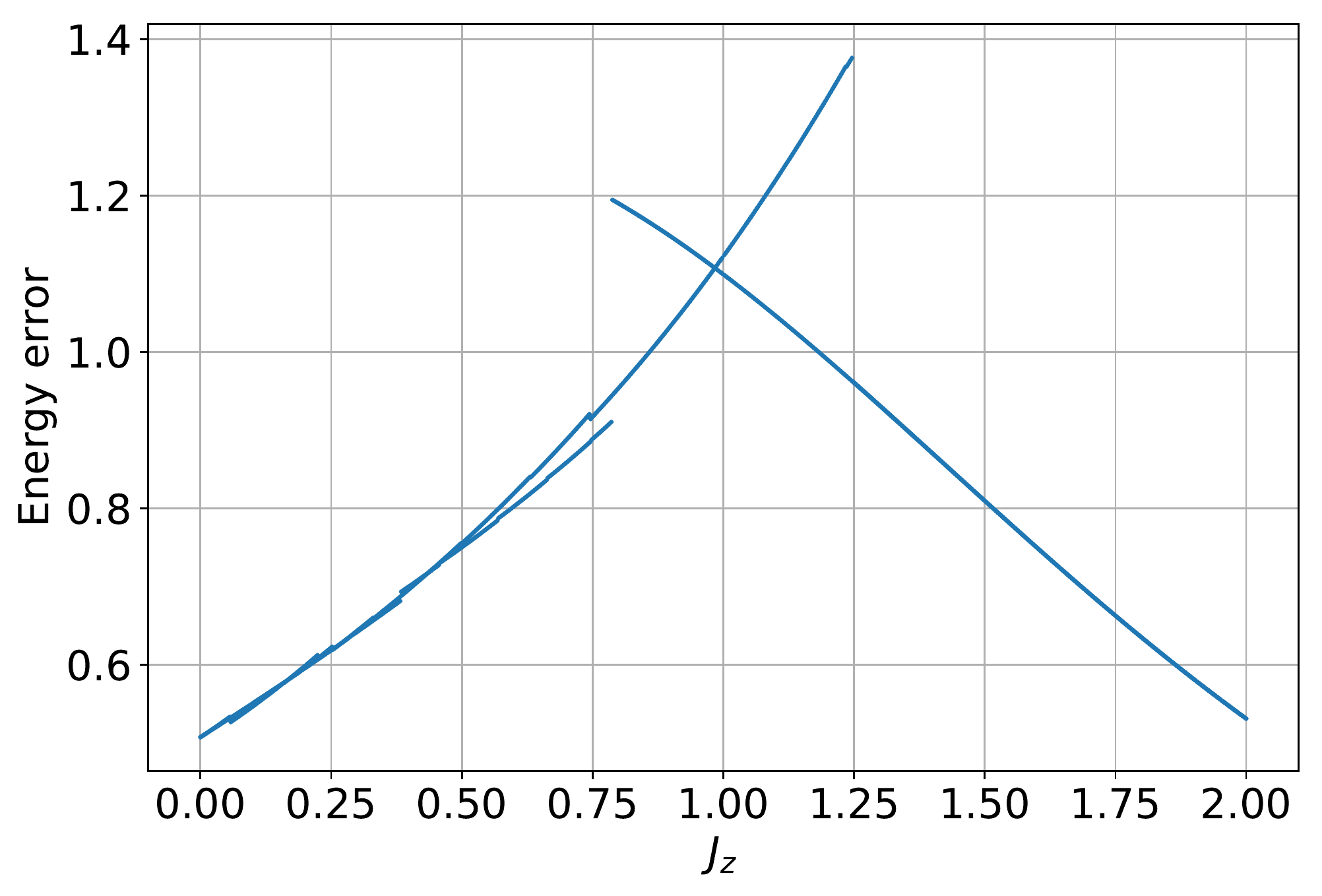}
    \caption{Energy difference between the exact solution and the ansatz solutions for the XXZ model.}
    \label{fig:dE_xxz}
\end{figure}

The XXZ Heisenberg model is defined as follows:

\begin{equation}
    \label{eq:heisenberg_xxz}
    H = \sum_{i=1}^n \left[J_\perp\left(X_i X_{i+1} + Y_i Y_{i+1}\right)
        + J_z Z_i Z_{i+1}\right].
\end{equation}

This model exhibits a more complicated behavior than the TFI model. However, for now we will only note that it, too, has a phase transition at $J_z = 1$. In Chapter \ref{chap:qml}, we study this model alongside the TFI model in context of quantum machine learning. Here we show the results of the same numerical experiments as for the TFI model, except that we only use the best ansatz (checkerboard with 4 layers).

In the experiments with the XXZ model, we observed an interesting peculiarity in the behavior of AAVQE. We found that the direction of the AAVQE sweep matters a lot. In the setup of AAVQE discussed in the original paper \cite{garcia-saez_addressing_2018}, the Hamiltonian is an easy one at the start and a difficult one at the finish. For models like ours, the behavior of the Hamiltonian along the change of $J_z$ is ``easy-hard-easy''. As such, to understand the low-energy properties of the model at different values of $J_z$, we could in principle run AAVQE in either direction. 

Figs.~\ref{fig:vqe_hysteresis}, \ref{fig:dE_xxz} show the results of running AAVQE in different directions. It turns out that the solutions in this case are susceptible to a hysteresis behavior: for a while after crossing the phase transition point, the solver stays in a suboptimal solution. Afterwards, the solution is suddenly changed to the optimal. Surprisingly enough, that behavior was not observed for the TFI model.

\section{VQE for Hubbard-like models}
\label{sec:hubbard}

\subsection{Next-nearest neighbor Hubbard model}

In condensed matter physics, frustrated systems with inhomogeneous interactions are hard to analyze owing to extra degrees of freedom to show up. On the one hand, strong electron-electron interactions precludes perturbative expansion over single-electron wave functions. On the other hand, more advanced numerical approaches to strongly correlated systems, e.g.~based on dynamical mean-field theory, treat the systems on a purely local manner. Whether modern quantum algorithms can give an edge in analyzing these models is an intensely studied question \cite{cade_strategies_2019,rungger_dynamical_2019,jaderberg_minimum_2020}. In this section, we analyze the performance of VQE for a one-dimensional model of spinless electrons with nearest- and next-nearest-neighbor interactions. This model represents a simple theoretical testbed to explore the physical properties of frustrated systems. Resulting from the competition between two types of interactions, a metallic state emerges even for strongly interacting systems. Interestingly, results of numerical simulations for finite size clusters unambiguously reveal that the ground state does not belong to the Luttinger liquid universality class \cite{Zhuravlev1997,zhuravlev_breakdown_2000,zhuravlev_one-dimensional_2001,Hohenadler2012,Karrasch2012}.

The frustrated systems are typically modeled using different variations of the Hubbard model. Physically, the model represents a lattice of atomic sites. Each site can have one or more energy levels for each spin. Electrons can hop between identical levels in different sites. The complicated part of the model is the Coulomb repulsion between the electrons sharing the same site or residing in neighboring sites. For simplicity, we consider the sites that have only one energy level to occupy. The Hamiltonian for such a model is as follows:

\begin{equation}
    H_{\text{Hubbard}} = \sum_{i \in [n], \sigma \in \{\uparrow, \downarrow\}} \epsilon_{i \sigma} \hat{n}_{i \sigma}
    - \sum_{<i, j>, \sigma \in \{\uparrow, \downarrow\}} t_{ij, \sigma} (a^\dagger_{i \sigma} a_{j \sigma} + a^\dagger_{j \sigma} a_{i \sigma})
    + U \sum_{i \in [n]} \hat{n}_{i \uparrow} \hat{n}_{i \downarrow},
\end{equation}
where $<i,j>$ denotes summation over all pairs of neighboring sites, and $\hat{n}_{i \sigma} = a^\dagger_{i \sigma} a_{i \sigma}$ denotes the population of the site $(i, \sigma)$. The first term of this Hamiltonian represents the on-site energy of sites. If the number of particles is conserved (i.e.~we are not considering an open system), this term is not constant only if the sites represent different atoms. In our simulations we assume that all sites are identical, so in what follows we omit this term. The second term represents hopping between the sites, and the third term represents Coulomb repulsion.

To better catch long-range order using few qubits, we will consider spinless sites. This enables us to simulate chain that is twice as long, while using the same amount of qubits.
% Such an approximation is \todo{valid when spin flips are rare, I suppose}. 
In such a model, the Coulomb repulsion is then added as an energy affecting electrons in neighboring sites. Such a model would in general look as follows:

\begin{equation}
    \label{eq:hubbard_nnn}
    H = - \sum_{<i, j>} t_{ij} (a^\dagger_{i} a_{j} + a^\dagger_{j} a_{i})
    + \sum_{i, j} U_{ij} \hat{n}_{i} \hat{n}_{j}.
\end{equation}

The simplest nontrivial model of that kind is a 1D chain of sites with identical hopping matrix elements $t_{i, i+1} = t$ and nearest-neighbor Coulomb repulsion $U_{i, i+1} = V_1$. In the numerical experiments detailed in this section, we add add next-nearest Coulomb repulsion $U_{i, i+2} = V_2$.

Importantly, the number of particles in such a model must be conserved. One way to ensure that the optimization finds the correct number of particles is to add a corresponding penalty term to the Hamiltonian. That is, let the correct number of particles be equal to $m$. Then the Hamiltonian we want to treat with VQE is equal to 
$H + M (\sum_{i \in [n]} a^\dagger_{i} a_{i} - m)^2$, taking $M$ to be a sufficiently large positive constant. An alternative method --- that works only when the Jordan--Wigner transformation is used --- consists in tailoring the ansatz circuit so that the particle number constraint is preserved automatically. In such an ansatz, single-qubit gates can only add relative phases to computational basis states, but cannot map $\ket{0}$ to $\ket{1}$ or vice versa. Two-qubit gates have to be a direct sum of operators acting independently on spaces $W_0 = \operatorname{span} \{ \ket{00} \}$, $W_1 = \operatorname{span} \{\ket{01}, \ket{10}\}$, and $W_2 = \operatorname{span} \{ \ket{11} \}$.

\subsection{Numerical setup}

In our numerical experiments, we used the Jordan--Wigner encoding (Section \ref{sec:fermion-transforms}). We used a checkerboard ansatz whose two-qubit blocks were the particle-conserving gates introduced in \cite{barkoutsos_quantum_2018}:

\begin{equation}
\label{eq:particle-conserving_gate}
    U(\theta_1, \theta_2) = 
    \begin{pmatrix}
1 & 0 & 0 & 0 \\
0 & \cos{\theta_1} & e^{\imath \theta_2} \sin \theta_1 & 0 \\
0  & e^{-\imath \theta_2} \sin \theta_1  & -\cos{\theta_1} & 0  \\
0 & 0 & 0 & 1
\end{pmatrix}.
\end{equation}
The initial state supplied to the ansatz is a product state $\ket{1010...}$ that has the amount of $1$'s equal to the number of electrons. We study the model at half-filling, so the number of electrons is equal to $\lfloor n / 2\rfloor$. Numerical simulations were performed using Qiskit. The Jordan--Wigner transform and the Bravyi--Kitaev transform (used further) are implemented using OpenFermion \cite{mcclean_openfermion_2020}. The optimization routine in VQE uses the limited-memory  Broyden–Fletcher–Goldfarb–Shanno (L-BFGS) algorithm. This method consists in approximately evaluating the Hessian matrix of the cost function and performing the step of the Newton method.

\subsection{Results}

The results of VQE optimization are shown in Figure \ref{fig:vqe_hubbard_nnn}. For the most part, the energy exhibits exponential convergence with the depth of the ansatz circuit, although the exact value of the exponent is different for different numbers of qubits. The jagged appearance of the curves is due to the fact that the specific particle-conserving gate used in the ansatz does not possess the identity gate in its configuration space. It is also clear that for larger $n$, the convergence tends to go slower. In particular, for $n=11$ qubits, the energy error remains roughly the same despite the increase in the number of layers.

\begin{figure}
    \centering
    \includegraphics[width=0.7\textwidth]{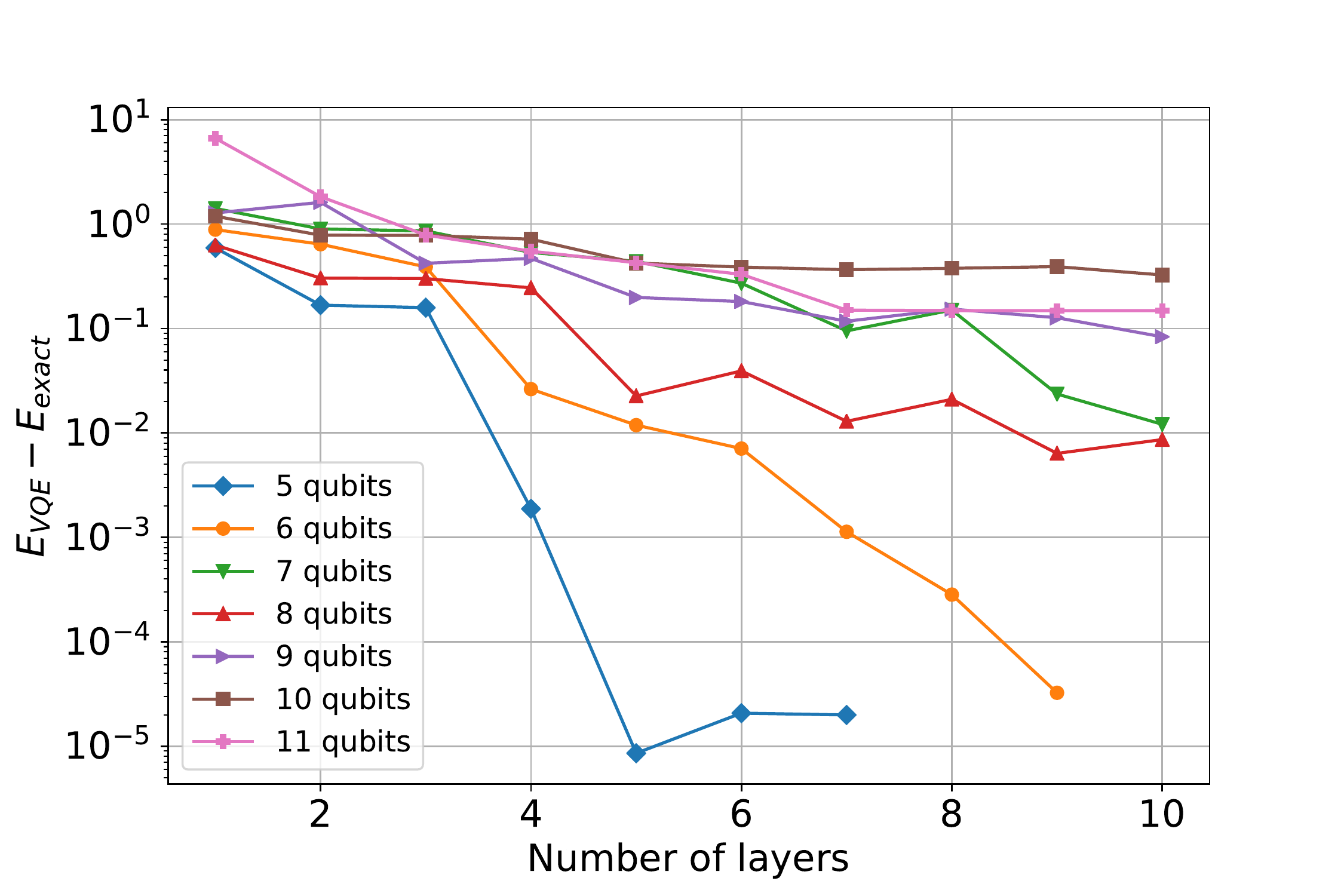}
    \caption{Convergence of the VQE solution to the true ground state for the 1D next-nearest-neighbour repulsion Hubbard model versus the number of layers, on condition $V_1 = 2t, V_2 = t$. Cases of $n \leq 4$ qubits are not shown as they converged to exact solution within 2 layers. Reprinted from \cite{uvarov_variational_2020}.}
    \label{fig:vqe_hubbard_nnn}
\end{figure}

The energy error can be shown to be closely related to the infidelity between the true ground state and the variational approximation \cite{biamonte_universal_2021}. In other words, the error in energy is close to zero if the variational solution lies close to the ground state subspace. However, both of these metrics are useful only when we possess enough information on the exact solution. Since in real applications we want to find the true energy in the first place, we cannot assume this knowledge \textit{a priori}. It is therefore also interesting to consider the convergence of other physically relevant observables. Specifically, we consider the convergence of the density-density correlation function $C(m) = \langle n_0 n_m\rangle - 
\langle n_0\rangle \langle n_m\rangle$. Figure \ref{fig:correlation} shows the behavior of the density-density correlation function in the true ground state and VQE approximations.
To provide a quantitative estimate we depict relative error of the correlation function as implemented in the VQE routine with respect to the exact solution in Fig.~\ref{fig:corr_errors}. For a more serious problem, the exact solution is, of course, unavailable. Nonetheless, the convergence of the correlation function to some apparent limit can be used as an indirect sign of overall convergence. Although we did find qualitative agreement, the accuracy of the approximation, as well as the size of the spin chain, were too low to produce any reasonable quantitative estimates of the asymptotical behavior of the correlation function. 

\begin{figure}
    \centering
    \includegraphics*[width=0.7\textwidth]{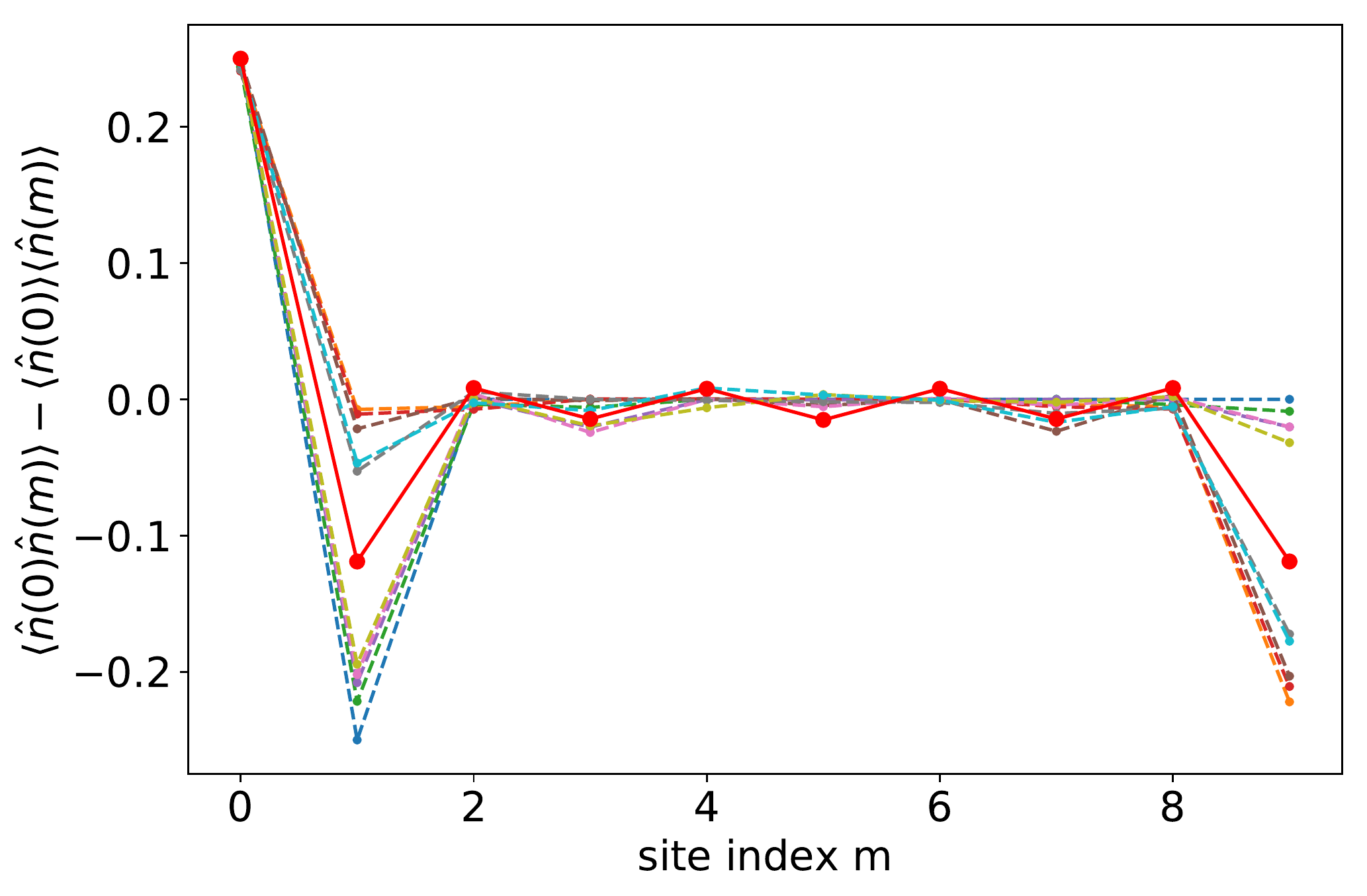}
    \caption{Density-density correlation function between spatially separated lattice sites. Filled dots denote exact values as obtained by virtue of exact diagonalization of the Hamiltonian (\ref{eq:hubbard_nnn}), dashed lines denote different approximations. Here $V_1 = 2t, V_2 = t$. Reprinted from \cite{uvarov_variational_2020}.}
    \label{fig:correlation}
\end{figure}

\begin{figure}
    \centering
    \includegraphics[width=0.7\linewidth]{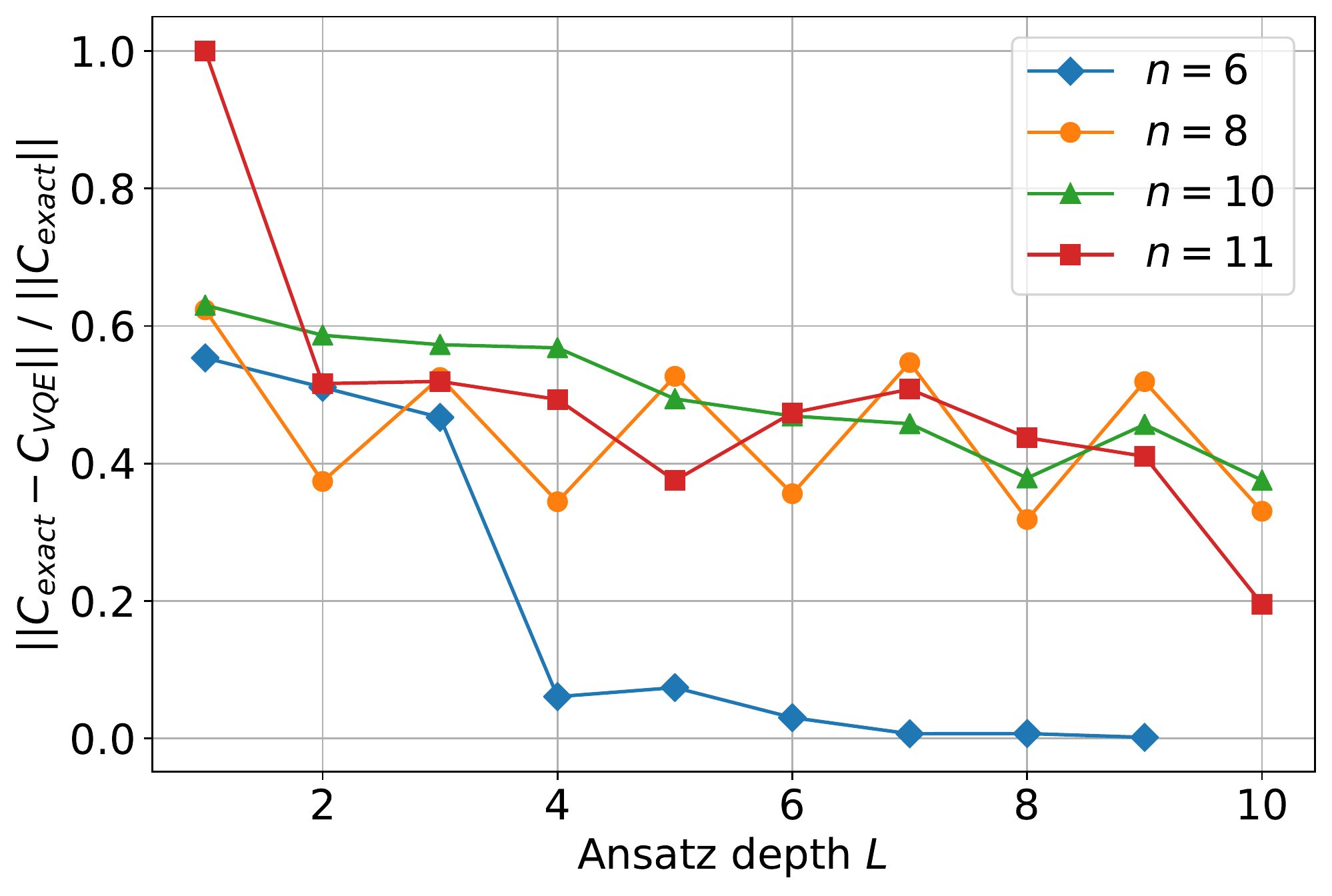}
    \caption{Relative error of the correlation function as obtained with the VQE method, $C_{VQE}$, with respesect to the exact solution, $C_\mathrm{exact}$, as a function of the number of layers $L$. Reprinted from \cite{uvarov_variational_2020}.}
    \label{fig:corr_errors}
    \end{figure}{}

\subsection{Barren plateaus}

Variational algorithms can experience so-called barren plateaus \cite{mcclean_barren_2018}. While in detail this phenomenon is studied in Chapter \ref{chap:plateaus}, we will briefly explain it here as well. When we work with variational algorithms, we would like to know somehting about the optimization landscape of the problem. To do that, we can sample random points in the parameter space and consider the partial derivatives of the cost function. When the ansatz circuit is very deep, sampling unitary operators from that circuit in a certain sense very similar to sampling random unitary operators uniformly from the unitary group $U(2^n)$. Under these conditions (see Theorem \ref{thm:mcclean}), the partial derivatives $\partial_\theta \bra{\psi} H \ket{\psi}$ have zero mean and exponentially vanishing variance (in $n$).

This result, however, is valid for sufficiently long circuits with enough freedom in individual gates. Here we introduced an ansatz which preserves the number of particles in the state. A random unitary operator sampled from such an ansatz will necessarily be a direct sum of operators acting in subspaces corresponding to different particle numbers. This property is, of course, not present for random unitary operators. Another aspect of the problem is that the Hamiltonians in question also change with the number of qubits. Because of that, we performed a number of numerical experiments concerning the onset of barren plateaus in the case of Hubbard model and our ans\"atze.

Figure \ref{fig:plateaus_hubbard_ising}, top left, shows the behavior of the variance of the derivatives (averaged over random selections of $\boldsymbol{\theta}$ and $k$) for the next-nearest-neighbor Hubbard model (\ref{eq:hubbard_nnn}) under the Jordan--Wigner mapping. The particle-conserving ansatz is used for the analysis. For most qubit numbers and regardless of the model parameters, said variance drops to its limiting value even for very shallow circuits. In contrast, the same derivatives under the Bravyi--Kitaev mapping show a more gradual behavior (Figure \ref{fig:plateaus_hubbard_ising}, top right): the variance decays exponentially with the depth of the circuit. The particle-preserving gate in the ansatz was replaced with a more generic gate shown in Fig.~\ref{fig:entangler}, since the former would no longer conserve the particle number.
For comparison, we also performed a similar experiment for the transverse field Ising model. The gradient behavior away from the critical point ($h = 0.1$) and at the critical point ($h = 1$) is shown in Fig.~\ref{fig:plateaus_hubbard_ising}, bottom. In this case, the gradient variance decays exponentially with the number of layers until reaching the plateau regime for the particular number of qubits. Thus, for 4 qubits the plateau is reached right away, while for 10 qubits, 30 layers of the ansatz are still a number belonging to the transition regime. In the meantime, the criticality of the model does not seem to affect this behavior. As we will see in Chapter \ref{chap:plateaus}, a possible reason for the behavior observed in these three cases is that the Jordan--Wigner mapping produces highly nonlocal operators which are difficult to optimize.

% \todo{For BK mapping, we can plot the barren plateaus conditioned on the sample points being in the valid region (e.g.~by assuming the number of particles no more than 0.1 or 0.01 away from the correct value. We should also probably check how this depends on that number.)}

\begin{figure}
    \centering
    \begin{subfigure}{.48\linewidth}
        \centering
        \includegraphics[width=\textwidth]{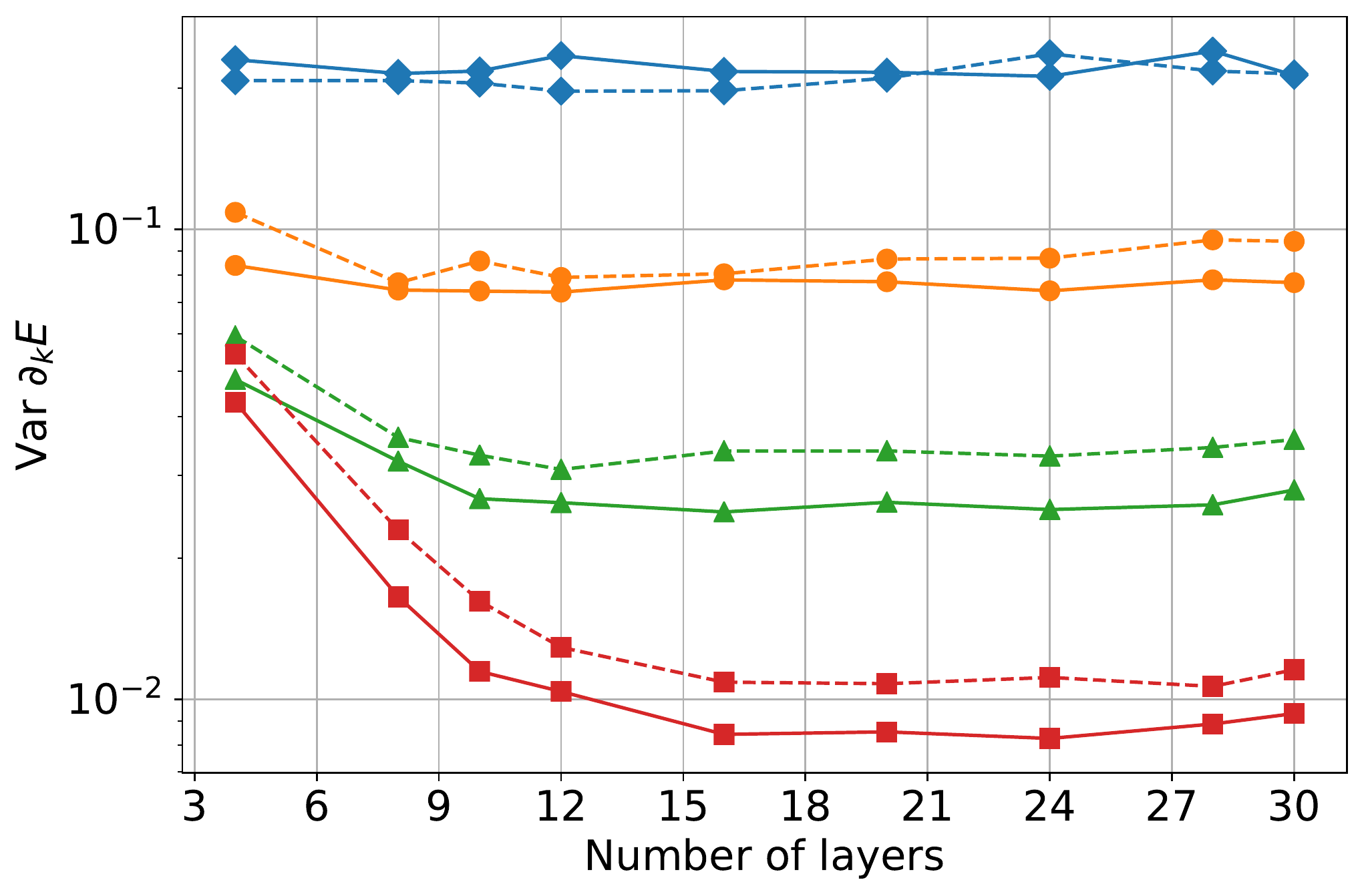}
    \end{subfigure}\begin{subfigure}{.48\linewidth}
        \centering
        \includegraphics[width=\textwidth]{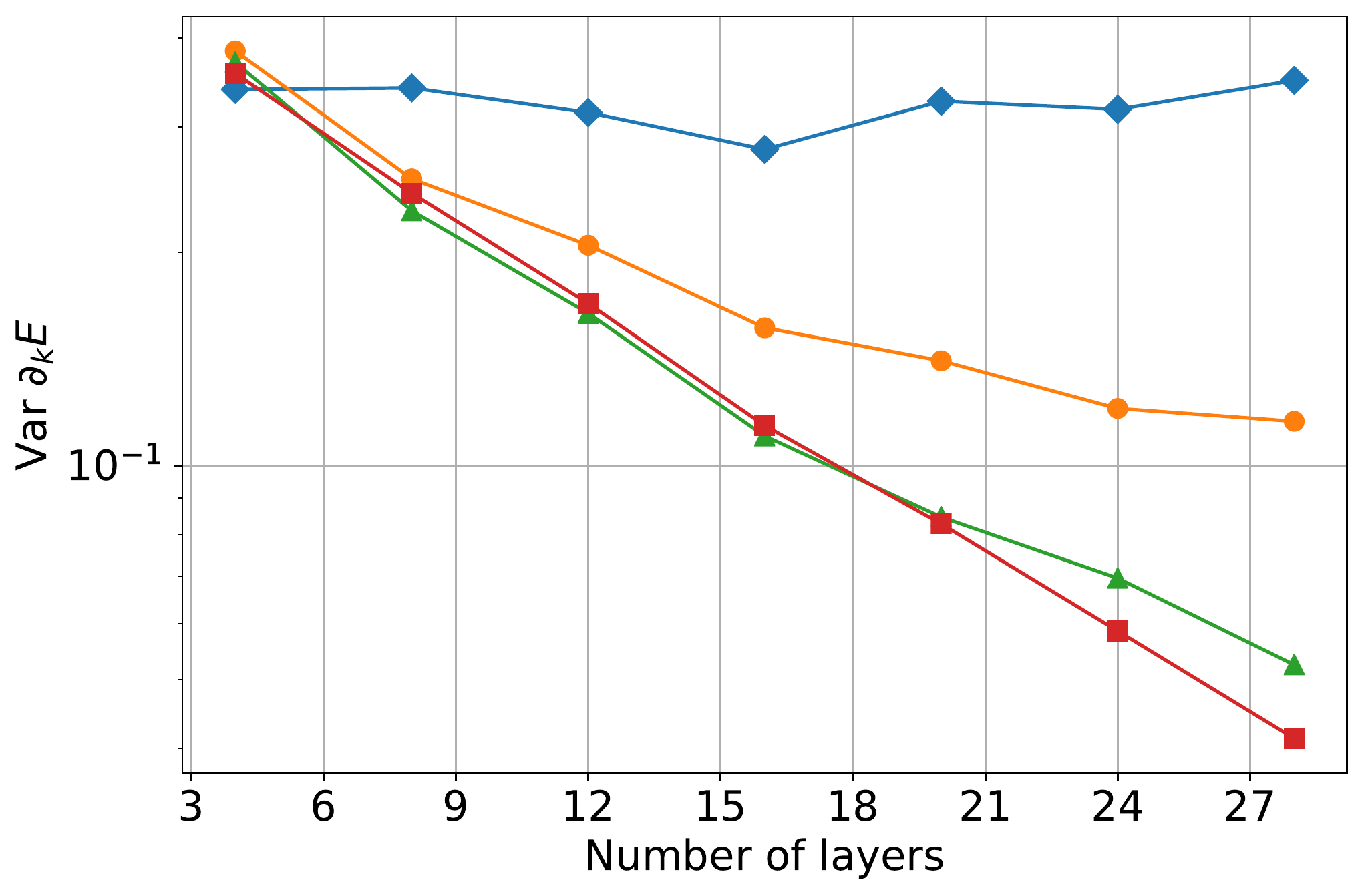}
    \end{subfigure}
    \begin{subfigure}{.48\linewidth}
        \centering
        \includegraphics[width=\textwidth]{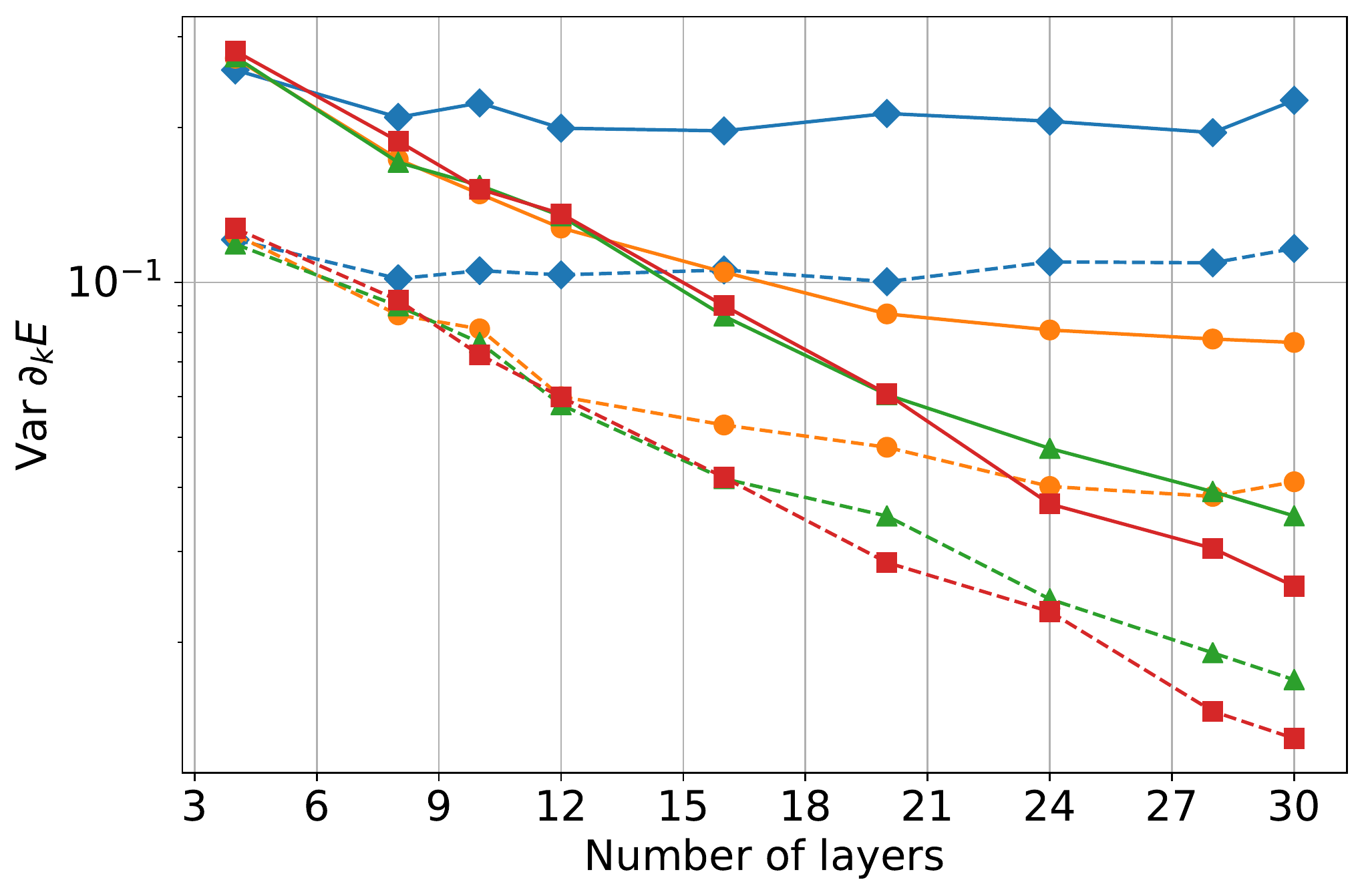}
    \end{subfigure}
    \caption{\textbf{Top left:} barren plateau effect for the Hamiltonian of Eq. (3) with
    $V_1 = t$ and $V_2 = 0$ (dashed lines), as well as $V_1 = 2t$ and $V_2 = t$
    (solid lines) versus the number of qubits as realized by virtue of
    Jordan--Wigner mapping. Diamonds: four qubits; circles: six qubits;
    triangles: eight qubits; squares: 10 qubits. \textbf{Top right:} same effect under Bravyi--Kitaev mapping, $V_1 = 2t$, $V_2 = t$.
    \textbf{Bottom:} same effect for for the transverse field Ising model of Eq. (4) away from criticality with $h = 0.1$ (dashed lines) and at the critical point $h = 1$ (solid lines).    
    Reprinted from \cite{uvarov_variational_2020}.}
    \label{fig:plateaus_hubbard_ising}
\end{figure}

\section{Discussion}

In this chapter, we saw different physical models analyzed using the VQE method. Regardless of the model, we observed a fast convergence of the energy error with the increase of the circuit depth, however, the experiments with the Hubbard model hint that this may just be the consequence of a small number of qubits, as for larger $n$ this convergence becomes less apparent.

There is a lot of choices to be made when choosing the implementation of VQE. We always have to choose the ansatz and/or its update scheme, we need to choose the optimization method, and for fermionic problems, we also have to choose the qubit encoding. A whole other set of choices not covered in this work relate to the experimental setup: how to label qubits, how many measurements to make per step, and so on. All of this affects the quality of the solution.

In the experiments with the Hubbard model we observed the barren plateau phenomenon, in which the partial derivatives converge to exponentially small values when sampled randomly. On the one hand, its behavior is to be expected: as we will see in Chapter \ref{chap:plateaus}, the locality of the operators included in the Hamiltonian is one of the key factors affecting the onset of barren plateaus. On the other hand, the theory of barren plateaus heavily relies on the concept of \textit{t-designs}, i.e.~random quantum circuits that are indistinguishable from uniformly random unitary operators under certain conditions. Clearly, the particle-conserving ansatz is not even close to resembling a uniformly random unitary, hence it is not a \textit{t}-design (again, we will back this up numerically in Chapter \ref{chap:plateaus}). Nonetheless, we observe the plateaus. This implies that in some cases the condition for observing barren plateaus is not necessarily tied to the notion of a \textit{t}-design. Alternatively, this may mean that the ansatz becomes close to a \textit{t}-design when it is restricted to the subspace with the fixed number of particles.

\chapter{Quantum machine learning with VQAs}
\label{chap:qml}

\section{Introduction}

Every discipline in computer science tries to gain some advantage from the usage of quantum computation, or establish whether there is any. Machine learning is no exception. The topic of quantum machine learning has considerable overlap with variational quantum algorithms, which motivates some attention to the topic within this thesis. In this chapter, we will mostly focus on quantum neural networks (QNN) rather than the more general techniques for quantum machine learning. The reason for that, as we shall soon see, is that modern QNNs can be essentially considered equivalent to variational circuits, and training quantum neural networks can be classified as a variational quantum algorithm. For broader reveiws, we refer the reader to Refs. \cite{schuld_introduction_2015,biamonte_quantum_2017, ciliberto_quantum_2018}.
% This is a rather narrow focus, we will first need to introduce the required supporting notions.

\subsection{Quantum neural networks}
 
What is a QNN? There is a multitude of ways one can come up with quantum analogues of a neural network. To do that, one needs to identify the key properties of a neural network that should be replicated. For example:

\begin{enumerate}
\item The network should consist of individual neurons connected to each other in a nontrivial manner, with tunable connection weights
\item The neurons should implement some kind of nonlinear activation functions.
\item The network is trained to solve its task by minimizing an error measured on the training samples.
\end{enumerate}

A big challenge in designing a quantum neural network is that all evolution that happens in quantum mechanins is linear, except for the measurement part, while classical neural networks are unthinkable without nonlinearity. 

\paragraph{Early QNN proposals.}
Here we will go in roughly chronological order. The first speculations about quantum neural computation can be found in Ref.~\cite{kak_quantum_1995}, but these only outline the concept without being too specific about the implementation. Behrman et al.~\cite{behrman_simulations_2000} proposed a neural network based on the interaction of quantum dots. Another proposal introduces nonlinearity through a dissipative, non-unitary operator \cite{gupta_quantum_2001}.

Some of these proposals can be summarized as follows. A quantum neuron (called `quron' in Ref.~\cite{schuld_quest_2014}) is a two-level quantum system, and different architectures engineer different interactions between the neurons. Such networks were reviewed by Schuld et al.~\cite{schuld_quest_2014}. It was observed that none of these networks can replicate the behavior of a Hopfield network -- a network which exhibits multiple stable states of its neurons. In a Hopfield network, the state of the neurons is iteratively updated based on the previous states and inter-neuron weights. Its key feature is that it has stable states and that the basins of attraction split all the state space, which is impossible if the evolution of a neural network is purely unitary.

\paragraph{Variational QNNs.}

The idea to use a generic variational circuit as a trainable model for a classification problem was independently proposed by Schuld et al.~\cite{schuld_circuit-centric_2020} and by Havlicek et al.~\cite{havlicek_supervised_2019}. The algorithm presented by us in Ref.~\cite{uvarov_machine_2020} and expanded upon in this chapter also belongs to this family. It is of course quite simple to implement feedforward networks in this setup, but one can also go for more complicated constructions, such as recurrent neural networks \cite{bausch_recurrent_2020}, generative adversarial networks \cite{dallaire-demers_quantum_2018} or convolutional neural networks \cite{pesah_absence_2020}.

While some QNNs of that kind do not even have well-defined neurons at this stage, others do define neurons as quantum gates that have several input qubits and one output qubit (possibly with ancilla qubits used in the process) \cite{cao_quantum_2017,bausch_recurrent_2020}. In this fashion, one can talk about layers in a QNN, which is somewhat more specific than layers in a general variational ansatz. Every unitary in the layer acts on all input qubits plus one output qubit. After all unitaries of that layer are applied, the output qubits are treated as input qubits for the next layer. Finally, the output qubits are measured after the last layer. Note that in this architecture, the input qubits from the last layer can be discarded and reused. This way, one can make deep quantum neural networks \cite{beer_training_2020} even with a limited budget of qubits, provided that the quantum processor can reinitialize qubits in runtime. 

\subsection{Quantum machine learning models viewed as kernel models}

An alternative view on the variational QNNs was presented in Ref.~\cite{schuld_quantum_2021}. Recall that our classifier circuit consists of two sub-circuit: one prepares the desired quantum state, another performs the classification. For a more general setup, the first part can be seen as an encoding circuit that maps classical data to quantum states. Let $x_i \in \mathcal{X} = \mathbb{R}^m$ be a classical data point with label $y_i \in \mathbb{R}$. Let us denote the corresponding quantum state $\ket{\phi(x_i)}$. Its density matrix is equal to $\rho(x_i) = \ket{\phi(x_i)}\bra{\phi(x_i)}$. When we prepare the state, apply the classifier circuit, and measure some observable $M$, we extract a prediction $\hat{y}_i \in \mathbb{R}$. The combination of a classifier circuit and a measurement is a function $f: \operatorname{Pos}(2^n) \rightarrow \mathbb{R}$:
\begin{equation}
    f(x) = \Tr(\rho(x) U^\dagger_{\mathrm{class}} (\boldsymbol{\varphi}) M U_{\mathrm{class}} (\boldsymbol{\varphi})).
\end{equation}
Here $\operatorname{Pos}(2^n)$ is the space of $2^n \times 2^n$ density matrices. Importantly, this function is linear in $\rho$, i.e.~$f$ belongs to the dual space. If the vector space has a basis (in our case, the basis of density matrices $\{\rho_i\}$) in with a fixed inner product (here we have $(a, b) = \Tr(a^\dagger b)$), then there is a natural choice of the dual basis: $\rho_i^* = (\rho_i, \placeholder)$.

Let us now introduce kernel models. A \textit{kernel} is a function $\kappa: \mathcal{X} \times  \mathcal{X} \rightarrow \mathbb{R}$. We require that this function takes larger values when its arguments are close to each other. Sometimes $\kappa$ is called a similarity function. Typically, $\kappa$ is also symmetric ($\kappa(x, x') = \kappa(x', x)$) and positive definite, that is, for all integer $l$ $x_1, .., x_l \in \mc{X}$, and any $c_1, ..., c_l \in \mathbb{R}$, the following is true \cite{vert_primer_2004}:
\begin{equation}
    \sum_{i=1}^l  \sum_{j=1}^l c_i c_j \kappa(x_i, x_j) \geq 0.
\end{equation}
Returning to the quantum models, we can choose a kernel as the inner product of the data points encoded in the state space:
\begin{equation}
    \kappa(x, x') = \Tr(\rho(x) \rho(x')).
\end{equation}

A famous result in the kernel theory, called the \textit{representer theorem}, states that any function minimizing the expected loss over the data set is expressed as a linear combination of kernel functions involving the data points:

\begin{equation}
    f_{\text{opt}}(x) = \sum_i \alpha_i \kappa(x_i, x), \quad \quad \alpha_i \in \mathbb{R}
\end{equation}

Moreover, in this form, the optimization of the coefficients is \textit{convex}. This means that in principle, one can replace the QNNs with kernel models of that kind and always find a global optimum. On the other hand, for a data set with $m$ points, one will need to estimate $O(m^2)$ inner products, and the classification of a data point will take $O(m)$ evaluations. For big data applications, this is a disadvantage compared to QNNs. For instance, the classification of a data point for a QNN takes $O(1)$ in the number of data points, while training can sometimes take less than $O(m^2)$.

% It was later observed that such networks are essentially kernel models, except that the kernel is now more complicated to evaluate, and that's where the advantage may come from.

% \subsection{Encoding classical data to a quantum computer}

% In order to do something with the data, a quantum computer must receive it as quantum states. There are different encodings which lead to different results.

% \section{Selected theoretical results on QML}

An interesting question is whether a quantum ML model bring any advantage in predicting the outcomes of quantum experiments. Huang and coworkers \cite{huang_information-theoretic_2021} show that, in a very general setting, a quantum ML model does not bring any advantage in terms of average-case error, but does bring exponential advantage in terms of worst-case error.

\section{Quantum classifier to partition quantum data}

\paragraph{Learning the phase of the transverse-field Ising model. } Here we demonstrate how we solve a machine learning problem that is intrinsically quantum. Recall that in chapter \ref{chap:vqe_numerics} we studied the transverse field Ising model:

\begin{equation}
\label{eq:tfim_2}
    H = J \sum Z_i Z_{i+1} + h \sum X_i.
\end{equation}

This model has a known phase transition at $J=h$. Using this fact, we set up our machine learning problem as follows:

\begin{itemize}
    \item The data points are the approximate ground states of (\ref{eq:tfim_2}) for $J=1$ and different values of $h$. The approximate states none other than the VQE solutions. We took the best set of solutions we had in terms of energy error --- the one found using four layers of the checkerboard ansatz.
    \item Given a quantum state, promised to be a ground state of the TFI model, the task is to tell whether $h<1$ or $h>1$ for this state.
\end{itemize}

To solve this problem, we employ a variational ansatz. Let $U_{\mathrm{VQE}}(\boldsymbol{\theta})$ be the ansatz that prepares the input state $\ket{\psi(\boldsymbol{\theta})}$ that is to be classified. Denote $U_{\mathrm{class}}(\boldsymbol{\varphi})$ the classifier unitary. Then, to evaluate the class of $\ket{\psi(\boldsymbol{\theta})}$, we prepare $U_{\mathrm{class}}(\boldsymbol{\varphi}) \ket{\psi(\boldsymbol{\theta})}$, measure all qubits, and decide the class by majority of the qubits. The quantum circuit is schematically shown in Fig.~\ref{fig:classifier_scheme}. For the transverse-field Ising model, the classifier circuit $U_{\mathrm{class}}(\boldsymbol{\varphi})$ had the same depth as the VQE circuit (i.e.~four layers of the checkerboard ansatz).

Effectively, estimating the majority is equivalent to estimating the energy of the state $U_{\mathrm{class}}(\boldsymbol{\varphi}) \ket{\psi(\boldsymbol{\theta})}$ relative to the following Hamiltonian:
\begin{equation}
    \label{eq:h_vote}
    H_{\text{vote}} = \sum_{w(i) > n/2} \ket{i} \bra{i} + \frac{1}{2}\sum_{w(i) = n/2} \ket{i} \bra{i}.
\end{equation}
Here $w(i)$ denotes the Hamming weight of the basis state $\ket{i}$, i.e.~the number of ones it contains.

%%It would be a bad idea to use parity as the signal because of BPs, right? Here the situation is less obvious: we have nonlocal projectors, but there is an exponential number of these so it's not that bad actually.

\begin{figure}
    \centering
    \includegraphics[width=0.7\linewidth]{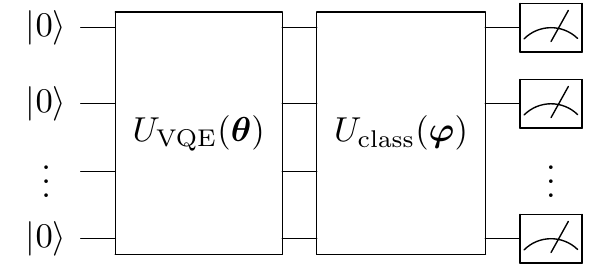}
    \caption{Quantum circuit that implements the classifier. The first part prepares the VQE solution, the second one performs the classification. The assigned label is inferred from the measurements in the $Z$ basis. Both $U_{\mathrm{VQE}}$ and $U_{\mathrm{class}}$ have the checkerboard structure. Reprinted from \cite{uvarov_machine_2020}.}
    \label{fig:classifier_scheme}
\end{figure}

To train the classifier circuit $U_{\mathrm{class}}(\boldsymbol{\varphi})$, we used the log-likelihood cost function. Let $\{ (\boldsymbol{\theta}_i, y_i) \}_{i=1}^{N_{train}}$ be the set of training data points and their labels, $y_i \in \{0, 1\}$. Let $p_i \in [0, 1]$ be the label predicted by the neural network:
\begin{equation}
    p_i = \bra{\psi(\boldsymbol{\theta}_i)} U^\dagger_{\mathrm{class}} (\boldsymbol{\varphi})H_{\text{vote}} U_{\mathrm{class}}(\boldsymbol{\varphi})\ket{\psi(\boldsymbol{\theta}_i)}.
\end{equation}
Then the loss function is:
\begin{equation}
\label{eq:logloss}
    f = -\sum_{i=1}^{N_{train}} \left( y_i \log p_i + (1 - y_i) \log (1 - p_i) \right).
\end{equation}

To minimize $f$, we used the simultaneous perturbation stochastic approximation (SPSA) algorithm \cite{spall_multivariate_1992}. This algorithm estimates the gradient vector by computing a finite difference in random direction, then performs a gradient descent step. We optimized the log loss over 300 epochs, with both finite differences step size and learning rate starting very coarse and decreasing as $1/\sqrt{n_{epoch}}$, where $n_{epoch}$ is the epoch number.

The result of the classification is shown in Fig.~\ref{fig:phase_classification}, left. The horizontal axis depicts the true value of $h$, while the vertical axis shows the label predicted by the classifier. In this case, the classification problem was quite simple, yielding a high accuracy score (97\%). The TFI model, however, can be easily partitioned without any machine learning: the value of total magnetization along the $X$ axis is also a good predictor of the phase of the system: with the increase of $h$, the magnetization gradually increases and cusps around the phase transition point (see Fig. \ref{fig:x_classifies_phases}). For this reason, we also performed the same analysis for a more complicated model.

\begin{figure}
    \centering
    \includegraphics[width=0.7\linewidth]{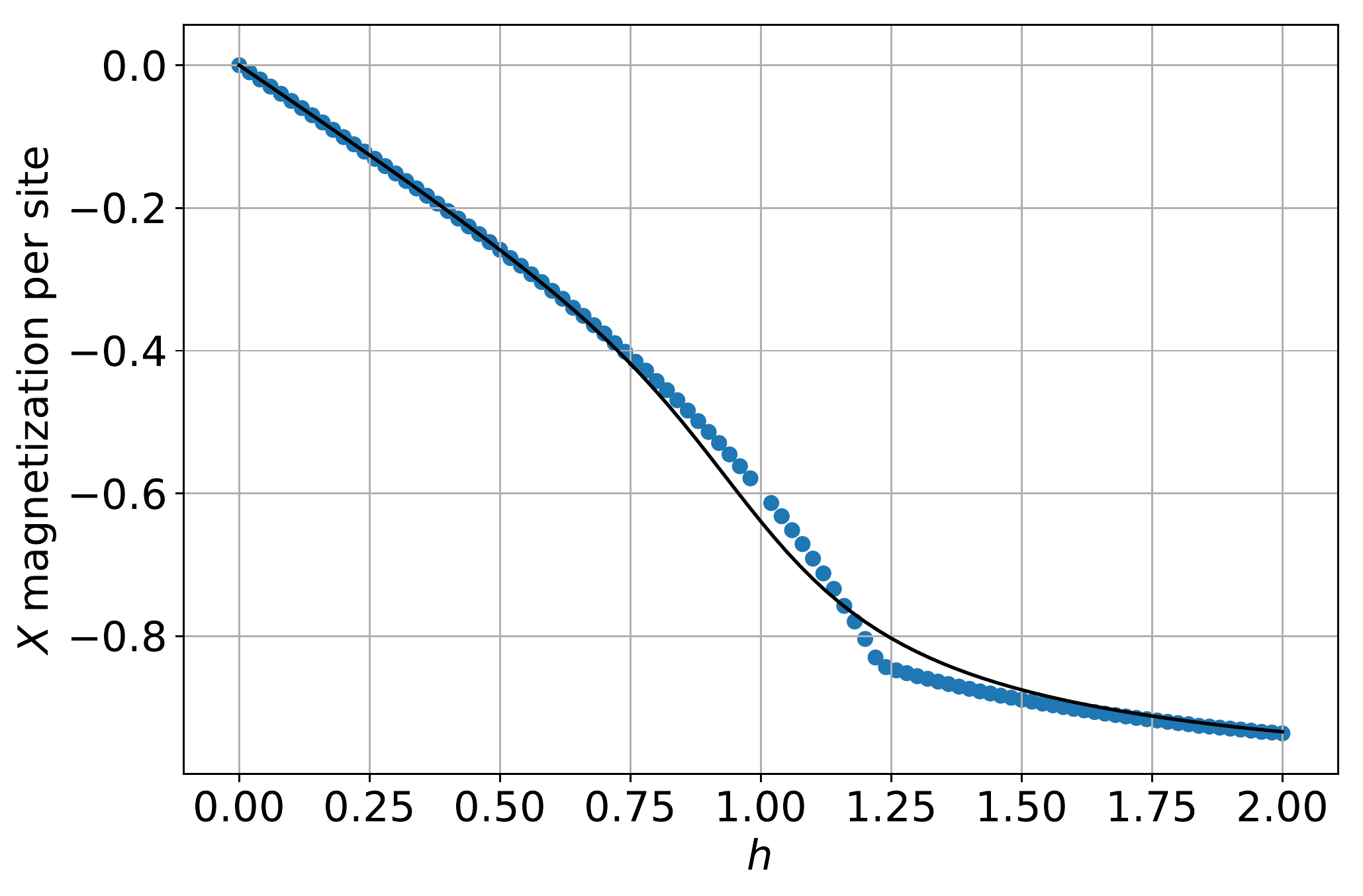}
    \caption{Total magnetization of the ground state as a function of the transverse field $h$. Solid line: exact ground states, markers: VQE solutions.}
    \label{fig:x_classifies_phases}
\end{figure}

\paragraph{Heisenberg $XXZ$ model.} Recall that the Heisenberg $XXZ$ model is described by the following Hamiltonian (\ref{eq:heisenberg_xxz}):

\begin{equation}
H = \sum_{i=1}^n \left[J_\perp\left(X_i X_{i+1} + Y_i Y_{i+1}\right)
    + J_z Z_i Z_{i+1}\right].
\end{equation}

From a physical perspective, Eq.~(\ref{eq:heisenberg_xxz}) corresponds to a uniform exchange coupled system with a uniaxial anisotropy specified by $J_z$. At $|J_z| < J_\perp$, this model is in the XY, or planar, phase which is characterized by algebraic decay of equal-time spin-spin correlation functions. In the regime $J_z > J_\perp$ the Hamiltonian corresponds to the antiferromagnetic Ising state. The system undergoes a Berezinsky--Kosterlitz--Thouless type phase transition at $J_z = J_\perp$   \cite{franchini_introduction_2017}. At the phase transition point, the ground state has the highest nearest-neighbour concurrence and a cusp in nearest-neighbour quantum discord \cite{dillenschneider_quantum_2008}.

For this model, there is some rotational symmetry that preserves the energy of the state. Namely, $H$ commutes with the operator of rotating each qubit around the $Z$ axis by an angle $\varphi$. In addition, $H$ commutes with all-qubit spin flip operators $X^{\otimes n}$ and $Z^{\otimes n}$. This enables us to perform a procedure of data augmentation by applying these operators to the approximate ground states. The new states obtained this way will be just as good in terms of the energy error. Recall that the ansatz we used consists of two-qubit entangler gates shown in Fig. \ref{fig:entangler}:
\begin{equation*}
    \mbox{
        \Qcircuit @C=1.0em @R=1.0em {
               & \gate{e^{-i \tilde{\theta}_1 X}} & \multigate{1}{e^{-i \tilde{\theta}_3 {Z} \otimes {Z}}} & \gate{e^{-i \tilde{\theta}_4 Z}} & \qw \\
               & \gate{e^{-i \tilde{\theta}_2 X}} & \ghost{e^{-i \tilde{\theta}_3 Z \otimes Z}} & \gate{e^{-i \tilde{\theta}_5 Z}} & \qw \\
           }
        }
\end{equation*}
The $Z$ rotation to all qubits can easily be applied by increasing all angles in the last layer of the operators by $\varphi$. The $X$ flip can be performed by inverting the angles of the $Z$ rotations in the last layer (because $Z$ and $X$ anticommute) and increasing the angles of the $X$ rotations by $\pi / 2$. Note that the rotation gates are often written with the angle divided by half, e.g.~$R_y = \exp(-\rmi \frac{\theta}{2} Y)$. In this case, the parameters should instead be incremented by $\pi$.

The result is depicted in in Fig.~\ref{fig:phase_classification}, right. Compared to the TFI model, we had to increase the depth of the classifier circuit to 6 layers. The resulting plot is much more jagged, and the accuracy is somewhat lower (93\%). The reason for the points being spread around is likely the data augmentation procedure. However, without data augmentation, the data points do not represent the state space accurately. The reason for that is the usage of AAVQE, which forces the solution points for neighboring values of $h$ to be close to each other, despite the freedom granted by the rotational symmetry.

\begin{figure}
    % \centering
    % \includegraphics[width=0.9\linewidth]{}
    \centering
    \begin{subfigure}{.48\linewidth}
        \centering
        \includegraphics[width=\textwidth]{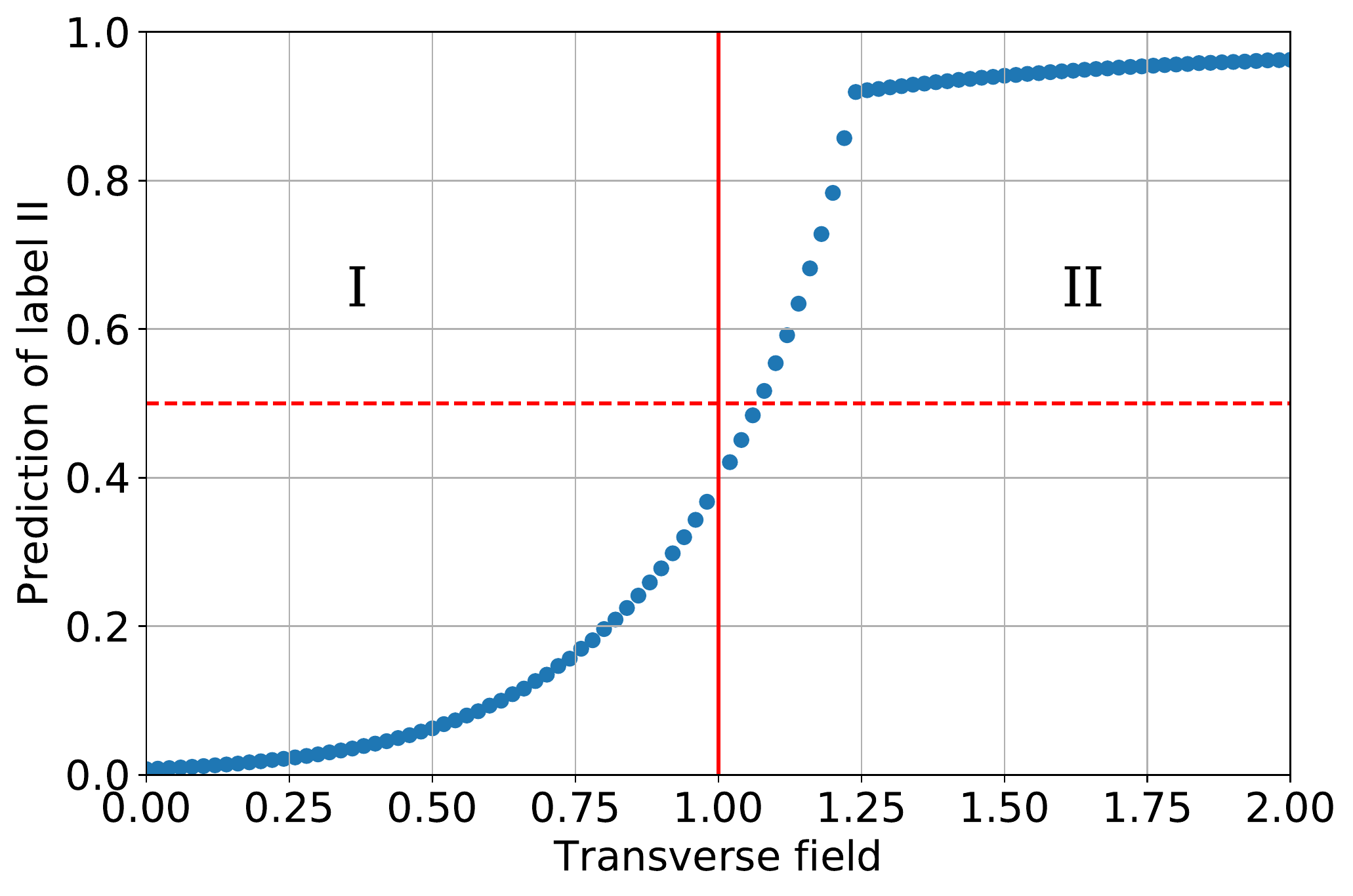}
    \end{subfigure}\begin{subfigure}{.48\linewidth}
        \centering
        \includegraphics[width=\textwidth]{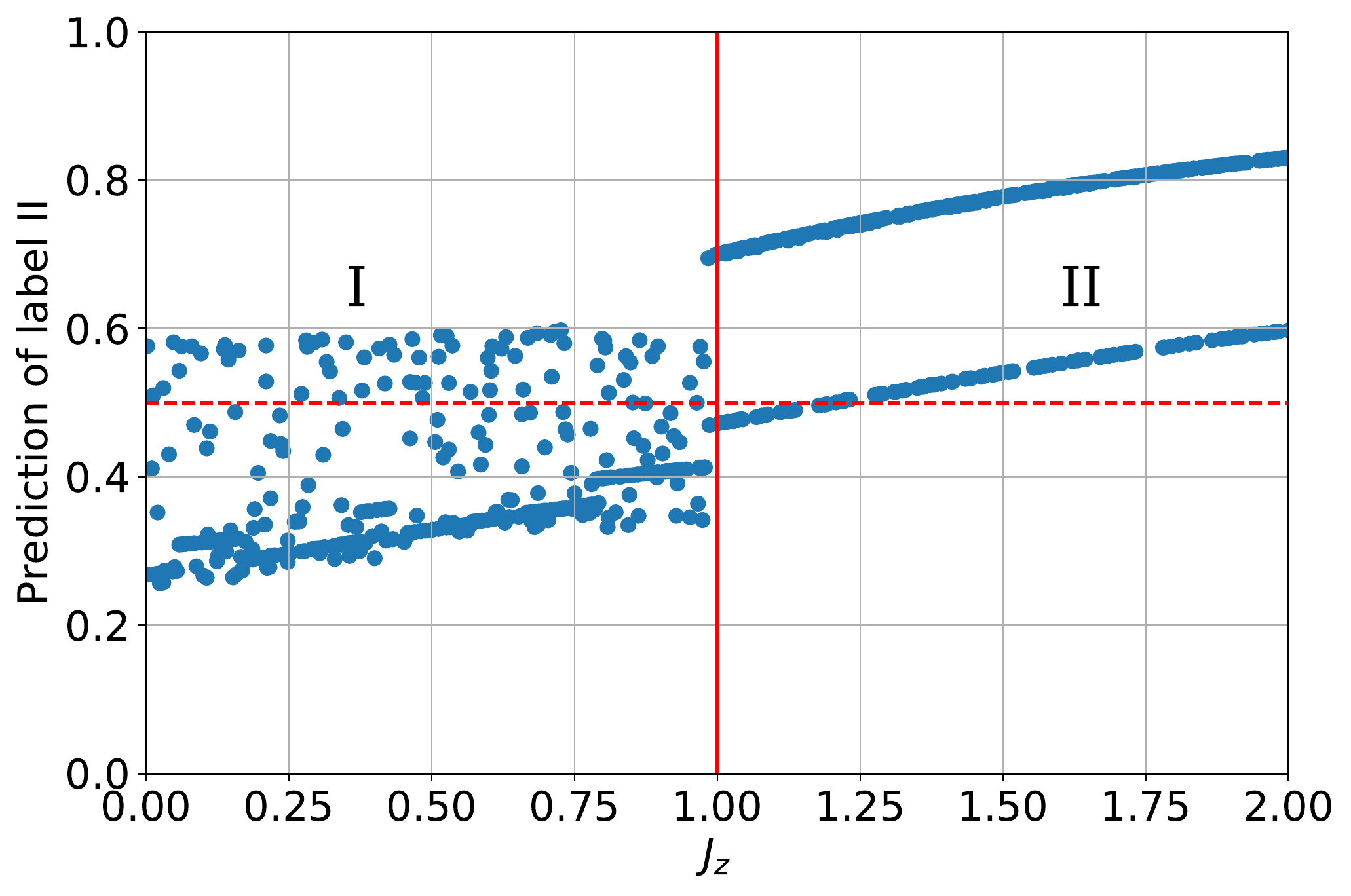}
    \end{subfigure}
    \caption{Left: predicted label of phase II as a function of magnetic field for transverse field Ising model. Right: predicted label of phase II as a function of $J_z$ for the XXZ model. Roman numbers denote the phases I and II of the models.}
    \label{fig:phase_classification}
\end{figure}
% Not reprinted from Uvarov et al PRA 2020 because these are slightly different figures

\paragraph{Random Hamiltonians.} Simpler toy models, like the TFI model, may have simple classification criteria which do not require application of machine learning. Here we classify the solutions of a randomized model: $H(\alpha) = (1 - \alpha) H_1 + \alpha H_2, \ \alpha \in [0, 1]$, where $H_1$ and $H_2$ are random Hermitian matrices sampled from the Gaussian unitary ensemble. We split solutions in two classes: (i) $\alpha < 0.5$ and (ii) $\alpha > 0.5$. Then we run the optimization routine to train the learning circuit to discern between the two classes. 

The approach was tested for 6 qubits, $n=100$, where $n_\text{train} = 70$, $n_{\text{test}} = 30$. The depth of the VQE circuit and the classifier were both set to four layers. The results are shown in Fig.~\ref{fig:learning_random_hams}. For this configuration, the accuracy of $93 \%$ was reached. This shows that the algorithm works even for a problem where there are no simple physically-motivated criteria. Naturally, having some structure compatible with the circuit topology would greatly benefit the convergence for larger problems.

\begin{figure}
    \centering
    \includegraphics[width=0.7\linewidth]{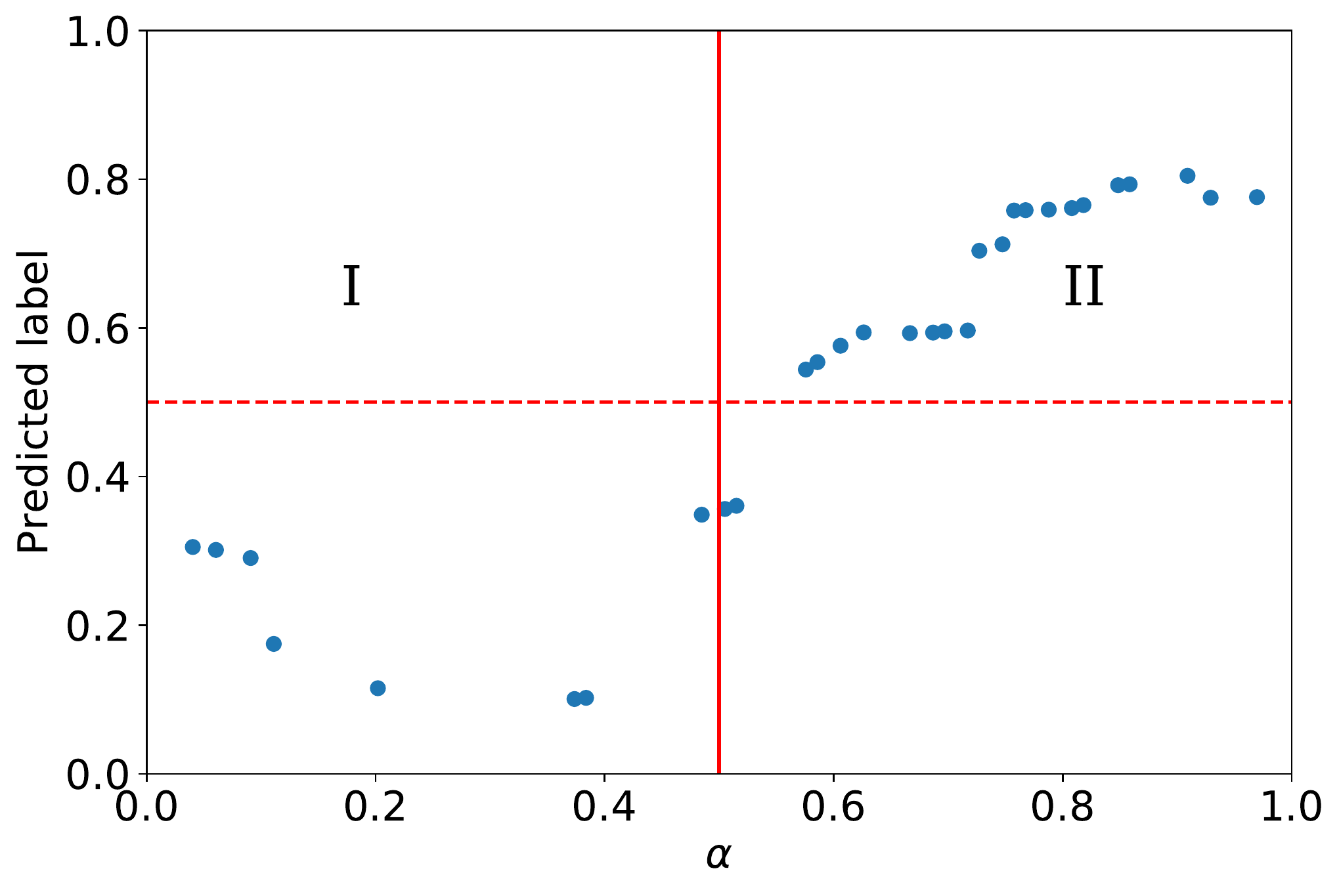}
    \caption{Results of learning on the random Hamiltonians model. Roman letters denote label classes. Reprinted from \cite{uvarov_machine_2020}.}
    \label{fig:learning_random_hams}
\end{figure}

\section{Discussion}

\subsection{Learning by confusion}
In this example, we knew the location of the phase transition point all along. But what if the task is to actually locate this point? This is also possible. The idea is as follows: let us pick a random value $h^*$, partition the data points across this point, and train the classifier. If the value $h^*$ is very far from all data points, all points will belong to one class, and the classifier will be able to partition them with 100\% accuracy. If $h^*$ does split the data points nontrivially, but is in the wrong location, then some points will be very close, but in the different classes, which will lead to a suboptimal training accuracy. Finally, the accuracy will be maximal when $h^*$ is the correct value. This behavior was observed for a classical neural network learning on spin states \cite{van_nieuwenburg_learning_2017}.

We reproduced the same experiment for our quantum neural network for the task of classifying the ground states of the TFI model. The result is shown in Fig. \ref{fig:learning_by_confusion}. Surprisingly enough, even though the accuracy does peak around the correct value of $h$, the characteristic W-shaped curve seen in \cite{van_nieuwenburg_learning_2017} is not observed. 
% \todo{Same for the XXZ model?}

\begin{figure}
    \centering
    \includegraphics[width=0.7\linewidth]{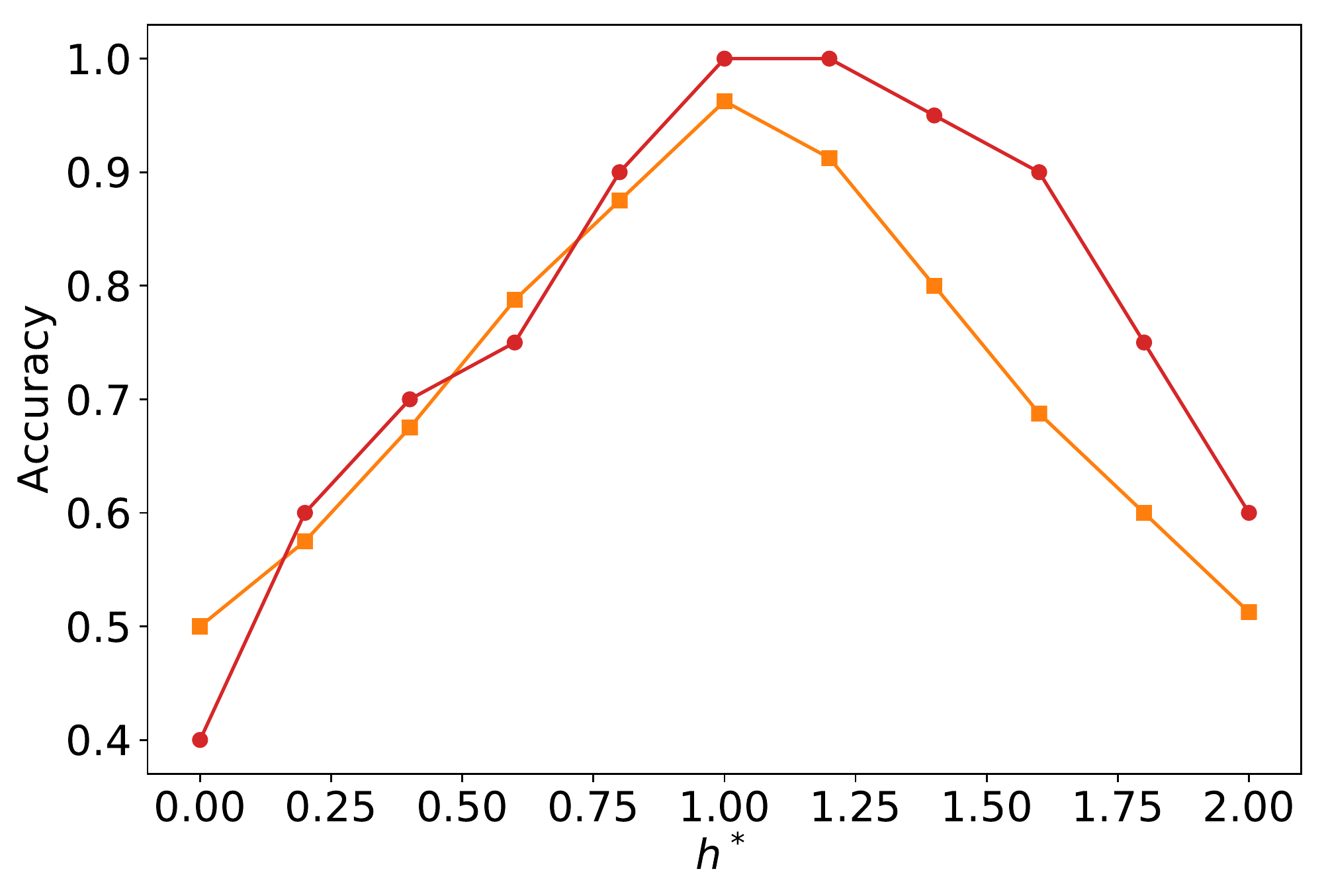}
    \caption{Train (squares) and test (circles) accuracy of the classifier for data marked with proposed threshold $h^*$.}
    \label{fig:learning_by_confusion}
\end{figure}

\subsection{Possible choices for the labeling procedure}

The unitary classifier circuit $U_{\mathrm{class}}(\boldsymbol{\varphi})$ by itself takes an input state and produces an output state. Then the measurements translate them into a probability distribution on the values of the output bits. The choice how to translate this distribution into a final label also has substantial freedom. In the pioneering work on such classifiers \cite{schuld_circuit-centric_2020}, the authors simply measure the output of the first qubit. In this situation, the label is assigned by the following rule:

\begin{equation}
    p_i = \frac{1}{2}\bra{\psi(\boldsymbol{\theta}_i)} U^\dagger_{\mathrm{class}} (\boldsymbol{\varphi})(1 + Z_1) U_{\mathrm{class}}(\boldsymbol{\varphi})\ket{\psi(\boldsymbol{\theta}_i)}.
\end{equation}

While this is a straightforward option, it does introduce some asymmetry in the model, which can only recovered by a long classifier circuit. For example, for circuit of depth $O(1)$, most qubits (and hence most of the input data) would not affect the measurement outcome in any way. To deal with this asymmetry, we proposed the majority-vote classifier described earlier.

New evidence suggests that the choice of the method to translate measurement results into labels may affect the trainability of the circuit. Such a choice can be translated to a Hamiltonian, and the locality of the latter may induce the barren plateaus in the optimization landscape \cite{uvarov_barren_2021,cerezo_cost-function-dependent_2020}. This locality dependence will be studied in more detail in Chapter \ref{chap:plateaus}. Still, the majority vote classifier is not necessarily a bad choice. It's true that the Hamiltonian $H_\text{vote}$ (\ref{eq:h_vote}) is highly non-local, but on the other hand it contains an exponential number of terms, which may offset the barren plateaus effect. Besides, since we count the number of ones in the output state, even flipping one qubit has a substantial effect on the output.

% \todo{In Skolik et al \cite{skolik_layerwise_2020}, the authors also measure just the first qubit. What does the effective Hamiltonian looks like for one layer of HEA? Also, if we grok that, we may understand when does layerwise learning actually help with BPs and such.}

\section{Conclusions}

In this chapter, we gave an overview of methods to transfer the idea of neural networks to quantum computing. We proposed and numerically implemented a quantum classifier and trained it on quantum states constructed by VQE. 
It is a nontrivial fact that the Ising model required fewer layers than the XXZ model. In the transverse field Ising model, the magnetization $\sum \langle \sigma_x^{(i)} \rangle$ as a function of magnetic field clearly points at the location of the phase transition points. This implies that the phases of the model are easy to classify. In the XXZ model, the transition at $J_z=1$ is a transition between a paramagnetic and an antiferromagnetic phase \cite{franchini_introduction_2017}. Neither of these phases shows spontaneous magnetic moment in absence of an external field, making it somewhat harder to discern the two phases. 
The proposed classification technique can be applied to any model that can be expressed as a spin model (e.g.~fermion problems can be mapped to spin problems by using Jordan--Wigner transformation or Bravyi--Kitaev transformation).

In addition, we recreated an experiment for learning by confusion \cite{van_nieuwenburg_learning_2017} in the quantum setup. We found that this technique does pinpoint the location of the phase transition, but the shape of the curve is substantially different from the classical. We hypothesize that in this situation the classifier just did not have enough expressive power to mark all states with the same label.

\chapter{Barren plateaus in variational algorithms}
\label{chap:plateaus}

In this chapter, we discuss the phenomenon of barren plateaus in VQAs. As the name suggests, the idea is that the optimization landscape of VQAs at randomly chosen points can often resemble a flat space with no good directions of search. This behavior of quantum circuits is in stark contrast with classical deep neural networks, where overparametrized neural networks are surprisingly good at finding good minima. We begin with technical details about random unitary operators and averaging over such operators. Then we move on to show how barren plateaus appear in generic overparametrized circuits. Finally, we report on our results regarding the barren plateaus in quantum circuits constructed out of small parametrized blocks, which are common in VQAs.

\section{Integration with respect to the Haar measure}

A typical ansatz quantum circuit is a large and complicated structure which is difficult to analyze. One way to study its properties is to analyze its average behavior with respect to sampling a random point in the parameter space. Such a sampling defines a probability measure on the unitary group $\mc{U}(d)$\footnote{To specify the probability measure, you need the space of elementary outcomes $\Omega$ -- in our case, the group $\mc{U}(d)$ -- and a certain algebra of its subsets called the Borel $\sigma$-algebra $\mathrm{Borel} (\Omega)$. A probability measure $P$ then has to map the subsets from $\mathrm{Borel} (\Omega)$ to $[0, 1]$ in a way that is (i) countably additive (ii) zero on the empty set and (iii) normalized: $P(\Omega) = 1$.}. 
It turns out that for many purposes this probability measure can be approximated by the probability measure on $\mc{U}(d)$ called the Haar measure.

\begin{definition}
    The \textit{Haar measure} $\mu: \mathrm{Borel} (\mc{U}(d)) \rightarrow [0, 1]$ on the unitary group $\mc{U}(d)$ is the unique left- and right-invariant probability measure on that group. That is, let $V \in \mc{U}(d)$ and let $\mathcal{A} \in \mathrm{Borel} (\mc{U}(d))$. Then $\mu(\mathcal{A}) = \mu(V \mathcal{A}) = \mu(\mathcal{A} V)$.
\end{definition}

The proof that such measure is unique can be found e.g.~in \cite{watrous_theory_2018}.

In what follows, we will need to evaluate certain integrals over the unitary group. The integrands are related to matrix multiplications involving unitary matrices. As such, they will have the form of polynomials over the entries of $U$ and $U^*$. 

The simplest integral of that form is $\int U^\dagger A U \mathrm{d} \mu$. In tensor network diagrams, it can be expressed as follows:

\begin{equation}
    \label{eq:uau_picture}
    \int U^\dagger  A U  
    \mathrm{d} \mu
    = \int 
    % \includegraphics[valign=m,width=0.1\linewidth]
    % {
    \adjustbox{raise=2.5pt}{
    \Qcircuit @C=1em @R=.7em 
    {& \gate{U^\dagger} &  \gate{A} 
    & \gate{U} & \qw
    }
    }
    \ \mathrm{d} \mu.
\end{equation}

Here a wire means the vector space $\mathbb{C}^d$, not just a single-qubit space. We will also need a second-order integral of that sort:

\begin{equation}
    \label{eq:uuabuu_picture}
    \int (U^\dagger \otimes U^\dagger) (A \otimes B) (U \otimes U) 
    \mathrm{d} \mu
    = \int 
    % \includegraphics[valign=m,width=0.1\linewidth]
    % {
    \adjustbox{raise=15pt}{
    \Qcircuit @C=1em @R=.7em 
    {& \gate{U^\dagger} &  \gate{A} 
    & \gate{U} & \qw
    \\
    & \gate{U^\dagger} &  \gate{B} 
    & \gate{U} & \qw
    }
    }
    \mathrm{d} \mu.
\end{equation}

For more generality, we can consider integrals that don't use any additional matrices in their construction and just instead consider the following integral:

\begin{equation}
    \label{eq:unitary_integral}
    \mathcal{I}_t = \int U^{\otimes t} \otimes (U^\dagger)^{\otimes t} \mathrm{d} \mu.
\end{equation}

Note that when the tensor power of $U$ is not equal to the tensor power of $U^\dagger$, this integral is zero. This can be seen from the translation invariance of $\mu$. Let $V = e^{\mathrm{i}\theta} I$, then
\begin{equation}
    \int U^{\otimes t} \otimes (U^\dagger)^{\otimes t'} \mathrm{d} \mu
    = \int (VU)^{\otimes t} \otimes (U^\dagger V^\dagger)^{\otimes t'} \mathrm{d} \mu
    = e^{\mathrm{i}\theta (t - t')} \int U^{\otimes t} \otimes (U^\dagger)^{\otimes t'} \mathrm{d} \mu,
\end{equation}
which is possible only if $t = t'$ or if both sides of the equation are zero.

A tensor network diagram for this value is this:

\begin{equation}
    \label{eq:utut_picture}
    \mc{I}_t
    = \int 
    \adjustbox{valign=m}{
    \Qcircuit @C=1em @R=.7em 
    {& \gate{U} & \qw \\
    & \dots \\
    & \gate{U} & \qw \\
    & \gate{U^\dagger} & \qw \\
    & \dots \\
    & \gate{U^\dagger} & \qw \\
    }
    }
    \ \mathrm{d} \mu.
\end{equation}

The translation invariance of the Haar measure is the feature that enables us to calculate these integrals \cite{samuel_u_1980,collins_integration_2006}. Indeed, from the (left and right) translation invariance it follows that for any $V \in \mc{U}(d)$, we have that 
\begin{align}
    \label{eq:haar_translation}
    (V^{\otimes t} \otimes \id^{\otimes t}) \mathcal{I}_t
    (\id^{\otimes t} \otimes (V^\dagger)^{\otimes t})  & = \mathcal{I}_t, \\
    (\id^{\otimes t} \otimes V^{\otimes t}) \mathcal{I}_t
    ((V^\dagger)^{\otimes t} \otimes \id^{\otimes t})  & = \mathcal{I}_t. 
\end{align}

In particular, the translation invariance implies \cite{samuel_u_1980} that $\mc{I}_t$ can only be a linear combination of permutations of tensor factors:

\begin{equation}
    \label{eq:haar_is_permutations}
    \mc{I}_t
    = \sum_{\sigma_A, \sigma_B \in S_t} C_{\sigma_A, \sigma_B}
    = \sum_{\sigma_A, \sigma_B \in S_t} C_{\sigma_A, \sigma_B}
    \adjustbox{raise=-40pt}{
    \includegraphics[width=0.25\linewidth]{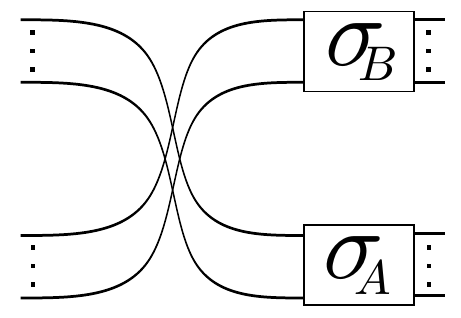}}
\end{equation}

The coefficients themselves are more difficult to obtain. We will restrict our attention to $t = 1$ and $t = 2$. The permutation group $S_1$ is trivial, so $\mathcal{I}_1 = c \cdot \mc{S}$. This implies that $\int U^\dagger A U \mathrm{d} \mu = (c \operatorname{Tr} A) \cdot I$:
\begin{equation}
    \label{eq:one_design_diagrams}
    \int 
    \adjustbox{raise=-3.5pt}{
    \includegraphics[width=0.15\linewidth]{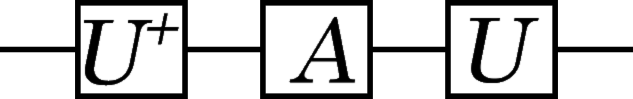}}
     \ \mathrm{d}\mu
    = \int 
    \adjustbox{raise=-8pt}{
    \includegraphics[width=0.1\linewidth]{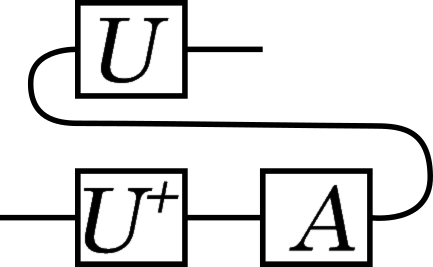}} \ \mathrm{d}\mu
    = c \ \adjustbox{raise=-8pt}{\includegraphics[width=0.12\linewidth]{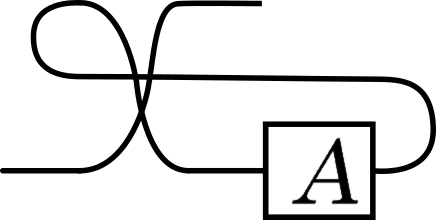}}
    = c \ \adjustbox{raise=-7pt}{\includegraphics[width=0.12\linewidth]{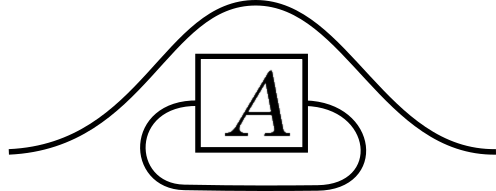}}
\end{equation}
The translation invariance implies that applying this unitary averaging twice will yield the same effect: $\int V^\dagger  U^\dagger A U V \mathrm{d}U \mathrm{d}V = \int U^\dagger A U \mathrm{d}U$. On the other hand, if we use (\ref{eq:one_design_diagrams}), we conclude that $c^2 \Tr \id \Tr A = c \Tr A$, which leads to $c = 1 / \Tr \id = 1/d$.

The case of $t = 2$ is somewhat more complicated. The group $S_2$ has two components, so the summation over $\{\sigma_A \times \sigma_B | \sigma_A, \sigma_B \in S_2\}$ has four terms:

\begin{equation}
    \mathcal{I}_2   
        = c_{II}\adjustbox{raise=-12pt}{
            \includegraphics[width=0.12\linewidth]{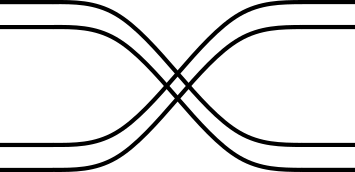}}
        + c_{IS}\adjustbox{raise=-12pt}{
            \includegraphics[width=0.12\linewidth]{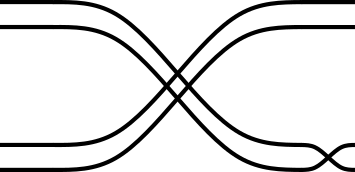}}
        + c_{SI}\adjustbox{raise=-12pt}{
            \includegraphics[width=0.12\linewidth]{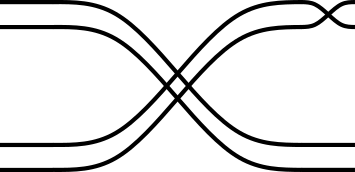}}
        + c_{SS}\adjustbox{raise=-12pt}{
            \includegraphics[width=0.12\linewidth]{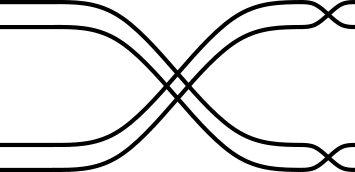}}.
\end{equation}

Now if we want to integrate $(U^\dagger \otimes U^\dagger) (A \otimes B) (U \otimes U)$ over $U$, we obtain the following formula:

\begin{multline}
    \label{eq:integrate_t2_no_coeffs}
    \int (U^\dagger \otimes U^\dagger) (A \otimes B) (U \otimes U) \mathrm{d} \mu = c_{II} (\Tr A \Tr B) \id \otimes \id +
    \\ + c_{SI} (\Tr A \Tr B) \mathcal{S} 
    + c_{IS} (\Tr AB) \id \otimes \id 
    + c_{SS} (\Tr AB) \mathcal{S}.
\end{multline}

The coefficients $c_{\sigma_A \sigma_B}$ here are no longer easy to calculate: the same trick as for $t=1$ will yield a complicated system of quadratic equations. One trick is to consider the shifts by unitaries of the type $\exp(\rmi \epsilon X)$ and take the derivative w.r.t.~$\epsilon$. This method, along with the graphical calculus, yields the result for $t=2$ \cite{poland_no_2020}.

The solution for arbitrary $t$ can be found using the representation theory of $S_n$ and $\mc{U}(d)$. Collins and \'Sniady \cite{collins_integration_2006} show that $c_{\sigma_A \sigma_B}$ are calculated using so-called Weingarten functions that depend on the characters of $\sigma_A \sigma^{-1}_B$ in different representations, and therefore only on its conjugacy class:
\begin{equation}
    c_{\sigma_A \sigma_B} = \operatorname{Wg} (\sigma_A \sigma^{-1}_B) = \frac{1}{t!^2} \sum_{\lambda \vdash t} \frac{\chi_\lambda(e)}{s_{\lambda, d} (1, ..., 1)}\chi_\lambda (\sigma_A \sigma^{-1}_B).
\end{equation}
Here $\chi_\lambda$ is the character of the representation $S_\lambda$, and $s_{\lambda, d}$ is a Schur polynomial in $d$ variables (like the irreducible representations of $S_n$, the Schur polynomials are indexed by Young diagrams).
With this equation, one can obtain the final formulas for $t=1$:
\begin{equation}
    \label{eq:haar_integral_1}
    \int U^\dagger A U \mathrm{d} \mu = \frac{\operatorname{Tr} A}{d} \cdot \id
\end{equation}
and $t=2$:
\begin{multline}
    \label{eq:haar_integral_2}
     \int (U^\dagger \otimes U^\dagger) (A \otimes B) (U \otimes U) \mathrm{d} \mu = \\
      \frac{1}{d^2 - 1} \left[  
         (\Tr A \Tr B  - \frac{1}{d} \Tr AB) \mathbbm{1} \otimes  \mathbbm{1} + (\Tr AB  - \frac{1}{d} \Tr A \Tr B) \mc{S} \right].
\end{multline}
Here $\mc{S}$ denotes a swap of two tensor components: $\mc{S} (v \otimes w) = w \otimes v$.

\section{Unitary \emph{t}-designs}
%hyperref is not happy about math mode in headers

Random parametrized quantum circuits, such as the ones used in VQE, are known to approximate the entire group of unitary operators $\mc{U}(2^n)$ in the following sense. When circuit is constructed gates randomly picked from a universal set of gates (meaning that it generates $\mc{U}(2^n)$), the distribution of the circuit approaches the Haar measure on the unitary group \cite{emerson_convergence_2005}. Because of this, in analyzing the behavior of circuits, one might be tempted to approximate them with random unitary operators. However, it is also known that sampling from the unitary group requires exponentially long circuits.

Still, there are distributions on the unitary group such that samples from such distributions are similar to those obtained from the Haar measure in the following sense.

\begin{definition}
    A probability distribution $\nu$ on the unitary group $\mc{U}(2^n)$ is a  \textit{unitary} $t$\textit{-design}
    if the expected value of any polynomial of power $t$ in the matrix elements of $U$ and $U^*$ with respect to $\nu$ is the same as that w.r.t.~the Haar measure on $\mc{U}(2^n)$.  
\end{definition}
If a unitary ensemble is a $t$-design, then it is obviously also a $(t-1)$-design. There are a few simple examples of $t$-designs:
\begin{enumerate}
    \item A random quantum circuit constructed by independently applying a random Pauli matrix (picked with equal probability from $\{\id, X, Y, Z\}$) to each qubit is a 1-design \cite{ambainis_private_2000}. To see this, observe that conjugation of a qubit by a random Pauli matrix replaces this qubit with an identity density matrix, exactly like the integration over the Haar measure did in (\ref{eq:one_design_diagrams}).
    \item A Clifford circuit uniformly picked from the Clifford group is a 3-design, but not a 4-design \cite{webb_clifford_2016,zhu_multiqubit_2017}.
\end{enumerate}

Another approximation still is to consider so-called \textit{approximate $t$-designs} and \textit{tensor product expanders}. There are many definitions for different purposes \cite{low_pseudo-randomness_2010}, but we picked those that we found the most convenient for a numerical experiment.

\begin{definition}
    An ensemble of random unitary gates $\nu$ is a $\lambda$-approximate tensor product expander (TPE) if $||\mathbb{E}_{Haar} (U^{\otimes t} \otimes (U^*)^{\otimes t}) - \mathbb{E}_\nu (U^{\otimes t} \otimes (U^*)^{\otimes t}) ||_p \leq \lambda$ for $p=\infty$. When such an equation holds for $p=1$, the ensemble $\nu$ is called a $\lambda$-approximate $t$-design.
\end{definition}

% \begin{definition}[Alternative definition of a $t$-design]
%     A family of random unitary gates $\nu$ is an $\epsilon$-approximate $t$-design if \todo{such and such operators are positive semidefinite}.
% \end{definition}

\begin{remark}
    There is a reason why the definition of TPE uses the letter $\lambda$. When the expected value $\mathbb{E}_\nu$ is treated as a quantum channel on $\mathbb{C}_d^{\otimes t}$ (in the sense of conjugating by random $U$), the number $\lambda$ provides an upper bound to its second eigenvalue. This also implies that if $\lambda < 1$, then composing $m$ copies of $\mathbb{E}_\nu$ will yield a $\lambda^m$-approximate TPE.
\end{remark}

Approximate $t$-designs are easier to come by than exact ones. In fact, if one constructs a quantum circuit out of Haar-random two-qubit gates, it will be an $\epsilon$-approximate $t$-design if it has a number of gates that scales polylogarithmically with $1/\epsilon$ and $t$ \cite{brandao_local_2016}. For qubits with connectivity arranged in a $D$-dimensional lattice, an approximate $t$-design appears for depth $\operatorname{poly}(t) \cdot n^{1/D}$ \cite{harrow_approximate_2018}.

\section{Barren plateaus}

We are now ready to formulate the barren plateaus phenomenon \cite{mcclean_barren_2018}. Informally, the observation is that for long enough quantum circuits, running VQAs might become time-inefficient because the derivative of the cost function being minimized will be exponentially small in the number of qubits. This observation assumes that the starting point for the VQA is chosen at random, and that the random selection of parameters leads to an ensemble of unitaries that can be described as an approximate 2-design.

Consider a parametrized quantum circuit in which we distinguish one gate: $U = U_A e^{-\mathrm{i} \theta F} U_B$, where $F$ is a Pauli string. Let our cost function be some local Hamiltonian $H = \sum c_i h_i$, where $h_i$ are Pauli strings, and $h_0 = I$. The energy to be minimized in VQE is then equal to 

\begin{equation}
    E = \bra{\psi_0} U_B^\dagger e^{\mathrm{i} \theta F} U_A^\dagger H U_A e^{-\mathrm{i} \theta F} U_B \ket{\psi_0}.    
\end{equation}

The energy derivative over $\theta$ is now equal to 

\begin{multline}
    \label{eq:partial_E}
    \partial_\theta E = \bra{\psi_0} U_B^\dagger e^{\mathrm{i} \theta F} (\mathrm{i} F) U_A^\dagger H U_A e^{-\mathrm{i} \theta F} U_B \ket{\psi_0} + \\
    +
    \bra{\psi_0} U_B^\dagger e^{\mathrm{i} \theta F} U_A^\dagger H U_A (-\mathrm{i} F) e^{-\mathrm{i} \theta F} U_B \ket{\psi_0} = \\
    = \mathrm{i} \bra{\psi_0} U_B^\dagger e^{\mathrm{i} \theta F}  [F, U_A^\dagger H U_A] e^{-\mathrm{i} \theta F} U_B \ket{\psi_0}.
\end{multline}

In this formula, $U_A$, $U_B$, and their Hermitian conjugates appear in the first power, and the expression for $(\partial_\theta E)^2$ would have all of them appear in the second power at most. The barren plateaus result assumes that $U_A$ and $U_B$ are chosen randomly from ensembles, either of which is a 2-design. This assumption is a good approximation for long parametrized quantum circuits \cite{brandao_local_2016}. 

\begin{proposition}
    Let $U_A$ or $U_B$ form a $1$-design. Then the expected value of $\partial_\theta E$ is equal to zero.
\end{proposition}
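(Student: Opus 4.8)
The plan is to start from the closed form \eqref{eq:partial_E} for $\partial_\theta E$ and to average separately over whichever of $U_A$, $U_B$ is assumed to form a $1$-design, using that $U_A$ and $U_B$ are drawn independently so that the total expectation factors as an iterated expectation. In both cases the only averaging identity needed is the first-order formula \eqref{eq:haar_integral_1}, $\int U^\dagger A U\,\mathrm{d}\mu = \frac{\Tr A}{d}\,\id$, which involves a monomial of degree one in $U$ and degree one in $U^*$ and hence already holds for any $1$-design, not merely for the Haar measure.

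First I would treat the case where $U_A$ forms a $1$-design. Since $[F,\,\cdot\,]$ and the inner product are linear and $U_B$ is independent of $U_A$, I can push the $U_A$-average inside the commutator in \eqref{eq:partial_E}, obtaining $\mathbb{E}[\partial_\theta E] = \mathrm{i}\,\mathbb{E}_{U_B}\bra{\psi_0} U_B^\dagger e^{\mathrm{i}\theta F}\big[F,\ \mathbb{E}_{U_A}[U_A^\dagger H U_A]\big] e^{-\mathrm{i}\theta F} U_B \ket{\psi_0}$. By \eqref{eq:haar_integral_1} the inner average equals $\tfrac{\Tr H}{d}\,\id$, which commutes with $F$, so the commutator, and therefore the whole expression, vanishes identically.

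Next I would treat the case where $U_B$ forms a $1$-design. Write $M := e^{\mathrm{i}\theta F}\,[F, U_A^\dagger H U_A]\, e^{-\mathrm{i}\theta F}$, so that $\partial_\theta E = \mathrm{i}\,\bra{\psi_0} U_B^\dagger M U_B \ket{\psi_0}$. Averaging over $U_B$ first and applying \eqref{eq:haar_integral_1} with $A = M$ gives $\mathbb{E}_{U_B}[U_B^\dagger M U_B] = \tfrac{\Tr M}{d}\,\id$, hence $\mathbb{E}_{U_B}[\partial_\theta E] = \tfrac{\mathrm{i}}{d}\Tr M$. Then cyclicity of the trace removes the conjugating exponentials, $\Tr M = \Tr[F, U_A^\dagger H U_A]$, and the trace of any commutator is zero; averaging the remaining constant $0$ over $U_A$ changes nothing, so $\mathbb{E}[\partial_\theta E] = 0$.

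I expect no genuine obstacle here; the argument is essentially bookkeeping. The only two points that need care are (i) checking that the $1$-design hypothesis is strong enough, which it is, because $U_A$ (resp.\ $U_B$) enters \eqref{eq:partial_E} only through the balanced combination $U^\dagger(\cdot)U$ of degree one in $U$ and in $U^*$, exactly the $t=1$ regime; and (ii) justifying the interchange of expectation with the commutator and with the inner product, which is immediate from linearity together with the independence of $U_A$ and $U_B$. The Haar-measure statement is recovered as the special case in which the relevant ensemble is itself the Haar measure.
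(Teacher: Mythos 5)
Your proof is correct and follows essentially the same route as the paper: when $U_A$ is a $1$-design the inner average becomes a multiple of the identity and the commutator with $F$ vanishes, and when $U_B$ is a $1$-design the expectation reduces to a multiple of the trace of a commutator, which is zero. Your write-up is in fact slightly more careful than the paper's in noting that only the degree-$(1,1)$ averaging identity is needed and in keeping the two cases cleanly separated.
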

\begin{proof}
    If $U_A$ is a 1-design, then $\int U_A^\dagger H U_A \mathrm{d}\mu = C \operatorname{Tr} (H) I$, which commutes with every matrix. If $U_B$ is a 1-design, then $\partial_\theta E$ is proportional to the trace of $[F, \int U_A^\dagger H U_A \mathrm{d} U_A]$, which is equal to zero, since $\operatorname{Tr} [A, B] = \operatorname{\Tr} AB - \operatorname{\Tr} BA = 0$.
\end{proof}

\begin{theorem}[after \cite{mcclean_barren_2018}]
    \label{thm:mcclean}
    Let either $U_A$ or $U_B$ form a $2$-design. If $\operatorname{card} H \in \operatorname{poly}(n)$, and the Pauli coefficients $c_i$ are bounded by a constant, then the variance $\operatorname{Var} \partial_\theta E \in O(2^{-n})$.
\end{theorem}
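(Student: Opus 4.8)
The plan is to reduce the variance to a second moment and then evaluate that moment by Haar integration over whichever of $U_A,U_B$ forms a $2$-design, applying the order-two formula~(\ref{eq:haar_integral_2}) exactly once in each case. Since a $2$-design is in particular a $1$-design, the Proposition above gives $\mathbb{E}\,\partial_\theta E = 0$ (the expectation being over the random gates $U_A,U_B$), hence $\operatorname{Var}\partial_\theta E = \mathbb{E}\big[(\partial_\theta E)^2\big]$. Writing $V = e^{-\rmi\theta F}U_B$ and $A = U_A^\dagger H U_A$, formula~(\ref{eq:partial_E}) reads $\partial_\theta E = \rmi\bra{\psi_0}V^\dagger[F,A]V\ket{\psi_0}$; as $F$ and $A$ are Hermitian, $[F,A]$ is anti-Hermitian, so $\partial_\theta E$ is $\rmi$ times a real number and $(\partial_\theta E)^2 = -\bra{\psi_0}V^\dagger[F,A]V\ket{\psi_0}^2$. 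Next I would apply the tensor-doubling identity $\bra{\phi}M\ket{\phi}^2 = \Tr\big[(M\otimes M)(\ket{\phi}\bra{\phi})^{\otimes 2}\big]$, so that a single copy of~(\ref{eq:haar_integral_2}) does the whole average; note also that if $U_B$ is a $2$-design then so is $V$, since multiplying a $t$-design by a fixed unitary preserves the defining moment conditions.

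\textbf{Case A: $U_B$ (hence $V$) is a $2$-design.} Integrate over $V$ using~(\ref{eq:haar_integral_2}) with both operators equal to $[F,A]$. The $\id\otimes\id$ term drops because $\Tr[F,A] = 0$, and since $\Tr\big[\mc{S}(\ket{\psi_0}\bra{\psi_0})^{\otimes 2}\big] = \braket{\psi_0}{\psi_0}^2 = 1$ one obtains $\mathbb{E}_V(\partial_\theta E)^2 = -\Tr([F,A]^2)\big/\big(d(d+1)\big)$ with $d = 2^n$. Because $[F,A]^\dagger = -[F,A]$, the numerator is $-\Tr([F,A]^2) = \Tr\big([F,A]^\dagger[F,A]\big) \ge 0$, and bounding $\Tr(M^\dagger M) \le d\,\|M\|^2$ together with $\|[F,A]\| \le 2\|F\|\,\|A\| = 2\|H\|$ gives $\mathbb{E}_V(\partial_\theta E)^2 \le 4\|H\|^2/(d+1)$, a bound that does not involve $U_A$. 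Finally $\|H\| \le \sum_i|c_i| \le \operatorname{card}(H)\max_i|c_i| \in \operatorname{poly}(n)$, and averaging over $U_A$ changes nothing, so $\operatorname{Var}\partial_\theta E \in O\big(\operatorname{poly}(n)\,2^{-n}\big)$, i.e.\ exponentially small in $n$.

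\textbf{Case B: $U_A$ is a $2$-design (with $U_B$ arbitrary).} Here $V$ is fixed; put $\rho_B = V\ket{\psi_0}\bra{\psi_0}V^\dagger$ and push the commutator onto $\rho_B$ by cyclicity, $\bra{\psi_0}V^\dagger[F,A]V\ket{\psi_0} = \Tr(\rho_B[F,A]) = -\Tr\big(A\,[F,\rho_B]\big)$, whence $(\partial_\theta E)^2 = -\Tr\big[(U_A^\dagger H U_A)^{\otimes 2}\,([F,\rho_B])^{\otimes 2}\big]$. Integrating over $U_A$ via~(\ref{eq:haar_integral_2}) with $A = B = H$, the $\id\otimes\id$ contribution again vanishes ($\Tr[F,\rho_B] = 0$), leaving $\mathbb{E}_{U_A}(\partial_\theta E)^2 = -\big(\Tr H^2 - \tfrac1d(\Tr H)^2\big)\,\Tr([F,\rho_B]^2)\big/(d^2-1)$, with both factors nonnegative. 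Using orthonormality of the Pauli strings (so that $\Tr H^2 - \tfrac1d(\Tr H)^2 = d\sum_{i\ge 1}c_i^2 \le d\cdot\operatorname{poly}(n)$) and the elementary computation $-\Tr([F,\rho_B]^2) = 2\big(1 - \bra{\chi}F\ket{\chi}^2\big) \le 2$ for $\ket{\chi} = V\ket{\psi_0}$, one gets $\mathbb{E}_{U_A}(\partial_\theta E)^2 \le 2d\,\operatorname{poly}(n)/(d^2-1) \in O\big(\operatorname{poly}(n)\,2^{-n}\big)$ uniformly in $U_B$; averaging over $U_B$ preserves the bound.

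\textbf{Main obstacle.} The difficulty is bookkeeping rather than conceptual: one must keep track of which factor is being averaged in each case and arrange the conditional estimate so that it is uniform in the remaining (non-design) factor; and the one genuinely non-obvious point is to notice that although $\Tr(H^2) = \Theta(2^n)$ is itself exponentially large, it enters only divided by $d^2-1$, so the net scaling is still $\Theta(2^{-n})$ up to the polynomial factor inherited from $\operatorname{card}(H)$ and the bounded $c_i$ (strictly the bound is $O(\operatorname{poly}(n)\,2^{-n})$, which is what the statement $O(2^{-n})$ is meant to convey). A secondary concern, should one want the statement for \emph{approximate} $2$-designs rather than exact ones, is controlling the correction $\big\|\mathbb{E}_{\mathrm{Haar}} - \mathbb{E}_\nu\big\|$ against the operator norms of the operators above; for the exact-design version stated here this is unnecessary, and the whole argument comes down to two applications of~(\ref{eq:haar_integral_2}) plus elementary norm and rank inequalities.
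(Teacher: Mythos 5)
Your proof is correct and follows essentially the same route as the paper: reduce the variance to the second moment via the $1$-design proposition, split into the two cases according to which factor is the $2$-design, and apply the order-two Haar formula \eqref{eq:haar_integral_2} once in each case to land on a bound of order $\operatorname{poly}(n)\,2^{-n}$. The only differences are cosmetic: in Case~A you bound $-\Tr([F,A]^2)$ by $4d\,\lVert H\rVert^2 \le 4d\,(\sum_i |c_i|)^2$ via operator norms, where the paper splits $U_A^\dagger H U_A$ into parts commuting and anticommuting with $F$ and uses orthogonality of Pauli strings to get the slightly sharper $4d\sum_i c_i^2$ (both suffice under the stated hypotheses); and your phrase ``the $\id\otimes\id$ term drops'' is not literally right --- its coefficient is $-\tfrac{1}{d}\Tr([F,A]^2)\neq 0$ --- but your final denominator $d(d+1)$ shows you did in fact retain that contribution, so nothing is affected.
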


\begin{proof}[Proof of Theorem \ref{thm:mcclean}]
    Recall that for any random variable $X$ the variance is equal to $\operatorname{Var} X = \mathbb{E} (X^2) - (\mathbb{E} X)^2$. However, since $\mathbb{E} \partial_\theta E = 0$, we only consider the expectation of the square:
    \begin{equation}
        \label{eq:varde_integral}
        \operatorname{Var} \partial_\theta E = \int \mathrm{d} U_A \mathrm{d} U_B
        (\bra{\psi_0} U_B^\dagger e^{\mathrm{i} \theta F}  [\mathrm{i} F, U_A^\dagger H U_A] e^{-\mathrm{i} \theta F} U_B \ket{\psi_0})^2.
    \end{equation}
    \textbf{1. $U_B$ is a 2-design.} We can integrate over $U_B$ using (\ref{eq:haar_integral_2}):
    \begin{equation}
        \int \mathrm{d} U_B 
        (U_B \ket{\psi_0} \bra{\psi_0} U_B^\dagger)^{\otimes 2}
        = \frac{1 - \frac{1}{d}}{d^2 - 1} (I \otimes I + \mc{S}_n).
    \end{equation}
    Here by $\mc{S}_n$ we mean the operator on $\mathbb{C}^{2^n} \otimes \mathbb{C}^{2^n}$ that swaps the copies: $\mc{S}_n (\ket{\phi} \otimes \ket{\zeta}) = \ket{\zeta} \otimes \ket{\phi}$. The dimension $d$ is henceforth equal to $2^n$. Denoting the prefactor $\frac{1 - \frac{1}{d}}{d^2 - 1}$ as $\alpha$, we obtain the following expression for the variance\footnote{Recall that $\bra{\psi}A \ket{\psi} = \Tr (A \ket{\psi} \bra{\psi})$.}:
    \begin{multline}
        \operatorname{Var} \partial_\theta E
        = \alpha \int \mathrm{d}U_A \Tr ([\mathrm{i}F, U_A^\dagger H U_A])^{\otimes 2} + \\
        + \alpha \int \mathrm{d}U_A \Tr ([\mathrm{i}F, U_A^\dagger H U_A][\mathrm{i}F, U_A^\dagger H U_A]).
    \end{multline}
    The first part of this expression is zero as the commutator is traceless. In the second part, expanding the commutator by definition and using the invariance of $\Tr$ to cyclic shifts, we obtain the following\footnote{Note that, because the identity matrix commutes with everything, we can without loss of generality assume that $c_0 = 0$.}:
    \begin{multline}
        \Tr ([\mathrm{i}F, U_A^\dagger H U_A][\mathrm{i}F, U_A^\dagger H U_A]) = -2 \Tr (F U_A^\dagger H U_A F U_A^\dagger H U_A) + \\
        + 2 \Tr (F^2 U_A^\dagger H^2 U_A).
    \end{multline}
    Since $F$ is a Pauli string, $F^2 = 1$. The second term then reduces to $2 \Tr H^2 = 2d \sum c_i^2$. As for the first term, denote $\tilde{H} := U_A^\dagger H U_A$, and denote $\tilde{H_c}$ the sum of terms in $\tilde{H}$ that commute with $F$ and $\tilde{H_a}$ the sum of terms that anticommute with $F$. Then
    \begin{equation}
        \Tr F \tilde{H} F \tilde{H} = \Tr \tilde{H_c} \tilde{H} - \Tr \tilde{H_a} \tilde{H},
    \end{equation}
    where we again used $F^2 = 1$. Now to bound this term, we will use the fact that with a scalar product $\Tr A^\dagger B$ the space of Pauli strings is a real Euclidean space (i.e.~Pythagoras theorem is applicable):
    \begin{align}
        |-2 \Tr F \tilde{H} F \tilde{H}| &= \left| -2 \Tr \tilde{H_c} \tilde{H} + 2 \Tr \tilde{H_a} \tilde{H} \right| \\ 
        &= \left| -2 \Tr \tilde{H_c} \tilde{H_c} -2 \Tr \tilde{H_c} \tilde{H_a} + 2 \Tr \tilde{H_a} \tilde{H_c}  + 2 \Tr \tilde{H_a} \tilde{H_a} \right|  \\
        &= \left| -2 ||H_c||^2 + 2 ||H_a||^2 \right| \\
        & \leq 2 ||H||^2 = 2 d \sum c_i^2.
    \end{align}
    Combining together all the factors, we obtain that 
    \begin{equation}
        \operatorname{Var} \partial_\theta E \leq 4 \alpha \sum c_i^2 d = \frac{4 \sum c_i^2}{d+1} \in O(2^{-n}).
    \end{equation}
    \textbf{2. $U_A$ is a 2-design.} We have to evaluate $\int [\mathrm{i}F, U^\dagger_A H U_A]^{\otimes 2}  \mathrm{d} U_A$. Equation \ref{eq:haar_integral_2} suggests that the integration will yield a term proportional to $I \otimes I$ -- which will vanish under the commutators -- and a term proportional to $\mc{S}_n$. To make sense of this, we will need to expand the commutators by definition, which will yield the following:
    \begin{equation}
        \int [\mathrm{i}F, U^\dagger_A H U_A]^{\otimes 2}  \mathrm{d} U_A
        = \frac{\Tr H^2 - \frac{1}{d} (\Tr H)^2}{d^2 - 1}\left(2(\mathrm{i}F \otimes \mathrm{i}F) \mc{S}_n + 2 \mc{S}_n \right).
    \end{equation}
    Substituting this into \ref{eq:varde_integral} will yield:
    \begin{multline}
        \operatorname{Var} \partial_\theta E
        = 2\frac{\Tr H^2 - \frac{1}{d} (\Tr H)^2}{d^2 - 1}
        \int \mathrm{d} U_B 
        \left(\Tr U_B \ket{\psi_0} \bra{\psi_0} U_B^\dagger \mathrm{i}F U_B \ket{\psi_0} \bra{\psi_0} U_B^\dagger \mathrm{i}F +  \right. \\
        + \left. \Tr U_B \ket{\psi_0} \bra{\psi_0} U_B^\dagger U_B \ket{\psi_0} \bra{\psi_0} U_B^\dagger \right).
    \end{multline}
    The first integrand is in $[-1, 0]$, the second integrand is equal to 1. The enumerator of the fraction is $O(d)$, hence the entire expression is in $O(2^{-n})$.
\end{proof}

\section{Locality dependence of barren plateaus}

In this section, we will discuss the situation when the entire ansatz circuit cannot be treated as a 2-design. However, we will make an assumption that the ansatz consists of smaller blocks that can be described as local 2-designs. In this situation, the key thing that influences the onset of barren plateaus is the locality of the operators comprising the cost function. This was first noted in \cite{cerezo_cost-function-dependent_2020}, but here we approach the problem in a slightly different way and do not impose any limitations on the way these blocks are placed within the circuit.

For this section, we will mostly consider the Heisenberg picture of VQE. We consider the Eq.~\ref{eq:varde_integral} in the Heisenberg picture, that is, we think of all operators as acting on $H \otimes H$, while the state $\ket{\psi_0}$ is kept fixed.

The key assumption that is made for analysis of local circuits is that the blocks comprising such circuits are local 2-designs. A block is nothing more than a set of adjacent gates considered together as a single gate. To avoid confusion, we will also demand that the depiction of a block fits entirely into some rectangle, and that said rectangle does not contain gates not belonging to the block.

\subsection{Mixer channels}

The calculation of variance involves many operations on two copies of the same Hilbert space, so we will often use pairs of Pauli strings $h \otimes h$.

\begin{definition}[Super Pauli strings]
If $h$ is a Pauli string, then we will call $h \otimes h$ the induced \emph{super Pauli string}. If a Pauli string acts on qubits labeled $1, 2, \dots, n$, then a super Pauli string acts on qubits labeled $1, 2, \dots, n, 1', 2', \dots, n'$.
\end{definition}

We will denote super Pauli strings as $(\sigma_1 \otimes ... \otimes \sigma_n)^{\otimes 2}$, omitting the tensor product $\otimes$ when 
% \out{non-ambiguous} 
there is no ambiguity. When necessary, we will mark the variables related to the second copy (on the reader's right) with an apostrophe.
For example, a Pauli string $h = X \otimes \mathbbm{1} \otimes \mathbbm{1}$ acts nontrivially on the first out of $n = 3$ qubits. A super Pauli string $h \otimes h = (X \otimes \mathbbm{1} \otimes \mathbbm{1})^{\otimes 2}$ acts nontrivially on qubits $1$ and $1'$.

\begin{definition}[Causal cone] 
    Let $U$  be an ansatz, and $h$ a Pauli string. 
    A gate (or a block of gates) $V$ is in the \emph{causal cone} $C(h, U)$ of $h$ under ansatz $U$, if that gate or block cannot be eliminated from the conjugate $U^\dagger h U$. We denote as $|C(h, U)|$ the support of this causal cone, i.e.~the number of qubits on which $U^\dagger h U$ can act nontrivially.
\end{definition}{}

% \todo{We probably only care about the qubits in the causal cone, not about the operators. Maybe that loses the ``cone-ness'' of the cone, but still consider changing the def.}

For example, Figure \ref{fig:causal_cone} depicts a checkerboard ansatz \cite{uvarov_machine_2020}, or alternating layered ansatz \cite{cerezo_cost-function-dependent_2020} acting on six qubits and consisting of three layers. Relative to a Pauli string $\mathbbm{1} \otimes \mathbbm{1} \otimes \mathbbm{1} \otimes X \otimes \mathbbm{1} \otimes \mathbbm{1}$, the causal cone for this ansatz consists of blocks $G_1$, $G_2$, $G_3$, $G_4$, $G_5$, and $G_7$. The support of this causal cone consists of all six qubits.

\begin{figure}
    \centering
    \includegraphics[width=0.8\textwidth]{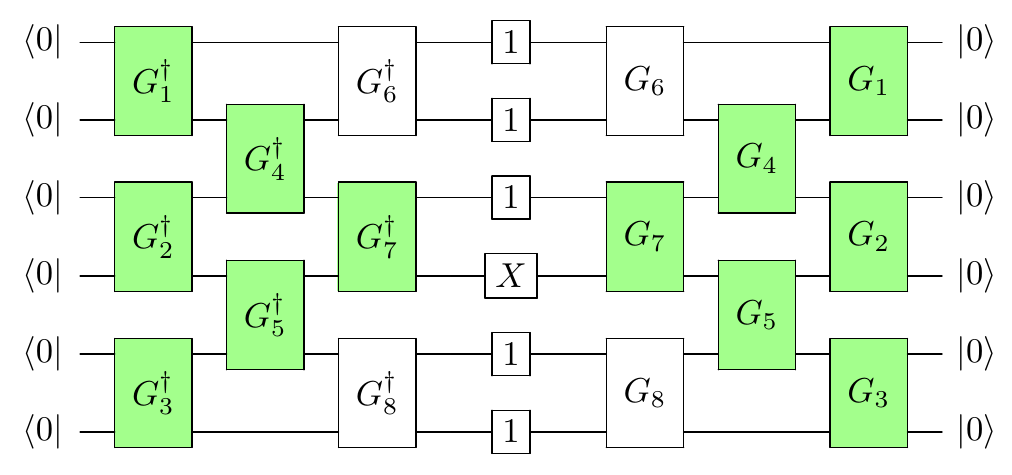}
    \caption{A causal cone of a Pauli string. Highlighted gates do not cancel in $U^\dagger h U$, where $U$ is the quantum circuit pictured. Reprinted from \cite{uvarov_barren_2021}.}
    \label{fig:causal_cone}
\end{figure}{}

\begin{definition}[Mixer]

    Let $\mc{Y}$ be a subset of the qubit registry. Define the \textit{local mixing operator}, or simply a \textit{mixer}\footnote{The mixer is in fact a quantum channel since it is defined by its own Kraus decomposition.}  $M_{\mc{Y}}: \operatorname{End} (\mc{H}\otimes \mc{H}) \rightarrow \operatorname{End} (\mc{H}\otimes \mc{H})$ as follows:
    \begin{equation}
    \begin{aligned}
        \label{eq:m2}
        & M_{ \mc{Y}} (h_1 \otimes h_2) = \int d\mu_{\mc{Y}} (U)
        (U^\dagger \otimes U^\dagger)
        (h_1 \otimes h_2)
        (U \otimes U),
    \end{aligned}{}
    \end{equation}{}
    where $\mu_{\mc{Y}}$ is the Haar distribution of unitaries acting nontrivially on $\mc{Y}$ and trivially on all other qubits.
    
\end{definition}

\begin{proposition}
    \label{prop:m2_decomposed}
    Let $h$ be a Pauli string. If its substring $h_{\mc{Y}}$ is nontrivial, then
    
    \begin{equation}
        M_{\mc{Y}}(h \otimes h) = \frac{1}{4^{|\mc{Y}|} - 1} \left( \sum_{\sigma_\mc{Y} \neq \mathbbm{1}} (\sigma_{\mc{Y}} \otimes h_{\mc{H} \setminus \mc{Y}})^{\otimes 2}\right),
    \end{equation}{}
    where the summation extends over all nontrivial Pauli substrings $\sigma_{\mc{Y}}$.
    Otherwise,  $M_{\mc{Y}}(h \otimes h) = h \otimes h$.
    
\end{proposition}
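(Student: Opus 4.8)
The plan is to reduce the statement to the two Haar-integral formulas for $t=1$ and $t=2$ that were derived earlier in the excerpt, namely equations (\ref{eq:haar_integral_1}) and (\ref{eq:haar_integral_2}), applied on the Hilbert space of the qubits in $\mc{Y}$. First I would split a Pauli string $h$ as $h = h_{\mc{Y}} \otimes h_{\mc{H} \setminus \mc{Y}}$, where $h_{\mc{Y}}$ is the restriction to the qubits acted on by the mixer and $h_{\mc{H} \setminus \mc{Y}}$ is the rest. Since the mixing unitaries $U$ act trivially outside $\mc{Y}$, the factors $h_{\mc{H} \setminus \mc{Y}}^{\otimes 2}$ pull out of the integral untouched, and the problem becomes computing $\int d\mu_{\mc{Y}}(U)\, (U^\dagger \otimes U^\dagger)(h_{\mc{Y}} \otimes h_{\mc{Y}})(U \otimes U)$ over the Haar measure on $\mc{U}(2^{|\mc{Y}|})$.

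Next I would treat the easy case: if $h_{\mc{Y}} = \mathbbm{1}$, then the integrand is just $\mathbbm{1} \otimes \mathbbm{1}$, independent of $U$, and the integral equals $h \otimes h$. For the nontrivial case I would invoke (\ref{eq:haar_integral_2}) with $A = B = h_{\mc{Y}}$ and $d = 2^{|\mc{Y}|}$. Here $\operatorname{Tr} h_{\mc{Y}} = 0$ because $h_{\mc{Y}}$ is a nontrivial Pauli string, while $\operatorname{Tr} h_{\mc{Y}}^2 = \operatorname{Tr} \mathbbm{1} = d$. Substituting, the $\mathbbm{1}\otimes\mathbbm{1}$ coefficient becomes $\frac{1}{d^2-1}(0 - \frac{1}{d}\cdot d) = \frac{-1}{d^2-1}$ and the swap coefficient becomes $\frac{1}{d^2-1}(d - 0) = \frac{d}{d^2-1}$, so the integral equals $\frac{1}{d^2-1}\left(-\mathbbm{1}\otimes\mathbbm{1} + d\,\mc{S}\right)$ on $\mc{Y}\otimes\mc{Y}$.

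The remaining step is to re-express $\frac{1}{d^2-1}(d\,\mc{S} - \mathbbm{1}\otimes\mathbbm{1})$ as the claimed sum over Pauli substrings. The identity I would use is the Pauli expansion of the swap operator on $|\mc{Y}|$ qubits: $\mc{S} = \frac{1}{d}\sum_{\sigma_{\mc{Y}}} \sigma_{\mc{Y}} \otimes \sigma_{\mc{Y}}$, where the sum runs over all $d^2 = 4^{|\mc{Y}|}$ Pauli strings on $\mc{Y}$ (including the identity). This is the standard fact that the Pauli strings, normalized by $1/\sqrt{d}$, form an orthonormal basis, so $\sum_\sigma \frac{1}{d}\sigma\otimes\sigma$ implements the swap. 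Then $d\,\mc{S} - \mathbbm{1}\otimes\mathbbm{1} = \sum_{\sigma_{\mc{Y}}}\sigma_{\mc{Y}}\otimes\sigma_{\mc{Y}} - \mathbbm{1}\otimes\mathbbm{1} = \sum_{\sigma_{\mc{Y}}\neq\mathbbm{1}}\sigma_{\mc{Y}}\otimes\sigma_{\mc{Y}}$, which divided by $d^2-1 = 4^{|\mc{Y}|}-1$ gives exactly the proposition's right-hand side once the untouched factor $h_{\mc{H}\setminus\mc{Y}}^{\otimes 2}$ is reinserted (giving the $(\sigma_{\mc{Y}}\otimes h_{\mc{H}\setminus\mc{Y}})^{\otimes 2}$ summands).

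The main obstacle, if any, is bookkeeping rather than anything deep: one must be careful that the tensor-factor ordering in $h\otimes h$ versus $(\sigma_{\mc{Y}}\otimes h_{\mc{H}\setminus\mc{Y}})^{\otimes 2}$ is interpreted consistently (the "two copies" convention with primed qubits from the super-Pauli-string definition), and that the swap $\mc{S}$ in (\ref{eq:haar_integral_2}) acts only on the $\mc{Y}$-copies and commutes past the inert $h_{\mc{H}\setminus\mc{Y}}$ factors — which it does since those are literally the same operator on both copies, so $\mc{S}(h_{\mc{H}\setminus\mc{Y}}\otimes h_{\mc{H}\setminus\mc{Y}}) = (h_{\mc{H}\setminus\mc{Y}}\otimes h_{\mc{H}\setminus\mc{Y}})\mc{S}$ trivially. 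I would also note explicitly that $\mu_{\mc{Y}}$ being Haar on the subgroup acting only on $\mc{Y}$ is exactly what licenses applying the single-system formula (\ref{eq:haar_integral_2}) with $d = 2^{|\mc{Y}|}$ rather than $2^n$.
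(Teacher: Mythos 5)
Your proof is correct and follows essentially the same route as the paper's: both apply the $t=2$ Haar integral formula (\ref{eq:haar_integral_2}) on the $\mc{Y}$ subsystem with $A=B=h_{\mc{Y}}$ and $d=2^{|\mc{Y}|}$, use $\Tr h_{\mc{Y}}=0$ and $\Tr h_{\mc{Y}}^2=d$ for nontrivial $h_{\mc{Y}}$, and then expand the swap operator in the Pauli basis as $\mc{S}_{\mc{Y}}=\frac{1}{d}\sum_{\sigma}\sigma^{\otimes 2}$ so that $d\,\mc{S}_{\mc{Y}}-\id\otimes\id$ becomes the sum over nontrivial super Pauli substrings. The trace bookkeeping and the final coefficient $1/(4^{|\mc{Y}|}-1)$ match the paper exactly.
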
{}

\begin{proof}
    We first apply formula (\ref{eq:haar_integral_2}):
    \begin{equation}
        % M_{\mc{Y}}(h \otimes h) = \frac{1}{4^{|\mc{Y}|} - 1} 
        % (h \otimes h)_{\mc{H} \setminus \mc{Y}} \left[ \left( \Tr  (h \otimes h)_{\mc{Y}} \right) (\mathbbm{1} \otimes \mathbbm{1})_{\mc{Y}}
        % \right]
    \begin{aligned}
    \label{eq:m2_proof_1}
        & M_{\mc{Y}}(h \otimes h) = \\
        & = \frac{1}{4^{|\mc{Y}|} - 1} (h \otimes h)_{\mc{H} \setminus \mc{Y}} \otimes \left[  
        \left(\Tr (h \otimes h)_{\mc{Y}}  - \frac{1}{2^{|\mc{Y}|}} \Tr \mc{S}_{\mc{Y}} (h \otimes h)_{\mc{Y}} \right) \mathbbm{1} \otimes  \mathbbm{1} + \right. \\
        & \left. + \left(\Tr \mc{S}_{\mc{Y}} (h \otimes h)_{\mc{Y}}  - \frac{1}{2^{|\mc{Y}|}} \Tr (h \otimes h)_{\mc{Y}}\right) \mc{S}_\mc{Y} \right],
    \end{aligned}{} 
    \end{equation}
    where $\mc{S}_\mc{Y}$ is the swap operator permuting pairs of qubits $(i, i')$ for $i \in \mc{Y}$. It means that $\mc{S}_\mc{Y}$ is a tensor product of two-qubit swap gates $\mc{S}_2$. 
    First, note that $\Tr \mc{S}_{\mc{Y}} (h \otimes h)_{\mc{Y}} = 2^{|\mc{Y}|}$ and $\Tr (h \otimes h)_{\mc{Y}}$ is equal to zero for nontrivial $h_\mc{Y}$ and $4^{|\mc{Y}|}$ for a trivial substring. Next, $\mathcal{S}_2$ is decomposed as 
    \begin{equation}
    \label{eq:swap_decomp}
        \mathcal{S}_2 = \frac{1}{2}\left(X \otimes X + Y \otimes Y + Z \otimes Z + \mathbbm{1} \otimes \mathbbm{1} \right).
    \end{equation}
    Applying this decomposition to $\mc{S}_\mc{Y}$ yields a sum of all possible super Pauli strings (including the trivial string):
    \begin{equation}
        \mc{S}_{\mc{Y}} = \frac{1}{2^\mc{|Y|}} \sum_{\sigma_1, ..., \sigma_\mc{|Y|}}
        (\sigma_1 \otimes \sigma_2 \otimes ... \otimes \sigma_\mc{|Y|})^{\otimes 2}.
    \end{equation}
    Substituting this decomposition into \eqref{eq:m2_proof_1} we recover the desired result.
\end{proof}{}

This means that, after application of the local mixer channel $M_\mc{Y}$ to a super Pauli string $h \otimes h$, we forget the exact Pauli operators in the registry $\mc{Y}$, and replace them will a uniform linear combination of all possible nontrivial Pauli strings.For example, consider a single-qubit mixing channel $M_2$ acting on qubits $(2, 2')$.
Then $M_2 ((XX)^{\otimes 2}) = \frac{1}{3} ( (XX)^{\otimes 2} + (XY)^{\otimes 2} + (XZ)^{\otimes 2})$. On the contrary, a string $(XI)^{\otimes 2}$ will be invariant under the mixing channel: $M_2 ((XI)^{\otimes 2}) = (XI)^{\otimes 2}$.

Importantly, mixer channels do not accept Pauli strings of the type $h_1 \otimes h_2$, where $h_1 \neq h_2$. The following statement can also be proven by application of formula (\ref{eq:haar_integral_2}).

\begin{proposition}
    \label{prop:mixer_kills_asymmetry}
    Let $h_1, h_2$ be Pauli strings such that their restriction on a registry $\mc{Y}$ is different. Then $\mc{M_Y} (h_1 \otimes h_2) = 0$.
\end{proposition}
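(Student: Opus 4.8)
The plan is to reduce the statement to the second–moment Haar integral formula \eqref{eq:haar_integral_2} applied on the subregistry $\mc{Y}$, in exactly the way the proof of Proposition \ref{prop:m2_decomposed} does. First I would split each Pauli string into its part supported on $\mc{Y}$ and its part on the complement, $h_j = (h_j)_{\mc{Y}} \otimes (h_j)_{\mc{H}\setminus\mc{Y}}$ for $j=1,2$. Since every $U$ entering the definition of $M_{\mc{Y}}$ is of the form $V \otimes \id_{\mc{H}\setminus\mc{Y}}$ with $V$ Haar-random on $\mc{Y}$, the conjugation only acts on the $\mc{Y}$-part, and after reorganising the two ``copy'' tensor factors one obtains
\begin{equation*}
    M_{\mc{Y}}(h_1\otimes h_2) = \bigl( (h_1)_{\mc{H}\setminus\mc{Y}}\otimes (h_2)_{\mc{H}\setminus\mc{Y}}\bigr)\otimes \int d\mu_{\mc{Y}}(V)\,(V^\dagger\otimes V^\dagger)\bigl((h_1)_{\mc{Y}}\otimes (h_2)_{\mc{Y}}\bigr)(V\otimes V).
\end{equation*}

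Next I would evaluate the remaining integral with \eqref{eq:haar_integral_2}, taking $d = 2^{|\mc{Y}|}$, $A = (h_1)_{\mc{Y}}$ and $B = (h_2)_{\mc{Y}}$. The outcome is a linear combination of $\id\otimes\id$ and $\mc{S}_{\mc{Y}}$ whose two coefficients are built from $\Tr (h_1)_{\mc{Y}}\,\Tr (h_2)_{\mc{Y}}$ and $\Tr\bigl((h_1)_{\mc{Y}}(h_2)_{\mc{Y}}\bigr)$. So the whole proof comes down to showing that both of these quantities vanish when $(h_1)_{\mc{Y}}\neq (h_2)_{\mc{Y}}$.

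For $\Tr\bigl((h_1)_{\mc{Y}}(h_2)_{\mc{Y}}\bigr)$: the product of two Pauli strings is, up to a phase in $\{\pm1,\pm\rmi\}$, again a Pauli string, and it is proportional to the identity exactly when the two strings coincide; since they differ on at least one qubit of $\mc{Y}$, the product is proportional to a nontrivial Pauli string and hence traceless. For $\Tr (h_1)_{\mc{Y}}\,\Tr (h_2)_{\mc{Y}}$: because $(h_1)_{\mc{Y}}\neq (h_2)_{\mc{Y}}$, they cannot both be the trivial substring, so at least one of them is nontrivial and has zero trace, making the product of the two traces zero. Feeding both vanishing quantities into \eqref{eq:haar_integral_2} collapses the integral to the zero operator, so $M_{\mc{Y}}(h_1\otimes h_2)=0$.

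I do not expect a real obstacle: the argument is essentially bookkeeping with the Haar formula, identical in structure to Proposition \ref{prop:m2_decomposed}. The only place the hypothesis is genuinely used — and the only step that needs a sentence of care — is the tracelessness claim for the product of distinct Pauli strings together with the observation that at least one of the two $\mc{Y}$-substrings must be nontrivial; everything else is mechanical.
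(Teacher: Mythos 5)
Your proof is correct and takes exactly the route the paper indicates: the paper gives no explicit proof of Proposition~\ref{prop:mixer_kills_asymmetry}, remarking only that it follows by applying formula \eqref{eq:haar_integral_2}, and your argument supplies precisely those details (factor out the complement, apply the $t=2$ Haar formula on $\mc{Y}$, and observe that both $\Tr A\,\Tr B$ and $\Tr(AB)$ vanish for distinct Pauli substrings). The case analysis — distinct nontrivial substrings give a traceless product, and a trivial/nontrivial pair kills both traces as well — is complete and correct.
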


\subsubsection{Action of many mixer channels}

When we consider a circuit made of many blocks, we essentially model them with a sequence of local mixers. 

\begin{proposition}
    \label{prop:paulis_decouple}
    Let $\mc{Y}_1, ..., \mc{Y}_N$ be a collection of qubit subsets
    such that $\mc{Y}_1 \cup ... \cup \mc{Y}_N$ contains all $n$ qubits (the subsets are allowed to intersect). Let $h_1, h_2$ be two distinct Pauli strings. Then, 
    $M_{ \mc{Y}_N} \circ \dots \circ  M_{ \mc{Y}_1} (h_1 \otimes h_2) = 0.$
\end{proposition}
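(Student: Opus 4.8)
The plan is to exploit two structural facts about the mixers: (i) $M_{\mc{Y}_k}$ acts by conjugation only on the qubits of $\mc{Y}_k$ (and their primed copies), so it leaves the Pauli content on every qubit outside $\mc{Y}_k$ untouched; and (ii) by Proposition~\ref{prop:mixer_kills_asymmetry}, $M_{\mc{Y}_k}$ annihilates any operator whose two tensor factors disagree somewhere inside $\mc{Y}_k$. Since $h_1 \neq h_2$, the two strings differ in at least one position, say qubit $j$. I would track this ``disagreement at $j$'' through the whole composition and show it collapses to zero the moment a mixer touching $j$ is applied.

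Concretely, first set up notation: expand any element of $\operatorname{End}(\mc{H}\otimes\mc{H})$ in the basis of pairs of Pauli strings $P\otimes Q$, and call such an operator \emph{$j$-asymmetric} if every $P\otimes Q$ occurring with nonzero coefficient has $P_j \neq Q_j$. Step~one: $h_1\otimes h_2$ is $j$-asymmetric by the choice of $j$. Step~two (the locality lemma): if $A$ is $j$-asymmetric and $j\notin\mc{Y}_k$, then $M_{\mc{Y}_k}(A)$ is again $j$-asymmetric; this is because conjugation by a unitary supported on $\mc{Y}_k$ leaves the $\mc{H}\setminus\mc{Y}_k$ factor of a Pauli string unchanged, so every $P'\otimes Q'$ appearing in $M_{\mc{Y}_k}(P\otimes Q)$ satisfies $P'_j = P_j$ and $Q'_j = Q_j$ whenever $j\notin\mc{Y}_k$, and linearity finishes it. (If one prefers an explicit expansion to this support argument, the same $t=2$ Weingarten computation with (\ref{eq:haar_integral_2}) that underlies Propositions~\ref{prop:m2_decomposed} and~\ref{prop:mixer_kills_asymmetry}, carried out while keeping the $\mc{H}\setminus\mc{Y}_k$ factors as spectators, yields it directly.) Step~three: if $A$ is $j$-asymmetric and $j\in\mc{Y}_k$, then $M_{\mc{Y}_k}(A)=0$, since every term $P\otimes Q$ of $A$ then has $P|_{\mc{Y}_k}\neq Q|_{\mc{Y}_k}$ (they already differ at $j\in\mc{Y}_k$), so Proposition~\ref{prop:mixer_kills_asymmetry} kills each term and linearity gives zero.

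To assemble the proof: since $\mc{Y}_1\cup\dots\cup\mc{Y}_N$ contains all $n$ qubits, some set contains $j$; let $k_0$ be the smallest index with $j\in\mc{Y}_{k_0}$. By Step~two and induction on $k$, the partial composition $M_{\mc{Y}_{k_0-1}}\circ\dots\circ M_{\mc{Y}_1}(h_1\otimes h_2)$ is $j$-asymmetric (the case $k_0=1$ being the empty composition). Applying Step~three with $k=k_0$ makes it vanish, and since the remaining mixers $M_{\mc{Y}_N}\circ\dots\circ M_{\mc{Y}_{k_0+1}}$ are linear they send $0$ to $0$. Hence $M_{\mc{Y}_N}\circ\dots\circ M_{\mc{Y}_1}(h_1\otimes h_2)=0$.

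I do not expect a genuine obstacle; the only point needing care is the locality lemma (Step~two), where the write-up must commit to one of the two justifications above — either a clean statement that a channel supported on a subsystem acts as the identity on the complementary factor, or the explicit spectator-factor computation with (\ref{eq:haar_integral_2}). A secondary bookkeeping matter is that intermediate operators are no longer single Pauli pairs but linear combinations of them, so every claim must be stated for, and be closed under, such combinations; this is exactly why $j$-asymmetry is phrased in terms of ``every term with nonzero coefficient.''
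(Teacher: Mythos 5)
Your proposal is correct and follows essentially the same route as the paper's own proof: fix a position where the two strings differ, note that mixers not supported on that position preserve the disagreement there, and invoke Proposition~\ref{prop:mixer_kills_asymmetry} the first time a mixer does cover it, which must happen since the supports cover all qubits. Your write-up is merely a more careful version of the same argument, making explicit the invariant (``$j$-asymmetry'') and the closure under linear combinations that the paper's terser proof leaves implicit.
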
{}
\begin{proof}
    Let the strings be different in position $k$. If $k \notin \mc{Y}_j$, then $M_{ \mc{Y}_j}$ will output a sum of Pauli strings $h_{1, j} \otimes h_{2, j}$ that still have different entries in positions $k$ and $k'$. If $k \in \mc{Y}_j$, the output will be zero by virtue of Proposition~\ref{prop:mixer_kills_asymmetry}. Since every qubit is contained in the support of some mixer, the latter must happen for some $j$.
\end{proof}

\begin{remark}
    The sum of coefficients of the Pauli strings is conserved by the mixers. However, suppose that the mixers act on all qubits in the support of some nontrivial Pauli string $h$, with $C$ being support the causal cone of $h$ under the mixers. After the action of the mixers, $h \otimes h$ is replaced by the sum of all possible super Pauli strings that are nontrivial on $|C|$ trivial on the complement $[n] \backslash |C|$, with the weights equal to $(4^{-|C|} - 1)$.
\end{remark}

\subsubsection{Commutator operator}

Equation (\ref{eq:varde_integral}) contains an expression depending on $([\rmi F, U^\dagger H U])^{\otimes 2}$ for some $U$, $H$, and $F$. For convenience we introduce a commutator operator $\mc{C}_{F} \in \operatorname{End}(\mc{H} \otimes \mc{H})  \rightarrow \operatorname{End}(\mc{H} \otimes \mc{H})$, which maps $A \otimes B$ to $[\rmi F, A] \otimes [\rmi F, B]$. Unlike the mixer operator, $\mc{C}_F$ is no longer a quantum channel since it does not preserve trace. Nonetheless, it is important for our purposes to consider its action on Pauli strings. We will henceforth assume that $F$ is a Pauli string. Graphically, we can express this superoperator like this:
% (see Fig. \ref{fig:mixer_and_commut}). 

\newcommand{\legw}{1}
\newcommand{\gapw}{2}
\newcommand{\gaph}{1}
\newcommand{\barh}{0.6}
\newcommand{\wireh}{0.2}
\newcommand{\wiregap}{0.5}
\newcommand{\wirel}{0.3}

\begin{equation}
    \mc{C}_F(\star) =
    \adjustbox{raise=-7pt}{
    \begin{tikzpicture}[thick,scale=0.5]
    \node[blank] at (\gapw/2 + \legw, \gaph/2) {$\star$};

    \draw 
    (0, 0) 
    -- (0, \gaph + \barh) 
    -- (\gapw + \legw + \legw,\gaph + \barh)
    -- (\gapw + \legw + \legw,0)
    -- (\gapw + \legw,0)
    -- (\gapw + \legw, \gaph)
    -- (\legw, \gaph)
    -- (\legw,0)
    -- (0, 0)
    
    (0, \wireh) -- (-\wirel, \wireh)
    (0, \wireh + \wiregap) -- (-\wirel, \wireh + \wiregap)
    
    (\legw, \wireh) -- (\legw + \wirel, \wireh)
    (\legw, \wireh + \wiregap) -- (\legw + \wirel, \wireh + \wiregap)
    
    (\legw + \gapw, \wireh) -- (\legw + \gapw - \wirel, \wireh)
    (\legw + \gapw, \wireh + \wiregap) -- (\legw + \gapw - \wirel, \wireh + \wiregap)
    
    (\gapw + \legw + \legw, \wireh) -- (\gapw + \legw + \legw + \wirel, \wireh)
    (\gapw + \legw + \legw, \wireh + \wiregap) -- (\gapw + \legw + \legw + \wirel, \wireh + \wiregap)
    
    ;
    \end{tikzpicture}
    }
\end{equation}

Unlike the mixer, $\mc{C}_{F}$ does not care if the Pauli strings in the tensor copies of $\operatorname{End}(\mc{H})$ are identical: $[\rmi F, A] \otimes [\rmi F, B]$ does not necessarily vanish for distinct $A, B$.

Any two Pauli strings either commute or anticommute with each other. Let $\boldsymbol{\sigma}_1$ and $\boldsymbol{\sigma}_2$ be Pauli strings on $\mc{H}$. Then $\mc{C}_F (\boldsymbol{\sigma}_1 \otimes \boldsymbol{\sigma}_2)$ does not vanish if and only if both $\boldsymbol{\sigma}_1$ and $\boldsymbol{\sigma}_2$ anticommute with $F$. 
% In which case, $\mc{C}_F (\boldsymbol{\sigma}_1 \otimes \boldsymbol{\sigma}_2) = 4 (\boldsymbol{\eta}_1 \otimes \boldsymbol{\eta}_2)$ for some Pauli strings $\boldsymbol{\eta}_1, \boldsymbol{\eta}_2$.
A product of two Pauli strings is again a Pauli string, possibly multiplied by a power of $\rmi$. An operator $[\rmi F, \boldsymbol{\sigma}]$ is Hermitian and proportional to a Pauli string, so it is equal to $\pm 2 \boldsymbol{\eta}$ for some Pauli string $\boldsymbol{\eta}$. A remarkable property of $\mc{C}_F$ is that for super Pauli strings $\boldsymbol{\sigma} \otimes \boldsymbol{\sigma}$ such that $F$ anticommutes with $\boldsymbol{\sigma}$, the sign is canceled, so $\mc{C}_F(\boldsymbol{\sigma} \otimes \boldsymbol{\sigma}) = 4 \boldsymbol{\eta} \otimes \boldsymbol{\eta}$ for some $\boldsymbol{\eta}$.

We can now consider what happens when there are mixers before and after $\mc{C}_F$.

% $F \boldsymbol{\sigma}_1 = (-1)^{a} \boldsymbol{\sigma}_1 \boldsymbol{\sigma}_iF , F\boldsymbol{\sigma}_2 = (-1)^b \boldsymbol{\sigma}_2 F$.

\begin{proposition}
    \label{prop:commutator_old}
    The following identities hold:
    \begin{enumerate}
        \item For every $F \in \mathrm{Herm}(\mc{Y})$, $\mc{C}_{F}(\id_{\mc{Y}} \otimes \id_{\mc{Y}})$ vanishes. Thus, $M_{\mc{Y}} \circ \mc{C}_F \circ M_{\mc{Y}} (\id_{\mc{Y}} \otimes \id_{\mc{Y}})=0$.
        \item Let $F$ be a nontrivial Pauli string acting on $\mc{Y}$. Then, for any nontrivial Pauli string $h$ acting on $\mc{Y}$
        \begin{equation}
            \label{eq:sandwiched_commutator}
             M_{\mc{Y}} \circ \mc{C}_{F} \circ M_{\mc{Y}} \left( h^{\otimes 2}\right) = \frac{2 \cdot 4^{|\mc{Y}|}}{4^{|\mc{Y}|} - 1} M\left( h^{\otimes 2}\right).
        \end{equation}
    \end{enumerate}{}
\end{proposition}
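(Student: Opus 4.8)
The plan is to treat the two parts separately. Part~(1) is immediate: by the ``otherwise'' clause of Proposition~\ref{prop:m2_decomposed} we have $M_{\mc{Y}}(\id_{\mc{Y}} \otimes \id_{\mc{Y}}) = \id_{\mc{Y}} \otimes \id_{\mc{Y}}$, and since $[\rmi F, \id] = 0$ for every $F$, the commutator operator annihilates it, $\mc{C}_F(\id_{\mc{Y}} \otimes \id_{\mc{Y}}) = [\rmi F, \id]^{\otimes 2} = 0$; a further application of $M_{\mc{Y}}$ of course stays zero.

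For part~(2), first I would unfold the inner mixer using Proposition~\ref{prop:m2_decomposed}: for a nontrivial Pauli string $h$ on $\mc{Y}$ this gives $M_{\mc{Y}}(h^{\otimes 2}) = (4^{|\mc{Y}|}-1)^{-1}\sum_{\sigma}\sigma^{\otimes 2}$, the sum running over all $4^{|\mc{Y}|}-1$ nontrivial Pauli strings $\sigma$ supported in $\mc{Y}$ (if $h$ also acts outside $\mc{Y}$, that part is untouched by every operator here and just rides along as a spectator tensor factor). Next I would compute $\mc{C}_F$ termwise. If $\sigma$ commutes with $F$, then $[\rmi F,\sigma]=0$ and the term drops. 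If $\sigma$ anticommutes with $F$, then $[\rmi F,\sigma] = 2\rmi F\sigma$, and since $F\sigma$ is anti-Hermitian (because $(F\sigma)^\dagger = \sigma F = -F\sigma$) it equals $\pm\rmi\,\eta(\sigma)$ for a uniquely determined Pauli string $\eta(\sigma)\propto F\sigma$, so $[\rmi F,\sigma] = \mp 2\,\eta(\sigma)$ and hence $\mc{C}_F(\sigma^{\otimes 2}) = [\rmi F,\sigma]^{\otimes 2} = 4\,\eta(\sigma)^{\otimes 2}$ --- the sign is squared away, which is precisely the point of working with super Pauli strings.

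The combinatorial core is then to recognize that $\sigma\mapsto\eta(\sigma)$ is a bijection of the set $A_F$ of nontrivial Pauli strings in $\mc{Y}$ anticommuting with $F$ onto itself: one checks from $F\cdot F\sigma = \sigma$ and $F\sigma\cdot F = -\sigma$ that $\eta(\sigma)$ again anticommutes with $F$, and that the map is an involution up to phase since $\eta(\eta(\sigma))\propto F(F\sigma) = \sigma$ (also $\eta(\sigma)\neq\id$, as $\id$ commutes with $F$). Therefore $\mc{C}_F\circ M_{\mc{Y}}(h^{\otimes 2}) = 4(4^{|\mc{Y}|}-1)^{-1}\sum_{\eta\in A_F}\eta^{\otimes 2}$. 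Applying $M_{\mc{Y}}$ once more, each $\eta\in A_F$ is a nontrivial Pauli string on $\mc{Y}$, so Proposition~\ref{prop:m2_decomposed} sends $\eta^{\otimes 2}$ to the common value $M_{\mc{Y}}(h^{\otimes 2})$; thus $M_{\mc{Y}}\circ\mc{C}_F\circ M_{\mc{Y}}(h^{\otimes 2}) = 4\,|A_F|\,(4^{|\mc{Y}|}-1)^{-1}\,M_{\mc{Y}}(h^{\otimes 2})$. Finally I would invoke the standard fact that for a nontrivial element $F$ of the Pauli group exactly half of the $4^{|\mc{Y}|}$ Pauli strings on $\mc{Y}$ anticommute with it (via the symplectic inner product on $\mathbb{F}_2^{2|\mc{Y}|}$), so $|A_F| = \tfrac12 4^{|\mc{Y}|}$ and the prefactor collapses to $2\cdot 4^{|\mc{Y}|}/(4^{|\mc{Y}|}-1)$, matching the claim (reading the unsubscripted $M$ on the right-hand side as $M_{\mc{Y}}$).

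The step requiring the most care is the sign analysis $[\rmi F,\sigma] = \mp 2\,\eta(\sigma)$ together with the cancellation of that sign under the tensor square --- this is exactly what makes $\mc{C}_F$ act coherently on super Pauli strings rather than scrambling phases --- and verifying that $\sigma\mapsto\eta(\sigma)$ genuinely permutes $A_F$ without leaking out of it. The Pauli-commutation count $|A_F| = \tfrac12 4^{|\mc{Y}|}$ is routine and, usefully, is insensitive to whether $F$ is supported on all of $\mc{Y}$ or only a subset, so no separate case analysis is needed there.
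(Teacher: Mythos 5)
Your proof is correct and follows essentially the same route as the paper's: unfold the inner mixer into the uniform sum of nontrivial super Pauli strings, let $\mc{C}_F$ annihilate the commuting half and multiply each anticommuting term by $4$ (with the sign squared away by the tensor square), and let the outer mixer re-uniformize, giving the prefactor $4\,|A_F|/(4^{|\mc{Y}|}-1)$. The only cosmetic differences are that you count $|A_F| = 4^{|\mc{Y}|}/2$ via the symplectic form on $\mathbb{F}_2^{2|\mc{Y}|}$ where the paper does an explicit combinatorial count, and your observation that $\sigma \mapsto \eta(\sigma)$ permutes $A_F$ is slightly more than needed --- the outer mixer only requires each surviving image to be a nontrivial Pauli string on $\mc{Y}$, which you also verify.
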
{}
    
\begin{proof}
    The first part follows directly: identity operator commutes with any other operator.
    % \textcolor{blue}{\textbf{[In my opinion, this is clearer than ``identity operator is central'' --- Alexey]}} 
    To prove the second part, we will sequentially apply the operators in the left-hand side of \eqref{eq:sandwiched_commutator}. First, the local mixing operator returns a linear combination of all nontrivial Pauli strings $\boldsymbol{\sigma}_i$: $M_{\mc{Y}} \left( h^{\otimes 2}\right) = 1/ (4^{|\mc{Y}|} - 1)\sum \boldsymbol{\sigma}_i^{\otimes 2}$. After applying $\mc{C}_{\mc{Y}}$ to each super Pauli string we either get zero for those commuting with $F$ and some other super Pauli string $\boldsymbol{\kappa}_i \otimes \boldsymbol{\kappa}_i$ multiplied by 4 for those anticommuting with $F$:
    \begin{equation}
        ([\rmi F, \boldsymbol{\sigma}_i])^{\otimes 2} = (\pm 2 \boldsymbol{\kappa}_i)^{\otimes 2} = 4 \boldsymbol{\kappa}_i^{\otimes 2}.
    \end{equation}

    For any nontrivial Pauli string $F$, there are exactly $4^{|\mc{Y}|} / 2$ nontrivial Pauli strings that anticommute with $F$. 
    Indeed, let $F$ contain $m$ nontrivial Pauli matrices and let $P$ be some Pauli string that we wish to construct, so that it anticommutes with $F$. How many ways of constructing $P$ are there? There must be an odd number of sites $j$ such that Pauli matrices $F_j$ and $P_j$ are both nontrivial and not equal to each other. We can pick such sites in $2^{m - 1}$ ways. Then, for each of these sites, there is a choice of 2 Pauli matrices not commuting with $F_j$. For all sites where $F_j$ is nontrivial, but which are not included in our selection, $P_j$ is either equal to $\id$ or to $F_j$. Finally, in all sites where $F_k = \id$, we are free to choose any Pauli matrix. Hence, when the choice of sites is fixed, we have $2^m 4^{|\mc{Y}| - m}$ options. Multiplying this by $2^{m - 1}$, we get $4^{|\mc{Y}|} / 2$.
    
    Overall, the result is the following: the first mixer produces a sum of all possible nontrivial super Pauli strings, the commutator $\mc{C}$ keeps $4^{|\mc{Y}|} / 2$ of them and multiplies them by 4, and then the second mixer again turns each super string into a sum of all possible super Pauli strings. Collecting the prefactors yields \eqref{eq:sandwiched_commutator}.

\end{proof}

\subsection{Main statement}

Now we are ready to formulate the main result of this section \cite{uvarov_barren_2021}:

\begin{theorem}
    \label{thm:block_plateaus}
    Let $H$ be an $n$-qubit Hamiltonian consisting of Pauli strings $h_i$: $H = \sum c_i h_i$ with finite $c_i \in \mathbb{R}$. Let the ansatz $U$ consist of $l$ layers, and denote $l_c$ the layer which contains the block $G_k$ depending on parameter $\theta_a$. 
    % \textcolor{blue}{\textbf{[I don't think we should write $\theta_k$; a block typically depends on more than one parameter. Maybe we can highlight its special role in some other way, like $\tilde{\theta}$ or $\hat{\theta}$? --- Alexey]}}. 
    Let each block of the ansatz be an independently parametrized local 2-design. Let the block $G$ also be decomposable into $G = G_A e^{-i \theta_a F} G_B$, where $G_A$ and $G_B$ are local 2-designs not depending on $\theta_a$. Then, the variance of the gradient of $E$ with respect to that parameter is bounded below as follows:
    
    \begin{equation}
        \operatorname{Var} \partial_a E \geq \frac{2 \cdot 4^{|\mc{Y}_k|}}{4^{|\mc{Y}_k|} - 1} \left( \frac34 \right)^{l - l_c}  \sum_i c_i^2 \cdot 3^{-|C(h_i, U)|},
    \end{equation}{}
    where $|C(h_j, U)|$ is the number of qubits in the causal cone of the $j^{th}$ Pauli string, and the summation is over those Pauli strings whose causal cone contains the block $G$.
\end{theorem}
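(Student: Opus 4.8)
The plan is to compute $\operatorname{Var}\partial_a E=\mathbb{E}_{\boldsymbol\theta}[(\partial_a E)^2]$ by integrating out the ansatz blocks one at a time, working throughout in the ``doubled Heisenberg picture'' on $\mc H\otimes\mc H$. Starting from the single-gate derivative formula used in the proof of Theorem~\ref{thm:mcclean} (cf.~\eqref{eq:partial_E}),
\begin{equation*}
  \partial_a E = \rmi\bra{\psi_0}U_B^\dagger e^{\rmi\theta_a F}[F,\,U_A^\dagger H U_A]\,e^{-\rmi\theta_a F}U_B\ket{\psi_0},
\end{equation*}
with $U_A$ collecting the gates between $G$ and the measurement together with $G_A$, and $U_B$ the gates between $\ket{\psi_0}$ and $G$ together with $G_B$, one writes $(\partial_a E)^2=\Tr[\rho_0^{\otimes2}(X\otimes X)]$ where $\rho_0=\ket{\psi_0}\bra{\psi_0}$ and $X$ is the Hermitian operator above. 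Expanding $X\otimes X$, the blocks enter as conjugations; averaging each independently parametrized block over its Haar measure (legitimate since each is a local $2$-design) replaces that conjugation by the mixer channel $M_{\mc Y}$ of \eqref{eq:m2}/Proposition~\ref{prop:m2_decomposed}. The phases $e^{\pm\rmi\theta_a F}$ flanking the commutator are harmless: $\mc C_F$ commutes with conjugation by $e^{-\rmi\theta_a F}$, and that conjugation is then absorbed into the adjacent mixer $M_{\mc Y_k}$ by left/right invariance of the Haar measure on $\mc Y_k$. Collecting everything, $\operatorname{Var}\partial_a E=\Tr[\rho_0^{\otimes2}\,\Xi(H\otimes H)]$, where $\Xi=\mathcal M_{\mathrm{earlier}}\circ M_{\mc Y_k}\circ\mc C_F\circ M_{\mc Y_k}\circ\mathcal M_{\mathrm{later}}$, with $\mathcal M_{\mathrm{later}}$ the composition of the mixers of all blocks in layers $l_c{+}1,\dots,l$, the central $M_{\mc Y_k}\circ\mc C_F\circ M_{\mc Y_k}$ coming from $G_A$, $F$, $G_B$, and $\mathcal M_{\mathrm{earlier}}$ the mixers of all blocks in layers $1,\dots,l_c{-}1$.

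Next I would reduce to a sum over single Pauli strings. Writing $H\otimes H=\sum_{i,j}c_ic_j\,h_i\otimes h_j$, every cross term $i\neq j$ is annihilated: the two factors differ on some qubit, $\mc C_F$ either kills the term or preserves the property that the factors still differ somewhere, and the mixers, which jointly act on all $n$ qubits, eliminate any such asymmetric super Pauli string (Propositions~\ref{prop:mixer_kills_asymmetry} and~\ref{prop:paulis_decouple}). Hence $\operatorname{Var}\partial_a E=\sum_i c_i^2\,T_i$ with $T_i=\Tr[\rho_0^{\otimes2}\,\Xi(h_i\otimes h_i)]\ge 0$. Moreover $T_i=0$ whenever $G\notin C(h_i,U)$: in that case $\mathcal M_{\mathrm{later}}$ followed by the $G_A$-mixer leaves a super string trivial on $\mc Y_k\supseteq\operatorname{supp}F$, which therefore commutes with $F$ and is annihilated by $\mc C_F$ (Proposition~\ref{prop:commutator_old}(1)). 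Thus only the Pauli strings whose causal cone contains $G$ contribute, exactly as in the statement, and it remains to bound their $T_i$ from below.

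For the lower bound on $T_i$ with $G\in C(h_i,U)$ I would exploit positivity. By Propositions~\ref{prop:m2_decomposed} and~\ref{prop:commutator_old}, each building block of $\Xi$ sends a nonnegative combination $\sum_\gamma w_\gamma(\nu^{(\gamma)})^{\otimes2}$, $w_\gamma\ge0$, of symmetric super Pauli strings to another such combination, the only scalar it can introduce being the factor $\tfrac{2\cdot4^{|\mc Y_k|}}{4^{|\mc Y_k|}-1}$ of \eqref{eq:sandwiched_commutator}. Since $\rho_0$ is a computational-basis product state, $\Tr[\rho_0^{\otimes2}(\nu\otimes\nu)]$ equals $1$ if $\nu$ is ``$Z$-type'' (only $\id$ and $Z$) and $0$ otherwise, so $T_i$ is $\tfrac{2\cdot4^{|\mc Y_k|}}{4^{|\mc Y_k|}-1}$ times the total weight carried by $Z$-type super strings at the output of $\Xi$. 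The remaining task is combinatorial: to show this surviving weight is at least $(3/4)^{l-l_c}\,3^{-|C(h_i,U)|}$. I would track it in two parts. First, propagating $h_i$ back toward $\mc Y_k$ through layers $l_c{+}1,\dots,l$, the hypothesis $G\in C(h_i,U)$ gives a chain of blocks reaching $\mc Y_k$, one per layer; a mixer on a block of size $m$ sends a super string nontrivial on that block to the uniform mixture over its $4^m-1$ nontrivial super strings, of which a fraction $3\cdot4^{m-1}/(4^m-1)\ge 3/4$ keeps a prescribed qubit of the block nontrivial, so at least $(3/4)^{l-l_c}$ of the weight is still nontrivial on $\mc Y_k$ when it meets $M_{\mc Y_k}\circ\mc C_F\circ M_{\mc Y_k}$ and thus escapes being killed by $\mc C_F$. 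Second, each mixer acting on a block of size $m$ inside the causal cone replaces its substring by the uniform mixture, of which the $Z$-type substrings carry weight $\tfrac{2^m-1}{4^m-1}=\tfrac1{2^m+1}\ge 3^{-m}$; since non-$Z$-type parents can only add further $Z$-type weight, iterating this over the blocks covering the $|C(h_i,U)|$ qubits of the cone (charging each cone qubit its factor once) yields $3^{-|C(h_i,U)|}$. Multiplying the two and restoring the prefactor gives the claimed bound.

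I expect this last step — pinning down the exact exponents $(3/4)^{l-l_c}$ and $3^{-|C(h_i,U)|}$ — to be the main obstacle. One must arrange the bookkeeping so that each cone qubit is charged its dilution factor exactly once even though blocks in adjacent layers share qubits, and one must cleanly separate the two distinct sources of loss (the $\mc C_F$-induced cancellation of super strings trivial on $\mc Y_k$ versus the mixer-induced dilution of $Z$-type weight) so that the corresponding bounds multiply rather than interfere. By contrast, the earlier steps — the block-by-block Haar integration, the absorption of the $\theta_a$-phases, the vanishing of the cross terms, and the vanishing of the causal-cone-external terms — are routine consequences of Propositions~\ref{prop:m2_decomposed}, \ref{prop:mixer_kills_asymmetry}, \ref{prop:paulis_decouple}, and~\ref{prop:commutator_old} together with \eqref{eq:haar_integral_2} and \eqref{eq:varde_integral}.
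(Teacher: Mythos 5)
Your proposal is correct and follows essentially the same route as the paper's proof: the doubled Heisenberg picture with mixer channels replacing 2-design blocks, decoupling of cross terms via Propositions~\ref{prop:mixer_kills_asymmetry} and~\ref{prop:paulis_decouple}, a causal path of one block per layer yielding the $(3/4)^{l-l_c}$ factor together with the $\tfrac{2\cdot 4^{|\mc Y_k|}}{4^{|\mc Y_k|}-1}$ prefactor from Proposition~\ref{prop:commutator_old}, and the $Z$-type weight count giving $3^{-|C(h_i,U)|}$. The bookkeeping issue you flag is resolved in the paper exactly as you sketch: only the first-layer mixers (which are disjoint and whose output is averaged against $\ket{\boldsymbol{00}}$) are charged the factor $\tfrac{2^{|\mc Y|}-1}{4^{|\mc Y|}-1}=\tfrac{1}{2^{|\mc Y|}+1}$, so each cone qubit is counted once.
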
{}

The proof of this theorem uses the fact that $\partial_a E (h_i)$ are uncorrelated random variables:

\begin{proposition}
\label{lemma:expectations_decouple}
In the conditions of Theorem \ref{thm:block_plateaus}, the individual Pauli string coefficients make independent contributions to the total variance:
\begin{equation}
    \operatorname{Var} \partial_a E (H) = \sum_i c_i^2 \operatorname{Var} \partial_a E (h_i).
\end{equation}
\end{proposition}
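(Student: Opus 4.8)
The plan is to reduce the claim to a statement about vanishing covariances between the contributions of distinct Pauli strings, and then to establish that vanishing by transporting $h_i\otimes h_j$ through the doubled circuit and invoking the mixer-decoupling lemmas of this section. First I would use that $E$ is linear in $H$: from $H=\sum_i c_i h_i$ one gets $E(H)=\sum_i c_i E(h_i)$ and hence $\partial_a E(H)=\sum_i c_i\,\partial_a E(h_i)$, the sum being finite. By the Proposition preceding Theorem~\ref{thm:mcclean} — applicable here since $G_A$ (and $G_B$) is a $2$-design, hence a $1$-design, so that averaging $\operatorname{Ad}_{G_A}$ renders the argument of the commutator $[\rmi F,\,\cdot\,]$ proportional to $\id$ on the support of $F$ — each term has zero mean, $\mathbb{E}\,\partial_a E(h_i)=0$. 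Therefore
\begin{equation}
    \operatorname{Var} \partial_a E(H) = \mathbb{E}\Bigl[\bigl({\textstyle\sum_i} c_i\,\partial_a E(h_i)\bigr)^{2}\Bigr] = \sum_{i,j} c_i c_j\,\mathbb{E}\bigl[\partial_a E(h_i)\,\partial_a E(h_j)\bigr],
\end{equation}
and it suffices to show $\mathbb{E}\bigl[\partial_a E(h_i)\,\partial_a E(h_j)\bigr]=0$ for $h_i\neq h_j$; the diagonal terms then yield $\sum_i c_i^2\,\mathbb{E}\bigl[(\partial_a E(h_i))^2\bigr]=\sum_i c_i^2\operatorname{Var}\partial_a E(h_i)$, which is the assertion.

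To treat the off-diagonal terms I would reuse the machinery behind Theorem~\ref{thm:block_plateaus}. Using~\eqref{eq:partial_E}, write $\partial_a E(h_i)\,\partial_a E(h_j)=-\bra{\psi_0}^{\otimes 2} T_{ij}\ket{\psi_0}^{\otimes 2}$, where $T_{ij}$ results from propagating the super Pauli string $h_i\otimes h_j$ through two copies of the ansatz: the blocks between $H$ and the distinguished block $G$ act by $\operatorname{Ad}_V^{\otimes 2}$, the derivative inserts the commutator superoperator $\mc{C}_F$ at the location of $G$ together with a conjugation $\operatorname{Ad}_{e^{-\rmi\theta_a F}}^{\otimes 2}$, and the blocks between $G$ and $\ket{\psi_0}$ act by further $\operatorname{Ad}_V^{\otimes 2}$. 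Taking the expectation over all (independent) block parameters, each $2$-design block on support $\mc{Y}$ is replaced by the mixer $M_{\mc{Y}}$ through the second-moment formula~\eqref{eq:haar_integral_2}, while the leftover factor $\operatorname{Ad}_{e^{-\rmi\theta_a F}}^{\otimes 2}$ is absorbed into the adjacent mixer $M_{\mc{Y}_k}$ by left-invariance of the Haar measure on $\mc{Y}_k$, cf.~\eqref{eq:haar_translation}. Hence $\mathbb{E}\bigl[\partial_a E(h_i)\,\partial_a E(h_j)\bigr]$ equals $-\bra{\psi_0}^{\otimes 2}$ applied to a composition of mixer channels, one for every block of the circuit, with a single insertion of $\mc{C}_F$, all acting on $h_i\otimes h_j$.

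It remains to see that such a composition annihilates $h_i\otimes h_j$ whenever $h_i\neq h_j$. Pick a qubit $q$ on which $h_i$ and $h_j$ disagree. Every mixer $M_{\mc{Z}}$ with $q\notin\mc{Z}$ either sends a pair of super Pauli strings that disagree at $q$ to a linear combination of such pairs or to $0$ (Propositions~\ref{prop:m2_decomposed} and~\ref{prop:mixer_kills_asymmetry}), and $\mc{C}_F=[\rmi F,\,\cdot\,]\otimes[\rmi F,\,\cdot\,]$ does the same, since $Fh_1=Fh_2$ forces $h_1=h_2$ (a pair one member of which commutes with $F$ maps to $0$). On the other hand, $q$ lies in the support of at least one block of the circuit, hence of at least one mixer $M_{\mc{Z}}$ with $q\in\mc{Z}$ in the composition; when that mixer is reached, the running super Pauli combination either has already been annihilated or still disagrees on $\mc{Z}$, and in the latter case Proposition~\ref{prop:mixer_kills_asymmetry} makes the output vanish. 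This is exactly the argument of Proposition~\ref{prop:paulis_decouple}, now allowing $\mc{C}_F$ among the composed maps. Thus $\mathbb{E}\bigl[\partial_a E(h_i)\,\partial_a E(h_j)\bigr]=0$ for $i\neq j$, and the claimed identity follows.

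The step I expect to be the main obstacle is the bookkeeping in the second paragraph: making precise which blocks become mixers and in which order under the expectation, and verifying cleanly that the $\theta_a$-conjugation and the commutator superoperator $\mc{C}_F$ slot into the composition without breaking the ``disagrees at $q$ or vanishes'' dichotomy. The linearity and zero-mean reductions, and the final invocation of Proposition~\ref{prop:paulis_decouple}, are routine once that structure is in place, and can largely be quoted from the derivation of Theorem~\ref{thm:block_plateaus}.
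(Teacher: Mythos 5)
Your proof is correct and follows essentially the same route as the paper: expand $H\otimes H$ (equivalently, your bilinear covariance sum) by linearity, replace each independently parametrized $2$-design block by a mixer channel acting on the doubled space, and annihilate the off-diagonal terms $h_i\otimes h_j$ via Proposition~\ref{prop:paulis_decouple}. Your explicit check that the commutator superoperator $\mc{C}_F$ preserves the ``distinct pair or zero'' dichotomy is a detail the paper's one-line invocation of Proposition~\ref{prop:paulis_decouple} glosses over, but the underlying argument is identical.
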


\subsection{Idea of the proof}

The proof is conceptually not difficult, but quite laborious. To estimate the variance, we will need to integrate $(\partial_aE) ^2$ over all possible assignments of the parameters $\boldsymbol{\theta}$. Since we assumed that the blocks are parametrized independently, we can also take their expected values independently. If $U_p$ depends on parameters $\boldsymbol{\theta}_j =  \theta_{j_1}, ..., \theta_{j_m}$ for some $j_1, ..., j_m$, then the operator we care about takes the following form:

%%%%%% The usage of \adjustbox is not elegant, but I don't care
\begin{equation}
\label{eq:true_mixer}
M = \int (U_p^\dagger \otimes U_p^\dagger) (\star) (U_p \otimes U_p) 
\mathrm{d} \boldsymbol{\theta}_j
= \int 
% \includegraphics[valign=m,width=0.1\linewidth]
% {
\adjustbox{raise=31pt}{
\Qcircuit @C=1em @R=.7em 
{& \multigate{1}{U^\dagger} &  \multigate{3}{\star} 
& \multigate{1}{U} & \qw
\\
& \ghost{U^\dagger} & \ghost{\star}
& \ghost{U} & \qw
\\
& \multigate{1}{U^\dagger} & \ghost{\star}
& \multigate{1}{U} & \qw
\\
& \ghost{U^\dagger} & \ghost{\star}
& \ghost{U} & \qw
}
}
\mathrm{d} \boldsymbol{\theta}_j
\end{equation}
where the star ($\star$) is a placeholder for a Hermitian operator on $\mc{H} \otimes \mc{H}$. The commutator found in \eqref{eq:partial_E} can also be viewed as a superoperator $[\rmi F, \star]^{\otimes 2}$. 

In this graphical language, the value of $\operatorname{Var} \partial_a E$ is expressed as a diagram shown in Fig.~\ref{fig:variance_as_diagram}.

Instead of evaluating the action of operators like $M$ for a specific ansatz, we instead assume that each block of the ansatz constitutes a local 2-design, in which case one can compute their action exactly. We will refer to such operators as ``mixing operators''.

The action $[\rmi F, \star]^{\otimes 2}$ can be written down explicitly using the assumption that there are two mixing operators around it. In which case, its role reduces to eliminating those Pauli strings that don't share support with $F$, and multiplying all other strings by a constant. We estimate the number of strings that survive this operation by tracing a path along the structure of the ansatz.

After all these operators, we end up with a number of Pauli strings with some coefficients. Taking the expectation w.r.t.\ the zero kets eliminates those strings that contain $X$ or $Y$ Pauli matrices. We estimate the share of Pauli strings that are not eliminated in the process. The sum of their coefficients is the final value that we are after.

\begin{figure}
    % \centering
    \begin{tikzpicture}
    \node[] (formula) {$\operatorname{Var} \partial_a E = \displaystyle{\int} \mathrm{d} \boldsymbol{\theta}$};
    \node (fig1) [above right of =formula, xshift=7cm, yshift=1cm] {\includegraphics[width=0.7\linewidth]{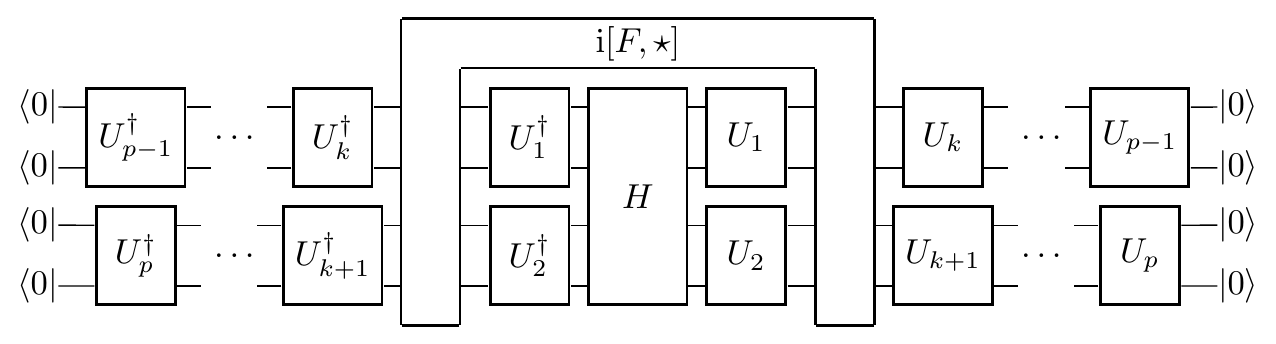}};
    \node (fig2) [below right of =formula, xshift=7cm, yshift=-0.6cm] {\includegraphics[width=0.7\linewidth]{figures/vardE.pdf}};
    \end{tikzpicture}

    \caption{Variance of the derivative of $E$, expressed as an integration over all possible assignments of $\boldsymbol{\theta}$. Reprinted from~\cite{uvarov_barren_2021}.
    % Each triangle denotes a zero ket vector $\ket{0}$.
    }
    \label{fig:variance_as_diagram}
\end{figure}

In the following sections, we will first derive the necessary properties of the mixing operators and of the commutator-induced superoperator. Then we will use them to prove the main statement.

\subsection{Proof of the main statement}

Following the definitions, the variance can be expressed as an average of a certain operator over the zero ket vectors 
$\ket{\boldsymbol{00}} = (\ket{\boldsymbol{0}} \otimes \ket{\boldsymbol{0}})$. To write down that operator, we use the assumption that individual blocks $G_1, ..., G_q$ are local 2-designs, and replace the integration with local mixing operators $M_{\mc{Y}_1} ... M_{\mc{Y}_q}$:
\begin{equation}
\label{eq:all_mixers}
\Var \partial_a E (H) 
= \bra{\boldsymbol{00}}
M_{ \mc{Y}_1} \circ 
\dots \circ  M_{ \mc{Y}_k} \circ \mc{C} \circ M_{ \mc{Y}_k} \circ 
\dots \circ  M_{ \mc{Y}_q} (H \otimes H) \ket{\boldsymbol{00}}.
\end{equation}{}
Note the reverse order of the mixing operators: if the ansatz state is $\ket{\psi} = G_q ... G_1 \ket{\mathbf{0}}$, then $\bra{\psi} H \ket{\psi} = 
\bra{\mathbf{0}} G_1^\dagger ... G_q^\dagger H G_q ... G_1 \ket{\mathbf{0}}$. 
Therefore, in the Heisenberg picture, the conjugation with unitaries reverses.

The action of all operators in the right hand side of \eqref{eq:all_mixers}
is linear, so we can replace $H \otimes H$ by $\sum_{i, j}c_i c_j h_i \otimes h_j$. From Proposition \ref{prop:paulis_decouple} we know that all terms with $i \neq j$ will vanish, so we arrive at
\begin{equation}
    \label{eq:paulis_decouple}
    \Var \partial_a E (H) = \sum_i c_i^2 \Var \partial_a E (h_i),
\end{equation}
effectively reducing the problem to the case when the Hamiltonian of interest is a single Pauli string $h$.

In the next three subsections, we estimate the value of $\Var \partial_a E (H)$ by sequentially applying the superoperators shown in \eqref{eq:all_mixers}.

\subsubsection{First portion of mixing operators}

Let us now follow the evolution of some Pauli string $h_i$ along the application of the mixing operators before the commutation operator $\mc{C}$. Each mixing operator replaces the super Pauli string with the sum of super Pauli strings with all possible nontrivial substrings in its support. For example, a two qubit mixing operator takes one super Pauli string and returns 15 super Pauli strings (see Proposition \ref{prop:m2_decomposed}):

\begin{equation}
    M\left((X \otimes X)^{\otimes 2} \right) = \frac{1}{15}
    \left( \sum_{i,j} (\sigma_i \otimes \sigma_j)^{\otimes 2}
    + \sum_{i} \left((\sigma_i \otimes \mathbbm{1})^{\otimes 2}
    + (\mathbbm{1} \otimes \sigma_i)^{\otimes 2} \right)
    \right).
\end{equation}

After a sequence of such mixing operators, the super Pauli string $h \otimes h$ is transformed into a sum of some other super Pauli strings $g_\alpha$: $M_{\mc{Y}_k} \circ ... \circ M_{\mc{Y}_q} (h_i \otimes h_i) = \sum c'_\alpha g_\alpha \otimes g_\alpha$. This collection is rather difficult to describe, however, from the properties of the mixers we know that (i) the coefficients $c'_\alpha$ sum up to one, and that (ii) the support of every Pauli string $g_\alpha$ is bounded by the support of the causal cone $|C(h, U)|$. Also, one can show that every qubit in the causal cone is in the support of some Pauli string $g_\alpha$.

\subsubsection{Elimination of terms by commutator}

The block $G_k$ corresponds to a triple of operators $M_{ \mc{Y}_k} \circ \mc{C} \circ M_{ \mc{Y}_k}$, of which we already used the first one. The pair of operators $M_{\mc{Y}_k} \mc{C}$ now acts in the following way: the strings whose support does not intersect that of $\mc{C}$ are eliminated, while all other strings are multiplied by a constant coefficient depending on the size of the block containing $\mc{C}$ (see Proposition \ref{prop:commutator_old}). To bound the number of surviving terms from below, we explicitly track a subset of such terms. 
% We need to estimate how many terms survive after the commutator. Let the block with the commutator be situated in the layer number $L_c$. 

If the block $G_k$ is outside the causal cone of $h \otimes h$, then the derivative $\partial_\theta E$ is trivially zero. 
% \out{applying the commutator necessarily yields zero, as tuning the corresponding gate does not change $U^\dagger h U$.}
So, excluding this case, let us assume that the block $G_k$ is within the causal cone of $h$. This means that we can find a sequence of blocks $G_{j_l}, ... G_{j_{l_c + 1}}$ situated in layers $l, ..., l_c + 1$, such that the first one shares support with $h$, and each $G_{j_{k-1}}$ shares support with $G_{j_k}$. Finally, we require that $G_{j_{l_c + 1}}$ shares support with the block $G_k$. Thus, we have established a causal path from $h$ to the commutator (see Fig. \ref{fig:comm_path}). 

Now, the output of the mixer $M_{j_l}$ contains $4^{|\mc{Y}_{j_l}|} - 1$ super Pauli strings with equal coefficients, at least $3/4$ of which act nontrivially in the support of $G_{j_{l-1}}$. For example, if $G_{j_l}$ is a two-qubit block, it outputs 15 super Pauli strings, of which 12 share support with the next block $G_{j_{l - 1}}$. The next mixer 
$M_{j_{l - 1}}$ takes those super Pauli strings, and for each of them, outputs $4^{|\mc{Y}_{j_{l - 1}}|} - 1$ super Pauli strings, of which at least $3/4$ again have nontrivial action in the support of the next block. Continuing on, we find that the total weight of such super Pauli strings is at least $(3/4)^{l - l_c}$. Then it gets multiplied by $\frac{2 \cdot 4^{|\mc{Y}_k|}}{4^{|\mc{Y}_k|} - 1}$. 
% which is always greater than 2, so the estimate we end up with is $2 \cdot (3/4)^{l - l_c}$.

Note that the blocks outside the specified path can increase this number, but not decrease it. For example, in the notation of Fig. \ref{fig:comm_path}, we first acted with the mixer $M_{10}$, corresponding to the block $G_{10}$, and got a collection of super Pauli strings in the output, some of which act nontrivially on qubits 3 and 4, the support of $G_8$. The action of the mixer $M_{11}$ cannot make those strings lose the nontrivial action on that support. In principle, such block could bring more strings to act nontrivially there, but for a lower bound this is not important.

\begin{figure}
    %% Another figure with hacky tricks
    %% https://tex.stackexchange.com/questions/389588/
    \centering
    \begin{pgfpicture}{0em}{0em}{0em}{0em}
    \color{gray!70}
    \pgfrect[fill]{\pgfpoint{11.32em}{-4.25em}}{\pgfpoint{2.65em}{2.8em}}
    \pgfrect[fill]{\pgfpoint{8.1em}{-6.15em}}{\pgfpoint{2.25em}{2.8em}}
    \pgfrect[fill]{\pgfpoint{4.9em}{-8.05em}}{\pgfpoint{2.2em}{2.8em}}
    \color{green!80!yellow!45!white}
    \pgfrect[fill]{\pgfpoint{1.65em}{-6.15em}}{\pgfpoint{2.25em}{2.8em}}
    % \pgfgrid[stepx=0.2em, stepy=0.2em]{\pgfpoint{1.65em}{-6.15em}}{\pgfpoint{3.9em}{-3.35em}}
    \end{pgfpicture}
    \mbox{
            \Qcircuit @C=1.0em @R=1.0em {
            \lstick{\ket{0}} 
            & \multigate{1}{G_1} 
            & \qw 
            & \multigate{1}{G_7}    
            & \qw 
            & \qw
            & \rstick{1}
            \\
            \lstick{\ket{0}} 
            & \ghost{G_1} 
            & \multigate{1}{G_4} 
            & \ghost{G_7} 
            & \multigate{1}{G_{10}} 
            & \qw  
            & \rstick{1}
            \\
            \lstick{\ket{0}} 
            & \multigate{1}{G_2}  
            & \ghost{G_4}
            & \multigate{1}{G_8} 
            & \ghost{G_{10}}
            & \qw  
            & \rstick{X}
            \\
            \lstick{\ket{0}} 
            & \ghost{G_2} 
            & \multigate{1}{G_5} 
            & \ghost{G_8}
            & \multigate{1}{G_{11}} 
            & \qw  
            & \rstick{X}
            \\
            \lstick{\ket{0}} 
            & \multigate{1}{G_3}
            & \ghost{G_5} 
            & \multigate{1}{G_9}
            & \ghost{G_{11}} 
            & \qw  
            & \rstick{1}
            \\
            \lstick{\ket{0}} 
            & \ghost{G_3} 
            & \multigate{1}{G_6}
            & \ghost{G_8}
            & \multigate{1}{G_{12}}
            & \qw
            & \rstick{1}
            \\
            \lstick{\ket{0}} 
            & \qw
            & \ghost{G_6}
            & \qw
            & \ghost{G_{12}}
            & \qw
            & \rstick{1}
            \\
              }
        }
    \caption{An example of a path constructed out of blocks with nontrivial inputs. Dark blocks correspond to mixing operators, highlighted block contains the commutator.}
    \label{fig:comm_path}
\end{figure}
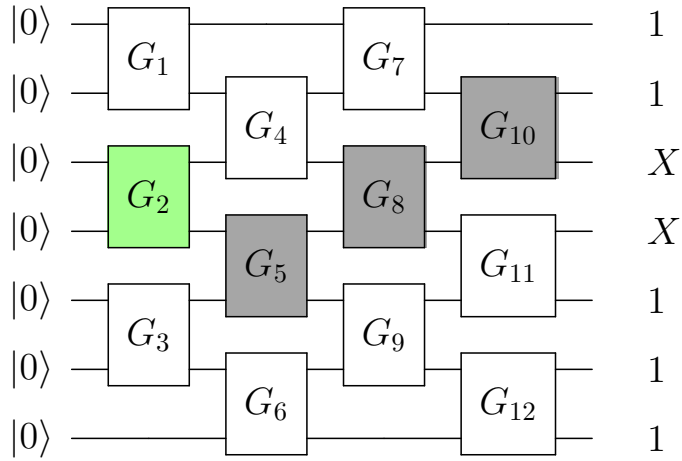{}

\subsubsection{Second portion of the mixing operators}

Let us apply all the remaining operators except the set of mixers $B = \{ M_{\mc{Y}_{1}}, ..., M_{\mc{Y}_m} \}$ corresponding to the first layer of the ansatz. Upon doing that, we get a linear combination of super Pauli strings $\sum_\alpha c''_\alpha g'_\alpha \otimes  g'_\alpha$. The coefficients $c''_\alpha$ sum up to at least $\frac{2 \cdot 4^{|\mc{Y}_k|}}{4^{|\mc{Y}_k|} - 1} \cdot (3/4)^{l - l_c} $. Each super string $g'_\alpha \otimes  g'_\alpha$ will then go through this layer of mixers and then the output will get averaged over the zero ket vector $\ket{\boldsymbol{00}}$.

For every $g'_\alpha \otimes  g'_\alpha$, the number resulting from this series of operations is greater or equal to $\prod_{M_{\mc{Y}} \in B} \frac{2^{|\mc{Y}|} - 1}{4^{|\mc{Y}|} - 1}$. The super Pauli string $g'_\alpha \otimes  g'_\alpha$ can act trivially or nontrivially on the support of each $M_{\mc{Y}} \in B$. If it acts trivially, then $M_{\mc{Y}}$ does nothing. In the opposite case, $M_{\mc{Y}}$ yields $4^{|\mc{Y}|} - 1$ super Pauli strings with equal weights. Of these strings, only $2^{|\mc{Y}|} - 1$ consist entirely of identity matrices and Pauli $Z$ matrices, and only these strings yield a 1 when averaged over $\ket{\boldsymbol{00}}$. Repeating for all mixers in $B$ yields the lower bound $\prod_{M_{\mc{Y}} \in B} \frac{2^{|\mc{Y}|} - 1}{4^{|\mc{Y}|} - 1}$.

Summing up all of the above, we write the lower bound on $\Var \partial_\theta E(h_i)$:
\begin{equation}
\label{eq:var_theta}
    \partial_\theta E (h) \geq \frac{2 \cdot 4^{|\mc{Y}_k|}}{4^{|\mc{Y}_k|} - 1} \left( \frac34 \right)^{l - l_c} 
    \prod_{M_{\mc{Y}} \in B} 
    \frac{2^{|\mc{Y}|} - 1}{4^{|\mc{Y}|} - 1} = 
    \frac{2 \cdot 4^{|\mc{Y}_k|}}{4^{|\mc{Y}_k|} - 1} \left( \frac34 \right)^{l - l_c}
    \prod_{M_{\mc{Y}} \in B} 
    \frac{1}{2^{|\mc{Y}|} + 1}.
\end{equation}{}
% where $B$ is the set of operators in the first layer of the causal cone of $h_i$ and provided that block with parameter $\theta$ is inside the causal cone.

The last product can be bounded from below by $3^{-|C(h, U)|}$. Here is how: $1/ (2^{|\mc{Y}|} + 1) = (1/2^{|\mc{Y}|}) \cdot 1/(1 + 2^{-|\mc{Y}|})$. Since $|\mc{Y}| \geq 1$, then the second factor of the right-hand side of this equality is greater or equal than $2/3$. The product can be then transformed as follows:

\begin{equation}
\label{eq:massage_var_theta}
     \prod_{M_{\mc{Y}} \in B} 
    \frac{1}{2^{|\mc{Y}|} + 1} \geq      
    \prod_{M_{\mc{Y}} \in B}  \frac{1}{2^{|\mc{Y}|}} \frac{2}{3}
    = \left( \frac{2}{3} \right)^{|B|} \frac{1}{2^{|C(h, U)|}}.
\end{equation}{}
Now observe that the number of blocks is not greater than the number of qubits, and hence the last part of \eqref{eq:massage_var_theta} is lower bounded by $3^{-|C(h, U)|}$, which concludes the proof.

\subsection{Extension of the theorem}

In the formulation of Theorem~\ref{thm:block_plateaus}, we required that the partial blocks $G_A$, $G_B$ form 2-designs themselves. This assumption can be partially relieved. We know from Proposition~\ref{prop:ad_is_pauli_orthogonal} that $\mathrm{Ad}_{G_B \otimes G_B}$ preserves the 2-norm of its input. This means that the 1-norm --- the quantity preserved by the mixers --- will be multiplied by a factor in $[4^{-|\mc{Y}|}, 4^{|\mc{Y}|}]$ due to the equivalence of vector norms. The precise composition of its output does not matter: the subsequent mixer operators will essentially erase it, only caring about the causal cone structure. Therefore, we can remove the requirement that $G_B$ forms a 2-design at the cost of a $\Theta(1)$ multiple to the lower bound.

Removing the requirement that $G_A$ is a 2-design is more difficult. Suppose that just before $G_A$ we have some sum of super Pauli strings $\sum_\sigma \tilde{c}_\sigma \sigma \otimes \sigma$. If $G_A$ is a 2-design, we can reason about the proportion of terms that will end up anticommuting with $F$. For arbitrary $G_A$, the worst case it that all terms $\textrm{Ad}_{G_A} (\sigma)$ commute with $F$, causing the overall variance to go to zero. To rule out this scenario, one will have to understand the distribution of coefficients $\tilde{c}_\sigma$.

\section{Numerical results}

\subsection{Proximity of local blocks to 2-designs}
\label{subsec:proximity_to_designs}

It is known that approximate 2-designs can be prepared by a polynomial depth random circuit \cite{harrow_approximate_2018,brandao_local_2016}. However, here we are interested in local blocks whose properties are not guaranteed by asymptotic estimates.

We performed a series of numerical experiments to compare certain two-qubit blocks to exact unitary designs. A simple way of evaluating the proximity of the gate families to the Haar measure is to measure the distance to the so-called quantum $t$-tensor product expander (TPE) \cite{brandao_local_2016,low_pseudo-randomness_2010}. A family of random unitary gates $\nu$ is a $\lambda$-approximate TPE if $||\mathbb{E}_{Haar} (U^{\otimes t} \otimes (U^*)^{\otimes t}) - \mathbb{E}_\nu (U^{\otimes t} \otimes (U^*)^{\otimes t}) ||_p \leq \lambda$ for $p=\infty$. 
The trace definition of an approximate $t$-design involves the same quantity for $p = 1$. 

Finally, for $p=2$ this quantity can be related to the coefficients of Pauli decomposition of a Hamiltonian going through a mixing operator.  Let $H = \sum c_i \sigma_i$ be a Hamiltonian on $n$ qubits. Recall that Pauli strings form an orthogonal basis. Since the Hilbert-Schmidt inner product is the same as the scalar product of matrices as vectors in $\mathbb{R}^{2^n \times 2^n}$, this also applies to their reshaping to vectors. Then one can verify that $||\mathrm{vec}(H)||_2 = 
2^{\frac{n}{2}} \sqrt{\sum_i |c_i|^2}
\equiv ||\mathbf{c}||_2 \cdot 2^{\frac{n}{2}}$. This works when $\sigma_i$ are super Pauli strings. The operator 2-norm of $\mathbb{E}_{Haar} (U^{\otimes t} \otimes (U^*)^{\otimes t}) - \mathbb{E}_{\mu} (U^{\otimes t} \otimes (U^*)^{\otimes t})$, provides an upper bound on the vector norm of the output of this operator, meaning that this is the maximum norm of the discrepancy from the perfect output for an input of unit norm. For a super Pauli string $h \otimes h$, this error $\lambda_2$ upper bounds the 2-norm of the vector $(\mathbf{c} - \mathbf{c}_{\text{Haar}})$.

We will denote the aforementioned $p$-norms as $\lambda_1$, $\lambda_2$ and $\lambda_\infty$. 

\begin{table}
    \centering
    \begin{tabularx}{\textwidth}{|>{\centering}X|c|c|c|c|}
    \hline
        Block & Circuit diagram or matrix & $\lambda_1$ & $\lambda_\infty$ & $\lambda_2$\\
        \hline
        $X$, $Z$, and $ZZ$ rotations &  
        % \mbox{
            $\Qcircuit @C=1.0em @R=1.0em {
                   \quad & \gate{R_Z} & \multigate{1}{R_{ZZ}} & \gate{R_X} & \qw \\
                   \quad & \gate{R_Z} & \ghost{R_{ZZ}} & \gate{R_X} & \qw \\
               }$
            % }
        & 0.95 & 1.80 & 0.87\\
        \hline 
        Universal gates and a CNOT &
            $\Qcircuit @C=0.8em @R=0.8em {
           \quad & \gate{U_3} & \ctrl{1} & \gate{U_3} & \qw \\
           \quad & \gate{U_3} & \targ & \gate{U_3} & \qw \\
            }$
        & 0.68 & 0.69 & 0.42\\
        \hline
        $Y$ rotations and a CZ~\cite{cerezo_cost-function-dependent_2020} &            
            $\Qcircuit @C=0.8em @R=0.8em {
           \quad & \gate{R_Y} & \ctrl{1} & \gate{R_Y} & \qw \\
           \quad & \gate{R_Y} & \ctrl{-1} & \gate{R_Y} & \qw \\
            }$ & 1.76 & 1.76 & 1.00\\
        \hline
        Number-conserving~\cite{barkoutsos_quantum_2018} & 
        $
        \begin{pmatrix}
        1 & 0 & 0 & 0 \\
        0 & \cos(\theta_1) & e^{i\theta_2} \sin(\theta_1) & 0 \\
        0 & e^{-i\theta_2} \sin(\theta_1) & -\cos(\theta_1) & 0 \\
        0 & 0 & 0 & 1 \\
        \end{pmatrix}
        $
        & 2.40 & 2.40 & 1.00\\
        \hline
        Cartan decomposition \cite{khaneja_cartan_2000,khaneja_time_2001} &
        $\Qcircuit @C=0.8em @R=0.8em {
       \quad & \gate{U_3} & \multigate{1}{R_{XX}} & \multigate{1}{R_{YY}} & \multigate{1}{R_{ZZ}}& \gate{U_3} & \qw \\
       \quad & \gate{U_3} & \ghost{R_{XX}} & \ghost{R_{YY}} & \ghost{R_{ZZ}} & \gate{U_3} & \qw \\
        }$
            & 0.25 & 0.25 & 0.17\\
    \hline
    \end{tabularx}
    \caption{Proximity to the 2-tensor product expander for different two-qubit blocks, estimated by random sampling.}
    \label{tab:local_designs}
\end{table}

We estimated the values of $\lambda$ for $t=2$ different gate families by the following numerical procedure. The Haar-averaged tensor product $\mathbb{E}_{\text{Haar}} (U^{\otimes t} \otimes (U^*)^{\otimes t})$ is constructed explicitly using exact formulas \cite{mcclean_barren_2018,poland_no_2020}. 
For the two-qubit blocks, we pick the parameters uniformly at random and average the resulting tensor product over $N=500000$ trials. From this average, we estimate $\lambda$.

To estimate the error of this method, we also evaluated $\lambda$ for an ensemble of matrices distributed according to the Haar measure. For $N \rightarrow \infty$, the estimate should converge to zero as $1/\sqrt{N}$. Hence, the numerical value of $\lambda$ shows the typical scale of sampling error. For $N=500000$ trials, we observed the values of $\lambda_1 \approx \lambda_\infty = 0.022$, $\lambda_2 = 0.0028$.

The results of numerical experiments are summarized in Table \ref{tab:local_designs}. As expected, the more sophisticated ans\"atze are usually better at approximating a 2-design. Note further that the block implemented according to the Cartan decomposition of $SU(4)$ \cite{khaneja_cartan_2000,khaneja_time_2001} is not an exact 2-design, although it is capable of preparing any two-qubit gate. Nonetheless, among the gate families studied, this block is the closest approximation of a 2-design.

With a similar numerical experiment for $t=1$, we found that all blocks, except the particle-conserving block \cite{barkoutsos_quantum_2018}, are also exact 1-designs up to sampling tolerance.

\subsection{Plateau dependence}
\label{subsec:plateau_numeric}

\begin{figure}
    \centering
    \begin{subfigure}{.48\linewidth}
        \centering
        \includegraphics[width=\textwidth]{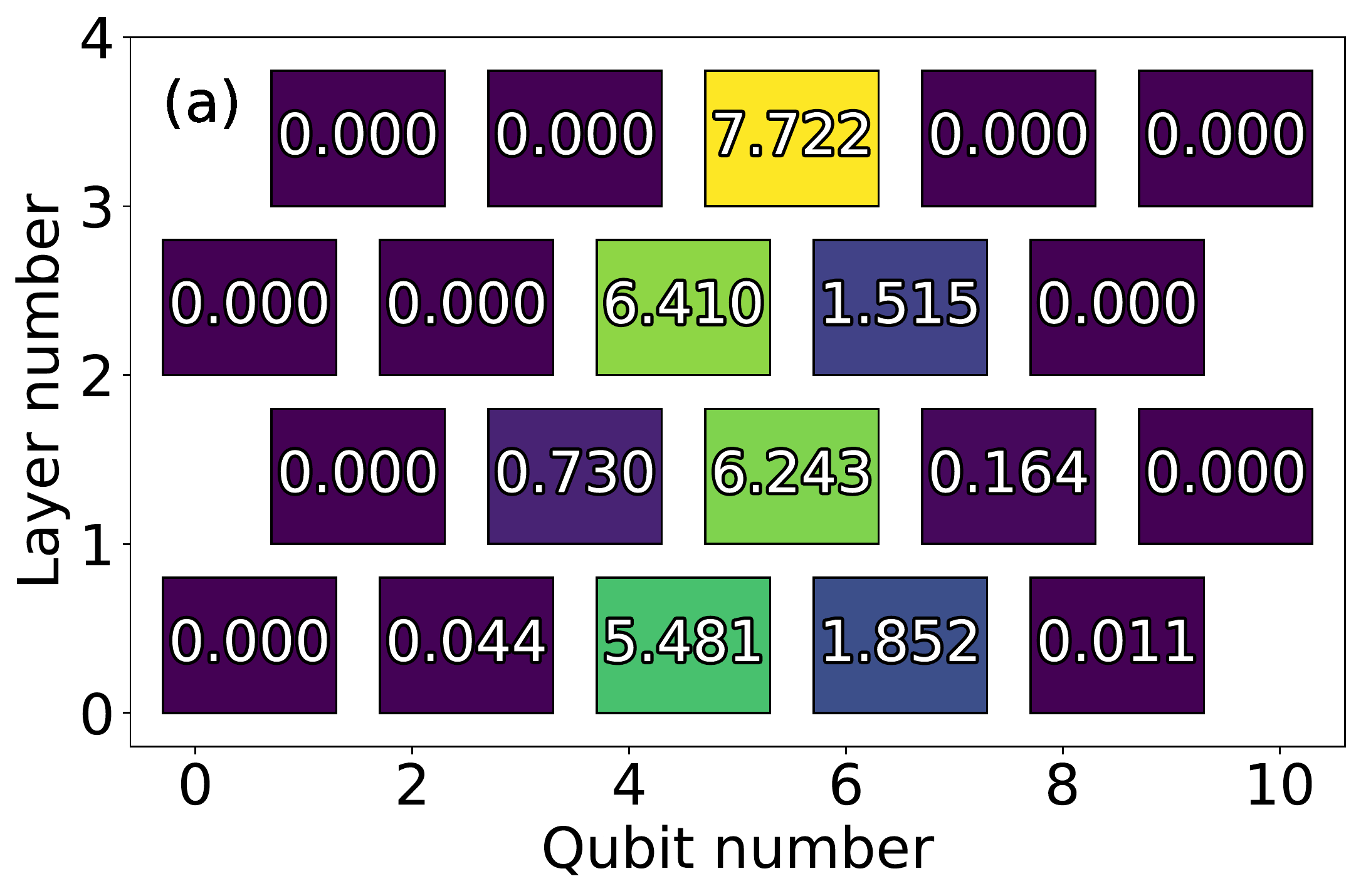}
    \end{subfigure}\begin{subfigure}{.48\linewidth}
        \centering
        \includegraphics[width=\textwidth]{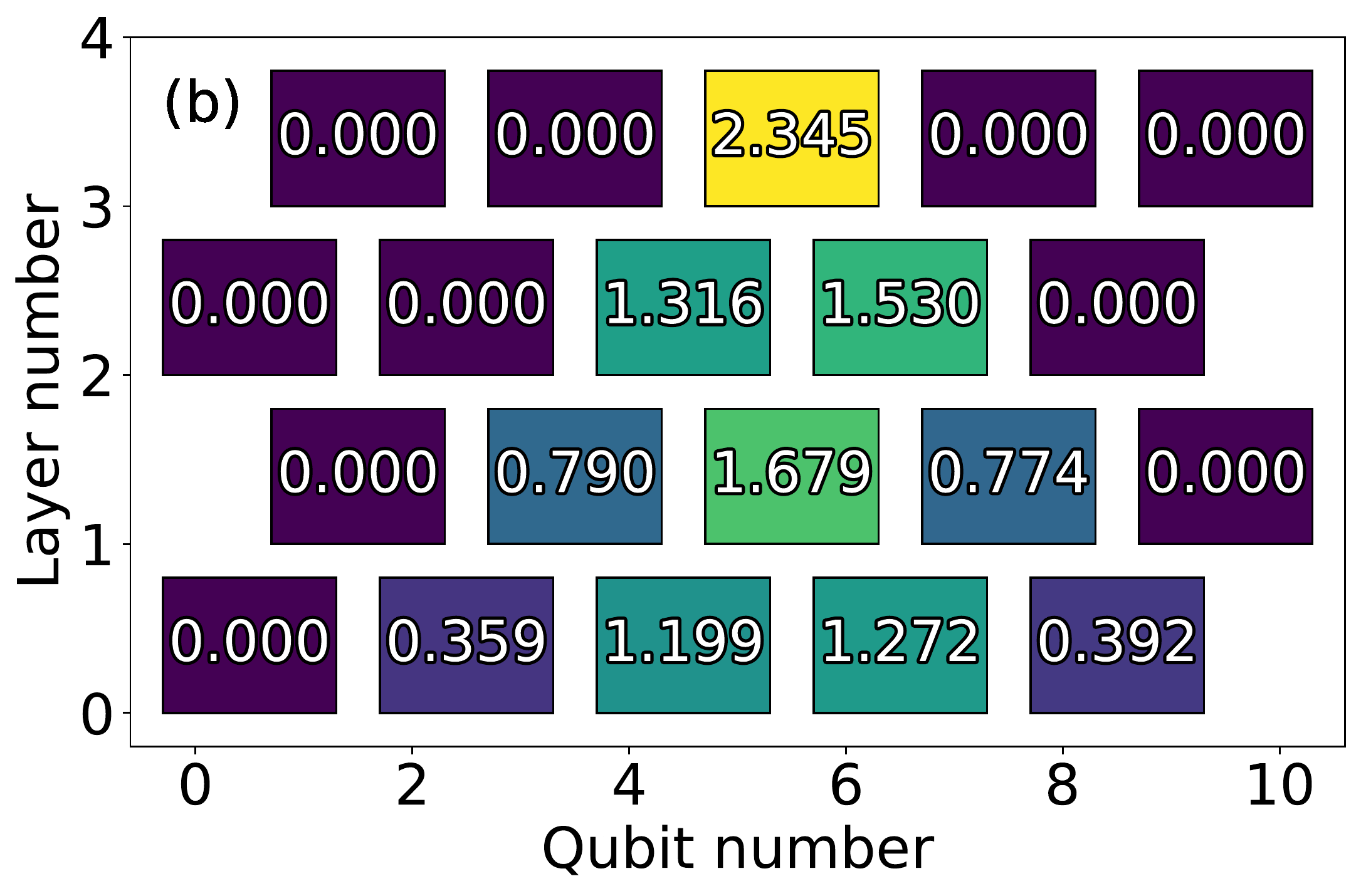}
    \end{subfigure}
    \begin{subfigure}{.48\linewidth}
        \centering
        \includegraphics[width=\textwidth]{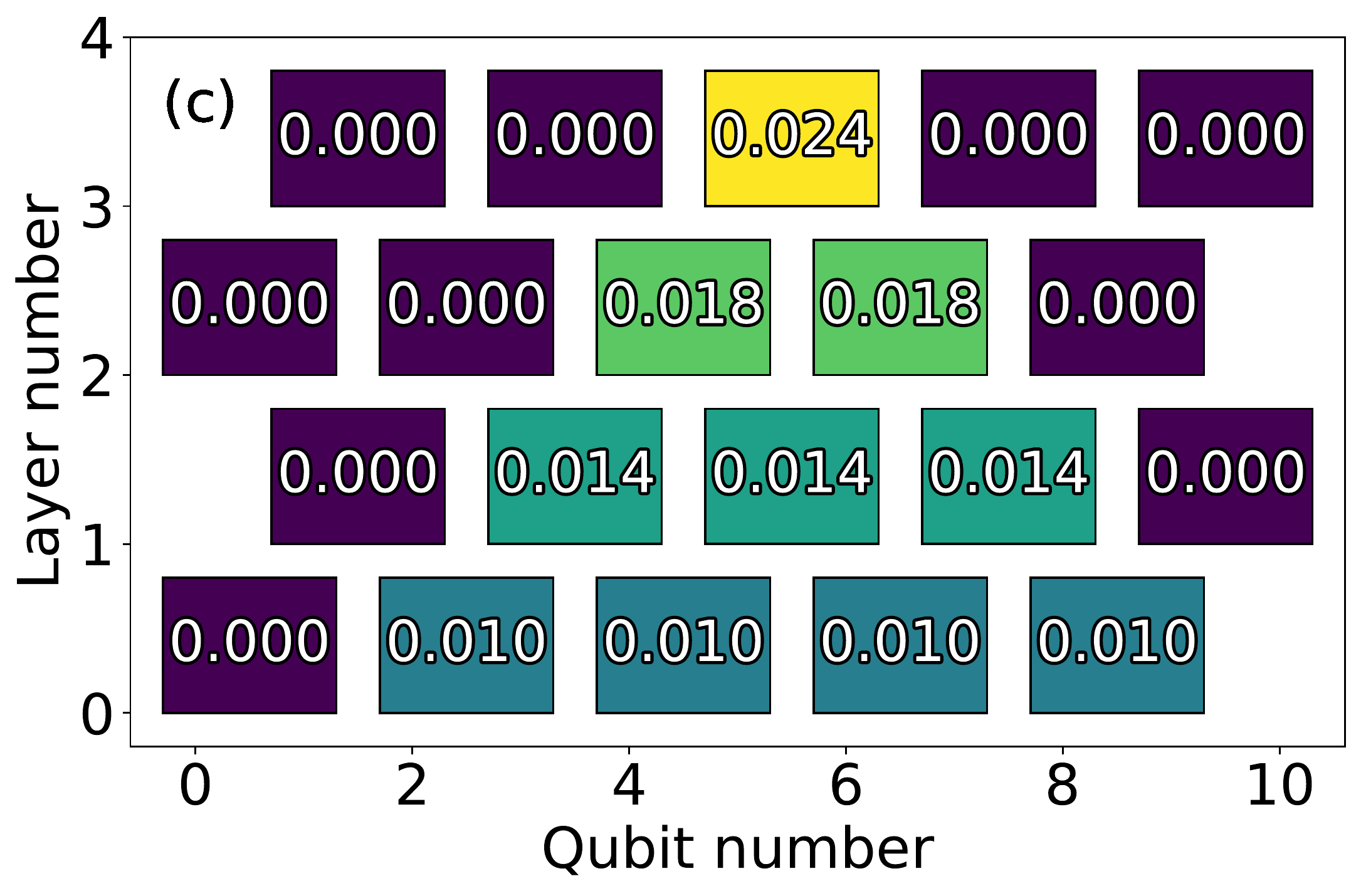}
    \end{subfigure}
    \caption{Derivative variances for $H = X_5$, averaged over parameters in each ansatz block. The numbers in the boxes denote $\Var \partial_\theta E \cdot 100$. Qubit number 10 is identified with qubit number 0. (a) Numerical result for an ansatz with blocks of $X,Z,ZZ$ rotations. (b) Numerical result for blocks implemented according to the Cartan decomposition. (c) Lower bound given in Theorem \ref{thm:block_plateaus}. Reprinted from~\cite{uvarov_barren_2021}.}
    \label{fig:one-local}
\end{figure}

To estimate gradients, we used the following analytical procedure \cite{mitarai_quantum_2018,schuld_evaluating_2019}: let $f(\theta)$ be the cost function, and $\theta$ a parameter which is included in the quantum circuit in a gate like $\exp(\rmi F\theta / 2)$ for some Pauli operator $F$. Then the derivative w.r.t.\ this parameter is equal to $(f(\theta + \pi /2) - f(\theta - \pi / 2))/2$. The simulations assume noise-free conditions and use the statevector simulator provided by Qiskit~\cite{aleksandrowicz_qiskit:_2019}.

The first Hamiltonian we tested our predictions on is the single-qubit Hamiltonian $H = X_5$ acting on $n=10$ qubits. The qubits are enumerated starting from zero. 
For $N = 400$ samples, the derivative with respect to each parameter was evaluated, then the variances were averaged over each block. We performed two numerical experiments with different two-qubit blocks from Table \ref{tab:local_designs}: one with blocks of $X$, $Z$, and $ZZ$ rotations, and the other with blocks implemented according to the Cartan decomposition of $\mc{SU}(4)$. The ansatz used was the checkerboard ansatz with ring connectivity. The result of the numerical test is shown in Fig.~\ref{fig:one-local}. 

\begin{figure}
    \centering
    \begin{subfigure}{.48\linewidth}
        \centering
        \includegraphics[width=\linewidth]{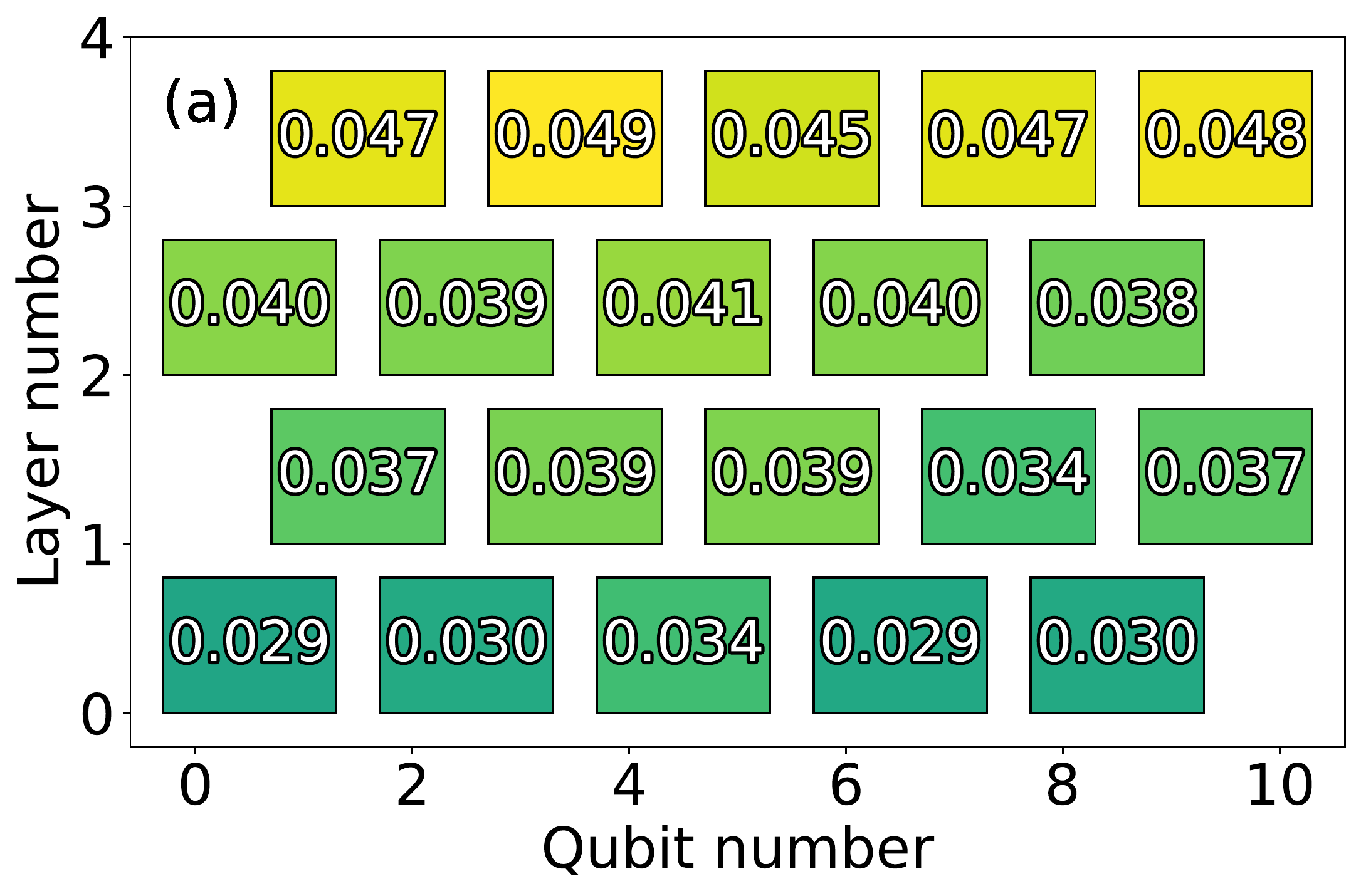}
    \end{subfigure}
    \begin{subfigure}{.48\linewidth}
        \centering
        \includegraphics[width=\linewidth]{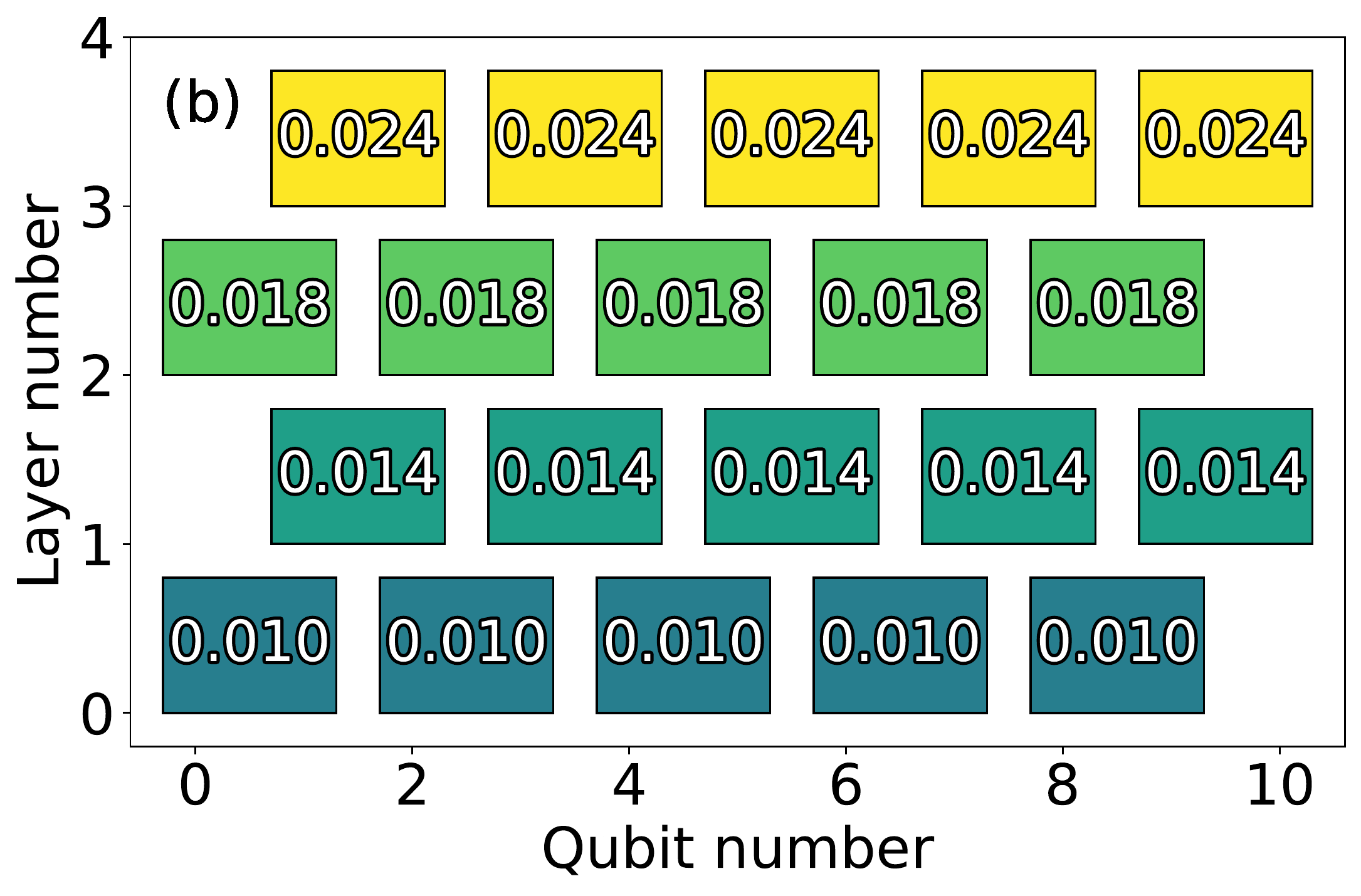}
    \end{subfigure}
    \caption{Derivative variances for $H = X^{\otimes n}$, numerical estimate (a) and the lower bound (b). Reprinted from~\cite{uvarov_barren_2021}.}
    \label{fig:n-local}
\end{figure}

In both experiments, the causal cone structure is evident, as is evident the tendency of the gradients to decrease with the decreasing number of layer. However, the Cartan decomposed blocks show smoother results. This result is consistent with the fact that the first block type is further from a 2-design. The condition that the block can be further decomposed into two independent local 2-designs is also violated in the first case. Because of these factors, the gradients are uneven.

The theoretical lower bound is fulfilled by a large margin in both cases. The lower bound also does not catch the difference of gradients within one layer, which tend to be more significant in the middle of the causal cone as opposed to the edges of the cone, where the gradients are much smaller.

Figure \ref{fig:n-local} shows the results of a similar numerical test for $H = X^{\otimes n}$ for $n = 10$. Here, the ``Cartan decomposition'' blocks were used in the ansatz. As implied by the lower bound and in accordance with the results previously found in the literature \cite{cerezo_cost-function-dependent_2020}, this $n$-local Hamiltonian exhibits barren plateaus even for a very shallow ansatz.

According to \eqref{eq:paulis_decouple}, variances for a Hamiltonian consisting of several Pauli strings are equal to the sum of variances computed for each Pauli string independently. We tested that prediction on a pair of Hamiltonians $H_1 = X_4 X_5$, $H_2 = X_5 X_6$. In this test, the ansatz acts on 10 qubits and consists of 4 layers of ``Cartan decomposition'' blocks. The variances for $H_1$ and $H_2$ separately are shown as a stacked bar chart in Fig.~\ref{fig:additive}a. Each bar corresponds to a parameter $\theta_i$ in the ansatz. Fig.~\ref{fig:additive}b shows the variances for $H_1 + H_2$. The qualitative agreement between the graphs is evident, and the differences for each parameter of the ansatz (shown in Fig.~\ref{fig:additive_delta}) are close to zero, up to the standard errors of the samples.

\subsection{Alternative ansatz architectures}
\label{subsec:alt_ansatz}

The width of the causal cone depends on the ansatz structure. Conversely, some ansatz structures may be less prone to barren plateaus. We performed the same numerical tests for two more circuit architectures that are better suited for NISQ devices.

\begin{figure}
    \centering
    \begin{subfigure}{.48\linewidth}
        \centering
        \includegraphics[width=\linewidth]{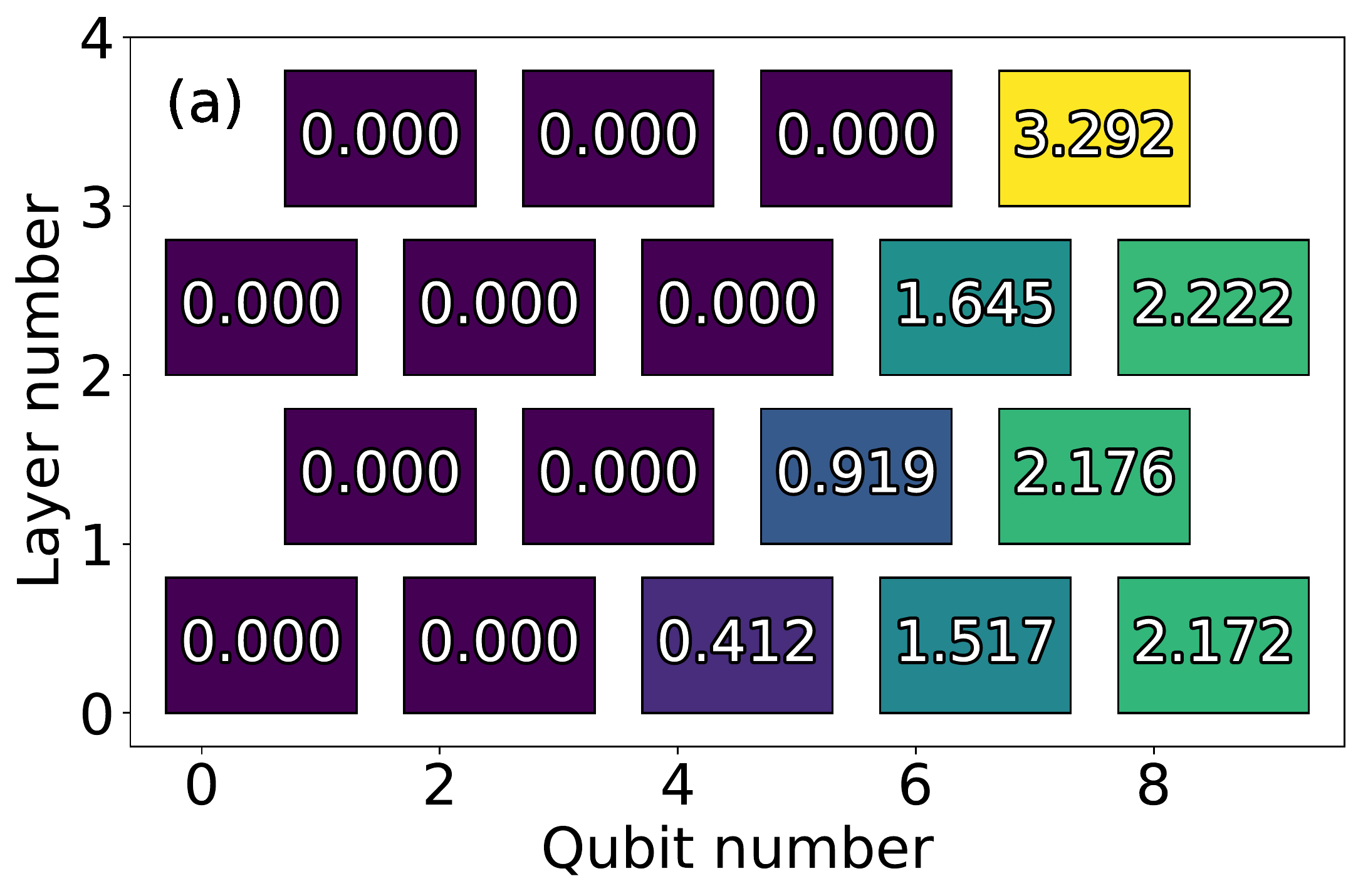}
    \end{subfigure}
    \begin{subfigure}{.48\linewidth}
        \centering
        \includegraphics[width=\linewidth]{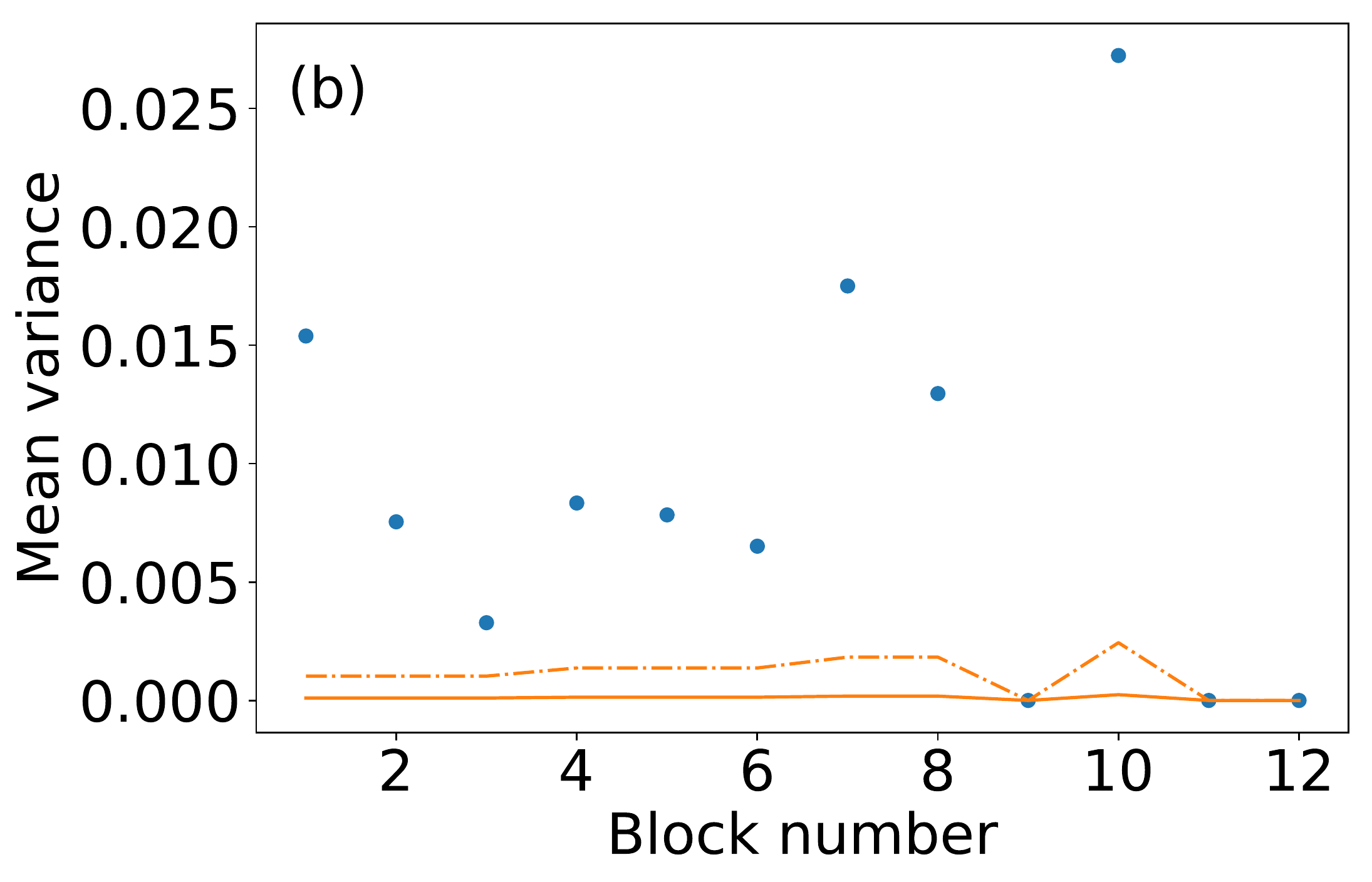}
    \end{subfigure}
    \caption{Blockwise averaged values of derivative variances with respect to one-local Pauli strings. (a) Numerical result for a line-connected checkerboard ansatz. (b) Numerical result for a two-dimensional lattice ansatz. Dots: numerical values, line: lower bound,  dot-dashed line: lower bound multiplied by 10. Reprinted from~\cite{uvarov_barren_2021}.}
    \label{fig:alt_connect}
\end{figure}

\subsubsection{Checkerboard with open boundary conditions}

For certain quantum computing platforms, e.g.~Calcium ions and Rydberg atoms, it is easiest to arrange qubits in a line and perform entangling gates acting on adjacent qubits. Unlike ring connectivity, this structure does not use direct coupling of the first qubit with the last qubit. Thus, the qubits closer to the edge will have narrower causal cones, and possibly higher values of the gradients. Fig.~\ref{fig:alt_connect}a shows the behavior of derivatives for such an architecture, for $H = X_8$. In comparison with the ring connectivity (Fig. \ref{fig:one-local}b), the gradient variances are significantly larger.

\begin{figure}
    \centering
    \includegraphics[width=\linewidth]{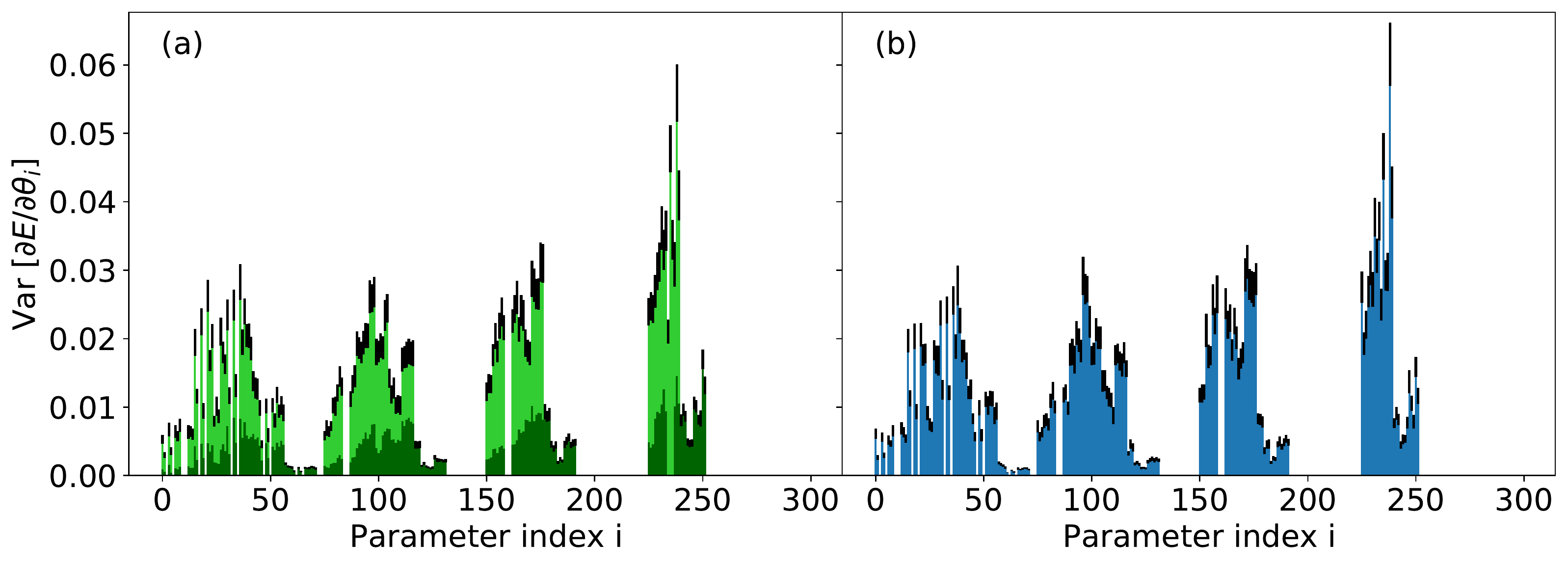}
    \caption{Variances of the cost function derivatives with respect to different ansatz parameters for $H_1 = X_5 X_6$, $H_2 = X_4 X_5$ (a), and their sum $H_1 + H_2$ (b). Reprinted from~\cite{uvarov_barren_2021}.}
    \label{fig:additive}
\end{figure}

\begin{figure}
    \centering
    \includegraphics[width=0.7\linewidth]{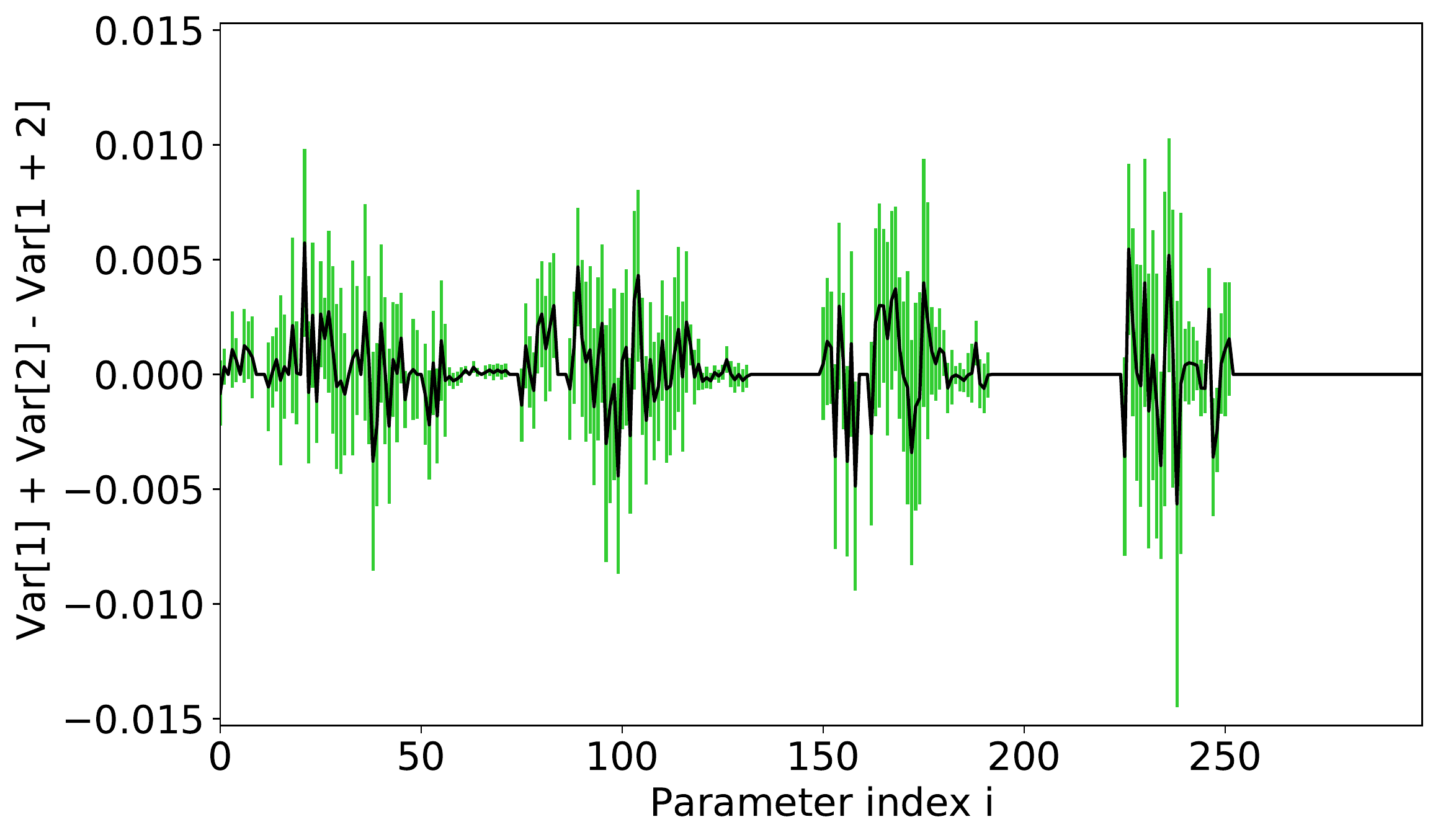}
    \caption{Difference between the variances plotted in Fig.~\ref{fig:additive}. Error bars denote one standard error. Reprinted from~\cite{uvarov_barren_2021}.}
    \label{fig:additive_delta}
\end{figure}

\subsubsection{Two-dimensional lattice}

We also tested the predictions of Theorem \ref{thm:block_plateaus} on a two-dimensional $3 \times 3$ lattice. The two-dimensional lattice ansatz is constructed as follows. The first layer of the ansatz is schematically depicted in Fig.~\ref{fig:2d_ansatz_scheme}. Every next layer is obtained from the last by rotating the layout 90 degrees clockwise. The two-dimensional ansatz consisted of four layers of two-qubit blocks. 

The numerical values of the derivative variances, as well as their lower bounds, are depicted in Fig.~\ref{fig:alt_connect}b. The causal cone structure for this ansatz is more convoluted, but it is possible to tell that some ansatz blocks are not included in the causal cone, and hence, the derivative over their parameters is equal to zero. 

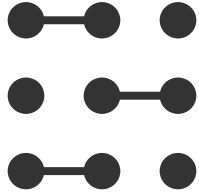
\begin{figure}
    \centering
    \begin{tikzpicture}[thick]
    \node at ( 0,0) [circle,draw=black!80,fill=black!80] {};
    \node at ( 0,1) [circle,draw=black!80,fill=black!80] {};
    \node at ( 0,2) [circle,draw=black!80,fill=black!80] {};
    \node at ( 1,0) [circle,draw=black!80,fill=black!80] {};
    \node at ( 1,1) [circle,draw=black!80,fill=black!80] {};
    \node at ( 1,2) [circle,draw=black!80,fill=black!80] {};
    \node at ( 2,0) [circle,draw=black!80,fill=black!80] {};
    \node at ( 2,1) [circle,draw=black!80,fill=black!80] {};
    \node at ( 2,2) [circle,draw=black!80,fill=black!80] {};
    
    \draw [draw=black!80,line width=3] (0,0) -- (1,0);
    \draw [draw=black!80,line width=3] (1,1) -- (2,1);
    \draw [draw=black!80,line width=3] (0,2) -- (1,2);
    \end{tikzpicture}    
    \caption{Connectivity of one layer in the 2D lattice ansatz. Other layers are formed by rotating this pattern by 90 degrees. Reprinted from~\cite{uvarov_barren_2021}.}
    \label{fig:2d_ansatz_scheme}
\end{figure}

% \section{Meaning of the operator 2-norm of the TPE}
% \label{sec:2-norm_TPE}

% \section{Structure of the two-dimensional lattice ansatz}
% \label{sec:2d_ansatz}

\section{Discussion}
Given a Hamiltonian, we can now estimate its susceptibility to the barren plateaus. One can hence preprocess Hamiltonians in order to make the optimization more viable. For example, a method similar to that of Ref.~\cite{ryabinkin_iterative_2020} could be employed.

Our results indicate that the severity of plateaus also depends on the structure of the ansatz. This may mean that some hardware topologies are more suited for VQE than others. For example, our numerical tests demonstrate that in line connectivity of the qubits, the gates on the edges are potentially less prone to vanishing derivatives than those in the middle.

The numerical tests provided in this chapter can be implemented in hardware as well. The computational cost of simulating a quantum computer using the best known methods is exponential either in time or in memory, while estimation of the gradients in quantum hardware (up to fixed absolute tolerance) is linear in the number of ansatz parameters, and possibly even sub-linear in the cardinality of the problem Hamiltonian, if clever simultaneous measurement strategies are used \cite{verteletskyi_measurement_2020}. 

At the first glance, the barren plateaus phenomenon looks like a ``no-go'' result that prevents training any quantum circuits of sufficiently high depth. However, since the condition is that the parameters are random, it is more of a caution to not initialize circuits at random. This was first noted in \cite{grant_initialization_2019}, where the authors proposed initializing most gates with zero rotation angles. In this situation, the starting point is not chosen uniformly randomly across the entire search space, but instead it is sampled from a small neighborhood of zero. 

Other plentiful variants of VQE (see Section \ref{sec:vqe_variants}) rely on dynamically adjusting the ansatz. This can also be seen as a convoluted ``initialization strategy'': when the process produces a long ansatz, it already suggests a good starting point. 

% In Chapter \ref{chap:vqe_numerics}, we observed that the onset of barren plateaus for an ansatz consisting of particle-conserving gates, despite clear evidence that such an ansatz is not a 2-design. As we discussed earlier, this may imply that such an ansatz is an approximate 2-design when restricted to the particle number subspace.
\chapter{Penalty Hamiltonians for state verification}
\label{chap:ghz}

In this chapter, we propose and study a new method to estimate the fidelity of Clifford states. Primarily we focus on the Greenberger--Horne--Zeilinger (GHZ) state, as preparation of this state with high fidelity is an important benchmark for quantum hardware. We will start by reviewing the existing techniques tailored to verification of the GHZ state. Then we will introduce our proposal based on measuring the expected value of a certain Hamiltonian constructed using the description of the quantum circuit. Finally, we will present our numerical experiments and discuss their results.

\section{Background}

There are various methods to characterize quantum devices. The choice of a technique depends on the information we need to obtain, the amount of resources we can spend, and the assumptions made about the system of interest. The most dierct method to characterize a state prepared by the device is called quantum tomography~\cite{dariano_quantum_2003,straupe_adaptive_2016}. In quantum tomography, one repeatedly prepares a state of interest and measures it in different bases to completely reconstruct its density matrix. Unfortunately, full quantum tomography takes exponential time in the number of qubits, so it is only applicable to small systems and gates. Another recently proposed method, shadow tomography~\cite{aaronson_shadow_2018,huang_predicting_2020,koh_classical_2020}, enables the estimation functions linear in the density matrix $\rho$ without characterizing the state completely. Average properties of quantum circuits prepared by the device can also be inferred using a process called randomized benchmarking~\cite{magesan_robust_2011-1,knill_randomized_2008}.

One more metric for quantum computers is the size of the largest entangled state it can reliably prepare. In what follows we focus on the Greenberger--Horne--Zeilinger (GHZ) state $\ket{\Psi} = \frac{1}{\sqrt{2}} (\ket{0...0} + \ket{1...1})$. This state is a textbook example of a quantum state exhibiting genuine multipartite entanglement, meaning that there is no bipartition of qubits that would make this state separable. In addition, any state whose fidelity with the GHZ state above $1/2$ is guaranteed to be entangled (this is true even for mixed states)~\cite{sackett_experimental_2000}. Because of that, the fidelity of $1/2$ is used as a threshold benchmark for real quantum devices. Due to the simple form of the GHZ state, it is possible to estimate its fidelity without resorting to full state tomography. In the following subsection, we describe the existing techniques developed specifically for the GHZ state.

\subsection{Fidelity measurement techniques for the GHZ state}

\subsubsection{Parity oscillations}

A common technique to evaluate the fidelity of a quantum state with the GHZ state is based on measuring the \emph{parity oscillations} \cite{sackett_experimental_2000,leibfried_toward_2004,leibfried_creation_2005,monz_14-qubit_2011,song_observation_2019,omran_generation_2019}. 
The diagonal components $\rho_{0...0, 0...0}$ and $\rho_{1...1, 1...1}$ are measured in the computational basis. If we measure the off-diagonal components $\rho_{0...0, 1...1}$ and $\rho_{1...1, 0...0}$ (which are complex conjugates of each other), we can calculate the fidelity. The latter is usually expressed as a sum of two unknown values, called population $P$ and coherence $C$:
\begin{equation}
    \label{eq:f_is_p_plus_c}
    F = \frac{1}{2} (P + C).
\end{equation}
Here $P = \rho_{0...0, 0...0} + \rho_{1...1, 1...1}$, and 
$C = 2 |\rho_{1...1, 0...0}|$. 
\begin{remark}
    Strictly speaking, the fidelity is equal to $\frac{1}{2} (P + 2 \operatorname{Re} \rho_{1...1, 0...0})$. However, if one only cares about the multipartite entanglement, then one can estimate fidelity with the closest state of the type $\frac{1}{\sqrt{2}} (\ket{0...0} + e^{\rmi \gamma}\ket{1...1})$. In this case, the estimate is given by Eq.~(\ref{eq:f_is_p_plus_c}).
\end{remark}
The population is straightforward to measure. Evaluating coherence, on the other hand, requires a more complicated setup. The implementation is different depending on the platform, but the overall idea is to induce a phase difference $\varphi$ between the components of the GHZ state, and then measure an observable that would be sensitive to this phase difference.

The method employed in trapped-ion processors  consists in applying a local unitary gate $U = e^{-\rmi \frac{\pi}{4} (\cos \varphi X + \sin \varphi Y)}$ to every qubit. Then, the state is measured in the $Z$ basis. The value of interest is the parity, i.e.~the ratio of measurements with an even number of ones minus the ratio of measurements with an odd number of ones. Effectively, this means measuring the expected value of $Z^{\otimes n}$. If we denote $\rho_{0...0, 1...1} = re^{i \gamma}$, then the value of parity will have an oscillating component equal to $2r \cos(n \varphi - \gamma)$. Fitting the parity with a sine curve and extracting $\rho$ and $\gamma$ enables the calculation of fidelity.

\subsubsection{Multiple quantum coherence}

The value of $C$ can also be measured by method of multiple quantum coherences (MQC) \cite{wei_verifying_2020}. Excluding the technical error-mitigating steps, the method goes as follows:
\begin{enumerate}
    \item Prepare the GHZ state;
    % \item Apply a $X$ gate to every qubit;
    \item Apply a $Z$ rotation with angle $\varphi$ to each qubit;
    \item Unprepare the GHZ state, i.e.~apply the entangling gates in reverse order;
    \item Measure the first qubit in the $Z$ basis. Record the  probability of measuring $\ket{0...0}$ as the overlap signal $S_\varphi$.
\end{enumerate}

The steps 1-4 must be repeated for $\varphi = \frac{\pi j}{n+1}$ for $j = 0, 1, ..., 2n+1$. Finally, by computing the Fourier transform of $S_\varphi$ and taking the highest-frequency component $I_n$, we can calculate the coherence as $C = 2 \sqrt{I_n}$. 

To mitigate errors, the authors also apply an $X$ gate to every qubit after step one. For a perfect GHZ state, this step does nothing. However, in experimental conditions there is a coherent error that results in the state drifting from the desired position. The application of $X$ gates effectively reverses the direction of the drift, enabling better overall coherence of the state for a while. This is analogous to the spin echo effect~\cite{hahn_spin_1950}.

It is interesting to compare MQC to parity oscillations. An obvious downside is that MQC requires the circuit to be twice as long. Another disadvantage is that, strictly speaking, the method depends on certain assumptions about the noise in the device~\cite{garttner_relating_2018}. However, there are advantages as well. First, MQC enables the aforementioned error mitigation by $\pi$-pulses. Second, MQC is easier to correct for readout errors \cite{wei_verifying_2020}. In MQC, the ideal measurement contains only the strings $00...0$ and $10...0$, as opposed to parity oscillations, where all strings should be expected.

The current record of preparing the GHZ state for 27 qubits \cite{mooney_generation_2021} uses the MQC technique for verification.

% How do you estimate fidelity with RB? Can you at least do it on average?

\section{Stability lemma and telescope construction}

Quantum states can be encoded into ground states of local Hamiltonians. If one has such a Hamiltonian with a unique ground state $\ket{\psi}$, then measuring its energy with respect to the state $\ket{\phi}$ enables us to estimate the fidelity of $|\braket{\phi}{\psi}|^2$ via the following lemma, extending the stability lemma from \cite{biamonte_universal_2021}.

\begin{proposition}[Stability lemma]
    \label{prop:stability}
    Let $H$ be a Hamiltonian with eigenvalues $0 = \lambda_0 < \Delta = \lambda_1 \leq \lambda_2 \leq ... \leq \lambda_{\mathrm{max}}$ and corresponding eigenvectors $\ket{\lambda_i}$. Let $\rho$ be a density operator such that $E = \Tr \rho H \leq \Delta$. Then
    \begin{equation}
        \label{eq:stability_lemma}
        1 - \frac{E}{\Delta} 
        \leq \bra{\lambda_0} \rho \ket{\lambda_0}
        \leq 1 - \frac{E}{\lambda_{\mathrm{max}}}.
    \end{equation}
\end{proposition}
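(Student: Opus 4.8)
The plan is to diagonalize $\rho$ against the eigenbasis of $H$ and reduce both inequalities to elementary estimates on the resulting probability weights. Write $p_i = \bra{\lambda_i} \rho \ket{\lambda_i}$; since $\rho$ is a density operator, $p_i \geq 0$ and $\sum_i p_i = 1$. The energy is then $E = \Tr \rho H = \sum_i \lambda_i p_i = \sum_{i \geq 1} \lambda_i p_i$, using $\lambda_0 = 0$. The quantity we want to bound is exactly $p_0 = 1 - \sum_{i \geq 1} p_i$, so everything comes down to estimating $\sum_{i \geq 1} p_i$ from above and below in terms of $E$.

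For the lower bound on $p_0$, I would use that every $\lambda_i$ with $i \geq 1$ satisfies $\lambda_i \geq \Delta$, hence $E = \sum_{i\geq 1}\lambda_i p_i \geq \Delta \sum_{i\geq 1} p_i$, which gives $\sum_{i \geq 1} p_i \leq E/\Delta$ and therefore $p_0 \geq 1 - E/\Delta$. (This is where the hypothesis $E \leq \Delta$ matters: it guarantees the bound $1 - E/\Delta$ is nonnegative, so the statement is not vacuous; the inequality itself holds regardless.) For the upper bound on $p_0$, I would symmetrically use $\lambda_i \leq \lambda_{\mathrm{max}}$ for all $i$, so $E = \sum_{i \geq 1} \lambda_i p_i \leq \lambda_{\mathrm{max}} \sum_{i \geq 1} p_i$, giving $\sum_{i \geq 1} p_i \geq E/\lambda_{\mathrm{max}}$ and hence $p_0 \leq 1 - E/\lambda_{\mathrm{max}}$. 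Both chains are one-line computations once the spectral decomposition is set up.

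The one genuinely subtle point — the "main obstacle," such as it is — is the statement $\bra{\lambda_0}\rho\ket{\lambda_0} = p_0$ itself when the ground eigenvalue is degenerate or when $\ket{\lambda_0}$ is merely one choice of ground vector. If $\lambda_0 = 0$ has multiplicity one, $p_0 = \langle\lambda_0|\rho|\lambda_0\rangle$ is literally the $0$-th diagonal entry and there is nothing to check. If it is degenerate, I would note that the argument actually bounds the total ground-space population $\sum_{i : \lambda_i = 0} p_i$ from below by $1 - E/\Delta$, and for the \emph{upper} bound one should read $\lambda_0 = \lambda_1 = \dots = 0$ contributing zero to $E$, so the upper estimate $p_0 \leq 1 - E/\lambda_{\mathrm{max}}$ needs $\lambda_{\mathrm{max}} > 0$ and uses only that the nonzero eigenvalues are all $\leq \lambda_{\mathrm{max}}$. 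Since the hypotheses as written fix a spectral gap $0 = \lambda_0 < \Delta = \lambda_1$, i.e.\ the ground state is unique, this ambiguity does not arise and I would simply present the clean non-degenerate version, remarking in passing that it extends to the degenerate case with $p_0$ reinterpreted as the ground-space weight. I would close by observing that equality in the left inequality holds iff $\rho$ is supported on $\operatorname{span}\{\ket{\lambda_0},\ket{\lambda_1}\}$, and in the right iff $\rho$ is supported on $\operatorname{span}\{\ket{\lambda_0},\ket{\lambda_{\mathrm{max}}}\}$, which is a useful sanity check but not needed for the statement.
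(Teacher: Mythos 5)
Your proof is correct and is essentially identical to the paper's: both diagonalize in the eigenbasis of $H$, write $\bra{\lambda_0}\rho\ket{\lambda_0} = 1 - \sum_{i>0}\rho_{ii}$, and bound $\sum_{i>0}\rho_{ii}$ by comparing each $\lambda_i$ to $\Delta$ from below and to $\lambda_{\mathrm{max}}$ from above. The remarks on degeneracy, the role of the hypothesis $E \leq \Delta$, and the equality conditions are correct extras not present in the paper.
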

\begin{proof}
    The trace of $\rho$ is equal to one, so the desired fidelity is a function of all other diagonal elements:
    \begin{equation}
        \label{eq:stability_proof}
        \rho_{00} = \bra{\lambda_0} \rho \ket{\lambda_0} = 1 - \sum_{i > 0} \rho_{ii}.
    \end{equation}
    The condition that $E$ is below the gap lets us find an upper bound on $\sum_{i > 0} \rho_{ii}$:
    \begin{equation}
        \sum_{i > 0} \rho_{ii} = \frac{1}{\Delta} \sum_{i > 0} \rho_{ii} \Delta \leq \frac{1}{\Delta} \sum_{i > 0} \rho_{ii} \lambda_{i} = \frac{E}{\Delta}.
    \end{equation}
    Another bound can be obtained in a similar fashion:
    \begin{equation}
        \sum_{i > 0} \rho_{ii} = \frac{1}{\lambda_{\text{max}}} \sum_{i > 0} \rho_{ii} \lambda_{\text{max}} \geq \frac{1}{\lambda_{\text{max}}} \sum_{i > 0} \rho_{ii} \lambda_{i} = \frac{E}{\lambda_{\text{max}}}.
    \end{equation}
    Substituting these two bounds into (\ref{eq:stability_proof}) yields the desired bounds.
\end{proof}

So, if we want to estimate the fidelity of quantum state $\rho$ with respect to the pure state $\ket{\psi}$ that was intended to be prepared, what we can do is find a Hamiltonian that contains that state in its ground space, measure the expected value of said Hamiltonian, and then find the bounds using Lemma~\ref{prop:stability}. In general, it is not clear how --- and whether it is possible --- to construct such a Hamiltonian so that it contains a polynomial number of Pauli terms, although the state can be encoded in the ground state of a local Hamiltonian using ancilla qubits~\cite{kitaev_classical_2002,biamonte_universal_2021}. However, for a class of states called Clifford states the task of finding a good Hamiltonian is rather easy. To explain the construction, we first define the Clifford group.

\begin{definition}[Clifford group]
    The \emph{Clifford group} $\mathcal{C} \subset \mc{U}(2^n)$ consists of operators $U$ such that their action by conjugation maps Pauli strings to Pauli strings. In other words, the Clifford group is the normalizer of the Pauli group. A quantum state of the type $U \ket{0...0}$ for $U \in \mc{C}$ is called a \emph{Clifford state}.
\end{definition}

What makes this group relevant is that it includes a large class of quantum circuits called \emph{Clifford circuits}.

\begin{proposition}[\cite{nielsen_quantum_2010}]
    The Clifford group is generated by the following gates, called respectively Hadamard, Phase and CNOT:
    \begin{equation}
        H = \frac{1}{\sqrt{2}}\begin{pmatrix}
            1 & 1 \\ 1 & -1
        \end{pmatrix},
        \quad
        P = \begin{pmatrix}
            1 & 0 \\ 0 & \rmi
        \end{pmatrix},
        \quad
        CNOT = \begin{pmatrix}
            1 & 0 & 0 & 0 \\ 
            0 & 1 & 0 & 0 \\ 
            0 & 0 & 0 & 1 \\ 
            0 & 0 & 1 & 0 
        \end{pmatrix}.
    \end{equation}
\end{proposition}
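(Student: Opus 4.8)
The plan is to establish the two inclusions $\langle H, P, \mathrm{CNOT}\rangle \subseteq \mc{C}$ and $\mc{C} \subseteq \langle H, P, \mathrm{CNOT}\rangle$, the latter understood modulo a global phase (note that $\langle H, P, \mathrm{CNOT}\rangle$ already contains $\rmi\,\id$: one has $Z = P^2$, $X = HP^2H$, $Y = PXP^\dagger$, and $XYZ = \rmi\,\id$). The first inclusion is a direct computation on the Pauli generators: $HXH = Z$, $HZH = X$, $PXP^\dagger = Y$, $PZP^\dagger = Z$, and conjugation by $\mathrm{CNOT}$ sends $X\otimes\id \mapsto X\otimes X$, $\id\otimes X \mapsto \id\otimes X$, $Z\otimes\id \mapsto Z\otimes\id$, $\id\otimes Z \mapsto Z\otimes Z$. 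Since conjugation is a homomorphism and the $X_j, Z_j$ generate the Pauli group, every word in $H$, $P$, $\mathrm{CNOT}$ (and inverses) normalizes the Pauli group, which is the first inclusion.

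For the reverse inclusion, the key observation is that a Clifford unitary $U$ is determined up to a global phase by the $2n$ Pauli strings $U X_j U^\dagger$ and $U Z_j U^\dagger$: these generate the full Pauli group, which spans $\operatorname{End}((\mathbb{C}^2)^{\otimes n})$, so $\operatorname{Ad}_U$ is fixed, and $\operatorname{Ad}_U = \operatorname{Ad}_{U'}$ forces $U'^\dagger U$ to commute with everything, hence $U = e^{\rmi\vartheta}U'$ by Schur's lemma. Moreover $\operatorname{Ad}_U$ preserves all commutation and anticommutation relations among Pauli strings. It therefore suffices to build $C \in \langle H, P, \mathrm{CNOT}\rangle$ such that $CU$ fixes every $X_j$ and every $Z_j$ up to sign; then $CU$ is a Pauli string times a phase, and since all Pauli strings lie in $\langle H, P, \mathrm{CNOT}\rangle$, so does $U$ modulo phase.

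The construction proceeds by induction on $n$. First I would prove the "Gaussian elimination on Paulis" lemma: for any Pauli string $g \neq \pm\id$ there is $C \in \langle H, P, \mathrm{CNOT}\rangle$ with $CgC^\dagger = \pm X_1$ — apply single-qubit $H$ and $P$ to normalize each non-identity factor of $g$ to $X$, then apply a ladder of $\mathrm{CNOT}$s (each conjugation step sending $X\otimes X \mapsto X\otimes\id$) to sweep the resulting $X$-string onto the first qubit. Applying this to $g = UX_1U^\dagger$ yields $C_1$ such that $C_1U$ fixes $X_1$ (a residual sign is removed by conjugating with $Z_1$). Writing $V = C_1U$, the string $g' = VZ_1V^\dagger$ anticommutes with $X_1$, so its first factor is $Y$ or $Z$; a second elimination, now using only generators commuting with $X_1$, produces $C_2 \in \langle H, P, \mathrm{CNOT}\rangle$ fixing $X_1$ and sending $g'$ to $\pm Z_1$. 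Then $C_2C_1U$ fixes $X_1$ and $Z_1$, so by the determination-up-to-phase statement it acts trivially on qubit~$1$ and as a Clifford on qubits $2,\dots,n$; the inductive hypothesis finishes the proof, with base case $n=1$ the finite check that the order-$24$ single-qubit Clifford group modulo phase is generated by $H$ and $P$.

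The main obstacle I anticipate is the second elimination step: producing $C_2$ that normalizes $g'$ while exactly preserving $X_1$ forbids the free use of $H$ and $P$ on qubit~$1$, so one must work with the restricted generating set (single-qubit gates on qubits $2,\dots,n$, $\mathrm{CNOT}$s targeting qubit~$1$, and a carefully placed pair of Hadamards on qubit~$1$) while tracking every sign. The tidiest bookkeeping device is to pass to the binary symplectic representation of the Pauli group, in which Cliffords modulo Paulis and phases are elements of $\mathrm{Sp}(2n,\mathbb{F}_2)$ and the claim reduces to showing that the images of $H$, $P$, $\mathrm{CNOT}$ generate $\mathrm{Sp}(2n,\mathbb{F}_2)$; I would nonetheless keep the constructive elimination argument in the main text, since it is self-contained, matches the computational flavor of the chapter, and is essentially the algorithm underlying the fidelity-estimation construction developed next.
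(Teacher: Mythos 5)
Your argument is correct, but note that the paper does not actually prove this proposition: it is stated with a citation to \cite{nielsen_quantum_2010}, and the only computation the text supplies is the conjugation table for $H$, $P$ and $\mathrm{CNOT}$ on the Pauli generators --- which is exactly your easy inclusion $\langle H,P,\mathrm{CNOT}\rangle\subseteq\mc{C}$ and nothing more. So your proposal supplies the substantive direction that the paper outsources. The structure you use (determination of a Clifford up to phase by its action on $X_j,Z_j$, then Gaussian elimination to bring $UX_1U^\dagger$ and $UZ_1U^\dagger$ to $X_1$ and $Z_1$, then induction via the commutant argument) is the standard textbook proof, and every step you sketch goes through. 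Two small points worth tightening if you write it out: (i) in the first elimination, if qubit~$1$ is not in the support of $g$, the $\mathrm{CNOT}$ ladder collapses the $X$-string onto some qubit in its support, and you then need a $\mathrm{SWAP}$ (three $\mathrm{CNOT}$s) to move it to position~$1$; (ii) the second elimination step you flag as the main obstacle does work with the restricted gate set --- after clearing the tail of $g'$ on qubits $2,\dots,n$ with gates that do not touch qubit~$1$ and absorbing the residual $Z_2$ with a $\mathrm{CNOT}$ targeting qubit~$1$ (which fixes $X_1$), the remaining single-qubit normalization $Y_1\mapsto Z_1$ is achieved by conjugation with $HPH$, which fixes $X_1$. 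Your remark that the tidiest packaging is the isomorphism of Cliffords modulo Paulis and phases with $\mathrm{Sp}(2n,\mathbb{F}_2)$ is also correct, though generation of the symplectic group by the images of $H$, $P$, $\mathrm{CNOT}$ is itself a statement one would then have to prove, so the constructive elimination is the more self-contained route.
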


These gates act by conjugation on Pauli strings as follows (we write here only the action on generators of the Pauli group): 

\begin{align}
    H&: X \mapsto Z, Z \mapsto X \\
    P&: X \mapsto Y, Z \mapsto Z \\
    CNOT&: XI \mapsto XX, IX \mapsto IX, ZI \mapsto ZI, IZ \mapsto ZZ
\end{align}

The key fact about the Clifford group is that Clifford circuits can be simulated classically in polynomial time \cite{gottesman_heisenberg_1998,aaronson_improved_2004}. The idea is to keep track not of the Clifford state itself, but of the Pauli strings that stabilize it. The stabilizer subgroup of any Clifford state contains $2^n$ Pauli strings, however it is sufficient to consider just $n$ generating Pauli strings. With each gate, we act on them by conjugation and again receive Pauli strings. Ref.~\cite{aaronson_improved_2004} develops a more advanced algorithm to calculate the stabilizers and to evaluate the observables w.r.t.~the Clifford state.

The telscoping construction uses the idea of Clifford stabilizers. First we consider the state $\ket{0...0}$. Its stabilizer subgroup is generated by one-local Pauli strings $Z_i$. It is straightforward to check that the following Hamiltonian has $\ket{0...0}$ as the only ground state:

\begin{equation}
    H_0 = -\sum_{i = 1}^n Z_i + n \id
\end{equation}

Here the constant operator is added to shift the spectrum of the Hamiltonian, so that the ground state has zero energy. For any unitary $U$, it is true that the state $U \ket{0...0}$ is the unique ground state of the Hamiltonian $U H_0 U^\dagger$. When $U$ is a Clifford circuit, $H_0$ consists of $(n+1)$ Pauli strings. Thus we define this to be the \emph{telescope Hamiltonian} for the given Clifford circuit.

\begin{definition}
    A \emph{telescope Hamiltonian} for a Clifford circuit $U_L ... U_1$ is the following Hamiltonian:
    \begin{equation}
        \label{eq:telescope}
        H_{\text{tele}} = U_L ... U_1 H_0 U_1^\dagger ... U_L^\dagger.
    \end{equation}
\end{definition}

In particular, we can explicitly find the telescope Hamiltonian for the GHZ state. Recall that a standard circuit for preparing it looks like this:
\begin{equation*}
    \Qcircuit @C=1.0em @R=1.0em {
       & \lstick{\ket{0}} & \gate{H} & \ctrl{1} 
       & \qw & \qw & \qw  & \qw
       \\
       & \lstick{\ket{0}} & \qw & \targ 
       & \ctrl{1} & \qw & \qw & \qw
       \\
       & \lstick{\ket{0}} & \qw & \qw
       & \targ & \ctrl{1} & \qw & \qw
       \\
       & \lstick{\ket{0}} & \qw & \qw
       & \qw & \targ & \qw & \qw
       \\ & \ & \ & \ & ... & \ 
       \\
       & \lstick{\ket{0}} & \qw & \qw
       & \qw & \qw & \ctrl{1} & \qw
       \\
       & \lstick{\ket{0}} & \qw & \qw
       & \qw  & \qw & \targ & \qw
    }
\end{equation*}
By acting with Clifford gates by conjugation on $H_0$, we arrive to the following Hamiltonian:

\begin{equation}
    H = -\sum_{i=1}^{n-1} Z_i Z_{i+1} - X^{\otimes n} + n.
\end{equation}

Its only ground state is the GHZ state, and the constant term sets the ground state energy to zero. Thus, we can estimate the fidelity of a candidate GHZ state. Using Lemma~\ref{prop:stability}, we obtain
\begin{equation}
    1 - \frac{\langle H_\mathrm{tele} \rangle}{2}  \leq F \leq 1 - \frac{\langle H_\mathrm{tele} \rangle}{2n}.
\end{equation}
Evaluating this energy requires two series of measurements: in the $Z$ basis and in the $X$ basis. Unlike the parity oscillations technique, there is no need to scan through different values of $\varphi$.

\section{Numerical experiments}

\subsection{Local depolarizing errors}

It is interesting to compare the scaling of errors for different fidelity estimation techniques, e.g.~as a function of the number of shots and of the number of qubits. There is a certain difficulty with that: the results also heavily depend on the noise model. In the following, we adopt a simple noise model. We assume that every single-qubit (two-qubit) gate is followed by a single-qubit (two-qubit) depolarizing noise channel with depolarization probability $p_1$ ($p_2$).

In discussion of errors, we need to stress that we can estimate random errors, but not necessarily systematic errors. For example, the parity oscillation method requires us to measure qubits in the $X$ basis, but this means applying an extra Hadamard gate to every qubit before measurement. As a result, in the depolarizing noise model, this leads to a systematic underestimation of coherence. Random errors are, on the other hand, analyzable using the measurement statistics as usual.

We estimate the random errors of population and telescopic bounds by standard techniques. That is, the standard error of the mean is the square root of the sample variance divided by the number of measurements $\sqrt{N_{shots}}$. The coherence measurements are more complicated. In the case of parity oscillations, the coherence is a parameter obtained from a curve fit, while for MQC the coherence is obtained as a specific Fourier coefficient of the data. For these two estimates, the standard error is calculated by parametrized bootstrap resampling (using $B=100$ bootstraps). More specifically, for each value of $\varphi$, we treat the measurement result as having two outcomes (''even'' or ''odd'' for parity, ''0'' or ''1'' for multiple quantum coherences). Then, we infer the success probability $q_\varphi$ for each point. Then, we prepare the bootstrap measurement data by sampling from the binomial distributions $\mathcal{B}(q_\varphi, N_{shots})$. Finally, we obtain the coherence estimate. Repeating this $B$ times yields $B$ bootstrap coherence estimates $\hat{C}_i$. The variance of these coherences is our estimate of the standard error.

Figure \ref{fig:fidelity_terms} shows the behavior of the estimates of different terms as we increase the number of shots that are used in the experiment. The experiment was conducted for $n=8$ qubits with the depolarizing noise model ($p_1 = 0.001, p_2 = 0.01$). The telescope bounds in this case are substantially looser than the estimates from the other techniques, except for the small shot counts.

Figure \ref{fig:errors_as_nq} shows the dependence of the standard error on the number of qubits $n$. The number of shots per point was fixed at $2^{15}$. Surprisingly, the multiple quantum coherences method appears to show the worst scaling of the error, while the parity oscillations method gives an almost constant random error. The telescopic lower bound has a random error steadily growing with the size of the system. The upper bound behaves as $1/N$ by construction, so its error appears to decrease.

\begin{figure}
    \centering
    \includegraphics[width=0.7\textwidth]{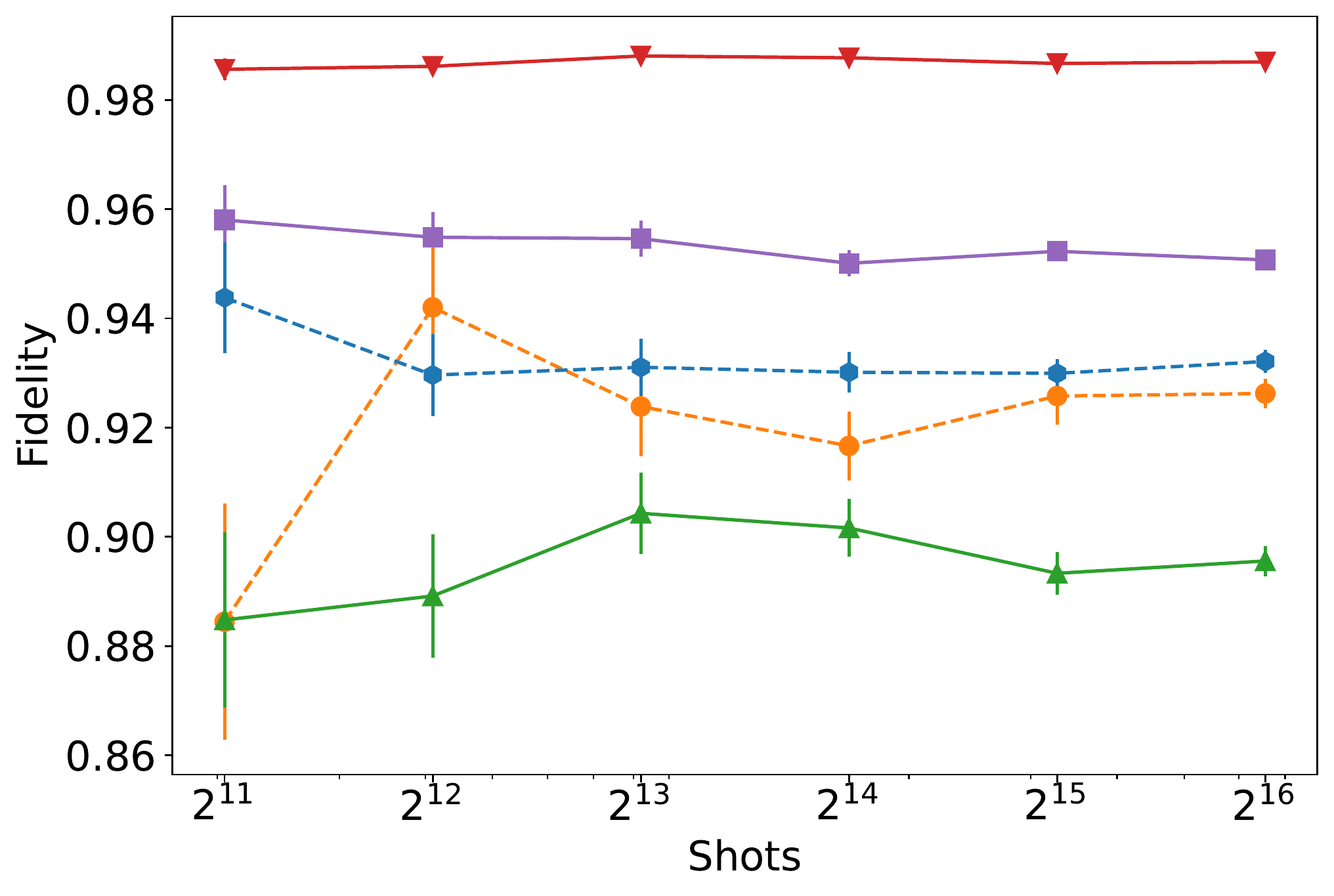}
    \caption{Sample size dependence of telescope lower ($\blacktriangle$) and upper ($\blacktriangledown$) bounds, population estimate ($\blacksquare$), and coherence estimates using parity oscillations ($\bullet$) and MQC ($\hexagofill$). Vertical lines denote one standard error.}
    \label{fig:fidelity_terms}
\end{figure}

Finally, Figure \ref{fig:errors_as_nshots} shows the dependence of random errors on the number of shots, while everything else is fixed. As expected, for all methods, the error behaves as $1/\sqrt{N_{shots}}$.

\begin{figure}
    \centering
    \includegraphics[width=0.7\textwidth]{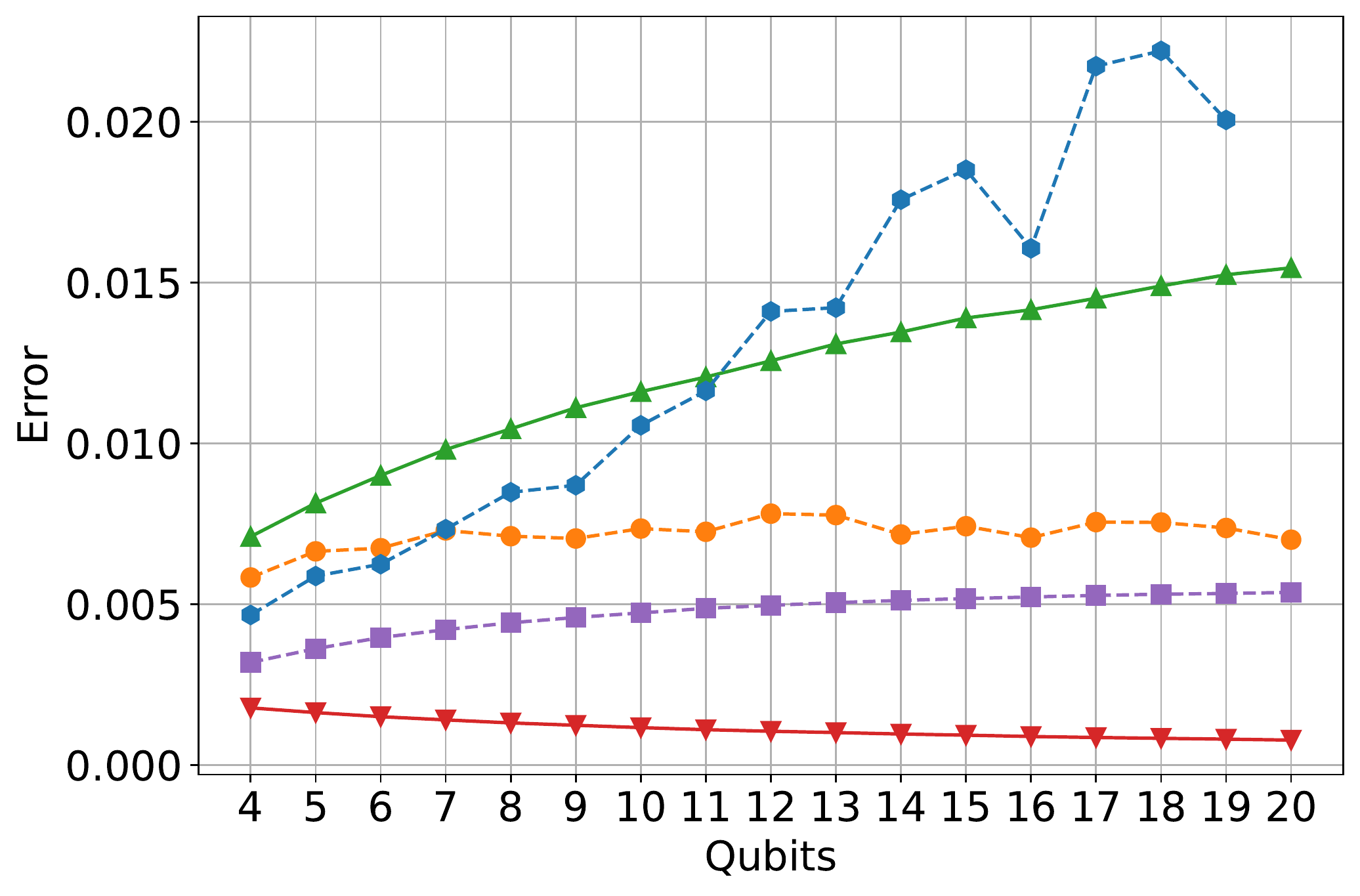}
    \caption{Estimates of random errors of telescope lower ($\blacktriangle$) and upper ($\blacktriangledown$) bounds, population estimate ($\blacksquare$), and coherence estimates using parity oscillations ($\bullet$) and MQC ($\hexagofill$). The errors for population and telescopic bounds are estimated by taking the standard error of the mean, the coherence errors are estimated by parametrized bootstrap resampling (see main text).}
    \label{fig:errors_as_nq}
\end{figure}

\begin{figure}
    \centering
    \includegraphics[width=0.7\textwidth]{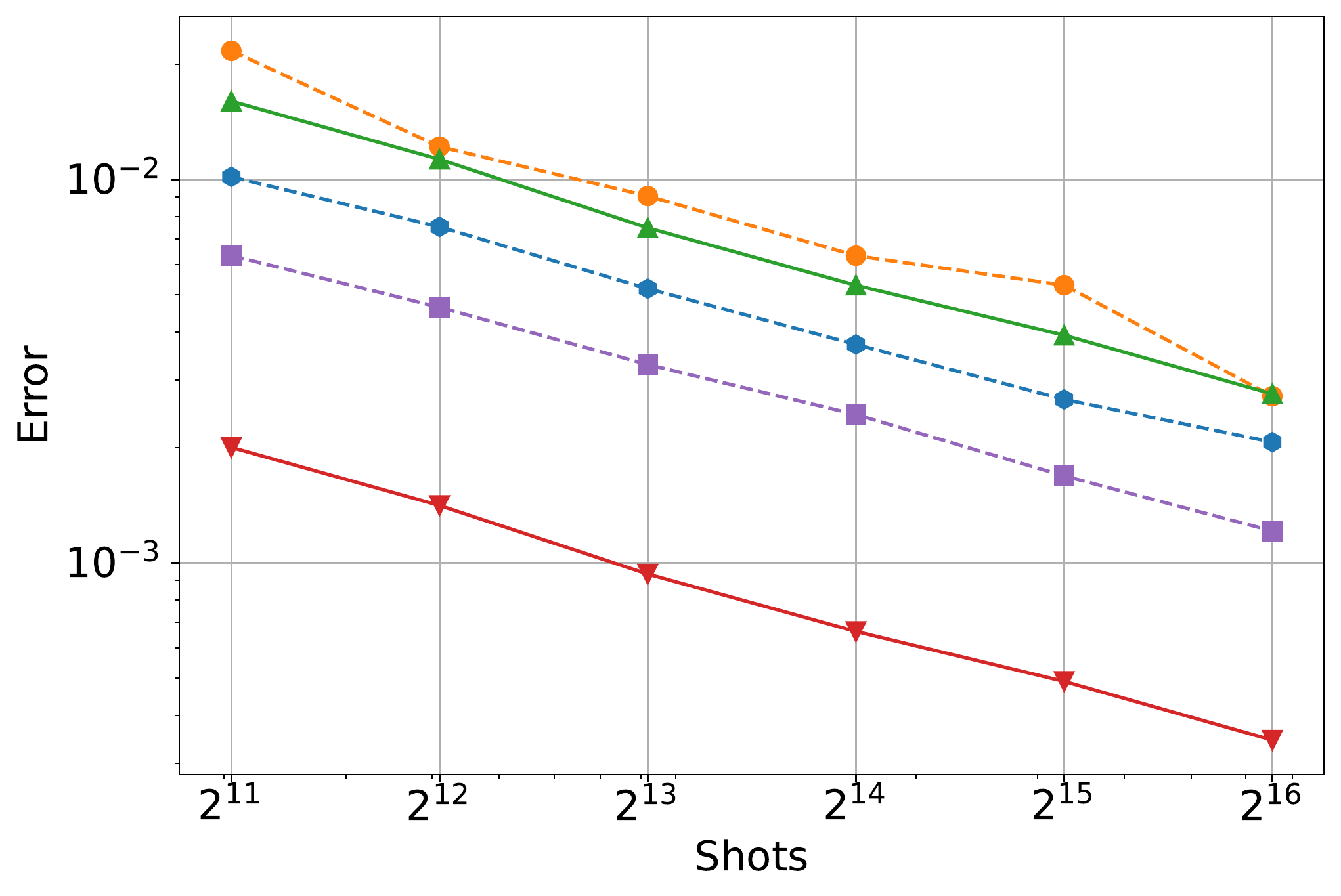}
    \caption{Sample size dependence of the standard errors in the terms comprising different fidelity estimates. The markers match those in Fig.~\ref{fig:fidelity_terms}.}
    \label{fig:errors_as_nshots}
\end{figure}

\subsection{Global depolarizing error}

We analyzed the behavior of all three methods for the following family of states:
\begin{equation}
    \label{eq:partially_mixed}
    \rho(\alpha) = (1-\alpha) \ket{\Phi} \bra{\Phi} + \frac{\alpha}{2^n} \id.
\end{equation}
Here $\ket{\Phi}$ is the GHZ state. For that density matrix, we can exactly calculate the fidelity and the bounds provided by the telescope construction. The fidelity with the clean GHZ state is equal to 
\begin{equation}
    F = \bra{\Phi} \rho \ket{\Phi} = 1 - \alpha + \frac{\alpha}{2^n}.
\end{equation}
The energy of this state with respect to the telescope Hamiltonian is equal to $\frac{\alpha}{2^n} \Tr H = n \alpha$. Hence, the fidelity bounds provided by Proposition~\ref{prop:stability} are as follows:
\begin{equation}
    1 - \frac{n \alpha}{2} \leq F \leq 1 - \frac{\alpha}{2}.
\end{equation}
The lower bound deteriorates with the increase of $n$, becoming completely uninfomative at $\alpha = 2/n$. The upper bound, on the other hand, does not depend on $n$.

All three methods were tested numerically for determining the fidelity of $\rho$. In case of parity oscillations and multiple quantum coherences techniques, the simulation is somewhat artifical. The matter is that both of these techniques are substantially affected by measurement errors, which are not accounted for in this model. Instead, the methods receive the state $\rho$, then perform all necessary actions with perfect fidelity.

For every tested value of $\alpha$, 1000 measurements were allocated for measuring coherence (or, for the telescope, the expected value of $X^{\otimes n}$), and 1000 measurements were allocated for measuring population (respectively, the expectations of $Z_i Z_{i+1}$). For coherence measurements, the values of $\varphi$ were equal to $\frac{2 \pi j}{2 n + 2}$, $j \in \{0, 1, ..., 2n+1\}$.
Figure~\ref{fig:fidelity_depo} shows the results of the simulations. The experiments were performed for $n=10$ qubits. As expected from the analytical formulas, the bounds provided by the telescope construction are quite loose in this case, while the parity oscillations technique ends up being much closer to the correct value. Interestingly enough, the multiple quantum coherences method in this case consistently overestimates the fidelity: we repeated the same experiment multiple times and obtaied qualitatively the same picture.

\begin{figure}
    \centering
    \includegraphics[width=0.7\textwidth]{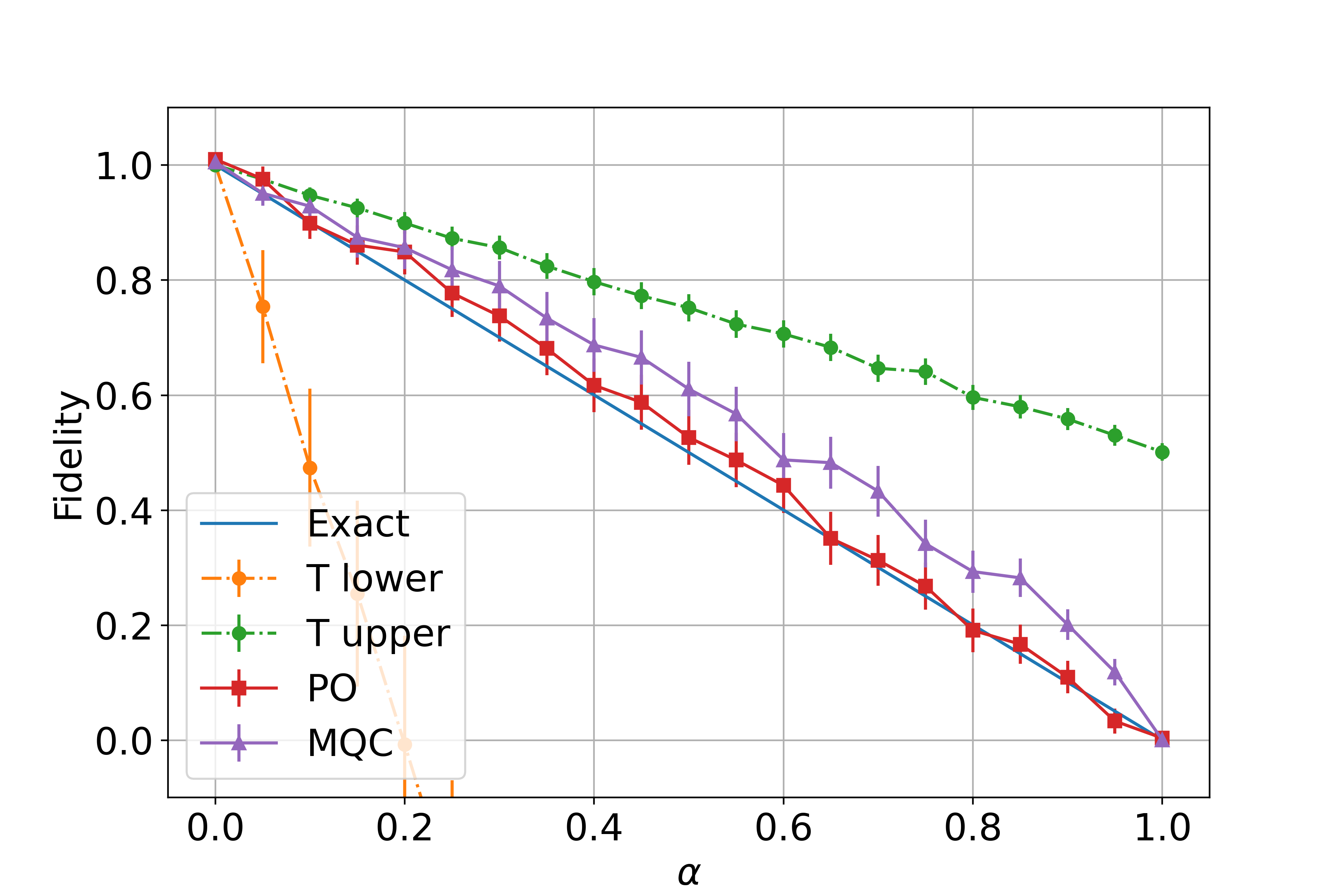}
    \caption{Fidelity estimates for mixed states of the form~(\ref{eq:partially_mixed}).}
    \label{fig:fidelity_depo}
\end{figure}

\section{Discussion}

The method we proposed relies on the properties of the Clifford group. Another important method of characterizing quantum devices, called randomized benchmarking (RB), also relies on the properties of the Clifford group. The goal of RB is to estimate the average error rate per gate. The standard variant of RB, described in \cite{magesan_robust_2011-1}, works as follows. 

\begin{enumerate}
    \item Pick $m$ elements from the Clifford group uniformly at random\footnote{This is not the same as $m$ Hadamard, Phase or CNOT gates.}. Calculate their inverse and append to the list of elements. 
    \item Prepare the circuit and estimate the probability of observing $\ket{0...0}$
    \item Repeat steps 1-2 and calculate the average probability $F$, called average sequence fidelity.
    \item Repeat steps 1-3 for different values of $m$ and fit parameters to the law 
    \begin{equation}
        F = Ap^m + B
    \end{equation}
\end{enumerate}

The resulting $p$ is then used to calculate the average error rate per gate $r$: 

\begin{equation}
    r = (1-p)\left(1 - \frac{1}{2^n}\right).
\end{equation}

A more complicated expression is shown in Ref.~\cite{magesan_robust_2011-1} for gate-dependent error models. The constants $A, B$ account for state preparation and measurement (SPAM) errors.

A slightly different protocol for RB is described by Knill et al.~\cite{knill_randomized_2008}. The difference is that, after generating a Clifford circuit, we append it not by its inverse, but by a local circuit that would diagonalize a randomly picked Pauli string from the stabilizer generators of the state. Effectively, the figure calculated is then equal to the average expectation of the Pauli string $\sigma$, picked from the stabilizer generators of the state, averaged over the Clifford group. For every fixed Clifford state, this expectation is equal to $\langle \sigma \rangle = 1 - (1/n)\langle H_\mathrm{tele} \rangle$. The fidelity of that Clifford state then can be estimated using Lemma \ref{prop:stability}.
\begin{gather}
    1 - \frac{\langle H_\mathrm{tele} \rangle}{2}  \leq F \leq 1 - \frac{\langle H_\mathrm{tele} \rangle}{2n} \\
    1 + \frac{n \langle \sigma \rangle}{2} - \frac{n}{2}  \leq F \leq \frac{1}{2} + \frac{\langle \sigma \rangle}{2}.
\end{gather}

\section{Conclusions}

To conclude, we proposed a method of bounding the fidelity of the GHZ state by means of measuring the terms from its parent Hamiltonian. Although the GHZ state is the most salient witness of genuine multipartite entanglement, the proposed verification technique works for any Clifford state. Moreover, since unitary transformations preserve the spectrum of Hamiltonians, the method still works for non-Clifford circuits, albeit the number of observables to be evaluated grows exponentially with the number of non-Clifford gates~\cite{biamonte_universal_2021}.

Compared to the existing methods of estimating the fidelity of the GHZ state, our method has both advantages and drawbacks. The obvious drawback is that the method only gives bounds to fidelity and not the fidelity itself. On the other hand, the bounds give some information even for small series of shots. Another benefit of the proposed technique lies in its simplicity: both the parity oscillations and multiple quantum coherences methods require implementing rotation gates with angle resolution scaling as $1/n$, whereas in our method there is no such requirement. Finally, the method makes no assumptions about the nature of the noise in the device.
\addcontentsline{toc}{chapter}{Conclusions}
\chapter*{Conclusions}

In the age of NISQ computers, variational quantum algorithms are one of the central research topics. In this thesis, we addressed this topic.

In chapter \ref{chap:quantum_basics}, we introduced the basic notions of quantum computation. In chapter \ref{chap:vqas}, we reviewed the state of the art in variational quantum algorithms. Our contribution starts from chapter \ref{chap:vqe_numerics}, where we investigated the ability of VQE to find the ground state of different physical models. We studied how quality of the solution depends on the depth of the circuit and on the parameters of the model. We also touched the subject of barren plateaus, although a more detailed investigation is postponed to chapter \ref{chap:plateaus}.
In chapter \ref{chap:qml}, we presented our work on machine learning with variational quantum algorithms. The goal was to train a variational circuit on quantum states obtained from VQE. In chapter \ref{chap:plateaus}, we investigated the emergence of barren plateaus in optimization landscapes. Our main tools were the approximation of individual circuit blocks with unitary 2-designs and investigation of the phenomenon in the Heisenberg picture.
Finally, in chapter \ref{chap:ghz}, we proposed an algorithm to estimate the fidelity of the GHZ state prepared in a quantum register. This algorithm works for any Clifford state and, although it tends to give loose bounds, it is applicable to any Clifford state.

The results presented in different chapters give an interesting insight when put together. On the one hand, VQE for physical problems shows exponential convergence with the number of layers. This observation matches independent observations that are made in \cite{cade_strategies_2019} and \cite{bravo-prieto_scaling_2020}, although in the latter two substantially different regimes were found, with the tipping point for a critical problem being approximately linear in the number of qubits. On the other hand, the barren plateaus become a significant issue after logarithmically many layers. Combining the two observations, we can see that non-critical problems should be relatively easy for VQE, while for critical problems one will have to employ a more clever strategy of optimization than naive VQE.

The main results of the work are as follows:
\begin{enumerate}
    \item We developed a numerical implementation of the VQE algorithm. Using this implementation, we investigated the behavior of the solutions for the transverse-field Ising model, anisotropic Heisenberg model, and a spinless variant of the Hubbard model with next-nearest-neighbor interactions. 
    \item The states near the phase transition point are the hardest to approximate with a variational circuit. We found a hysteresis effect in the adiabatic-assisted VQE, meaning that going between two easy Hamiltonians through a difficult region yields two different results depending on the direction. In all models, the scaling of the error with circuit depth was close to exponential, agreeing with existing literature.
    \item The barren plateaus effect for short-depth circuits sets on at a different pace depending on the choice of the fermion-to-qubit encoding. In particular, the derivatives vanish essentially immediately for the Jordan-Wigner transform and more gradually for the Bravyi-Kitaev transform.
    \item Variational quantum circuits can be optimized (trained) to distinguish the phases of quantum models. This works both for a simple Ising model, where the phase transition can be detected with a simple observable, and for the Heisenberg model, where the transition is harder to detect.
    \item We derived a lower bound on the variance of cost function derivatives in variational quantum circuits. The bound mainly depends on two things: the size of the causal cones of the operators in the cost function Hamiltonian, and the position of the gate in the circuit.
    \item We proposed a technique for bounding the GHZ state fidelity. Unlike state-of-the-art methods, this technique does not require the ability to fine-tune the angles of the rotation gates, nor does it rely on any assumptions about the noise.
\end{enumerate}

% \newpage
% % \addcontentsline{toc}{chapter}{List of figures}
% \listoffigures

% % \addcontentsline{toc}{chapter}{List of tables}
% \listoftables      % Заключение
% \printnomenclature[3.5cm] % Значение ширины столбца с обозначениями стоит подбирать вручную

\addcontentsline{toc}{chapter}{List of symbols and abbreviations}
\chapter*{List of symbols and abbreviations}

\begin{tabularx}{0.9\textwidth}{lX}
    $\id$ & Identity matrix \\
    $A^*$ & Complex conjugate of a matrix \\
    $A^\dagger$ & Conjugate transpose (Hermitian conjugate) of a matrix \\
    AAVQE & Adiabatically-assisted variational quantum eigensolver \\
    $\operatorname{Ad}_M$ & The operator of conjugation by an invertible matrix $M$: $X \mapsto M X M^{-1}$. \\
    $||A||_p$ & Operator $p$-norm of a linear operator \\
    BFGS & Broyden-Fletcher-Goldfarb-Shanno algorithm \\
    $\mathbbm{C}$ & The field of complex numbers \\
    $\operatorname{card} H$ & Number of nonzero terms in the Pauli decomposition of a matrix $H$ \\
    $\operatorname{dim} V$ & Dimension of a vector space $V$ \\
    $\operatorname{End}(V)$ & The space of linear operators on a vector space $V$ \\
    $\operatorname{Herm}(d)$ & The space of Hermitian operators on $\mathbbm{C}^d$ \\
    $\rmi$ & Imaginary unit \\
    $\operatorname{loc} H$ & Maximum algebraic locality of Pauli strings in the Pauli decomposition of a matrix $H$ \\
    NISQ & Noisy intermediate-size quantum (computing)\\
    $\ket{\psi}$ & A pure quantum state \\
    $\mathbbm{R}$ & The field of real numbers \\
    SGD & Stochastic gradient descent \\
    TFI & Transverse field Ising (model) \\
    $\mc{U}(d)$ & The group of $d \times d$ unitary matrices \\
    VQA & Variational quantum algorithms \\
    VQE & Variational quantum eigensolver \\
    $X_i, Y_i, Z_i$ & Pauli matrices acting on qubit number $i$ \\
\end{tabularx}        % Список сокращений и условных обозначений
\clearpage                                  % В том числе гарантирует, что список литературы в оглавлении будет с правильным номером страницы
\urlstyle{rm}                               % ссылки URL обычным шрифтом
\ifdefmacro{\microtypesetup}{\microtypesetup{protrusion=false}}{} % не рекомендуется применять пакет микротипографики к автоматически генерируемому списку литературы
\insertbibliofull                           % Подключаем Bib-базы: все статьи единым списком
% Режим с подсписками
%\insertbiblioexternal                      % Подключаем Bib-базы: статьи, не являющиеся статьями автора по теме диссертации
% Для вывода выберите и расскомментируйте одно из двух
%\insertbiblioauthor                        % Подключаем Bib-базы: работы автора единым списком 
%\insertbiblioauthorgrouped                 % Подключаем Bib-базы: работы автора сгруппированные (ВАК, WoS, Scopus и т.д.)
\ifdefmacro{\microtypesetup}{\microtypesetup{protrusion=true}}{}
\urlstyle{tt}                               % возвращаем установки шрифта ссылок URL
%\hypersetup{ urlcolor={urlcolor} }          % Восстанавливаем цвет ссылок
      % Список литературы
\clearpage
\ifdefmacro{\microtypesetup}{\microtypesetup{protrusion=false}}{} % не рекомендуется применять пакет микротипографики к автоматически генерируемым спискам
\listoffigures  % Список изображений

%%% Список таблиц %%%
% (ГОСТ Р 7.0.11-2011, 5.3.10)
\clearpage
\listoftables   % Список таблиц
\ifdefmacro{\microtypesetup}{\microtypesetup{protrusion=true}}{}
\newpage           % Списки таблиц и изображений (иллюстративный материал)

\setcounter{totalchapter}{\value{chapter}} % Подсчёт количества глав

%%% Настройки для приложений
\appendix
% Оформление заголовков приложений ближе к ГОСТ:
\setlength{\midchapskip}{20pt}
\renewcommand*{\afterchapternum}{\par\nobreak\vskip \midchapskip}

\ifnumequal{\value{englishthesis}}{0}{
    \renewcommand\thechapter{\Asbuk{chapter}} % Чтобы приложения русскими буквами нумеровались
}{}

\setcounter{totalappendix}{\value{chapter}} % Подсчёт количества приложений

\end{document}